\newcommand{\pushright}[1]{\ifmeasuring@#1\else\omit\hfill$\displaystyle#1$\fi\ignorespaces}
\newcommand{\pushleft}[1]{\ifmeasuring@#1\else\omit$\displaystyle#1$\hfill\fi\ignorespaces}
\newcommand*\bigcdot{\mathpalette\bigcdot@{.7}}
\newcommand*\bigcdot@[2]{\mathbin{\vcenter{\hbox{\scalebox{#2}{$\m@th#1\bullet$}}}}}
\newtheorem{theorem}{Proposition}
\newtheorem{corollary}{Corollary}
\newtheorem{lemma}{Lemma}
\theoremstyle{remark}
\newtheorem{remark}{Remark}
\newenvironment{theoremb}[1]{
  
  \theoremalt
}{\endtheoremalt}
\renewenvironment{proof}[1][\proofname]{{\bfseries #1.}}{\qed}
\def\independent{\perp\!\!\!\perp}
\def\var{\mathrm{var}}
\def\E{\mathrm{E}}
\def\P{\mathrm{P}}
\def\I{\mathrm{I}}
\def\H{\mathcal{H}}
\def\T{\mathcal{T}}
\def\pto{\overset{p}{\longrightarrow}}
\def\dto{\overset{d}{\longrightarrow}}
\title{Sensitivity analysis for principal ignorability violation in estimating complier and noncomplier average causal effects}
\author[1]{Trang Quynh Nguyen}
\author[1]{Elizabeth A. Stuart}
\author[2]{Daniel O. Scharfstein}
\author[1]{Elizabeth L. Ogburn}
\affil[1]{Johns Hopkins Bloomberg School of Public Health, MD, USA}
\affil[2]{University of Utah School of Medicine, UT, USA}
\begin{document}

\maketitle

% \vspace{-3em}

\begin{abstract}
    \normalsize\noindent An important strategy for identifying \textit{principal causal effects} (popular estimands in settings with noncompliance) is to invoke the principal ignorability (PI) assumption.
    As PI is untestable, it is important to gauge how sensitive effect estimates are to its violation. We focus on this task for the common one-sided noncompliance setting where 
    there are two principal strata, compliers and noncompliers. Under PI, compliers and noncompliers share the same outcome-mean-given-covariates function under the control condition. For sensitivity analysis, we allow this function to differ between compliers and noncompliers in several ways, indexed by an odds ratio, a generalized odds ratio, a mean ratio, or a standardized mean difference sensitivity parameter. We tailor sensitivity analysis techniques (with any sensitivity parameter choice) to several types of PI-based main analysis methods, including outcome regression, influence function (IF) based and weighting methods. We discuss range selection for the sensitivity parameter.
    We illustrate the sensitivity analyses with several outcome types from the JOBS II study.
    This application estimates nuisance functions parametrically -- for simplicity and accessibility. In addition, we establish rate conditions on nonparametric nuisance estimation for IF-based estimators to be asymptotically normal -- with a view to inform nonparametric inference.
    
    ~
    
    \noindent\textbf{Keywords:} principal stratification, complier average causal effect, principal ignorability, sensitivity analysis
\end{abstract}

\doparttoc % Tell to minitoc to generate a toc for the parts
\faketableofcontents % Run a fake tableofcontents command for the partocs

\part{} % Start the document part
\parttoc % Insert the document TOC

\section{Introduction}

The study of causal effects of a treatment is often complicated by noncompliance. The principal stratification framework \cite{Frangakis2002} defines types (principal strata) of study participants based on their \textit{potential} compliance to treatment conditions. In the one-sided noncompliance setting where individuals in the control condition do not have access to the active treatment, there are two principal strata: \textit{compliers}, who would take the treatment if offered, and \textit{noncompliers}, who would not. In the two-sided noncompliance setting where all individuals (assigned to either treatment or control) can access the treatment, there are four principal strata, often known as \textit{compliers}, \textit{always-takers}, \textit{never-takers}, and \textit{defiers}. 
\textit{Principal causal effects} are effects of \textit{treatment assignment} within each stratum, $\E[Y_1-Y_0\mid C]$, where $Y_1$ and $Y_0$ are potential outcomes \cite{Rubin1974} under assignment of active treatment and of control, respectively, and $C$ denotes principal stratum. Of common interest is the complier average causal effect (CACE), but other principal causal effects may also be of interest \cite{Rubin2006,Griffin2008}.

This paper focuses on one-sided noncompliance, which is common in studies where the treatment is designed and implemented by the study and is not otherwise available, e.g., job search training for unemployed workers \cite{Vinokur1995}, volunteering program for the elderly \cite{gruenewald2016BaltimoreExperienceCorps}, or weight management for people with mental illness \cite{daumit2013BehavioralWeightlossIntervention}. We will briefly comment on the two-sided non-compliance case in the Discussion section.

The challenge in identifying principal causal effects is that principal stratum membership $C$ is only partially observed; with one-sided noncompliance $C$ is not observed in the control condition. Effect identification thus requires untestable assumptions. One such assumption is \textit{exclusion restriction} \cite{Angrist1995} (ER), which posits that treatment assignment does not affect the outcome other than through its effect on treatment received. This means there is no effect on noncompliers, and effects on compliers explain the full effect of treatment assignment. ER is not suitable if treatment receipt is not strictly binary, i.e., noncompliers are exposed to some active ingredients in the treatment arm \cite{marshall2016CoarseningBiasHow,andresen2021InstrumentbasedEstimationBinarised}. This case may arise when an intervention includes several components, and only a major one is used to define compliance. It may also arise due to dichotomization, e.g., only people who attend more than a certain number of treatment sessions are classified as compliers \cite{gruenewald2016BaltimoreExperienceCorps}.
ER may also not hold if there are compensating behaviors or psychological effects due to being assigned to one condition as opposed to the other \cite{Feller2017}.

Another identification strategy does not restrict the noncomplier effect to zero, but instead invokes the \textit{principal ignorability} (PI) assumption \cite{Jo2009,Feller2017,Ding2017}. This assumption  posits that, conditional on a set of pre-treatment-assignment covariates $X$, the potential outcome under control $Y_0$ is independent (or mean-independent) of principal stratum $C$, i.e., compliers and noncompliers share the same conditional $Y_0$ distribution (or mean function). PI may be appealing for studies with rich baseline covariate data. As randomized trials and cohort studies tend to collect a lot of covariate data, one might hope that the covariates account for a substantial part of the dependence between $Y_0$ and $C$. On the other hand, most studies are not designed with noncompliance in mind, and thus not much attention is paid to measuring covariates that predict compliance type to render $C$ and $Y_0$ independent, which means PI may be violated.

\subsection{Our contribution}

In this paper we focus on the PI assumption. Specifically, we develop methods to evaluate the robustness of the estimated principal causal effects to violation of PI, in the one-sided noncompliance setting. We introduce several sensitivity parameterizations representing how (within levels of $X$) the mean of $Y_0$ differs between compliers and noncompliers. These are indexed by an odds ratio, generalized odds ratio, mean ratio, or standardized mean difference, suitable for use with different outcome types. In addition, we tailor sensitivity analysis techniques for pairing with a range of estimation methods that may be used for the PI-based main analysis, including weighting, outcome regression and influence function based estimation.

We illustrate the proposed sensivity analysis methods using the JOBS II Intervention Study \cite{Vinokur1995}, where unemployed workers were randomized to receive either a week-long training program to promote mental health and provide job search skills (\textit{treatment}) or a booklet with job search tips (\textit{control}). Just over half of those randomized to treatment actually attended the training, resulting in a setting with compliers and noncompliers. JOBS II has been used by authors investigating different aspects of principal stratification, e.g., identification and estimation under PI \cite{Jo2009,Feller2017}, alternative identification assumptions \cite{Jiang2021auxiliary}, bias due to failed assumptions \cite{Stuart2015}, and noncompliance combined with outcome missingness \cite{Jo2011}.
For our purpose, JOBS II is an interesting example for two reasons: (i) the study paid attention to the issue of noncompliance and collected baseline data on workers' motivation to participate in a hypothetical training program on job search skills, making this a prime case for invoking PI; and (ii) the study collected outcomes of several types (binary, continuous and bounded) to which the methods we propose are relevant.

\subsection{Related work}\label{sec:relatedwork}

To our knowledge, two methods have been proposed to assess sensitivity of effect estimates to PI violation. The method used in Ding and Lu (2017) \cite{Ding2017} is the closest to, and inspired, our work. In the one-sided noncompliance context, this method allows the mean of $Y_0$ given $X$ to differ between compliers and noncompliers by a ratio that serves as the sensitivity parameter, and estimates effects under each value of the sensitivity parameter by modifying a PI-based weighting estimator. The application was with a binary outcome, flu-related hospitalization. A drawback is that with a binary outcome this mean ratio parameter may yield predictions greater than~1. This motivated our expansion of the range of sensitivity parameterizations to accommodate different outcome types. Also, we consider sensitivity analysis techniques pairing with different types of PI-based estimators, not just the weighting estimator. The second sensitivity analysis method is that of Wang et al. (2023)
\cite{wang2023SensitivityAnalysesPrincipal} for survival outcomes, which imputes unobserved $C$ and $Y_0$ under a parametric model containing a hazard ratio sensitivity parameter. This work differs from our approach in that it relies on this parametric model for identification, whereas we make explicit the assumption required for identification and then use modeling only for estimation.
We also avoid refitting models for every value of the sensitivity parameter.

There are methods to assess sensitivity of principal causal effect estimates to violation of other assumptions: treatment assignment ignorability \cite{Schwartz2012,Mercatanti2017} and ER \cite{baiocchi2014InstrumentalVariableMethods}. These are not our focus.

To discuss sensitivity analysis, we will need to start with a description of PI-based estimation. While PI-based methods have been discussed in the literature, it has been in settings that are somewhat different, e.g., randomized treatment assignment \cite{Stuart2015,Feller2017,Ding2017} (which we do not require), a qualitatively different assumption \cite{Jiang2021auxiliary}, or two-sided rather than one-sided noncompliance \cite{jiang2022MultiplyRobustEstimationa}. The PI-based estimators we list in this paper share certain features (e.g., principal score weighting) with these earlier works, but are based on results for the current setting.

\smallskip

The paper proceeds as follows. Section \ref{sec:setting} presents the setting, the estimands, and identification under PI. Section \ref{sec:pi-estimators} introduces three types of PI-based estimators to be handled with different sensitivity analysis techniques. 
Sections \ref{sec:ratio-params} and \ref{sec:diff-param} present sensitivity analysis using ratio-type and difference-type sensitivity parameters, respectively, and address each of the three estimator types. Section~\ref{sec:technical-rest} covers topics relevant to the sensitivity analyses. Section \ref{sec:illustration} analyzes JOBS II data. Section \ref{sec:discussion} closes with a discussion. Proofs are provided in the Appendix. Code is provided in the R-package PIsens available at \url{https://github.com/trangnguyen74/PIsens}.

\section{Setting, estimands, and PI-based identification}\label{sec:setting}

\subsection{Setting, estimands, and standard assumptions}

Let $Z$ denote treatment assignment (1 for treatment, 0 for control), $Y$ denote the observed outcome, $Y_z$ the potential outcome had treatment $z$ been \textit{assigned} ($z=0,1$), and $X$ denote baseline covariates.
Let $S$ be a binary variable indicating whether the person actually receives the treatment ($S=1$) or not ($S=0$).
(More generally, $S$ can be any post-treatment variable of interest \cite{gruenewald2016BaltimoreExperienceCorps,mcconnell2008TruncationbyDeathProblemWhat,Griffin2008}.)
The principal stratification framework \cite{Frangakis2002} defines subpopulations (aka \textit{principal strata}, denoted by $C$) based on $S_1$ and $S_0$, the potential values of $S$ under assignment to treatment and to control. In the one-sided compliance setting, $S_0=0$, so only $S_1$ matters. Hence $C$ coincides with $S_1$ and there are two principal strata: \textit{compliers} $(C=1)$ who would and \textit{noncompliers} $(C=0)$ who would not take the treatment, if offered the treatment.
The ``full'' data for an individual are $(X,Z,C,Y_1,Y_0$); the observed data are $O:=(X,Z,S,Y)$.  Assume that we observe $n$ i.i.d. copies of $O$.

Here we are interested in the complier and noncomplier average causal effects (CACE and NACE). As the PI identification strategy is symmetric with respect to these two effects (and so are the sensitivity assumptions we consider), we focus on the generic estimand
\begin{align*}
    \Delta_c:=\E[Y_1-Y_0\mid C=c]=\E[Y_1\mid C=c]-\E[Y_0\mid C=c],
\end{align*}
where $c=1$ gives the CACE and $c=0$ gives the NACE.

Throughout we assume the usual causal inference assumptions:
\begin{center}
\begin{tabular}{ll}
    A0 (consistency): & $Y=ZY_1+(1-Z) Y_0$,~~~$S=ZC$,
    \\[.5em]
    A1 (treatment assignment ignorability): & $Z\independent (C,Y_1,Y_0)\mid X$,
    \\[.5em]
    A2 (treatment assignment positivity): & $0<\P(Z=1\mid X)<1$.
\end{tabular}
\end{center}

\noindent
Under A0, we write $O=(X,Z,ZC,Y)$ to simplify presentation.

As several expressions appear repeatedly in the paper, we will use the shorthand notation
\begin{align*}
    \tau_{zc}&:=\E[Y_z\mid C=c],
    \\
    \mu_{zc}(X)&:=\E[Y_z\mid X,C=c],
    \\
    \mu_0(X)&:=\E[Y\mid X,Z=0],
    \\
    e(X,Z)&:=\P(Z\mid X),
    \\
    \pi_c(X)&:=\P(C=c\mid X),
\end{align*}
for $z=0,1$, $c=0,1$. Here $\Delta_c=\tau_{1c}-\tau_{0c}$. Note the difference between $\mu_{zc}(X)$ which is the conditional mean of a \textit{potential} outcome within a principal stratum and $\mu_0(X)$ which concerns the \textit{observed} outcome in the control condition and does not condition on principal stratum. $e(X,1)$ is the \textit{propensity score}. $\pi_c(X)$ is the probability of being in stratum $c$ given covariate values, which we also refer to as the \textit{principal score}, following the literature \cite{Jo2009,Stuart2015,Feller2017,Ding2017}.

\smallskip

Proofs of all results in this section are provided in Appendix \ref{appendix:prelim}.

\subsection{The identification challenge and the PI assumption}\label{sec:main}

Identification of $\Delta_c=\tau_{1c}-\tau_{0c}$ amounts to identification of $\tau_{1c}$ and $\tau_{0c}$.
The challenge is that while A0-A2 identify $\tau_{1c}$, they are not sufficient to identify $\tau_{0c}$.
To see this, we start with the identity below.

\begin{lemma}\label{lm:starting-point}
    \begin{align}
    \overbrace{\E[Y_z\mid C=c]}^{\textstyle=:\tau_{zc}}&=\frac{\E\big\{\overbrace{\P(C=c\mid X)}^{\textstyle=:\pi_c(X)}\,\overbrace{\E[Y_z\mid X,C=c]}^{\textstyle=:\mu_{zc}(X)}\big\}}{\E[\P(C=c\mid X)]}.\label{estimand:tau.zc}
\end{align}
\end{lemma}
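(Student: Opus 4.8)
The plan is to prove the identity by the standard ``law of total expectation, then Bayes-flip'' maneuver, treating $z$ and $c$ as fixed throughout. First I would write $\E[Y_z \mid C=c]$ as an iterated expectation that conditions additionally on $X$: by the tower property applied within the subpopulation $\{C=c\}$,
\begin{align*}
    \E[Y_z \mid C=c] = \E\big[\,\E[Y_z \mid X, C=c] \;\big|\; C=c\,\big] = \E\big[\mu_{zc}(X) \mid C=c\big].
\end{align*}
The right-hand side is an expectation of a function of $X$ taken with respect to the \emph{conditional} law of $X$ given $C=c$, so the next step is to rewrite that conditional expectation in terms of the marginal law of $X$. Writing $f(x)$ for the density (or mass function) of $X$ and $f(x \mid C=c)$ for the conditional one, Bayes' rule gives $f(x \mid C=c) = \pi_c(x)\, f(x) / \P(C=c)$, hence
\begin{align*}
    \E\big[\mu_{zc}(X) \mid C=c\big] = \int \mu_{zc}(x)\, \frac{\pi_c(x)\, f(x)}{\P(C=c)}\, dx = \frac{\E\big[\pi_c(X)\, \mu_{zc}(X)\big]}{\P(C=c)}.
\end{align*}
Finally I would observe that the denominator $\P(C=c) = \E[\pi_c(X)]$ by the law of total probability, which yields exactly \eqref{estimand:tau.zc}. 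This last substitution is purely cosmetic but makes the formula self-contained in terms of the nuisance functions $\pi_c$ and $\mu_{zc}$.

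I do not anticipate a genuine obstacle here: the result is essentially a re-expression of the definition of conditional expectation combined with Bayes' rule, and the only things to be careful about are bookkeeping ones --- ensuring $\P(C=c) > 0$ so the conditioning and the division are well-defined (this is implicit in the estimand $\Delta_c$ being defined at all), and phrasing the Bayes step measure-theoretically (via a Radon--Nikodym / change-of-measure argument) rather than assuming densities exist, if the paper wants full generality. A cleaner density-free way to present the same computation is to note that for any bounded measurable $g$, $\E[g(X)\mathbbm{1}\{C=c\}] = \E[g(X)\pi_c(X)]$ by conditioning on $X$, then take $g = \mu_{zc}$ and also $g \equiv 1$, and divide; I would likely write it this way to avoid invoking densities. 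Either route is short, so the ``hard part'' is really just deciding how much measure-theoretic formality to include.
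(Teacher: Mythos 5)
Your proposal is correct and follows the same route as the paper's proof: iterated expectation conditioning on $X$ within $\{C=c\}$, a Bayes'-rule flip of the conditional law of $X$ given $C=c$ into $\pi_c(X)/\P(C=c)$, and the identification $\P(C=c)=\E[\pi_c(X)]$. The additional density-free remark is a nice touch but not needed to match the paper's argument.
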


\noindent(To simplify presentation, it is left implicit that $\mu_{zc}(X)$ is only defined where $\pi_c(X)>0$.)

\smallskip

Lemma~\ref{lm:starting-point} says that $\tau_{zc}$ is equal to the weighted average of the stratum-specific potential outcome mean $\mu_{zc}(X)$ where the weight is proportional to the principal score $\pi_c(X)$.
This means $\tau_{zc}$ can be identified via identification of $\pi_c(X)$ and $\mu_{zc}(X)$, which we address next.

\begin{theorem}[Results without PI]\label{thm:id-tau.1c}
Under assumptions A0-A2,
\begin{align}
    \pi_c(X)
    &=\P(C=c\mid X,Z=1),\label{id:pi.c}
    \\
    \mu_{zc}(X)
    &=\E[Y\mid X,Z=z,C=c],\label{id:mu.1c}
    \\
    \tau_{1c}
    &=\frac{\E[\pi_c(X)\mu_{1c}(X)]}{\E[\pi_c(X)]}
    =\frac{\E\left[\frac{Z}{e(X,Z)}\I(C=c)Y\right]}{\E\left[\frac{Z}{e(X,Z)}\I(C=c)\right]},\label{id:tau.1c}
    \\
    \pi_1(X)&\mu_{01}(X)+\pi_0(X)\mu_{00}(X)=\overbrace{\E[Y\mid X,Z=0]}^{\textstyle=:\mu_0(X)}.\label{eq:mixture-mean}
    \end{align}
\end{theorem}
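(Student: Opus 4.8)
The plan is to establish the four displays one at a time, each time invoking one of A0--A2. The workhorses are (i) the conditional independence A1, which lets me insert or delete the event $\{Z=z\}$ from conditioning sets, and (ii) consistency A0, which replaces $Y_z$ by the observed $Y$ on $\{Z=z\}$ and, in the one-sided noncompliance setting, replaces $S$ by $C$ on $\{Z=1\}$; positivity A2 enters only to guarantee that the inverse-probability weights below are finite.

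For \eqref{id:pi.c} and \eqref{id:mu.1c}: from A1, $Z\independent(C,Y_1,Y_0)\mid X$, so $\pi_c(X)=\P(C=c\mid X)=\P(C=c\mid X,Z=1)$, and likewise $\mu_{zc}(X)=\E[Y_z\mid X,C=c]=\E[Y_z\mid X,C=c,Z=z]$; applying A0 on $\{Z=z\}$ turns the last expression into $\E[Y\mid X,Z=z,C=c]$. I would note in passing that this right-hand side is an observed-data functional only for $z=1$ (where $\I(C=c)=\I(S=c)$ on $\{Z=1\}$); for $z=0$ the equality holds but is not directly estimable, which is precisely the identification gap the rest of the paper addresses.

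For \eqref{id:tau.1c}: the first equality is Lemma~\ref{lm:starting-point} specialized to $z=1$. For the weighting form I would take conditional expectations given $X$ of the numerator and denominator separately. Since $Z/e(X,Z)$ equals $1/\P(Z=1\mid X)$ on $\{Z=1\}$ and vanishes on $\{Z=0\}$, and since $Y=Y_1$ and $\I(C=c)=\I(S=c)$ there, A1 yields $\E\big[\tfrac{Z}{e(X,Z)}\I(C=c)Y \mid X\big]=\E[\I(C=c)Y_1\mid X]=\pi_c(X)\mu_{1c}(X)$ and, with $Y$ omitted, $\E\big[\tfrac{Z}{e(X,Z)}\I(C=c)\mid X\big]=\pi_c(X)$; taking outer expectations and dividing gives the claimed ratio.

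Finally \eqref{eq:mixture-mean} is obtained by conditioning $\E[Y_0\mid X]$ on $C$, which gives $\pi_1(X)\mu_{01}(X)+\pi_0(X)\mu_{00}(X)$, and then using A1 ($Y_0\independent Z\mid X$) together with A0 to write $\E[Y_0\mid X]=\E[Y_0\mid X,Z=0]=\E[Y\mid X,Z=0]=:\mu_0(X)$. None of this is genuinely hard; the one spot that repays care is the weighting identity in \eqref{id:tau.1c}, where one must keep straight that the weight $Z/e(X,Z)$ is supported on $\{Z=1\}$ and that $C$ is observable there only because $S=C$ under one-sided noncompliance.
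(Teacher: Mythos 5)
Your proof is correct and follows essentially the same route as the paper's: A1 to insert or delete $Z$ from the conditioning set (the paper invokes its Lemma 5 for the weak-union step $Y_z\independent Z\mid X,C$), A0 to replace potential by observed outcomes, iterated expectations over $X$ for the weighting form of \eqref{id:tau.1c}, and the law of total expectation over $C$ for the mixture equation \eqref{eq:mixture-mean}. Your parenthetical remark that the right-hand side of \eqref{id:mu.1c} is an observed-data functional only for $z=1$ matches the paper's own comment immediately following the proposition.
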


Proposition \ref{thm:id-tau.1c} shows that A0-A2 identify $\pi_c(X)$ and $\mu_{1c}(X)$, but not $\mu_{0c}(X)$. (The RHS of (\ref{id:mu.1c}) conditions on $C$, which is not observed for $Z=0$.) Hence $\tau_{1c}$ is identified,
but $\tau_{0c}$ is not, so $\Delta_c$ is not. 

The problem here is nonidentifiability of the stratum-specific conditional $Y_0$ mean functions $\mu_{0c}(X)$. These two functions, $\mu_{01}(X)$ for compliers (where $c=1$) and $\mu_{00}(X)$ for noncompliers (where $c=0$), are tied together as two unknowns in one equation, (\ref{eq:mixture-mean}), which we will call the \textit{mixture equation}.  
To identify them, some additional assumption is needed.

One such assumption is PI, which we state here as a conditional mean independence:
\begin{center}
\begin{tabular}{ll}
    A3 (principal ignorability): & $\overbrace{\E[Y_0\mid X,C=1]}^{\textstyle=:\mu_{01}(X)}=\overbrace{\E[Y_0\mid X,C=0]}^{\textstyle=:\mu_{00}(X)}$.
\end{tabular}
\end{center}
PI is also sometimes stated as $C\independent Y_0\mid X$ (which implies~A3). This version is more intuitive: it is satisfied if $X$ captures all common causes of $C$ and $Y_0$ \cite{Ding2017}.
Like other authors, we assume that A3 and A1 involve the same set of covariates; this can be relaxed.

A3 combined with (\ref{eq:mixture-mean}) 
solves the identification problem.

\begin{theorem}[PI based identification]\label{thm:pi-id:tau.0c}
Under assumptions A0-A3,
\begin{align}
    \mu_{0c}(X)
    &=\mu_0(X),\label{pi-id:mu.0c}
    \\
    \tau_{0c}
    &=\frac{\E[\pi_c(X)\mu_0(X)]}{\E[\pi_c(X)]}
    =\frac{\E\left[\frac{Z}{e(X,Z)}\I(C=c)\mu_0(X)\right]}{\E\left[\frac{Z}{e(X,Z)}\I(C=c)\right]}
    =\frac{\E\left[\frac{1-Z}{e(X,Z)}\pi_c(X)Y\right]}{\E\left[\frac{1-Z}{e(X,Z)}\pi_c(X)\right]}.\label{pi-id:tau.0c}
\end{align}
\end{theorem}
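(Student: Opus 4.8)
The plan is to prove the two displays in turn. Assumption~A3 is invoked only once, to collapse the mixture equation; everything afterwards is a sequence of reweighting identities that already hold under A0--A2.

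\textbf{Step 1 (the common control mean).} I would start from the mixture equation \eqref{eq:mixture-mean} of Proposition~\ref{thm:id-tau.1c}, $\pi_1(X)\mu_{01}(X)+\pi_0(X)\mu_{00}(X)=\mu_0(X)$. Under A3 the two stratum-specific control means coincide; write $m(X)$ for their common value. Since there are only two strata in the one-sided setting, $\pi_0(X)+\pi_1(X)=1$, so the left-hand side equals $m(X)$, giving $\mu_{0c}(X)=\mu_0(X)$ for $c=0,1$, which is \eqref{pi-id:mu.0c}. Substituting this into Lemma~\ref{lm:starting-point} at $z=0$ yields the first expression in \eqref{pi-id:tau.0c}, namely $\tau_{0c}=\E[\pi_c(X)\mu_0(X)]/\E[\pi_c(X)]$.

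\textbf{Step 2 (the two weighted forms).} For the second expression I would verify that $\E\big[\tfrac{Z}{e(X,Z)}\I(C=c)\,h(X)\big]=\E[\pi_c(X)h(X)]$ for any integrable $h(X)$: conditioning on $X$ and using A1 ($Z\independent C\mid X$) together with A2 ($e(X,1)>0$) gives $\tfrac{h(X)}{e(X,1)}\,\E[Z\mid X]\,\E[\I(C=c)\mid X]=h(X)\,\pi_c(X)$ inside the outer expectation; taking $h\equiv1$ and $h=\mu_0$ produces the denominator and numerator. For the third expression I would instead condition on $X$ in $\E\big[\tfrac{1-Z}{e(X,Z)}\pi_c(X)Y\big]$, replace $(1-Z)Y$ by $(1-Z)Y_0$ via A0, apply A1 ($Z\independent Y_0\mid X$) to factor out $\E[Y_0\mid X]$ against $\E[1-Z\mid X]=e(X,0)$, and finally identify $\E[Y_0\mid X]=\E[Y\mid X,Z=0]=\mu_0(X)$ again by A0--A1; the same computation with $Y$ deleted gives the denominator $\E[\pi_c(X)]$. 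Both forms then reduce to the ratio obtained in Step~1.

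\textbf{Main obstacle.} The algebra here is routine; the two points I would be careful to state rather than gloss over are (i) the use of $\pi_0(X)+\pi_1(X)=1$, which is exactly what makes the mixture equation solvable and is specific to the two-stratum one-sided setting, together with the implicit requirement $\pi_c(X)>0$ on the support so that $\mu_{0c}(X)$ and all three ratios are well defined; and (ii) that $Z\,\I(C=c)$ equals the observed $Z\,\I(S=c)$ under A0, so that the weighting expressions in \eqref{pi-id:tau.0c} are genuinely functionals of the observed-data law $O=(X,Z,S,Y)$, as required.
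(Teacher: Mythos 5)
Your proposal is correct and follows essentially the same route as the paper: equation (\ref{pi-id:mu.0c}) is obtained by combining the mixture equation (\ref{eq:mixture-mean}) with A3, and the two weighted forms in (\ref{pi-id:tau.0c}) are verified by iterated expectation using A0--A2, exactly as in Appendix~\ref{appendix:prelim}. The only cosmetic differences are that you make the generic reweighting identity $\E[\tfrac{Z}{e(X,Z)}\I(C=c)h(X)]=\E[\pi_c(X)h(X)]$ and the role of $\pi_0(X)+\pi_1(X)=1$ explicit, and you detour through $Y_0$ for the third form where the paper uses the definition $\mu_0(X)=\E[Y\mid X,Z=0]$ directly; both are harmless.
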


We will refer to the observed data functionals in Proposition \ref{thm:pi-id:tau.0c} that identify $\mu_{0c}(X)$ and $\tau_{0c}$ as $\mu_{0c}^\text{PI}(X)$ and $\tau_{0c}^\text{PI}$,
and the corresponding result for $\Delta_c$ (i.e., $\tau_{1c}-\tau_{0c}^\text{PI}$) as $\Delta_c^\text{PI}$. 

\smallskip

\begin{remark}[Sufficient PI version]
A3 involves $Y_0$ but not $Y_1$. Feller et al. (2017)\cite{Feller2017} call this assumption \textit{weak} PI to differentiate it from a different assumption (\textit{strong} PI) that involves both potential outcomes, $C\independent Y_z\mid X$ for $z=0,1$. While these labels suggest a difference in degree, these assumptions are qualitatively different. Strong PI implies that conditional on $X$, the average causal effect is constant across principal strata, which is generally not desired \cite{Feller2017}.
As A3 is sufficient (and strong PI is unnecessary), we simply refer to A3 as PI.
\end{remark}

\smallskip

PI is untestable. The sensitivity analyses in Sections \ref{sec:ratio-params} and \ref{sec:diff-param} will each replace PI with an alternative assumption (\textit{sensitivity assumption}) indexed by a \textit{sensitivity parameter} representing deviation from PI. Such an assumption obtains alternative identification results for $\mu_{0c}(X)$ and $\Delta_c$. The sensitivity analysis then shows, for a plausible range of the sensitivity parameter, how effect estimates depart from those obtained in a PI-based analysis.

\section{Three types of PI-based estimators from the lens of sensitivity analysis}\label{sec:pi-estimators}

It is desirable to develop sensitivity analysis methods that are simple modifications of PI-based methods. With this in mind, in this section we group estimators of $\Delta_c^\text{PI}$ into three types (each with a few example estimators), which we anticipate can be adapted for sensitivity analysis using different techniques (in subsequent sections). This grouping may be useful generally, say, where it is desirable to use a different sensitivity assumption not covered in this paper.

With three estimator types, the presentation from here through Section \ref{sec:diff-param} is slightly complex. Readers who are mainly looking to add a sensitivity analysis to an already conducted or planned PI-based analysis only need to focus on the type of their estimator and can ignore the others.

Proofs of results in this section are provided in Appendix \ref{appendix:pi-estimators}.

\subsection{Type A ($\approx$ outcome regression estimators)}\label{sec:typeA}

As PI-based analysis relies on the identification result $\mu_{0c}^\text{PI}(X)=\mu_0(X)$, an obvious sensitivity analysis technique (applicable to any PI-based method that involves estimating $\mu_0(X)$) is to replace $\mu_0(X)$ with the alternative formula for $\mu_{0c}(X)$ identified under the sensitivity assumption.
We aim to use this technique with type A (roughly \textit{outcome regression}) estimators. 

To be precise, type A estimators involve estimating $\mu_0(X)$ 
in order to first estimate the principal causal effect conditional on covariates (which under PI is $\mu_{1c}(X)-\mu_0(X)$) or a proxy for it, and then aggregate these conditional effects to estimate the average principal causal effect $\Delta_c^\text{PI}$. Examples include 
the principal-score-weighted outcome-regression estimator (aka the plug-in estimator) (\ref{pi-est:id2}) and the propensity-score-weighted outcome-regression estimator (\ref{pi-est:id3}):
\begin{align}
    \hat\Delta_{c,\pi\mu}^\text{PI}
    &:=\frac{\sum_{i=1}^n\hat\pi_c(X_i)[\hat\mu_{1c}(X_i)-{\color{red}\hat\mu_0(X_i)}]}{\sum_{i=1}^n\hat\pi_c(X_i)}\label{pi-est:id2},
    \\
    \hat\Delta_{c,e\mu}^\text{PI}
    &:=\frac{\sum_{i=1}^n\frac{Z_i\I(C_i=c)}{\hat e(X_i,Z_i)}[Y_i-{\color{red}\hat\mu_0(X_i)}]}{\sum_{i=1}^n\frac{Z_i\I(C_i=c)}{\hat e(X_i,Z_i)}},\label{pi-est:id3}
\end{align}
where the hat notation indicates an estimated function. These are justified by the $\tau_{1c}$ formulae in (\ref{id:tau.1c}) and the first two $\tau_{0c}^\text{PI}$ formulae in (\ref{pi-id:tau.0c}).
Also included in type A is a multiply robust outcome regression estimator, $\hat\Delta_{c,\textsc{ms}}^\text{PI}$, which we will present after explaining type B estimators.

For each estimator here we put in \textcolor{red}{red} the component to be replaced in sensitivity analysis.

\subsection{Type B ($\approx$ influence function based estimators)}\label{sec:typeB}

Type B estimators are a subset of estimators constructed based on the nonparametric influence function (IF) of $\Delta_c^\text{PI}$ (hence the rough label \textit{IF-based estimators}, although not all IF-based estimators belong in type B). To define this type precisely, let
\begin{align*}
    \nu_{zc}
    &:=\E[\pi_c(X)\mu_{zc}(X)],
    \\
    \pi_c
    &:=\E[\pi_c(X)],
    \\
    \nu_{0c}^\text{PI}
    &:=\E[\pi_c(X)\mu_0(X)].
\end{align*}
In this notation, $\Delta_c=(\nu_{1c}-\nu_{0c})/\pi_c$ and $\Delta_c^\text{PI}=(\nu_{1c}-\nu_{0c}^\text{PI})/\pi_c$.
A type~B estimator of $\Delta_c^\text{PI}$ is one that can be expressed as a combination of IF-based estimators of $\nu_{1c}$, $\nu_{0c}^\text{PI}$ and $\pi_c$. The sensitivity analysis technique will be to replace the  $\nu_{0c}^\text{PI}$ component with an IF-based estimator of $\nu_{0c}$ under the sensitivity assumption.
To obtain these estimators, we derive the relevant IFs.

\begin{theorem}[IFs for PI-based analysis]\label{thm:ifs-pi}
The IFs of $\pi_c$, $\nu_{1c}$, $\nu_{0c}^\textup{PI}$, and $\Delta_c^\textup{PI}$ are
\begin{align}
    \varphi_{\pi_c}(O)
    &=\frac{Z}{e(X,Z)}[\I(C=c)-\pi_c(X)]+\pi_c(X)-\pi_c,\label{if:pi.c}
    \\
    \varphi_{\nu_{1c}}(O)
    &=\frac{Z}{e(X,Z)}\I(C=c)[Y-\mu_{1c}(X)]+\frac{Z}{e(X,Z)}\mu_{1c}(X)[\I(C=c)-\pi_c(X)]+\pi_c(X)\mu_{1c}(X)-\nu_{1c},\label{if:nu.1c}
    \\
    \varphi_{\nu_{0c}^\textup{PI}}(O)
    &=\frac{1-Z}{e(X,Z)}\pi_c(X)[Y-\mu_0(X)]+\frac{Z}{e(X,Z)}\mu_0(X)[\I(C=c)-\pi_c(X)]+\pi_c(X)\mu_0(X)-\nu_{0c}^\textup{PI},\label{pi-if:nu.0c}
    \\
    \varphi_{\Delta_c^\textup{PI}}(O)
    &=\frac{1}{\pi_c}\left\{[\varphi_{\nu_{1c}}(O)+\nu_{1c}]-[\varphi_{\nu_{0c}^\textup{PI}}(O)+\nu_{0c}^\textup{PI}]-\Delta_c^\textup{PI}[\varphi_{\pi_c}(O)+\pi_c]\right\}.\label{pi-if:Delta.c}
\end{align}
\end{theorem}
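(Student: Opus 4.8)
The plan is to compute each influence function by the standard ``known covariate distribution'' route: write the target parameter as a smooth functional of the observed-data law, identify the efficient influence function via the Gateaux derivative along a one-dimensional parametric submodel, and then recognize the resulting expression in the doubly/multiply robust ``one-step'' form that appears in the statement. Concretely, for $\pi_c=\E[\pi_c(X)]$ with $\pi_c(X)=\P(C=c\mid X,Z=1)$ (using \eqref{id:pi.c}), I would first recall the generic fact that the EIF of $\E[\P(C=c\mid X,Z=1)]$ is $\frac{Z}{e(X,Z)}[\I(C=c)-\pi_c(X)]+\pi_c(X)-\pi_c$ --- the first term is the ``correction'' from not knowing the conditional law of $C$ given $(X,Z=1)$, reweighted to the $Z=1$ arm by the inverse propensity $Z/e(X,Z)$, and the last two terms come from the outer expectation over $X$. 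This is exactly \eqref{if:pi.c}.

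For $\nu_{1c}=\E[\pi_c(X)\mu_{1c}(X)]$ I would treat it as a product-of-regressions functional: $\pi_c(X)=\P(C=c\mid X,Z=1)$ and $\mu_{1c}(X)=\E[Y\mid X,Z=1,C=c]$ are both identified off the $Z=1$ arm (Proposition \ref{thm:id-tau.1c}), and $\nu_{1c}$ is their integrated product against the law of $X$. Differentiating along a submodel and using the product rule gives three pieces: (i) a score contribution from the outcome regression, yielding $\frac{Z}{e(X,Z)}\I(C=c)[Y-\mu_{1c}(X)]$; (ii) a score contribution from the conditional law of $C$ given $(X,Z=1)$, yielding $\frac{Z}{e(X,Z)}\mu_{1c}(X)[\I(C=c)-\pi_c(X)]$; and (iii) the outer-$X$ term $\pi_c(X)\mu_{1c}(X)-\nu_{1c}$. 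Combining these is \eqref{if:nu.1c}. The derivation of $\varphi_{\nu_{0c}^\text{PI}}$ in \eqref{pi-if:nu.0c} is structurally parallel but uses the PI identification $\nu_{0c}^\text{PI}=\E[\pi_c(X)\mu_0(X)]$ with $\mu_0(X)=\E[Y\mid X,Z=0]$ estimated off the \emph{control} arm; hence the outcome-regression correction term now carries the weight $\frac{1-Z}{e(X,Z)}\I(C=c)$ (indicator of stratum membership is fine here since under consistency $\I(C=c)$ is only needed on the $Z=1$ side for $\pi_c$, but for $\mu_0$ it appears through the $\pi_c(X)$ factor --- I would be careful to track which nuisance each term corrects and to verify that the weight on the $Y-\mu_0(X)$ residual is indeed $\frac{1-Z}{e(X,Z)}$ times the principal-score weight rather than an indicator; this bookkeeping is the one place an error could creep in).

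Finally, \eqref{pi-if:Delta.c} is just the delta method: $\Delta_c^\text{PI}=g(\nu_{1c},\nu_{0c}^\text{PI},\pi_c)$ with $g(a,b,p)=(a-b)/p$, so $\varphi_{\Delta_c^\text{PI}}=\frac{1}{\pi_c}\varphi_{\nu_{1c}}-\frac{1}{\pi_c}\varphi_{\nu_{0c}^\text{PI}}-\frac{\nu_{1c}-\nu_{0c}^\text{PI}}{\pi_c^2}\varphi_{\pi_c}$, and substituting $\Delta_c^\text{PI}=(\nu_{1c}-\nu_{0c}^\text{PI})/\pi_c$ and regrouping gives the displayed form (the ``$+\nu$'' and ``$+\pi$'' shifts inside the braces simply re-center each IF to the corresponding plug-in quantity). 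I would present the three base computations as one lemma-style derivation using a common template for ``EIF of an integrated conditional functional,'' and I expect the main obstacle to be purely organizational: carefully specifying the nonparametric model's tangent space and the relevant conditional likelihoods (of $Z\mid X$, of $C\mid X,Z=1$, of $Y\mid X,Z,C$), and making sure every nuisance is estimated off the arm where it is identified so that the inverse-weighting factors ($Z/e$ versus $(1-Z)/e$) land on the right residuals. Once the template is set up, each of \eqref{if:pi.c}--\eqref{pi-if:nu.0c} follows by a short specialization, and \eqref{pi-if:Delta.c} is immediate.
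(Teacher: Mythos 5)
Your proposal follows essentially the same route as the paper: a parametric-submodel/tangent-space computation of the pathwise derivative for each of $\pi_c$, $\nu_{1c}$, $\nu_{0c}^{\text{PI}}$, followed by the ratio (delta-method) rule for $\Delta_c^{\text{PI}}$, which is exactly the paper's lemma on the IF of a ratio of means. Your caution about the weight on the $Y-\mu_0(X)$ residual is well placed: the correct weight is $\frac{1-Z}{e(X,Z)}\pi_c(X)$ (as in the paper's own proof, its estimator $\hat\nu_{0c,\textsc{if}}^{\text{PI}}$, and (\ref{adhoc})), not $\frac{1-Z}{e(X,Z)}\I(C=c)$ as printed in (\ref{pi-if:nu.0c}) --- the indicator version is not even a function of the observed data when $Z=0$, so the displayed formula contains a typo that your derivation correctly avoids.
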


The estimator that uses the IF of $\Delta_c^\text{PI}$ (with estimated nuisances) as the estimating function is a type B estimator. This is because due to (\ref{pi-if:Delta.c}), this estimator has the form
\begin{align}
    \hat\Delta_{c,\textsc{if}}^\text{PI}=\frac{\hat\nu_{1c,\textsc{if}}-{\color{red}\hat\nu_{0c,\textsc{if}}^\text{PI}}}{\hat\pi_{c,\textsc{if}}},\label{est:pi-if}
\end{align}
where (with $\P_n$ representing sample average)
\vspace{-.5em}
\begin{align*}
    \hat\nu_{1c,\textsc{if}}
    &:=\P_n\left\{\frac{Z}{\hat e(X,Z)}\I(C=c)[Y-\hat\mu_{1c}(X)]+\frac{Z}{\hat e(X,Z)}\hat\mu_{1c}(X)[\I(C=c)-\hat\pi_c(X)]+\hat\pi_c(X)\hat\mu_{1c}(X)\right\},
    \\
    {\color{red}\hat\nu_{0c,\textsc{if}}^\text{PI}}
    &:=\P_n\left\{\frac{1-Z}{\hat e(X,Z)}\hat\pi_c(X)[Y-\hat\mu_0(X)]+\frac{Z}{\hat e(X,Z)}\hat\mu_0(X)[\I(C=c)-\hat\pi_c(X)]+\hat\pi_c(X)\hat\mu_0(X)\right\},
    \\
    \hat\pi_{c,\textsc{if}}
    &:=\P_n\left\{\frac{Z}{\hat e(X,Z)}[\I(C=c)-\hat\pi_c(X)]+\hat\pi_c(X)\right\}
\end{align*}
are IF-based estimators of $\nu_{1c}$, $\nu_{0c}^\text{PI}$, $\pi_c$.

Another type B estimator is the H\'ajek-type \cite{hajek1971comment} estimator,
\begin{align}
    \hat\Delta_{c,\textsc{ifh}}^\text{PI}=\frac{\hat\nu_{1c,\textsc{ifh}}-{\color{red}\hat\nu_{0c,\textsc{ifh}}^\text{PI}}}{\hat\pi_{c,\textsc{ifh}}},\label{est:pi-ifh}
\end{align}
where $\hat\nu_{1c,\textsc{ifh}}$, $\color{red}{\hat\nu_{0c,\textsc{ifh}}^\text{PI}}$, $\hat\pi_{c,\textsc{ifh}}$ are a modified version of $\hat\nu_{1c,\textsc{if}}$, $\hat\nu_{0c,\textsc{if}}^\text{PI}$, $\hat\pi_{c,\textsc{if}}$, 
replacing $\frac{Z}{\hat e(X,Z)}$ with $\frac{Z}{\hat e(X,Z)}\big/\P_n[\frac{Z}{\hat e(X,Z)}]$ and $\frac{1-Z}{\hat e(X,Z)}$ with $\frac{1-Z}{\hat e(X,Z)}\big/\P_n[\frac{1-Z}{\hat e(X,Z)}]$.
(We call this modification \textit{H\'ajek-ization}.)

$\hat\Delta_{c,\textsc{if}}^\text{PI}$ and $\hat\Delta_{c,\textsc{ifh}}^\text{PI}$ are multiply robust (see Proposition \ref{thm:pi-multiplyrobust} below). $\hat\Delta_{c,\textsc{ifh}}^\text{PI}$ is range-preserving.

\paragraph{Circling back to type A.}

We now present the multiply robust \textit{outcome regression} estimator $\hat\Delta_{c,\textsc{ms}}^\text{PI}$ mentioned earlier. This is a multi-step estimator (the \textsc{ms} subscript is for ``multi-step'') that is based on expressing the IF of $\Delta_c^\text{PI}$ as a sum of three terms:
\vspace{-.5em}
\begin{align}
    \varphi_{\Delta_c^\text{PI}}(O)
    &=\frac{1}{\pi_c}\Big\{\overbrace{\frac{Z}{e(X,Z)}\I(C=c)[Y-\mu_{1c}(X)]}^{(*)}
    ~-~\overbrace{\frac{1-Z}{e(X,Z)}\pi_c(X)[Y-\mu_0(X)]}^{(**)}~+\nonumber
    \\
    &~~~~~~~~~~\underbrace{\Big[\frac{Z}{e(X,Z)}[\I(C=c)-\pi_c(X)]+\pi_c(X)\Big][\mu_{1c}(X)-\mu_0(X)-\Delta_c^\text{PI}]}_{(***)}\Big\},\label{adhoc}
\end{align}
and building steps that zero out the sample means of the terms. 
The resulting estimator is
\begin{align}
    \hat\Delta_{c,\textsc{ms}}^\text{PI}
    :=\frac{\sum_{i=1}^n\hat w(O_i)[\tilde\mu_{1c}(X_i)-{\color{red}\tilde\mu_0(X_i)}]}{\sum_{i=1}^n\hat w(O_i)}.\label{MS}
\end{align}
Here $\hat w(O):=\frac{Z}{\hat e(X,Z)}[\I(C=c)-\hat\pi_c(X)]+\hat\pi_c(X)$.
$\tilde\mu_{1c}(X)$ and $\tilde\mu_0(X)$ are specific estimators of $\mu_{1c}(X)$ and $\mu_0(X)$: 
$\tilde\mu_{1c}(X)$ is fit to (non)compliers in the treatment arm weighted by $1/\hat e(X,1)$, $\tilde\mu_0(X)$ is fit to control units weighted by $\hat\pi_c(X)/\hat e(X,0)$, and both are \textit{mean-recovering} models (i.e., on the sample to which the model is fit, the mean of model predictions equals outcome mean).
These models zero out the sample means of $\scriptsize(*)$ and $\scriptsize(**)$, and the weighted averaging in (\ref{MS}) zeros out the sample mean of $\scriptsize(***)$. ($\hat w(O)$ can also be H\'ajek-ized, for another version.)

\begin{remark}\label{rm:targeted-nuisance-estimation}
    The tilde notation here refers to this specific method of estimating $\mu$ functions for this estimator. The weighting targets the model to the relevant covariate space where it is used for prediction, and the mean-recovering feature ensures that predictions are on average unbiased (if the weights are correct). This targeted estimation technique can also be used (but is not required) for estimating $\mu_{1c}(X)$ and $\mu_0(X)$ for other estimators, and for estimating $\pi_c(X)$.
\end{remark}

\smallskip

$\hat\Delta_{c,\textsc{ms}}^\text{PI}$ shares the same multiply robust property of $\hat\Delta_{c,\textsc{if}}^\text{PI}$ and $\hat\Delta_{c,\textsc{ifh}}^\text{PI}$ (see Proposition~\ref{thm:pi-multiplyrobust}).

\begin{theorem}[multiply robust PI-based estimators]\label{thm:pi-multiplyrobust}
$\hat\Delta_{c,\textsc{if}}^\textup{PI}$, $\hat\Delta_{c,\textsc{ifh}}^\textup{PI}$ and $\hat\Delta_{c,\textsc{ms}}^\textup{PI}$ are consistent if one of the following three conditions hold:
\begin{enumerate}[(i),topsep=5pt,itemsep=-.2em]
    \item the propensity score $e(X,Z)$ and principal score $\pi_c(X)$ models are correctly specified; or
    \item the principal score model $\pi_c(X)$ and both outcome models $\mu_{1c}(X),\mu_0(X)$ are correctly specified; or
    \item the propensity score model $e(X,Z)$ and the outcome under control $\mu_0(X)$ model are correctly specified.
\end{enumerate}
\end{theorem}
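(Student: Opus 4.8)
The plan is to handle the three estimators uniformly by viewing each as a smooth (ratio) functional of sample averages of estimating functions: for $\hat\Delta_{c,\textsc{if}}^\text{PI}$ and $\hat\Delta_{c,\textsc{ifh}}^\text{PI}$ these are the un-centered influence functions of Proposition~\ref{thm:ifs-pi}, and for $\hat\Delta_{c,\textsc{ms}}^\text{PI}$ it is the three-term decomposition~(\ref{adhoc}). Under the standing regularity that each fitted nuisance converges in probability to a well-defined limit — write $\bar e,\bar\pi_c,\bar\mu_{1c},\bar\mu_0$, each equal to the truth when its model is correctly specified — together with a uniform law of large numbers over the working-model classes and the positivity in A2 (so $e(X,\cdot)$ is bounded away from $0$ and $\pi_c=\E[\pi_c(X)]>0$), every sample average converges to its population expectation at the limiting nuisances. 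Consistency then reduces to a population-level computation: those limiting expectations must recover the target whenever one of (i)--(iii) holds.

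For $\hat\Delta_{c,\textsc{if}}^\text{PI}=(\hat\nu_{1c,\textsc{if}}-\hat\nu_{0c,\textsc{if}}^\text{PI})/\hat\pi_{c,\textsc{if}}$, I would compute the probability limit of each ingredient by iterated expectations over $(Z,C,Y)$ given $X$, using A0--A2 and the identities $\E[\I(C=c)\mid X,Z=1]=\pi_c(X)$, $\E[Y\mid X,Z=1,C=c]=\mu_{1c}(X)$, $\E[Y\mid X,Z=0]=\mu_0(X)$ from Propositions~\ref{thm:id-tau.1c}--\ref{thm:pi-id:tau.0c}. A short calculation gives $\hat\pi_{c,\textsc{if}}\pto\E\{\tfrac{e(X,1)}{\bar e(X,1)}[\pi_c(X)-\bar\pi_c(X)]+\bar\pi_c(X)\}$, which is $\pi_c$ when $\bar e=e$ or $\bar\pi_c=\pi_c$; $\hat\nu_{1c,\textsc{if}}\pto\E\{\tfrac{e(X,1)}{\bar e(X,1)}[\pi_c\mu_{1c}-\bar\pi_c\bar\mu_{1c}]+\bar\pi_c\bar\mu_{1c}\}$, which is $\nu_{1c}$ when $\bar e=e$, or when $\bar\pi_c=\pi_c$ and $\bar\mu_{1c}=\mu_{1c}$; and $\hat\nu_{0c,\textsc{if}}^\text{PI}\pto\E\{\tfrac{e(X,0)}{\bar e(X,0)}\bar\pi_c[\mu_0-\bar\mu_0]+\tfrac{e(X,1)}{\bar e(X,1)}\bar\mu_0[\pi_c-\bar\pi_c]+\bar\pi_c\bar\mu_0\}$, which is $\nu_{0c}^\text{PI}=\E[\pi_c\mu_0]$ when $\bar e=e$ and either $\bar\pi_c=\pi_c$ or $\bar\mu_0=\mu_0$, or when $\bar\pi_c=\pi_c$ and $\bar\mu_0=\mu_0$. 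Tabulating these ``or'' clauses against (i)--(iii) shows each condition makes all three ingredients consistent, hence by the continuous mapping theorem (using $\pi_c>0$) $\hat\Delta_{c,\textsc{if}}^\text{PI}\pto(\nu_{1c}-\nu_{0c}^\text{PI})/\pi_c=\Delta_c^\text{PI}$. For the H\'ajek variant, the extra normalizers converge to $\E[e(X,1)/\bar e(X,1)]$ and $\E[e(X,0)/\bar e(X,0)]$, which are $1$ when $\bar e=e$, so $\hat\Delta_{c,\textsc{ifh}}^\text{PI}$ and $\hat\Delta_{c,\textsc{if}}^\text{PI}$ share the same limit under (i) and (iii); under (ii) I would rerun the bookkeeping carrying those two constants along and note that, because $\bar\pi_c=\pi_c$, $\bar\mu_{1c}=\mu_{1c}$, $\bar\mu_0=\mu_0$ there, every augmentation term vanishes and the normalizers cancel against themselves.

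For $\hat\Delta_{c,\textsc{ms}}^\text{PI}$ I would first record that, being a weighted mean, it is the (linear-in-$\Delta$, hence unique) root of $\P_n\{\hat w(O)[\tilde\mu_{1c}(X)-\tilde\mu_0(X)-\Delta]\}=0$, and that the mean-recovering property of $\tilde\mu_{1c},\tilde\mu_0$ forces the exact finite-sample identities $\P_n\{\tfrac{Z}{\hat e}\I(C=c)[Y-\tilde\mu_{1c}(X)]\}=0$ and $\P_n\{\tfrac{1-Z}{\hat e}\hat\pi_c(X)[Y-\tilde\mu_0(X)]\}=0$, i.e.\ the sample means of the $(*)$ and $(**)$ terms of~(\ref{adhoc}) vanish. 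Adding these two zeros shows $\hat\Delta_{c,\textsc{ms}}^\text{PI}$ solves the empirical analogue of ``$\E[(*)-(**)+(***)]=0$'' with $(\hat e,\hat\pi_c,\tilde\mu_{1c},\tilde\mu_0,\Delta)$ in place of the truth. Passing to the limit — and the two mean-recovering identities pass to the limit, so $\E[(*)]=\E[(**)]=0$ at \emph{any} limiting nuisances — $\hat\Delta_{c,\textsc{ms}}^\text{PI}$ converges to the root of $\E\{(\tfrac{e(X,1)}{\bar e(X,1)}[\pi_c-\bar\pi_c]+\bar\pi_c)[\bar\mu_{1c}-\bar\mu_0-\Delta]\}=0$, whose slope in $\Delta$ equals the limit of $\P_n[\hat w]$, namely $\pi_c>0$, under each of (i)--(iii). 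Finally I would check that $\Delta=\Delta_c^\text{PI}$ solves this equation: under (ii) the brackets collapse and it becomes the defining identity $\E[\pi_c(X)(\mu_{1c}(X)-\mu_0(X)-\Delta_c^\text{PI})]=0$ (which is just $\Delta_c^\text{PI}=(\nu_{1c}-\nu_{0c}^\text{PI})/\pi_c$); under (i) and (iii), $\bar e=e$, so the limiting mean-recovering identity yields $\E[\pi_c\bar\mu_{1c}]=\E[\pi_c\mu_{1c}]=\nu_{1c}$ — and likewise $\E[\pi_c\bar\mu_0]=\nu_{0c}^\text{PI}$ under (i), while $\bar\mu_0=\mu_0$ directly under (iii) — after which the left side telescopes to $\nu_{1c}-\nu_{0c}^\text{PI}-\Delta_c^\text{PI}\pi_c=0$.

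The underlying content is elementary — iterated expectations plus the definition of $\Delta_c^\text{PI}$ — so the main obstacle is not a single hard step but (a) the sheer bookkeeping: three estimators, two with a H\'ajek twin, against three specification scenarios, each requiring its own short calculation; and (b) one genuinely delicate point for $\hat\Delta_{c,\textsc{ms}}^\text{PI}$, namely justifying that the purpose-built estimators $\tilde\mu_{1c},\tilde\mu_0$ have well-defined probability limits even under misspecification (standard weighted M-estimation) and that their exact mean-recovering identities survive passage to the limit (a uniform LLN over the weighted working-model class, which I would invoke rather than prove). I would also stay alert that whenever a cancellation in the \textsc{if}/\textsc{ifh} computations leans on the limit $\bar\mu_{1c}$ or $\bar\mu_0$ of a possibly-misspecified outcome model, the relevant case always has either $\bar e=e$ (so that term is self-cancelling whatever $\bar\mu$ is) or has that outcome model correctly specified — so no hidden dependence on an unknown projection can break the argument.
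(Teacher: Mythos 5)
Your proposal is correct and follows essentially the same route as the paper: compute the probability limits of the component estimators ($\hat\pi_{c,\textsc{if}}$, $\hat\nu_{1c,\textsc{if}}$, $\hat\nu_{0c,\textsc{if}}^\text{PI}$ and their H\'ajek and multi-step analogues) via iterated expectations given $X$, verify case-by-case that each of (i)--(iii) forces every component to its target, and for $\hat\Delta_{c,\textsc{ms}}^\text{PI}$ use the limiting versions of the mean-recovering estimating equations exactly as the paper does in its equations (\ref{eq:robust2})--(\ref{eq:robust3}). Your only departure is cosmetic — writing each limit once in the compact $e/\bar e$ ratio form and then tabulating, rather than redoing the rearrangement separately under each specification scenario — and all the delicate points you flag (the H\'ajek normalizers, the survival of the mean-recovering identities in the limit) are handled the same way in the paper.
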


For simplicity, we presume that estimation uses parametric models. While an active research topic on IF-based inference is data-adaptive nuisance estimation, we leave it to future work.

\subsection{Type C ($\approx$ other/weighting estimators)}

Type C estimators do not involve estimating $\mu_0(X)$ as a step in the estimation procedure. This type includes the pure weighting estimator
\begin{align}
    \hat\Delta_{c,e\pi}^\text{PI}
    &:=
    \frac{\sum_{i=1}^n\mfrac{Z_i\I(C_i=c)}{\hat e(X_i,Z_i)}Y_i}{\sum_{i=1}^n\mfrac{Z_i\I(C_i=c)}{\hat e(X_i,Z_i)}}
    -
    \frac{\sum_{i=1}^n\mfrac{(1-Z_i)\hat\pi_c(X_i)}{\hat e(X_i,Z_i)}Y_i}{\sum_{i=1}^n\mfrac{(1-Z_i)\hat\pi_c(X_i)}{\hat e(X_i,Z_i)}},\label{pi-est:id1}
\end{align}
justified by the second $\tau_{1c}$ formula in (\ref{id:tau.1c}) and the third $\tau_{0c}^\text{PI}$ formula in (\ref{pi-id:tau.0c}).
Also included in type C is the estimator that employs this same weighting scheme and uses the weighted sample to fit a model regressing outcome on treatment and covariates, say, to improve precision in estimating $\Delta_c^\text{PI}$ (in the spirit of~ \cite{wang2019AnalysisCovarianceRandomized,steingrimsson2017ImprovingPrecisionAdjusting}). 
For this type, we do not have a specific sensitivity analysis technique in mind, and will need to see whether the identification result under the sensitivity assumption allows a simple modification.

\smallskip

To sum up, we have defined three types of PI-based estimators: type A, whose defining feature is involving $\mu_0(X)$ estimation; type B, whose defining feature is having as a component an IF-based estimator of $\nu_{0c}^\text{PI}$; and type C, other estimators.
We now consider sensitivity analysis.

\section{Sensitivity analysis based on three ratio-type sensitivity paramters}\label{sec:ratio-params}

Recall that the challenge before invoking PI was that the stratum-specific conditional $Y_0$ means $\mu_{01}(X)$ and $\mu_{00}(X)$ are not identified, as they are two unknowns in the mixture equation
\begin{align}
    {\color{purple}\mu_{01}(X)}\pi_1(X)+{\color{purple}\mu_{00}(X)}\pi_0(X)=\mu_0(X).\tag{\ref{eq:mixture-mean}}
\end{align}
PI identifies 
$\mu_{01}(X)$ and $\mu_{00}(X)$ by equating them to each other. A sensitivity analysis replaces PI with a \textit{sensitivity assumption} that allows $\mu_{01}(X)$ and $\mu_{00}(X)$ to differ from each other. The assumption is indexed by a \textit{sensitivity parameter} indicating how and to what degree they differ.
To accommodate different outcome types (binary, bounded, unbounded) and different conceptualizations of how $\mu_{01}(X)$ and $\mu_{00}(X)$ may differ, we consider different parameterizations.
The following assumptions use an odds ratio (OR), a generalized odds ratio (GOR) and a mean ratio (MR) sensitivity parameter.
In all of them, $\rho=1$ recovers the PI case.

\medskip

\noindent
\begin{tabular}{ll}
    A4-OR (sensitivity odds ratio): 
    & $\displaystyle\frac{\mu_{01}(X)/[1-\mu_{01}(X)]}{\mu_{00}(X)/[1-\mu_{00}(X)]}=\rho$,
    \\[1.2em]
    A4-GOR (sensitivity generalized odds ratio):
    & $\displaystyle\frac{[\mu_{01}(X)-l]/[h-\mu_{01}(X)]}{[\mu_{00}(X)-l]/[h-\mu_{00}(X)]}=\rho$
    \\[.8em]
    & where $l,h$ are the lower and upper $Y_0$ bounds,
    \\[.2em]
    A4-MR (sensitivity mean ratio):
    & $\displaystyle\frac{\mu_{01}(X)}{\mu_{00}(X)}=\rho$,
\end{tabular}

\smallskip

\noindent
for some positive range of $\rho$ that is considered plausible.

\medskip

As mentioned in Section~\ref{sec:relatedwork}, a challenge with A4-MR is that it may predict out of the outcome range. For an example, consider an outcome on a 0 to 7 scale. Suppose that for some covariate value $x$, $\mu_0(x)=5$ and $\pi_1(x)=0.3$. Then a sensitivity MR value of 1.69 would imply $\mu_{01}(x)>7$.
A4-MR is thus more suitable if the outcome is single-signed and unbounded. Since most outcomes are practically bounded, if using A4-MR, the parameter range should be carefully selected to avoid predicting extreme $\mu_{0c}(X)$ values; we will discuss this in Section~\ref{sec:calibration}.

For binary outcomes, we propose A4-OR, the assumption that within levels of $X$ (i)~the odds of the outcome for compliers is $\rho$ times that for noncompliers, or equivalently (because ORs are symmetric), (ii)~the odds of being a complier for those with the outcome is $\rho$ times that for those without the outcome. A4-OR predicts $\mu_{0c}(X)$ within $[0,1]$.

More generally, for outcomes bounded on both ends, we propose A4-GOR, a generalization of A4-OR. (A4-OR is a special case with $l=0$ and $h=1$.) Figure~\ref{fig:gor-visualization} shows the connection between $\mu_{00}(X)$ and $\mu_{01}(X)$ for several GOR values.
If the outcome range varies with $X$, the bounds can be made $X$-value-specific, i.e., $l(X)$ and $h(X)$.
A4-GOR always predicts $\mu_{0c}(X)$ within the specified bounds. 
For a non-binary outcome, however, A4-GOR may still contradict with the observed outcome distribution in ways that are not obvious, e.g., predicting $\mu_{0c}(X)$ values far from where the outcome mass is concentrated.

\begin{figure}[t]
    \centering
    \caption{Connection between $\mu_{00}(X)$ and $\mu_{01}(X)$ under A4-GOR for different GOR values}
    \includegraphics[width=\textwidth]{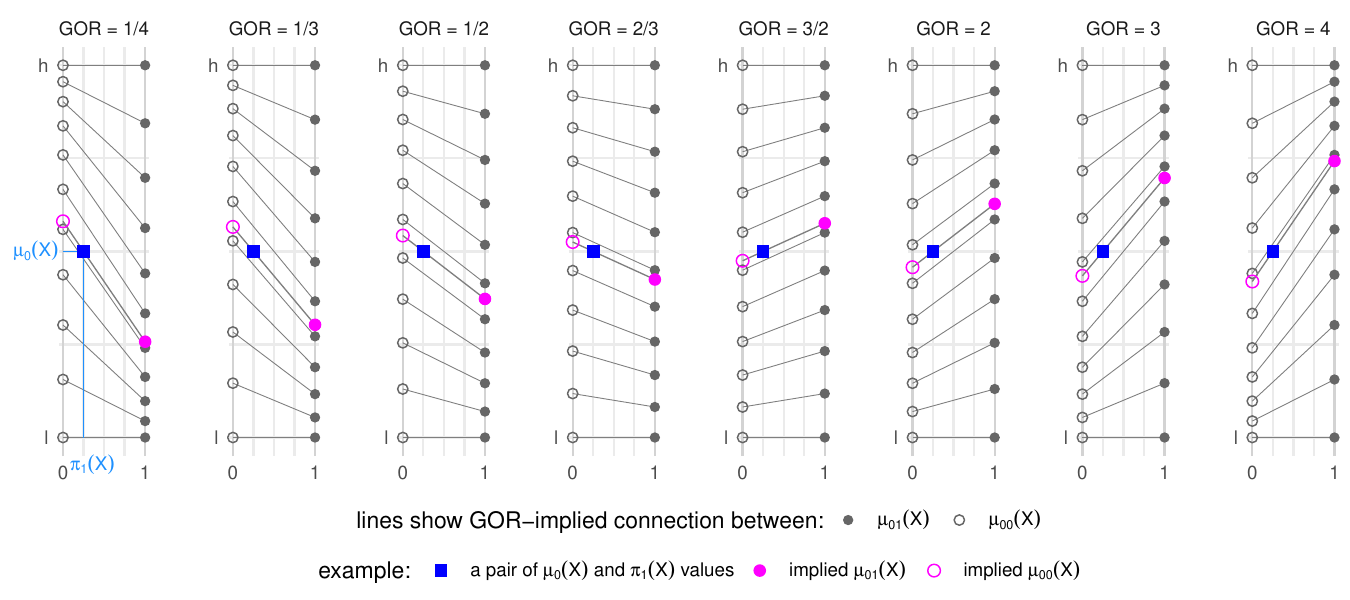}
    \label{fig:gor-visualization}
    \vspace{-2em}
\end{figure}

\begin{remark}[Exponential tilting connection]
A4-OR can be equivalently expressed as
\begin{align}
    \P(Y_0\mid X,C=1)=\P(Y_0\mid X,C=0)\frac{\exp((\ln\rho) Y_0)}{\E[\exp((\ln\rho) Y_0)\mid X,C=0]},\label{assumption:exponential-tilt}
\end{align}
which looks like exponential tilting assumptions used in the context of non-ignorable missingness and unobserved confounding \cite{Franks2020,Scharfstein2021,Robins2000}. The difference is that in these other problems, the assumption connects an unobserved distribution (e.g., that of missing data) to an observed distribution (that of non-missing data), whereas here the assumption relates two otherwise unidentified distributions whose mixture (and mixing ratio) is identified. 
Here the tilting-like assumption (\ref{assumption:exponential-tilt}) achieves identification with a binary outcome but not generally. If $Y_0$ is continuous, for example, (\ref{assumption:exponential-tilt}) (combined with the mixing weights $\pi_c(X)$) is not sufficient to identify the component distributions $\P(Y_0\mid X,C=c)$ (or their means) based on the mixture distribution $\P(Y_0\mid X)$.
\end{remark}

Proofs of all results in this section are provided in Appendix \ref{appendix:ratio-params}.

\subsection{Identification}

Combining any of the A4- assumptions with (\ref{eq:mixture-mean}), we can identify $\mu_{0c}(X)$, which then identifies $\tau_{0c}$. We present results for A4-GOR (which includes A4-OR as a special case) and A4-MR.

To maintain symmetry, let $\rho_1=\rho$, $\rho_0=1/\rho$.

\smallskip

\begin{theorem}[GOR- and MR-based identification]\label{thm:ratio.params-id}
Under assumptions A0-A2 combined with A4-GOR,
\begin{align}
    \mu_{0c}(X)=
    \begin{cases}
    \displaystyle\frac{\alpha_c(X)-\beta_c(X)}{2(\rho_c-1)\pi_c(X)}(h-l)+l & \text{if }\rho_c\neq1
    \\
    \mu_0(X) & \text{if }\rho_c=1
    \end{cases}
    ~~~=:\mu_{0c}^\textup{GOR}(X),\label{gor-id:mu.0c}
\end{align}
and under assumptions A0-A2 combined with A4-MR,
\begin{align}
    \mu_{0c}(X)=\gamma_c(X)\mu_0(X)=:\mu_{0c}^\textup{MR}(X),\label{mr-id:mu.0c}
\end{align}
where
\vspace{-.5em}
\begin{align*}
    \alpha_c(X)
    &:=\left[\pi_c(X)+{\textstyle\frac{\mu_0(X)-l}{h-l}}\right](\rho_c-1)+1,
    \\
    \beta_c(X)
    &:=\sqrt{[\alpha_c(X)]^2-4\pi_c(X){\textstyle\frac{\mu_0(X)-l}{h-l}}\rho_c(\rho_c-1)},
    \\
    \gamma_c(X)
    &:=\frac{\rho_c}{(\rho_c-1)\pi_c(X)+1}.
\end{align*}
\end{theorem}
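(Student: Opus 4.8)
The plan is to solve, pointwise in $X$, the system consisting of the mixture equation (\ref{eq:mixture-mean}) --- which holds under A0--A2 by Proposition \ref{thm:id-tau.1c} --- together with $\pi_0(X)=1-\pi_1(X)$ and the relevant A4- constraint, treating $\mu_{01}(X),\mu_{00}(X)$ as the two unknowns. Since $\pi_c(X)$ and $\mu_0(X)$ are already expressed in terms of the observed-data distribution (Proposition \ref{thm:id-tau.1c}), the content of the proposition reduces to verifying these two algebraic identities.

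\emph{The MR case.} Substituting $\mu_{01}(X)=\rho\,\mu_{00}(X)$ from A4-MR into (\ref{eq:mixture-mean}) yields $\mu_{00}(X)\big[(\rho-1)\pi_1(X)+1\big]=\mu_0(X)$; the bracket is strictly positive for every admissible $\rho>0$ (it is $\geq 1$ when $\rho\geq 1$ and $>\rho$ when $0<\rho<1$), so $\mu_{00}(X)=\mu_0(X)/[(\rho-1)\pi_1(X)+1]$ and $\mu_{01}(X)=\rho\,\mu_{00}(X)$. It then remains only to check the elementary identity $1/[(\rho-1)\pi_1(X)+1]=\rho_0/[(\rho_0-1)\pi_0(X)+1]$ and its $c=1$ counterpart (using $\rho_0=1/\rho$ and $\pi_0=1-\pi_1$), which recasts both expressions in the symmetric form $\gamma_c(X)\mu_0(X)$.

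\emph{The GOR case.} First I would reduce to the OR case $l=0,h=1$: replacing $Y_0$ by $(Y_0-l)/(h-l)$ leaves $\pi_c(X)$ unchanged, turns the GOR constraint into an OR constraint on the rescaled outcome, and (by linearity of conditional expectation) carries (\ref{eq:mixture-mean}) into the mixture equation for the rescaled outcome with $\mu_{0c}(X)$ and $\mu_0(X)$ replaced by $\widetilde\mu_{0c}(X):=(\mu_{0c}(X)-l)/(h-l)$ and $\widetilde m(X):=(\mu_0(X)-l)/(h-l)$. In the rescaled variables the OR constraint expresses $\widetilde\mu_{0c'}(X)$ (the other stratum) as the M\"obius function $u\mapsto u/[\rho_c-(\rho_c-1)u]$ of $u:=\widetilde\mu_{0c}(X)$; substituting this into the rescaled mixture equation and clearing the (positive, on $[0,1]$) denominator produces the quadratic $\pi_c(X)(\rho_c-1)\,u^2-\alpha_c(X)\,u+\rho_c\,\widetilde m(X)=0$, whose linear coefficient matches the $\alpha_c(X)$ of the statement and whose discriminant is exactly $\beta_c(X)^2$. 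The quadratic formula gives $u=[\alpha_c(X)\pm\beta_c(X)]/[2\pi_c(X)(\rho_c-1)]$, and undoing the rescaling via $\mu_{0c}(X)=(h-l)u+l$ yields (\ref{gor-id:mu.0c}) up to the sign choice. When $\rho_c=1$ the quadratic degenerates to a linear equation returning $\widetilde\mu_{0c}(X)=\widetilde m(X)$, i.e.\ $\mu_{0c}(X)=\mu_0(X)$, which is the second branch of (\ref{gor-id:mu.0c}) and agrees with (\ref{pi-id:mu.0c}).

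The main obstacle --- and the only genuinely non-mechanical step --- is justifying the minus sign, and I would offer two arguments, either sufficient on its own. (i) Continuity at $\rho_c=1$: a short expansion in $\varepsilon:=\rho_c-1$ gives $\alpha_c(X)-\beta_c(X)=2\pi_c(X)\widetilde m(X)\,\varepsilon+O(\varepsilon^2)$, so the minus root tends to $\widetilde m(X)$ as $\rho_c\to 1$, recovering the PI value $\mu_0(X)$, whereas the plus root diverges; hence only the minus branch is continuous across $\rho_c=1$. (ii) Range and uniqueness: the map $g(u):=\pi_c(X)u+\pi_{c'}(X)\,u/[\rho_c-(\rho_c-1)u]$ is continuous and strictly increasing from $g(0)=0$ to $g(1)=1$ on $[0,1]$ (a convex combination of $u$ with an increasing M\"obius image of $u$ that maps $[0,1]$ onto itself), so the rescaled mixture equation $g(u)=\widetilde m(X)$ has a unique solution in $(0,1)$; evaluating the quadratic at the endpoints gives the values $\rho_c\widetilde m(X)>0$ at $u=0$ and $\widetilde m(X)-1<0$ at $u=1$, which, combined with the sign of the leading coefficient $\pi_c(X)(\rho_c-1)$, shows in each of the cases $\rho_c>1$ and $\rho_c<1$ that the unique root in $(0,1)$ is precisely the one carrying the minus sign, the other root falling outside $[0,1]$. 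The boundary situations $\mu_0(X)\in\{l,h\}$ or $\pi_c(X)\in\{0,1\}$ are degenerate and handled by continuity.
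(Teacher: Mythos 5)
Your proposal is correct and follows essentially the same route as the paper's proof: both rescale the outcome to $[0,1]$, reduce the system (mixture equation plus ratio constraint) to the same quadratic $\pi_c(X)(\rho_c-1)u^2-\alpha_c(X)u+\rho_c\widetilde m(X)=0$, and select the root by the requirement that both stratum means lie in the admissible range. The MR part is identical to the paper's. The only substantive difference is the root-selection argument: the paper rules out one root in each of the cases $\rho_c>1$ and $\rho_c<1$ by exhibiting two algebraic forms of the discriminant and deriving sign inequalities for the candidate roots, whereas your argument (ii) evaluates the quadratic at the endpoints ($q(0)=\rho_c\widetilde m(X)>0$, $q(1)=\widetilde m(X)-1<0$) and combines this with the sign of the leading coefficient and the strict monotonicity of $g$ to locate the unique root in $(0,1)$; this is arguably cleaner and handles both cases uniformly. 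One caution: your argument (i) (continuity at $\rho_c=1$) is by itself only a local check near $\rho_c=1$ — it identifies which branch extends the PI solution but does not, without an additional uniqueness or range argument, exclude the plus root for $\rho_c$ far from $1$ — so it should be presented as a consistency check rather than as an independent proof; argument (ii) carries the full weight and is sufficient.
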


\medskip

Identification of $\nu_{0c}$, $\tau_{0c}$, $\Delta_c$ follows from $\mu_{0c}(X)$ identification.
We will label the results of these parameters under A4-GOR and A4-MR with superscripts \textsuperscript{GOR} and \textsuperscript{MR}, respectively.

\subsection{Estimation}

Based on the above identification results, we now modify the PI-based estimators.
We let each resulting estimator inherit the label of the originating estimator, except replacing the superscript $^\text{PI}$ with one indicating the sensitivity assumption. 

Figure~\ref{fig:flowchart} provides a summary of the key techniques presented here and in the next section.

\subsubsection{Type A estimators}

These estimators are adapted by replacing the estimate of $\color{red}\mu_0(X)$ with 
estimates of $\color{blue}\mu_{0c}^\text{GOR}(X)$ or $\color{blue}\mu_{0c}^\text{MR}(X)$.
For example, this turns the principal score weighted outcome regression estimator $\hat\Delta_{c,\pi\mu}^\text{PI}$ (\ref{pi-est:id2}) (aka the plug-in estimator) into 
\begin{align}
    \hat\Delta_{c,\pi\mu}^\text{GOR}
    &:=\frac{\sum_{i=1}^n\hat\pi_c(X_i)[\hat\mu_{1c}(X_i)-{\color{blue}\hat\mu_{0c}^\text{GOR}(X_i)}]}{\sum_{i=1}^n\hat\pi_c(X_i)},
    ~~~
    \hat\Delta_{c,\pi\mu}^\text{MR}:=\frac{\sum_{i=1}^n\hat\pi_c(X_i)[\hat\mu_{1c}(X_i)-{\color{blue}\hat\mu_{0c}^\text{MR}(X_i)}]}{\sum_{i=1}^n\hat\pi_c(X_i)},\label{est:gor,mr-plugin}
\end{align}
where $\color{blue}{\hat\mu_{0c}^\text{GOR}(X_i)}$ and $\color{blue}{\hat\mu_{0c}^\text{MR}(X_i)}$ are $\mu_{0c}^\text{GOR}(X_i)$ (\ref{gor-id:mu.0c}) and $\mu_{0c}^\text{MR}(X_i)$ (\ref{mr-id:mu.0c}) evaluated at $\hat\mu_0(X_i)$  and $\hat\pi_c(X_i)$.
The other outcome-regression estimators $\hat\Delta_{c,e\mu}^\text{PI}$ (\ref{pi-est:id3}) and $\hat\Delta_{c,\textsc{ms}}^\text{PI}$ (\ref{MS}) are adapted similarly.

\subsubsection{Type B estimators}

Adaptation is based on the IFs of $\nu_{0c}^\textup{GOR}:=\E[\pi_c(X)\mu_{0c}^\text{GOR}(X)]$ and $\nu_{0c}^\textup{MR}:=\E[\pi_c(X)\mu_{0c}^\text{MR}(X)]$, which are provided in Proposition \ref{thm:ratio.params-ifs}.

\begin{theorem}[GOR- and MR-based IFs]\label{thm:ratio.params-ifs}
The IFs for $\nu_{0c}^\textup{GOR}$ and $\nu_{0c}^\textup{MR}$ are
\begin{align}
    \varphi_{\nu_{0c}^\textup{GOR}}(O)
    &=\frac{1-Z}{e(X,Z)}\epsilon_{\mu,c}^\textup{GOR}(X)[Y-\mu_0(X)]+\frac{Z}{e(X,Z)}\epsilon_{\pi,c}^\textup{GOR}(X)[\I(C=c)-\pi_c(X)]+\nonumber
    \\
    &~~~~~~~~~~~~~~~~~~~~~~~~~~~~~~~~~~~~~~~~~~~~~~~~~~~~~~~~~~~~+\pi_c(X)\mu_{0c}^\textup{GOR}(X)-\nu_{0c}^\textup{GOR},
    \\
    \varphi_{\nu_{0c}^\textup{MR}}(O)
    &=\frac{1-Z}{e(X,Z)}\epsilon_{\mu,c}^\textup{MR}(X)[Y-\mu_0(X)]+\frac{Z}{e(X,Z)}\epsilon_{\pi,c}^\textup{MR}(X)[\I(C=c)-\pi_c(X)]+\nonumber
    \\
    &~~~~~~~~~~~~~~~~~~~~~~~~~~~~~~~~~~~~~~~~~~~~~~~~~~~~~~~~~~~~+\pi_c(X)\mu_{0c}^\textup{MR}(X)-\nu_{0c}^\textup{MR},
\end{align}
where
\vspace{-1em}
\begin{alignat*}{2}
    \epsilon_{\mu,c}^\textup{GOR}(X)
    &:=
    \begin{cases}
    \frac{1}{2}-\frac{\alpha_c(X)}{2\beta_c(X)}+\frac{\rho_c\pi_c(X)}{\beta_c(X)} & \text{if }\rho_c\neq1
    \\
    \pi_c(X) & \text{if }\rho_c=1
    \end{cases},
    ~~~~
    &&\epsilon_{\mu,c}^\textup{MR}(X)
    :=\gamma_c(X)\pi_c(X),
    \\
    \epsilon_{\pi,c}^\textup{GOR}(X)
    &:=
    \begin{cases}
    \left[\frac{1}{2}\!-\!\frac{\alpha_c(X)}{2\beta_c(X)}\!+\!\frac{\rho_c\frac{\mu_0(X)-l}{h-l}}{\beta_c(X)}\right](h\!-\!l)+l & \text{if }\rho_c\neq1
    \\
    \mu_0(X) & \text{if }\rho_c=1
    \end{cases},
    ~~~
    &&\epsilon_{\pi,c}^\textup{MR}(X)
    :=\gamma_1(X)\gamma_0(X)\mu_0(X).
\end{alignat*}
\end{theorem}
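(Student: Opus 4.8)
The plan is to treat $\nu_{0c}^\textup{GOR}$ and $\nu_{0c}^\textup{MR}$ as smooth functionals of the observed-data law that depend on it only through the two nuisance functions $\mu_0(X)=\E[Y\mid X,Z=0]$ and $\pi_c(X)=\P(C=c\mid X,Z=1)$, and to obtain their influence functions by a chain-rule argument with the already-derived IF of $\nu_{0c}^\textup{PI}$ in (\ref{pi-if:nu.0c}) as template. Writing a generic functional of this kind as $\E\{f(X;\mu_0(X),\pi_c(X))\}$ for a known smooth $f$, the IF of $\nu_{0c}^\textup{PI}$ is exactly the case $f(X;m,p)=p\,m$, and inspecting (\ref{pi-if:nu.0c}) reveals the pattern: the IF equals $f(X;\mu_0(X),\pi_c(X))-\nu+\partial_m f\cdot r_\mu(O)+\partial_p f\cdot r_\pi(O)$, where $r_\mu(O)=\frac{1-Z}{e(X,Z)}[Y-\mu_0(X)]$ and $r_\pi(O)=\frac{Z}{e(X,Z)}[\I(C=c)-\pi_c(X)]$ are the (standard) pointwise IF representers of $\mu_0(X)$ and $\pi_c(X)$, and the partials $\partial_m f,\partial_p f$ are evaluated at the truth. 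By Proposition \ref{thm:ratio.params-id}, $\nu_{0c}^\textup{GOR}=\E\{\pi_c(X)\,\mu_{0c}^\textup{GOR}(X)\}$ and $\nu_{0c}^\textup{MR}=\E\{\pi_c(X)\,\mu_{0c}^\textup{MR}(X)\}$ have exactly this form, with $f^\textup{GOR}(X;m,p)=p\cdot\mu_{0c}^\textup{GOR}(X;m,p)$ and $f^\textup{MR}(X;m,p)=p\cdot\mu_{0c}^\textup{MR}(X;m,p)$ given by (\ref{gor-id:mu.0c})--(\ref{mr-id:mu.0c}); so the claimed IFs should come out with $\epsilon_{\mu,c}^\textup{GOR}=\partial_m f^\textup{GOR}$, $\epsilon_{\pi,c}^\textup{GOR}=\partial_p f^\textup{GOR}$, and likewise for MR.

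To make this rigorous I would fix a regular parametric submodel $\{P_t\}$ through the truth with score $s(O)$ and differentiate $t\mapsto\nu_{0c}^\textup{GOR}(P_t)$ at $t=0$. The derivative splits into (i) the contribution of the outer expectation with the integrand frozen, $\E\{[\pi_c(X)\mu_{0c}^\textup{GOR}(X)-\nu_{0c}^\textup{GOR}]\,s(O)\}$; (ii) the contribution of $\frac{d}{dt}\mu_{0,t}(X)\big|_0$, which by the standard pathwise derivative of a regression function equals $\E\{r_\mu(O)\,s(O)\mid X\}$; and (iii) the contribution of $\frac{d}{dt}\pi_{c,t}(X)\big|_0=\E\{r_\pi(O)\,s(O)\mid X\}$. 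Weighting (ii) and (iii) by $\partial_m f^\textup{GOR}$ and $\partial_p f^\textup{GOR}$ and applying iterated expectations collapses the sum to $\E\{\varphi_{\nu_{0c}^\textup{GOR}}(O)\,s(O)\}$ with $\varphi_{\nu_{0c}^\textup{GOR}}$ of the stated form. Since this $\varphi$ has mean zero and — under the implicit regularity conditions (propensity and principal scores bounded away from $0$ and $1$, finite outcome moments, and, in the GOR case, a strictly positive discriminant so that $\beta_c(X)$ is bounded away from $0$) — has finite variance and lies in the nonparametric tangent space, it is the influence function. The argument for $\nu_{0c}^\textup{MR}$ is identical.

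What remains is to evaluate the partial derivatives. For MR this is immediate: $\mu_{0c}^\textup{MR}(X)=\gamma_c(X)\mu_0(X)$ with $\gamma_c(X)=\rho_c/[(\rho_c-1)\pi_c(X)+1]$ not depending on $\mu_0$, so $\partial_m f^\textup{MR}=\pi_c(X)\gamma_c(X)=\epsilon_{\mu,c}^\textup{MR}(X)$, while $\partial_p f^\textup{MR}=\mu_0(X)\,\partial_p[p\gamma_c]=\mu_0(X)\,\rho_c/[(\rho_c-1)\pi_c(X)+1]^2$, which equals $\gamma_1(X)\gamma_0(X)\mu_0(X)=\epsilon_{\pi,c}^\textup{MR}(X)$ after using $(\rho_0-1)\pi_0(X)+1=[(\rho-1)\pi_1(X)+1]/\rho$. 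For GOR one differentiates the closed form (\ref{gor-id:mu.0c}) through the square root $\beta_c(X)$; writing $m=\frac{\mu_0(X)-l}{h-l}$, from $\partial_{\mu_0}\alpha_c=\frac{\rho_c-1}{h-l}$, $\partial_{\pi_c}\alpha_c=\rho_c-1$ and $\beta_c^2=\alpha_c^2-4\pi_c m\rho_c(\rho_c-1)$ one obtains $\partial_m f^\textup{GOR}=\frac12-\frac{\alpha_c(X)}{2\beta_c(X)}+\frac{\rho_c\pi_c(X)}{\beta_c(X)}$ and $\partial_p f^\textup{GOR}=(h-l)\big[\frac12-\frac{\alpha_c(X)}{2\beta_c(X)}+\frac{\rho_c m}{\beta_c(X)}\big]+l$, which are precisely $\epsilon_{\mu,c}^\textup{GOR}(X)$ and $\epsilon_{\pi,c}^\textup{GOR}(X)$; the $\rho_c=1$ branches follow by taking $\rho_c\to1$ (whence $\alpha_c\to1$, $\beta_c\to1$), and one checks as a consistency test that the limiting $\varphi_{\nu_{0c}^\textup{GOR}}$ reproduces $\varphi_{\nu_{0c}^\textup{PI}}$ of (\ref{pi-if:nu.0c}). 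The main obstacle is purely this bookkeeping — the square-root differentiation and carrying the $\rho_c\to1$ limit through the quotient $(\alpha_c-\beta_c)/[2(\rho_c-1)\pi_c]$ cleanly — rather than anything conceptual, since the chain-rule structure is pinned down by the already-established Proposition \ref{thm:ifs-pi}.
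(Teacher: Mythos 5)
Your proposal is correct and follows essentially the same route as the paper's proof: the paper likewise differentiates $\nu_{0c}(\theta)=\int\pi_c(x,\theta_3)\mu_{0c}(x,\theta_3,\theta_6)f_1(x,\theta_1)dx$ along a parametric submodel, splits the pathwise derivative into the marginal-of-$X$, principal-score, and outcome-regression contributions with exactly the representers $\frac{Z}{e(X,Z)}[\I(C=c)-\pi_c(X)]$ and $\frac{1-Z}{e(X,Z)}[Y-\mu_0(X)]$, and reduces the whole problem to the partial derivatives of $\pi_c(X)\mu_{0c}(X)$ with respect to $\pi_c(X)$ and $\mu_0(X)$, which it evaluates to the same $\epsilon_{\pi,c}$ and $\epsilon_{\mu,c}$ you obtain. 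The only cosmetic differences are that the paper carries out the decomposition via the explicit six-factor likelihood factorization and tangent-space bookkeeping, and your reuse of the symbol $m$ for both $\mu_0(X)$ and its rescaled version $\frac{\mu_0(X)-l}{h-l}$ should be cleaned up, since the weight on the outcome representer must be the derivative with respect to $\mu_0(X)$ itself (which is what you in fact compute).
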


\medskip

Based on Proposition \ref{thm:ratio.params-ifs}, under A4-GOR and A4-MR, we obtain estimators $\hat\Delta_{c,\textsc{if}}^\text{GOR}$ and $\hat\Delta_{c,\textsc{if}}^\text{MR}$ by replacing the $\color{red}{\hat\nu_{0c,\textsc{if}}^\text{PI}}$ component of $\hat\Delta_{c,\textsc{if}}^\text{PI}:=\frac{\hat\nu_{1c,\textsc{if}}-{\color{red}\hat\nu_{0c,\textsc{if}}^\text{PI}}}{\hat\delta_{c,\textsc{if}}}$ (\ref{est:pi-if})
with $\color{blue}{\hat\nu_{0c,\textsc{if}}^\text{GOR}}$ and $\color{blue}{\hat\nu_{0c,\textsc{if}}^\text{MR}}$, respectively, where
\vspace{-.5em}
\begin{align}
    {\color{blue}\hat\nu_{0c,\textsc{if}}^\text{GOR}}
    &:=\P_n\Big\{\frac{1-Z}{\hat e(X,Z)}{\hat\epsilon}_{\mu,c}^\text{GOR}(X)[Y-\hat\mu_0(X)]+
    \frac{Z}{\hat e(X,Z)}{\hat\epsilon}_{\pi,c}^\text{GOR}(X)[\I(C=c)-\hat\pi_c(X)]+\nonumber
    \\
    &~~~~~~~~~~~~~~~~~~~~~~~~~~~~~~~~~~~~~~~~~~~~~~~~~~~~~~~~~~~~~~~~~~~~~~~~~~~~~~~~+\hat\pi_c(X)\hat\mu_{0c}^\text{GOR}(X)\Big\},\label{gor-est:nu.0c}
    \\
    {\color{blue}\hat\nu_{0c,\textsc{if}}^\text{MR}}
    &:=\P_n\Big\{\frac{1-Z}{\hat e(X,Z)}{\hat\epsilon}_{\mu,c}^\text{MR}(X)[Y-\hat\mu_0(X)]+
    \frac{Z}{\hat e(X,Z)}{\hat\epsilon}_{\pi,c}^\text{MR}(X)[\I(C=c)-\hat\pi_c(X)]+\nonumber
    \\
    &~~~~~~~~~~~~~~~~~~~~~~~~~~~~~~~~~~~~~~~~~~~~~~~~~~~~~~~~~~~~~~~~~~~~~~~~~~~~~~~~+\hat\pi_c(X)\hat\mu_{0c}^\text{MR}(X)\Big\},
\end{align}
and the $\epsilon$ functions are estimated by evaluating them at $\hat\pi_c(X)$ and $\hat\mu_0(X)$.

$\hat\Delta_{c,\textsc{ifh}}^\text{PI}$ (\ref{est:pi-ifh}) is modified similarly to obtain sensitivity estimators $\hat\Delta_{c,\textsc{ifh}}^\text{GOR}$ and $\hat\Delta_{c,\textsc{ifh}}^\text{MR}$ by replacing $\color{red}{\hat\nu_{0c,\textsc{ifh}}^\text{PI}}$ with $\color{blue}{\hat\nu_{0c,\textsc{ifh}}^\text{GOR}}$ and $\color{blue}{\hat\nu_{0c,\textsc{ifh}}^\text{MR}}$, the H\'ajek-ized version of $\hat\nu_{0c,\textsc{if}}^\text{GOR}$ and $\hat\nu_{0c,\textsc{if}}^\text{MR}$.

\paragraph{\textit{Partial loss of robustness.}}

Proposition \ref{thm:pi-multiplyrobust} stated that several PI-based estimators are multiply robust, including type B estimators $\hat\Delta_{c,\textsc{if}}^\text{PI}$ (\ref{est:pi-if}) and $\hat\Delta_{c,\textsc{ifh}}^\text{PI}$ (\ref{est:pi-ifh}), and the multi-step type A estimator $\hat\Delta_{c,\textsc{ms}}^\text{PI}$ (\ref{MS}). The adaptation of these estimators for sensitivity analysis results in \textit{partial loss of robustness} (see Proposition \ref{thm:robustness-loss}).
The resulting GOR-based estimators 
($\hat\Delta_{c,\textsc{if}}^\text{GOR}$, $\hat\Delta_{c,\textsc{ifh}}^\text{GOR}$, $\hat\Delta_{c,\textsc{ms}}^\text{GOR}$)
depend on correct specification of models for $\pi_c(X)$ and $\mu_0(X)$ (i.e., they are inconsistent if either model is misspecified).
The MR-based counterparts
($\hat\Delta_{c,\textsc{if}}^\text{MR}$, $\hat\Delta_{c,\textsc{ifh}}^\text{MR}$, $\hat\Delta_{c,\textsc{ms}}^\text{MR}$)
depend on correct specification of the model for $\pi_c(X)$.

\begin{theorem}[Partial loss of robustness]\label{thm:robustness-loss}
    $\hat\Delta_{c,\textsc{if}}^\textup{GOR}$, $\hat\Delta_{c,\textsc{ifh}}^\textup{GOR}$ and $\hat\Delta_{c,\textsc{ms}}^\textup{GOR}$ are consistent for $\Delta_c^\textup{GOR}$ if 
    \begin{itemize}
        \item both the model for $\pi_c(X)$ and the model for $\mu_0(X)$ are correctly specified, AND
        \item either the model for $e(X,Z)$ or the model for $\mu_{1c}(X)$ is correctly specified.
    \end{itemize}
    \noindent
    $\hat\Delta_{c,\textsc{if}}^\textup{MR}$, $\hat\Delta_{c,\textsc{ifh}}^\textup{MR}$ and $\hat\Delta_{c,\textsc{ms}}^\textup{MR}$ are consistent for $\Delta_c^\textup{MR}$ if
    \begin{itemize}
        \item the model for $\pi_c(X)$ is correctly specified, AND 
        \item either the model for $e(X,Z)$ or both outcome models $\mu_{1c}(X),\mu_0(X)$ are correctly specified.
    \end{itemize}
\end{theorem}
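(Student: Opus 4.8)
The plan is to follow the proof of Proposition~\ref{thm:pi-multiplyrobust}, tracking how replacing $\mu_0(X)$ by the \emph{nonlinear} maps $\mu_{0c}^\textup{GOR}(X)$, $\mu_{0c}^\textup{MR}(X)$ (and the induced $\epsilon$-functions) erodes the robustness in $\mu_0$. Write $\bullet\in\{\textup{GOR},\textup{MR}\}$. Each of the three estimators in either family reduces, up to $o_p(1)$, to a ratio $\hat\Delta_c^{\bullet}=(\hat\nu_{1c,\textsc{if}}-\hat\nu_{0c,\textsc{if}}^{\bullet})/\hat\pi_{c,\textsc{if}}$ for the \textsc{if} and (after accounting for the normalizing constants) the \textsc{ifh} version, while $\hat\Delta_{c,\textsc{ms}}^{\bullet}$ is built, exactly as $\hat\Delta_{c,\textsc{ms}}^\textup{PI}$, to zero the sample means of the three summands of $\varphi_{\Delta_c^{\bullet}}$, hence shares the probability limit of $\hat\Delta_{c,\textsc{if}}^{\bullet}$. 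So by Slutsky and continuous mapping (using $\pi_c>0$) it suffices to show $\hat\nu_{1c,\textsc{if}}\pto\nu_{1c}$, $\hat\nu_{0c,\textsc{if}}^{\bullet}\pto\nu_{0c}^{\bullet}$, $\hat\pi_{c,\textsc{if}}\pto\pi_c$ under the stated conditions.

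First I would record probability limits: under standard parametric M-estimation regularity, $\hat e\pto e^{*}$, $\hat\pi_c\pto\pi_c^{*}$, $\hat\mu_{1c}\pto\mu_{1c}^{*}$, $\hat\mu_0\pto\mu_0^{*}$, each starred limit equal to the truth exactly when its model is correct, and the plug-ins $\hat\mu_{0c}^{\bullet},\hat\epsilon_{\mu,c}^{\bullet},\hat\epsilon_{\pi,c}^{\bullet}$ converge to the same formulas evaluated at $(\pi_c^{*},\mu_0^{*})$. A uniform law of large numbers gives $\hat\nu_{1c,\textsc{if}}\pto\E[\psi_{\nu_{1c}}(O;e^{*},\pi_c^{*},\mu_{1c}^{*})]$ and likewise for the other two pieces, where $\psi$ is the corresponding estimating function (the IF plus its target). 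For the \textsc{ifh} versions the limits are unchanged once one notes $\P_n[Z/\hat e]$ and $\P_n[(1-Z)/\hat e]$ converge to finite constants ($=1$ if $e^{*}=e$) and that scaling the inverse-probability weights by a bounded constant leaves the conditional-mean-zero arguments below intact.

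The core is three de-biasing computations, all by iterated expectations using $\E[Z\mid X]=e$, $\P(C=c\mid X,Z=1)=\pi_c$ from \eqref{id:pi.c}, and $\E[Y\mid X,Z=0]=\mu_0$. (i) $\E[\psi_{\pi_c}]=\pi_c$ if $e^{*}=e$ or $\pi_c^{*}=\pi_c$; with $\pi_c$ assumed correct, $\hat\pi_{c,\textsc{if}}\pto\pi_c$. (ii) Reusing the $\nu_{1c}$ computation from Proposition~\ref{thm:pi-multiplyrobust}, $\E[\psi_{\nu_{1c}}]=\nu_{1c}$ if $e^{*}=e$, or if $\pi_c^{*}=\pi_c$ and $\mu_{1c}^{*}=\mu_{1c}$; given $\pi_c$ correct this is ``$e$ or $\mu_{1c}$ correct''. (iii) For $\nu_{0c}^\textup{GOR}$: when $\pi_c^{*}=\pi_c$ and $\mu_0^{*}=\mu_0$, the first summand of $\psi_{\nu_{0c}^\textup{GOR}}$ has conditional mean $0$ given $X$ because $\E[(1-Z)(Y-\mu_0(X))\mid X]=0$ (for any $e^{*}$), the second because $\E[Z(\I(C=c)-\pi_c(X))\mid X]=0$, and the third integrates to $\nu_{0c}^\textup{GOR}$; hence $\hat\nu_{0c,\textsc{if}}^\textup{GOR}\pto\nu_{0c}^\textup{GOR}$ once $\pi_c$ and $\mu_0$ are both correct. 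For $\nu_{0c}^\textup{MR}$ the same bookkeeping, now exploiting that $\epsilon_{\mu,c}^\textup{MR}=\gamma_c(X)\pi_c(X)$ does not involve $\mu_0$, shows that with $\pi_c$ correct it suffices that \emph{either} $e$ is correct (the first summand then contributes $\E[\gamma_c\pi_c(\mu_0-\mu_0^{*})]$, cancelling the $\mu_0^{*}$-discrepancy of the third, while the second vanishes) \emph{or} $\mu_0$ is correct (all residual summands vanish conditionally, for any $e^{*}$). Putting these together: for the GOR family ``$\pi_c$ and $\mu_0$ correct'' delivers the denominator and the $\nu_{0c}^\textup{GOR}$ numerator, and ``$e$ or $\mu_{1c}$ correct'' delivers the $\nu_{1c}$ numerator, so $\hat\Delta_c^\textup{GOR}\pto(\nu_{1c}-\nu_{0c}^\textup{GOR})/\pi_c=\Delta_c^\textup{GOR}$; for the MR family ``$\pi_c$ correct'' gives the denominator and ``$e$ or ($\mu_{1c}$ and $\mu_0$) correct'' simultaneously meets the $e$-or-$\mu_{1c}$ requirement for $\hat\nu_{1c,\textsc{if}}$ and the $e$-or-$\mu_0$ requirement for $\hat\nu_{0c,\textsc{if}}^\textup{MR}$.

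I expect computation (iii) to be the main obstacle: one must identify precisely which sub-collection of $\{e,\pi_c,\mu_0\}$ collapses $\E[\psi_{\nu_{0c}^{\bullet}}]$ to $\nu_{0c}^{\bullet}$, and this is where the ``partial loss of robustness'' lives---for GOR the correction term $\tfrac{1-Z}{e}\epsilon_{\mu,c}^\textup{GOR}(X)[Y-\mu_0(X)]$ absorbs a misspecified $\mu_0$ only up to a quadratic remainder, since $\mu_{0c}^\textup{GOR}$ is nonlinear in $(\pi_c,\mu_0)$ and $\epsilon_{\mu,c}^\textup{GOR}$ itself depends on $\mu_0$, so $\mu_0$ (and likewise $\pi_c$) can no longer be traded off; checking that the MR parameterization keeps the trade-off in $\mu_0$ given $\pi_c$, because $\epsilon_{\mu,c}^\textup{MR}$ is free of $\mu_0$, is the parallel calculation needing care. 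Verifying that the \textsc{ifh} normalizations and the bespoke mean-recovering weighted fits defining $\hat\Delta_{c,\textsc{ms}}^{\bullet}$ leave all these limits unchanged is routine, handled as in the proof of Proposition~\ref{thm:pi-multiplyrobust}.
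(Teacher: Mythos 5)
Your proposal is correct and follows essentially the same route as the paper: decompose each estimator into the $\pi_c$, $\nu_{1c}$ and $\nu_{0c}^{\bullet}$ components, establish which subsets of correctly specified nuisances make each component converge to its target (the paper states these as bullets and omits the computation you sketch), and intersect the conditions; your identification of the MR cancellation $\E[\gamma_c\pi_c(\mu_0-\mu_0^{*})]$ versus the quadratic remainder left by the nonlinear GOR map is exactly the mechanism behind the asymmetry in the two bullet lists. One small imprecision: the adapted $\hat\Delta_{c,\textsc{ms}}^{\bullet}$ zeroes the summands of the \emph{PI} decomposition with $\tilde\mu_0$ replaced by $\tilde\mu_{0c}^{\bullet}$, not the summands of $\varphi_{\Delta_c^{\bullet}}$ (whose residual terms carry the $\epsilon$-weights) --- this is why MS lacks the approximate robustness of Remark~\ref{rm:approximate-robustness} --- but it does not affect the exact consistency conditions you derive.
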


\begin{remark}[Approximate robustness]\label{rm:approximate-robustness}
Among 
% the six sensitivity analysis estimators mentioned in Proposition \ref{thm:robustness-loss}
these sensitivity estimators, the type B estimators ($\hat\Delta_{c,\textsc{if}}^\text{GOR}$, $\hat\Delta_{c,\textsc{ifh}}^\text{GOR}$, $\hat\Delta_{c,\textsc{if}}^\text{MR}$, $\hat\Delta_{c,\textsc{ifh}}^\text{MR}$) are in a sense more robust than the multi-step type A estimators ($\hat\Delta_{c,\textsc{ms}}^\text{GOR}$, $\hat\Delta_{c,\textsc{ms}}^\text{MR}$): they have an \textit{approximate robustness} property with respect to the model component(s) whose correct specification they require for consistency.
Specifically, (i)~while all six estimators depend on a correct model for $\pi_c(X)$, the type B estimators provide a first-order correction of the bias (that would be incurred if simply using the plug-in estimator (\ref{est:gor,mr-plugin})) due to the deviation of the probability limit $\pi_c^\dagger(X)$ of $\hat\pi_c(X)$ from the true function $\pi_c(X)$.
Also, (ii)~while all three GOR-based estimators additionally depend on a correct model for $\mu_0(X)$, the type B estimators provide a first-order correction of the bias due to the deviation of the probability limit $\mu_0^\dagger(X)$ of $\hat\mu_0(X)$ from the true function $\mu_0(X)$.
(This first-order bias correction feature is also shared by the originating PI-based estimators $\hat\Delta_{c,\textsc{ms}}^\text{PI}$, $\hat\Delta_{c,\textsc{if}}^\text{PI}$, $\hat\Delta_{c,\textsc{ifh}}^\text{PI}$, and results in the robustness of those estimators.) 

We give a quick explanation of (ii) to make this concrete. (For full details concerning Remark \ref{rm:approximate-robustness}, see the Appendix.) If $\hat e(X,Z)$ and $\hat\pi_c(X)$ are correctly specified but $\hat\mu_0(X)$ is not, the probability limit of both $\hat\nu_{0c,\textsc{if}}^\text{GOR}$ and $\hat\nu_{0c,\textsc{ifh}}^\text{GOR}$ is the sum of two terms
\vspace{-.5em}
\begin{align}
    \E\Big\{\pi_c(X)\mu_{0c}^\text{GOR}[\mu_0^\dagger(X),\pi_c(X)]\Big\}+\E\Big\{\epsilon_{\mu,c}^\text{GOR}[\mu_0^\dagger(X),\pi_c(X)][\mu_0(X)-\mu_0^\dagger(X)]\Big\}
\end{align}
(which result from the last and first terms in (\ref{gor-est:nu.0c})). These are the first two terms in the Taylor expansion of the true parameter $\nu_{0c}^\text{GOR}$ treated as a function of $\mu_0()$ at the point $\mu_0^\dagger()$.
The first term coincides with the probability limit of the plug-in estimator, which is biased due to $\mu_0^\dagger(X)\neq\mu_0(X)$. The second term provides a first-order correction of this bias.
For this approximate robustness property to be beneficial, however, $\mu_0^\dagger(X)$ needs to be close to $\mu_0(X)$.

\end{remark}

\begin{figure}[!h]
\caption{Flowchart summarizing key sensitivity analysis techniques that are applicable given PI-based estimator type and sensitivity parameterization}\label{fig:flowchart}
\resizebox{\textwidth}{!}{%
\begin{tikzpicture}[node distance=2cm]

\node[draw,
    rounded rectangle,
    minimum width=2.5cm,
    minimum height=1cm] (main) {MAIN ANALYSIS};

\node[below of=main, yshift=.5cm,
    draw,
    rectangle,
    minimum height=1cm
] (estimator) {Conduct/plan main analysis using a PI-based estimator};

\node[below of=estimator,
    draw,
    rounded rectangle,
    minimum width=2.5cm,
    minimum height=1cm] (sens) {SENSITIVITY ANALYSIS};

\node[below of=sens, yshift=.4cm,
    draw,
    rectangle,
    minimum height=1cm
] (param) {Choose sensitivity parameterization};

\node[below of=param, yshift=-.1cm,
    diamond, 
    draw, 
    aspect=3, 
    text centered
] (Q1) {Which sens param?};

\node[below left of=Q1, xshift=-5cm, yshift=-1cm,
    diamond, 
    draw, 
    aspect=3,  
    text centered
] (Q2a) {Which estimator type?};

\node[below right of=Q1, xshift=5cm, yshift=-1cm,
    diamond, 
    draw, 
    aspect=3, 
    text centered
] (Q2b) {Which estimator type?};

\node[below of=Q2a, yshift=-0.8cm, xshift=-3.cm,
    rectangle,
    draw,
    text width=4cm
] (Aa) {Replace $\hat\mu_0(X)$ in the estimator with $\hat\mu_{0c}^\text{sens}(X)$\\\hfill section~4.2.1};

\node[below of=Aa, yshift=-0.3cm,
    rectangle,
    draw,
    text width=4cm
] (Ba) {Replace the $\hat\nu_{0c}^\text{PI}$ component of the estimator with $\hat\nu_{0c}^\text{sens}$\\\hfill section~4.2.2};

\node[below of=Q2a, yshift=-0.7cm, xshift=3cm,
    diamond,
    draw,
    aspect=3,
    text centered,
    inner sep=0
] (q3) {MR sens param?};

\node[below of=Ba, yshift=-0.7cm,
    rectangle,
    draw,
    text width=4cm
] (Ca1) {Scale $Y$ in control units by a factor of $\gamma_c(X)$ when estimating effect for stratum $c$\\\hfill section~4.2.3};

\node[below of=q3, yshift=-1.5cm,
    rectangle,
    draw,
    text width=2cm,
    text centered
] (Ca2) {Consider using a type~A (or~type~B) estimator instead};

\node[below of=Q2b, yshift=-2cm, xshift=-3.2cm,
    rectangle,
    draw,
    text width=4.5cm
] (Ab) {Estimate the bias $\xi_c$ using a simple estimator\\(Equivalent to replacing $\hat\mu_0(X)$ with $\hat\mu_{0c}^\text{SMDe}(X)$)\\\hfill section~5.2.1};

\node[below of=Ab, yshift=-1.5cm,
    rectangle,
    draw,
    text width=4.5cm
] (Bb) {Estimate the bias $\xi_c$ using an IF-based estimator\\(Equivalent to replacing $\hat\nu_{0c}^\text{PI}$ with $\hat\nu_{0c}^\text{SMDe}$)\\\hfill section~5.2.2};

\node[below of=Q2b, yshift=-2cm, xshift=3cm,
    rectangle,
    draw,
    text width=2.2cm,
    text centered
] (Cb) {Consider using a type~A (\mbox{or IF-based}) estimator instead};

\draw[->] (main) -- (estimator);
\draw[->] (estimator) -- (sens);
\draw[->] (sens) -- (param);
\draw[->] (param) -- (Q1);

\draw [->] (Q1) -| (Q2a) node[pos=0.25,fill=white,inner sep=0]{OR/GOR/MR} coordinate[pos=0.9] (*);
\draw [->] (Q1) -| (Q2b) node[pos=0.17,fill=white,inner sep=0]{SMD} coordinate[pos=0.9] (**);

\draw[->] (Q2a) -| (Aa) node[pos=0.75,fill=white,inner sep=0]{type A};
\draw[->] (Q2a) |- (Ba) node[pos=0.1,fill=white,inner sep=0]{type B};
\draw[->] (Q2a) -| (q3) node[pos=0.75,fill=white,inner sep=0]{type C};
\draw[->] (q3.west)|- (Ca1) node[pos=0.1,fill=white,inner sep=0]{yes};
\draw[->] (q3)-- (Ca2) node[pos=0.35,fill=white,inner sep=0]{no};
\draw[->] (Ca2) --  ++(2.2,0) |- (*);

\draw[->] (Q2b) -| (Ab) node[pos=0.75, fill=white, inner sep=0, text width=1cm, text centered]{simple type~A};
\draw[->] (Q2b) |- (Bb) node[pos=0.07, fill=white, inner sep=0]{IF-based};
\draw[->] (Q2b) -| (Cb) node[pos=0.75, fill=white, inner sep=0]{other};
\draw[->] (Cb) --  ++(1.5,0) |- (**);

\end{tikzpicture}%
}
\end{figure}
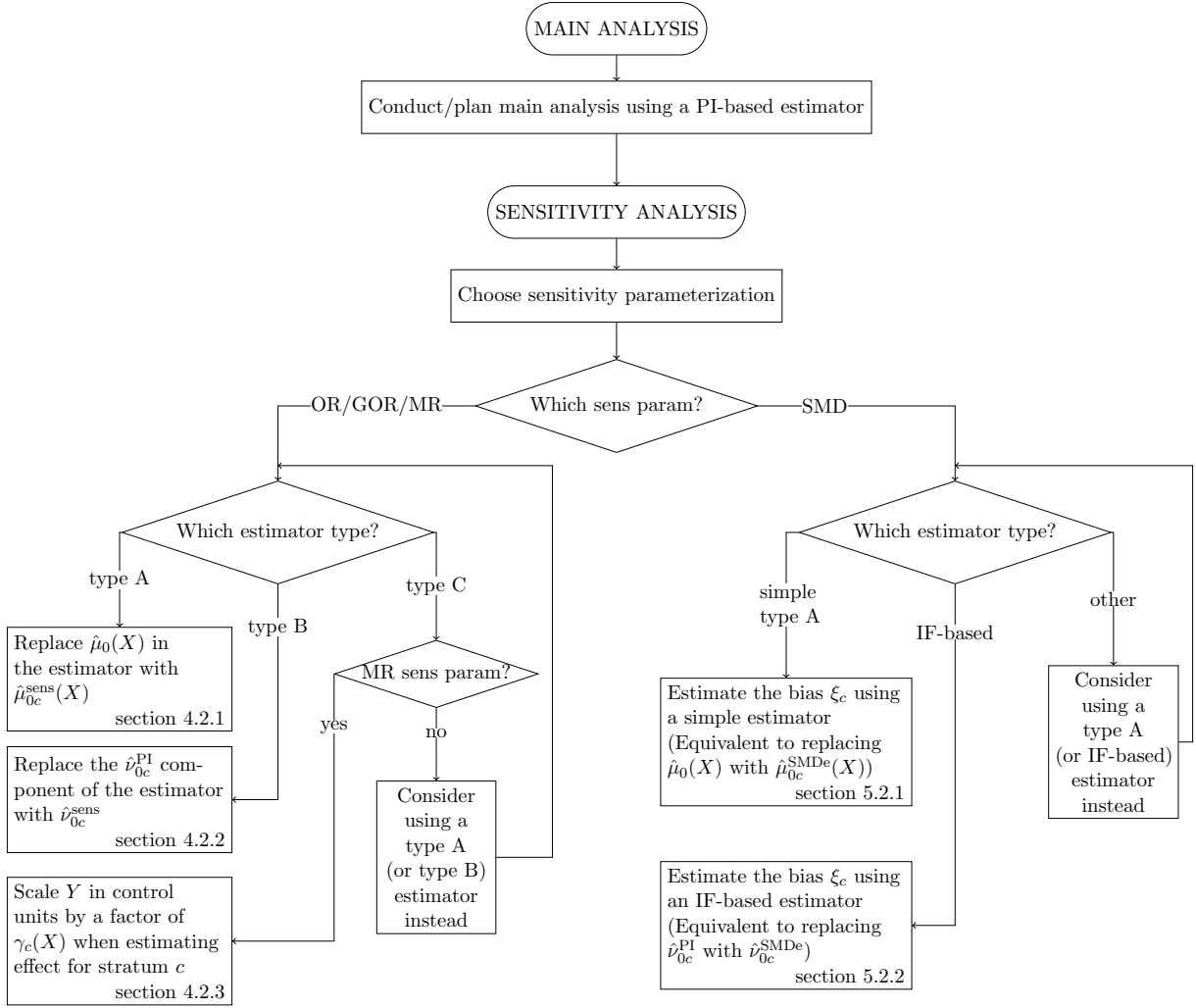

\subsubsection{Type C estimators}

We consider A4-MR and A4-GOR separately.
Under A4-MR, the convenient form of $\mu_{0c}^\text{MR}(X)$ (\ref{mr-id:mu.0c}) allows a simple adaptation of type C estimators: to estimate $\Delta_c^\text{MR}$, scale the outcome in control units by a factor of $\hat\gamma_c(X)$ then use the PI-based analysis method.
For the pure weighting estimator specifically, this adaptation results in the estimator
\begin{align*}
    \hat\Delta_{c,e\pi}^\text{MR}:=\frac{\sum_{i=1}^n\frac{Z_i\I(C_i=c)}{\hat e(X_i,Z_i)}Y_i}{\sum_{i=1}^n\frac{Z_i\I(C_i=c)}{\hat e(X_i,Z_i)}}-\frac{\sum_{i=1}^n\frac{(1-Z_i)\hat\pi_c(X_i)}{\hat e(X_i,Z_i)}{\color{blue}\hat\gamma_c(X_i)Y_i}}{\sum_{i=1}^n\frac{(1-Z_i)\hat\pi_c(X_i)}{\hat e(X_i,Z_i)}}.
\end{align*}
This outcome scaling technique is justified by the result below, a corollary of Proposition \ref{thm:ratio.params-id}.
\begin{corollary}[MR-based outcome scaling]
\begin{align}
    \tau_{0c}^\textup{MR}=\frac{\E\left[\frac{1-Z}{e(X,Z)}\pi_c(X)\gamma_c(X)Y\right]}{\E\left[\frac{1-Z}{e(X,Z)}\pi_c(X)\right]}.\label{mr-Y.scaling}
\end{align}
\end{corollary}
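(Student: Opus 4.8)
The plan is to combine the identification identity of Lemma~\ref{lm:starting-point} (specialized to $z=0$) with the MR-based identification of $\mu_{0c}(X)$ from Proposition~\ref{thm:ratio.params-id}, and then rewrite the resulting outcome-regression-type expression in inverse-probability-weighted form. Concretely, Lemma~\ref{lm:starting-point} gives $\tau_{0c}=\E[\pi_c(X)\mu_{0c}(X)]/\E[\pi_c(X)]$, and under A0--A2 together with A4-MR, Proposition~\ref{thm:ratio.params-id} yields $\mu_{0c}(X)=\gamma_c(X)\mu_0(X)$. Substituting, $\tau_{0c}^{\textup{MR}}=\E[\pi_c(X)\gamma_c(X)\mu_0(X)]/\E[\pi_c(X)]$. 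It then remains to show that the numerator equals $\E\big[\frac{1-Z}{e(X,Z)}\pi_c(X)\gamma_c(X)Y\big]$ and the denominator equals $\E\big[\frac{1-Z}{e(X,Z)}\pi_c(X)\big]$.

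Both identities follow from a single inverse-probability-weighting fact: since $e(X,Z)=\P(Z\mid X)$, we have $\E\big[\frac{1-Z}{e(X,Z)}\mid X\big]=\P(Z=0\mid X)/e(X,0)=1$, so that for any function $g$ of $X$, $\E\big[\frac{1-Z}{e(X,Z)}g(X)\big]=\E[g(X)]$ by iterated expectations. Applying this with $g(X)=\pi_c(X)$ handles the denominator immediately. For the numerator, condition on $X$ and pull out the $X$-measurable factor $\pi_c(X)\gamma_c(X)$ (recall $\gamma_c(X)=\rho_c/[(\rho_c-1)\pi_c(X)+1]$ is a deterministic function of $X$): $\E\big[\frac{1-Z}{e(X,Z)}\pi_c(X)\gamma_c(X)Y\mid X\big]=\pi_c(X)\gamma_c(X)\,\E\big[\frac{1-Z}{e(X,0)}Y\mid X\big]$, and $\E\big[\frac{1-Z}{e(X,0)}Y\mid X\big]=\E[(1-Z)Y\mid X]/\P(Z=0\mid X)=\E[Y\mid X,Z=0]=\mu_0(X)$. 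Taking expectations over $X$ gives $\E[\pi_c(X)\gamma_c(X)\mu_0(X)]$, as needed. Dividing numerator by denominator yields (\ref{mr-Y.scaling}).

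There is essentially no hard step here; the result is a direct corollary of Proposition~\ref{thm:ratio.params-id} once one recognizes that $\mu_{0c}^{\textup{MR}}(X)$ is just $\mu_0(X)$ rescaled by the $X$-measurable factor $\gamma_c(X)$, so the same weighting manipulation that turns the outcome-regression form of $\tau_{0c}^{\textup{PI}}$ in (\ref{pi-id:tau.0c}) into its weighting form carries over verbatim with $Y$ replaced by $\gamma_c(X)Y$. The only point requiring a little care is bookkeeping of the positivity assumption A2 (which guarantees $e(X,0)>0$ so the weights are well-defined) and the convention $e(X,Z)=\P(Z\mid X)$, so that $\frac{1-Z}{e(X,Z)}$ evaluates to $1/\P(Z=0\mid X)$ on the control arm and to $0$ on the treatment arm.
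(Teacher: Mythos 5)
Your proof is correct and follows essentially the same route the paper intends: substitute the MR identification $\mu_{0c}^{\textup{MR}}(X)=\gamma_c(X)\mu_0(X)$ into the weighted-average identity of Lemma~\ref{lm:starting-point}, then convert to the IPW form by the same iterated-expectation computation the paper uses for the third formula in (\ref{pi-id:tau.0c}). Nothing further is needed.
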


\begin{remark}
    When specializing to the randomized treatment setting, (\ref{mr-Y.scaling}) simplifies, and one expression of the specialized version of (\ref{mr-Y.scaling}) is $\E[\frac{\pi_c(X)\gamma_c(X)Y}{\pi_c}\mid Z=0]$, which appeared in Ding and Lu (2017, proposition 3)\cite{Ding2017}. Based on this expression, this paper characterizes the MR-based sensitivity analysis as an \textit{under/overweighting of the principal score} by a factor of $\gamma_c(X)$. Interestingly, this characterization breaks the interpretation of $\tau_{0c}$ as a weighted average (our starting point in Lemma \ref{lm:starting-point}, which we have maintained throughout). Our new insight here is that the appearance of $\gamma_c(X)$ in (\ref{mr-Y.scaling}) is due to the fact that under A4-MR the outcome mean $\mu_{0c}(X)$ is identified by $\gamma_c(X)\mu_0(X)$. It is thus natural to use the \textit{scaling the outcome} by a factor of $\gamma_c(X)$ characterization. Also, by leaving the principal score weights alone, this outcome scaling technique applies to type C estimators generally, not just the pure weighting estimator.
\end{remark}

Under A4-GOR, there is no result similar to (\ref{mr-Y.scaling}) that separates $Y$ from functions of $X$, therefore no simple modification is available for type C estimators. The pure weighting estimator $\hat\Delta_{c,e\pi}^\text{PI}$ (\ref{pi-est:id3}) (but not type C generally) can be adapted by replacing $Y$ in the second term with an estimate of $\mu_{0c}^\text{GOR}(X)$ (which requires estimating $\mu_0(X)$). For this estimator to reduce to $\hat\Delta_{c,e\pi}^\text{PI}$ when $\rho=1$, $\mu_0(X)$ has to be estimated by a $\tilde\mu_0(X)$ model (defined in (\ref{MS})). However, with $\mu_0(X)$ estimated, there are other options for estimating $\Delta_c^\text{GOR}$ that one might prefer to such modification, e.g., replacing the whole second term of $\hat\Delta_{c,e\pi}^\text{PI}$ with $\frac{\sum_{i=1}^n\frac{Z_i\I(C_i=c)}{\hat e(X_i,Z_i)}\hat\mu_{0c}^\text{GOR}}{\sum_{i=1}^n\frac{Z_i\I(C_i=c)}{\hat e(X_i,Z_i)}}$. This obtains the type A estimator $\hat\Delta_{c,e\mu}^\text{GOR}$, which inconveniently does not reduce to $\hat\Delta_{c,e\pi}^\text{PI}$ when $\rho=1$. Hence this is one place where we break the convention of respecting the primacy of the main analysis and recommend that, if a GOR-based sensitivity analysis is to be conducted, a type A (or type B) estimator be used for the main analysis.

\section{Sensitivity analysis based on a difference-type sensitivity parameter}\label{sec:diff-param}

A4-OR, A4-GOR and A4-MR all assume that the means of $Y_0$ differ between compliers and noncompliers in some multiplicative manner. If one believes the difference is additive, it is more appropriate to use a sensitivity parameter that involves $\mu_{01}(X)-\mu_{00}(X)$. We propose using a standardized mean difference (SMD).
For convenient notation, let
\begin{align*}
    \sigma_{0c}^2(X)
    &:=\var(Y_0\mid X,C=c),
    \\
    \sigma_0^2(X)
    &:=\var(Y\mid X,Z=0).
\end{align*}
A simple SMD-based assumption is

\medskip

\begin{tabular}{ll}
    A4-SMD: & 
    $\displaystyle\frac{\mu_{01}(X)-\mu_{00}(X)}{\sqrt{[\sigma_{01}^2(X)+\sigma_{00}^2(X)]/2}}=\eta$,~~~for a plausible range of $\eta$.
\end{tabular}

\medskip\smallskip

\noindent
The denominator here is an ``average'' standard deviation: the quadratic mean of $\sigma_{01}(X)$ and $\sigma_{00}(X)$ (the within-stratum conditional standard deviations of $Y_0$).
This standard deviation scale helps in selecting a range for $\eta$ and users can tap into intuition about SMDs from other contexts (e.g., measuring effect size\cite{cohen1988StatisticalPowerAnalysis} or covariate imbalance\cite{stuart2010MatchingMethodsCausal}).
$\eta=0$ recovers the PI case; $\eta=\pm 1$ indicates a substantial complier-noncomplier difference in the outcome under control.

Inconveniently, A4-SMD combined with A0-A2 only partially identifies $\mu_{0c}(X)$. For a simple sensitivity analysis, we consider the stronger assumption below, which supplements A4-SMD with an equal variance assumption:

\medskip

\begin{tabular}{ll}
    A4-SMDe (sensitivity SMD, equal variance): 
    & A4-SMD~~and~~$\sigma_{01}^2(X)=\sigma_{00}^2(X)$.
\end{tabular}

\subsection{Identification}

For symmetry, let $\eta_1=\eta$ and $\eta_0=-\eta$.

\begin{theorem}[SMDe-based identification]\label{thm:smd-id}

Under A0-A2 combined with A4-SMDe,
\begin{align}
    \mu_{0c}(X)
    &=\mu_0(X)+\eta_c\underbrace{\frac{\pi_{1-c}(X)\sigma_0(X)}{\sqrt{1+\eta^2\pi_1(X)\pi_0(X)}}}_{\textstyle=:\lambda_c(X)}
    =:\mu_{0c}^\textup{SMDe}(X),
    \\
    \tau_{0c}
    &=\tau_{0c}^\textup{PI}+\eta_c\underbrace{\frac{\E[\pi_c(X)\lambda_c(X)]}{\E[\pi_c(X)]}}_{\textstyle=:\xi_c}
    =:\tau_{0c}^\textup{SMDe},
    \\
    \Delta_c
    &=\Delta_c^\textup{PI}-\eta_c\,\xi_c
    =:\Delta_c^\textup{SMDe}.
    \label{SMDe:Delta.c}
\end{align}
\end{theorem}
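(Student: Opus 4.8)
The plan is to start from the mixture equation~(\ref{eq:mixture-mean}), namely $\mu_{01}(X)\pi_1(X)+\mu_{00}(X)\pi_0(X)=\mu_0(X)$, which holds under A0--A2, and combine it with A4-SMDe. Writing $\sigma_{01}^2(X)=\sigma_{00}^2(X)=:\sigma_c^2(X)$ (the common within-stratum conditional variance), the SMD constraint becomes $\mu_{01}(X)-\mu_{00}(X)=\eta\,\sigma_c(X)$. This gives two linear equations in the two unknowns $\mu_{01}(X),\mu_{00}(X)$; solving yields
\begin{align*}
    \mu_{01}(X)&=\mu_0(X)+\eta\,\pi_0(X)\sigma_c(X),
    \\
    \mu_{00}(X)&=\mu_0(X)-\eta\,\pi_1(X)\sigma_c(X),
\end{align*}
i.e.\ $\mu_{0c}(X)=\mu_0(X)+\eta_c\,\pi_{1-c}(X)\sigma_c(X)$ with $\eta_c=\pm\eta$ as defined. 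So the real content is to express the still-unknown $\sigma_c(X)$ in terms of identified quantities, namely $\sigma_0^2(X)=\var(Y\mid X,Z=0)$ and the principal scores.

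The key step is a variance decomposition (law of total variance) for the mixture. Conditioning on $X$ and $Z=0$, the distribution of $Y_0$ is a two-component mixture over $C$ with weights $\pi_c(X)$, component means $\mu_{0c}(X)$, and (by the equal-variance assumption) common component variance $\sigma_c^2(X)$. Hence
\begin{align*}
    \sigma_0^2(X)
    &=\underbrace{\E[\var(Y_0\mid X,C)\mid X,Z=0]}_{=\sigma_c^2(X)}
    +\underbrace{\var(\E[Y_0\mid X,C]\mid X,Z=0)}_{=\pi_1(X)\pi_0(X)[\mu_{01}(X)-\mu_{00}(X)]^2}
    \\
    &=\sigma_c^2(X)+\pi_1(X)\pi_0(X)\,\eta^2\sigma_c^2(X)
    =\sigma_c^2(X)\big[1+\eta^2\pi_1(X)\pi_0(X)\big],
\end{align*}
where the between-component variance of a two-point distribution with weights $\pi_1,\pi_0$ and values $\mu_{01},\mu_{00}$ is the standard identity $\pi_1\pi_0(\mu_{01}-\mu_{00})^2$, and I substituted the SMD constraint. (One must note A1 is used here so that $\var(Y_0\mid X,C)=\var(Y_0\mid X,Z=0,C)$ and likewise for means, i.e.\ conditioning on $Z=0$ is harmless; and A0 lets us write $Y$ for $Y_0$ when $Z=0$.) Solving, $\sigma_c(X)=\sigma_0(X)/\sqrt{1+\eta^2\pi_1(X)\pi_0(X)}$, and substituting back gives exactly $\mu_{0c}^{\textup{SMDe}}(X)=\mu_0(X)+\eta_c\lambda_c(X)$ with $\lambda_c(X)=\pi_{1-c}(X)\sigma_0(X)/\sqrt{1+\eta^2\pi_1(X)\pi_0(X)}$.

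The remaining two displays are then immediate. Plug $\mu_{0c}^{\textup{SMDe}}(X)$ into the Lemma~\ref{lm:starting-point} representation $\tau_{0c}=\E[\pi_c(X)\mu_{0c}(X)]/\E[\pi_c(X)]$: linearity splits this into $\E[\pi_c(X)\mu_0(X)]/\E[\pi_c(X)]=\tau_{0c}^{\textup{PI}}$ (by Proposition~\ref{thm:pi-id:tau.0c}) plus $\eta_c\,\E[\pi_c(X)\lambda_c(X)]/\E[\pi_c(X)]=\eta_c\,\xi_c$. Finally $\Delta_c=\tau_{1c}-\tau_{0c}$, and since $\tau_{1c}$ is unchanged (it is identified under A0--A2 alone, Proposition~\ref{thm:id-tau.1c}), $\Delta_c^{\textup{SMDe}}=\tau_{1c}-\tau_{0c}^{\textup{SMDe}}=(\tau_{1c}-\tau_{0c}^{\textup{PI}})-\eta_c\xi_c=\Delta_c^{\textup{PI}}-\eta_c\xi_c$. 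The only mild subtlety/obstacle is bookkeeping the sign conventions ($\eta_c$, $\pi_{1-c}$) consistently and checking the degenerate cases where $\pi_c(X)\in\{0,1\}$ (handled by the implicit convention that $\mu_{0c}(X)$ is only defined where $\pi_c(X)>0$) or $\sigma_0(X)=0$; none of these requires real work, so the variance-decomposition step is the crux.
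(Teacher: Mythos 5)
Your proof is correct and follows essentially the same route as the paper: both combine the mixture mean equation~(\ref{eq:mixture-mean}) with the law-of-total-variance identity for the two-component mixture (the paper's Lemma~\ref{lm:mixture-variance}) and use the equal-variance part of A4-SMDe to solve for the common component standard deviation, yielding $\mu_{01}(X)-\mu_{00}(X)=\eta\sigma_0(X)/\sqrt{1+\eta^2\pi_1(X)\pi_0(X)}$ before reading off $\tau_{0c}$ and $\Delta_c$ by linearity. The only difference is cosmetic ordering (you solve the linear system first and then pin down $\sigma_c(X)$, whereas the paper first derives $\mu_{01}(X)-\mu_{00}(X)$ from the variance equation), and your explicit remark about where A0--A1 enter is a point the paper leaves implicit.
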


If equal variance is not assumed, $\Delta_c$ is not point identified, but bounds can be obtained. The bounds can be narrowed if one additionally assumes that $\sigma_{01}^2(X)$ and $\sigma_{00}^2(X)$ differ from each other by less than a certain factor (see Proposition~\ref{thm:smd-bounds} in the Appendix).

\subsection{Estimation}

This sensitivity analysis requires estimating $\sigma_0^2(X)$. For simplicity, we use a quasi-likelihood approach assuming the outcome's conditional variance is proportional to a function of its mean. An alternative is to directly model $[Y-\hat\mu_0(X)]^2$ based on $X$ in control units.

With the simple result (\ref{SMDe:Delta.c}), each estimator of $\Delta_c^\text{SMDe}$ we obtain is an estimator of $\Delta_c^\text{PI}$ minus $\eta_c$ times an estimator of $\xi_c$. This is the case regardless of the type of the PI-based estimator.

\subsubsection{Simple type A estimators}

\noindent Adaptation of $\hat\Delta_{c,\pi\mu}^\text{PI}$ (\ref{pi-est:id2}) and $\hat\Delta_{c,e\mu}^\text{PI}$ (\ref{pi-est:id3}) by replacing $\color{red}{\mu_0(X)}$ with $\color{blue}{\mu_{0c}^\text{SMDe}(X)}$ yields the following estimators:
\vspace{-.5em}
\begin{align}
    \hat\Delta_{c,\pi\mu}^\text{SMDe}
    &:
    =\hat\Delta_{c,\pi\mu}^\text{PI}
    -\eta_c
    \underbrace{\frac{\sum_{i=1}^n\frac{\hat\pi_1(X_i)\hat\pi_0(X_i)\sqrt{\hat\sigma_0^2(X_i)}}{\sqrt{1+\eta^2\hat\pi_1(X_i)\hat\pi_0(X_i)}}}{\sum_{i=1}^n\hat\pi_c(X_i)}}_{\textstyle=:\hat\xi_{c,\pi\sigma}},\label{eq:xi.pi.sigma}
    \\
    \hat\Delta_{c,e\mu}^\text{SMDe}
    &:
    =\hat\Delta_{c,e\mu}^\text{PI}
    -\eta_c
    \underbrace{\frac{\sum_{i=1}^n\frac{Z_i\I(C_i=c)}{\hat e(X_i,Z_i)}\frac{\hat\pi_{1-c}(X)\sqrt{\hat\sigma_0^2(X)}}{\sqrt{1+\eta^2\hat\pi_1(X)\hat\pi_0(X)}}}{\sum_{i=1}^n\frac{Z_i\I(C_i=c)}{\hat e(X_i,Z_i)}}}_{\textstyle=:\hat\xi_{c,e\pi\sigma}}.\label{eq:xi.e.pi.sigma}
\end{align}

Rather than applying the same adaptation to the multi-step estimator $\hat\Delta_{c,\textsc{ms}}^\text{PI}$ (\ref{MS}), thanks to the special form of $\Delta_c^\text{SMDe}$, we can adapt $\hat\Delta_{c,\textsc{ms}}^\text{PI}$ the way we adapt other IF-based estimators.

\subsubsection{IF-based estimators (including type B and multi-robust type A)}
% \!\!\!\textbf{(including type B and} $\hat\Delta_{c,\textsc{ms}}^\text{PI}$\textbf{).}
We adapt these estimators using IF-based estimators of $\xi_c$.
Let $\vartheta(X):=\frac{\pi_1(X)\pi_0(X)\sigma_0(X)}{\sqrt{1+\eta^2\pi_1(X)\pi_0(X)}}$ and $\vartheta:=\E[\vartheta(X)]$. Then $\xi_c=\vartheta/\pi_c$.

\begin{theorem}[SMDe-based IF]\label{thm:smde-if}

The IFs of $\vartheta$ and $\xi_c$ are
\begin{align}
    \varphi_{\vartheta}(O)
    &=
    \frac{1-Z}{e(X,Z)}\boldsymbol\dot\vartheta_{\sigma^2}(X)\{[Y-\mu_0(X)]^2-\sigma_0^2(X)\}+
    \frac{Z}{e(X,Z)}\boldsymbol\dot\vartheta_\pi(X)[C-\pi_1(X)]+\vartheta(X)-\vartheta,\label{if:smde-vartheta}
    \\
    \varphi_{\xi_c}(O)
    &=\frac{1}{\pi_c}\big\{[\varphi_\vartheta(O)+\vartheta]-\xi_c[\varphi_{\pi_c}(O)+\pi_c]\big\},
\end{align}
where
\vspace{-1em}
\begin{align*}
    \boldsymbol{\dot}\vartheta_{\sigma^2}(X)
    &:=\frac{\pi_1(X)\pi_0(X)}{2\sigma_0(X)\sqrt{1+\eta^2\pi_1(X)\pi_0(X)}},
    \\
    \boldsymbol{\dot}\vartheta_\pi(X)
    &:=\frac{[\pi_0(X)-\pi_1(X)][2+\eta^2\pi_1(X)\pi_0(X)]\sigma_0(X)}{2[1+\eta^2\pi_1(X)\pi_0(X)]^{3/2}}.
\end{align*}
\end{theorem}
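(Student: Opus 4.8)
The plan is to derive the influence function of $\vartheta=\E[\vartheta(X)]$ by viewing $\vartheta(X)$ as a smooth function of the three nuisance parameters it depends on --- the propensity score $e(X,Z)$, the principal score $\pi_1(X)$, and the conditional variance $\sigma_0^2(X)$ --- and then apply the delta method / chain rule for influence functions. The IF of $\xi_c=\vartheta/\pi_c$ will then follow immediately from the ratio rule for influence functions together with the already-established IF of $\pi_c$ in \eqref{if:pi.c}, exactly as \eqref{pi-if:Delta.c} was assembled from its constituent pieces; this second step is purely formal and I would present it in one line.

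For the first step, I would proceed as follows. First, recall (from the preliminary results of Section \ref{sec:setting}, or by a direct computation) the building-block influence functions of the three nuisances evaluated-and-averaged-against-a-covariate-functional: the IF of $\E[g(X)\sigma_0^2(X)]$ for a fixed weight $g$ is $\frac{1-Z}{e(X,Z)}g(X)\{[Y-\mu_0(X)]^2-\sigma_0^2(X)\}+g(X)\sigma_0^2(X)-\E[g(X)\sigma_0^2(X)]$ (the doubly-robust-style IF for a conditional-variance functional, which itself uses that the IF of $\mu_0(X)$ contributes nothing to first order at the truth because the score of the squared-residual is orthogonal to the score of the mean), and the IF of $\E[g(X)\pi_1(X)]$ is $\frac{Z}{e(X,Z)}g(X)[C-\pi_1(X)]+g(X)\pi_1(X)-\E[g(X)\pi_1(X)]$, which is just \eqref{if:pi.c} with $c=1$ reweighted by $g$. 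Second, compute the partial derivatives of $\vartheta(X)=\frac{\pi_1(X)\pi_0(X)\sigma_0(X)}{\sqrt{1+\eta^2\pi_1(X)\pi_0(X)}}$ with respect to $\sigma_0^2$ and with respect to $\pi_1$ (using $\pi_0=1-\pi_1$ and $\sigma_0=\sqrt{\sigma_0^2}$). The derivative with respect to $\sigma_0^2(X)$ is $\frac{\partial\vartheta}{\partial\sigma_0}\cdot\frac{1}{2\sigma_0}=\frac{\pi_1(X)\pi_0(X)}{2\sigma_0(X)\sqrt{1+\eta^2\pi_1\pi_0}}$, which is the stated $\dot\vartheta_{\sigma^2}(X)$; the derivative with respect to $\pi_1$ requires differentiating both the numerator factor $\pi_1\pi_0$ and the denominator, giving after simplification the stated $\dot\vartheta_\pi(X)$ with its $[\pi_0-\pi_1][2+\eta^2\pi_1\pi_0]$ numerator. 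Third, assemble $\varphi_\vartheta(O)$ by summing the contribution from each nuisance (each building-block IF multiplied by the corresponding partial derivative playing the role of the weight $g$), noting that the three ``$g(X)\cdot(\text{nuisance})(X)$'' centering pieces collapse into the single term $\vartheta(X)$ by the chain rule, yielding exactly \eqref{if:smde-vartheta}.

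The main obstacle I anticipate is the bookkeeping around the conditional-variance functional: one must justify carefully that in the IF of $\E[g(X)\sigma_0^2(X)]$ there is no additional term arising from the dependence of $\sigma_0^2(X)=\E[(Y-\mu_0(X))^2\mid X,Z=0]$ on the unknown $\mu_0(X)$. This holds because $\sigma_0^2(X)$ is also equal to $\E[Y^2\mid X,Z=0]-\mu_0(X)^2$ and, at the true $\mu_0$, the pathwise derivative of the $\mu_0^2$ piece cancels against the cross-term in the derivative of $\E[(Y-\mu_0)^2\mid X,Z=0]$ --- equivalently, the squared-residual has conditional mean $\sigma_0^2(X)$ and the nuisance $\mu_0$ enters only through a term whose derivative is $-2\E[(Y-\mu_0(X))\mid X,Z=0]=0$. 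Once that orthogonality is in hand, the rest is routine differentiation and collecting terms; the algebraic simplification producing the clean form of $\dot\vartheta_\pi(X)$ is the only other place where care is needed, but it is mechanical. I would relegate both the variance-functional IF lemma and the derivative computations to the Appendix and keep the main text to the chain-rule assembly.
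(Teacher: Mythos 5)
Your proposal is correct and follows essentially the same route as the paper: the paper likewise decomposes the pathwise derivative of $\vartheta$ by nuisance component (the $f_1$, $f_3$, $f_6$ factors of its parametric submodel), computes exactly the partial derivatives $\boldsymbol\dot\vartheta_\pi(X)$ and $\boldsymbol\dot\vartheta_{\sigma^2}(X)$ you describe, establishes the same orthogonality for the variance functional (its ${\sigma_0^2}'(X)$ computation shows the $\mu_0$-dependence drops because $\E[Y-\mu_0(X)\mid X,Z=0]=0$), and then obtains $\varphi_{\xi_c}$ from the ratio-of-means lemma. The only slip is cosmetic: $\vartheta(X)$ does not depend on $e(X,Z)$ --- the propensity score enters only the IF weights, not the functional --- but this does not affect your computation.
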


\bigskip

Based on Proposition \ref{thm:smde-if}, we have the estimator
\begin{align}
    \hat\xi_{c,\textsc{if}}:=\hat\vartheta_{\textsc{if}}/\hat\pi_{c,\textsc{if}},
\end{align}
where
\vspace{-1em}
\begin{align*}
    \hat\vartheta_{\textsc{if}}
    :=
    \P_n\left[\frac{1-Z}{\hat e(X,Z)}\hat{\boldsymbol{\dot}\vartheta}_{\sigma^2}(X)\{[Y-\hat\mu_0(X)]^2-\hat\sigma_0^2(X)\}+
    \frac{Z}{\hat e(X,Z)}\hat{\boldsymbol\dot\vartheta}_\pi(X)[C-\hat\pi_1(X)]+\hat\vartheta(X)\right]
\end{align*}
is the IF-based estimator of $\vartheta$ (where $\vartheta(X)$, $\boldsymbol\dot\vartheta_{\sigma^2}(X)$ and $\boldsymbol\dot\vartheta_\pi(X)$ are estimated by plugging in $\hat\pi_1(X)$, $\hat\pi_0(X)$ and $\sqrt{\hat\sigma_0^2(X)}$), and $\hat\pi_{c,\textsc{if}}$ is the IF-based estimator of $\pi_c$ (defined under (\ref{est:pi-if})). In addition, we have the estimator based on H\'ajek-ized versions of $\hat\vartheta_{\textsc{if}}$ and $\hat\pi_{c,\textsc{if}}$,
\begin{align}
    \hat\xi_{c,\textsc{ifh}}:=\hat\vartheta_{\textsc{ifh}}/\hat\pi_{c,\textsc{ifh}}.
\end{align}
Then the adapted IF-based estimators are
\vspace{-.5em}
\begin{align}
    \hat\Delta_{c,\textsc{if}}^\text{SMDe}
    &:=\hat\Delta_{c,\textsc{if}}^\text{PI}-\eta_c\,\hat\xi_{c,\textsc{if}},
    \\
    \hat\Delta_{c,\textsc{ifh}}^\text{SMDe}
    &:=\hat\Delta_{c,\textsc{ifh}}^\text{PI}-\eta_c\,\hat\xi_{c,\textsc{ifh}},
    \\
    \hat\Delta_{c,\textsc{ms}}^\text{SMDe}
    &:=\hat\Delta_{c,\textsc{ms}}^\text{PI}-\eta_c\,\hat\xi_{c,\textsc{if}}.
\end{align}

\begin{remark}\label{rm:approximate-robustness2}
$\hat\xi_{c,\textsc{if}}$ and $\hat\xi_{c,\textsc{ifh}}$ depend on consistent estimation of $\pi_c(X)$ and $\sigma_0^2(X)$ (they are inconsistent if either component is inconsistent), but they have the \textit{approximately robust} property where (i) if $\hat\pi_1(X)$, $\hat\mu_0(X)$ and $\hat e(X,Z)$ are consistent but $\hat\sigma_0^2(X)$ is not, the estimator provides a first-order correction of the bias of the plug-in estimator due to the deviation of the probability limit ${\sigma_0^2}^\dagger(X)$ of $\hat\sigma_0^2(X)$ from the true $\sigma_0^2(X)$; and (ii) if $\hat\mu_0(X)$, $\hat\sigma_0^2(X)$ and $\hat e(X,Z)$ are consistent but $\hat\pi_c(X)$ is not, the estimator provides a first-order correction of the bias due to the deviation of the probability limit $\pi_c^\dagger(X)$ of $\hat\pi_c(X)$ from the true $\pi_c(X)$. (See details in the Appendix.)
\end{remark}

\subsubsection{Other estimators}
While any PI-based estimator can be paired with any $\xi_c$ estimator, to keep things simple it is reasonable to pair non-IF-based estimators with either $\hat\xi_{c,\pi\sigma}$ (\ref{eq:xi.pi.sigma}) or $\hat\xi_{c,e\pi\sigma}$ (\ref{eq:xi.e.pi.sigma}), which are not IF-based. As outcome modeling is needed to estimate $\xi_c$ for the sensitivity analysis, however, we recommend switching to a type A or IF-based estimator for the PI-based main analysis.

\section{Other topics}\label{sec:technical-rest}

\subsection{Using data in considering the range of the MR and SMD parameters}\label{sec:calibration}

We now return to the issue that certain sensitivity parameters may predict extreme $\mu_{0c}(X)$ values. A example concerns the outcome \textit{earnings} in our illustrative study. Since earnings span a large range, it may be intuitive to think about the earnings as differing in a multiplicative rather than additive manner, so a researcher may choose to use A4-MR for a sensitivity analysis. But earnings are not unbounded, and there is a maximum earning in the dataset, so we would be right to worry that certain sensitivity MR values may predict some $\mu_{0c}(X)$ values that are too high. 
A4-SMDe also has the same issue (to a lesser degree), where predicted $\mu_{0c}(X)$ values may be too high or too low. A4-GOR and A4-OR, on the other hand, predict within bounds.

We can use the data to gauge what values of the MR or SMD sensitivity parameter may be extreme, if we are willing to also specify bounds for the stratum-specific conditional $Y_0$ means, $\mu_{0c}(X)$. With A4-MR (and a non-negative outcome), we fix an upper bound (B) for $\mu_{0c}(X)$. With A4-SMDe, we fix a pair of upper ($\text{B}_h$) and lower ($\text{B}_l$) bounds. These can be informed by the observed outcome distribution, but are not necessarily bounds on the outcome itself. They are required to satisfy $\text{B}\geq\hat\mu_0(X)$ or $\text{B}_l\leq\hat\mu_0(X)\leq\text{B}_h$ for all $X$ values in the data.

For each $X$ value, we can obtain an interval for the MR/SMD sensitivity parameter that does not predict $\mu_{0c}(X)$ outside of these assumed bounds. (This interval is derived in Appendix~\ref{appendix:calibration}, see Propositions~\ref{thm:ok-mr} and \ref{thm:ok-smd}.) We estimate such intervals for all covariate values and examine the distributions of their upper and lower ends to judge which ranges of the sensitivity parameter should not be allowed -- see application in the illustrative example in Section~\ref{sec:illustration}.

Note that while this helps guard against mathematically implausible values, it does not replace careful consideration based on substantive knowledge,  which is important for deciding which range is practically plausible and relevant to the specific application.

\subsection{Confidence interval estimation}\label{sec:confidence-interval}

The application in this paper estimates nuisance functions (e.g., propensity score, principal score and outcome mean) parametrically, for simplicity. 
All the estimators in sections \ref{sec:pi-estimators}, \ref{sec:ratio-params} and \ref{sec:diff-param} are M-estimators. With parametric nuisance estimation, they are asymptotically normal and analytic standard errors can be derived using M-estimation calculus \cite{Stefanski2002}, and the bootstrap is also valid. In our illustration below, we bootstrap and construct BCa confidence intervals \cite{Efron1987}.

\subsection{Rate conditions for nonparametric estimation}\label{sec:rate-conditions}

With a view to inform nonparametric inference (not the focus of this paper), we derive rate conditions on nonparametric nuisance estimation for IF-based estimators (using sample splitting or cross fitting) to be $\sqrt{n}$-consistent and asymptotically normal. See Propositions~\ref{thm:rates-pi} and \ref{thm:rates-sens} in Appendix~\ref{appendix:rate-conditions} for these results under PI and under the sensitivity assumptions, respectively.
To our knowledge, our results are the first on rate conditions for sensitivity analyses for PI violation.
They show that while PI-based analysis only requires typical rate conditions on several error products of nuisance functions (e.g., $||\hat e_1(X)-e_1(X)||_2||\hat\pi_c(X)-\pi_c(X)||=o_p(n^{-1/2})$), the sensitivity analyses require rate conditions on single nuisance functions (due to the presence of square errors in the remainder bias term). Specifically, we require $||\hat\pi_c(X)-\pi_c(X)||_2=o_p(n^{-1/4})$ with all the sensitivity analyses, and additionally $||\hat\mu_0(X)-\mu_0(X)||_2=o_p(n^{-1/4})$ with the GOR- and SMDe-based sensitivity analyses, and $||\hat\sigma_0^2(X)-\sigma_0^2(X)||_2=o_p(n^{-1/4})$ with the SMDe-based sensitivity analyis. These results immediately connect to the earlier results on the robustness under PI, and (partial) loss of robustness under sensitivity assumptions, of IF-based estimation.

\subsection{Finite-sample bias}\label{sec:finite-sample-bias}

There is not an ideal choice for the placement of this topic. It is easier to read after reading the illustrative analysis in the next section. But we put it here for it is a small \textit{other} topic.

Many consistent estimators are biased in finite samples. Methods to reduce such bias
\cite{efron1994IntroductionBootstrap,Chang2015} are not often used, perhaps because the bias tends to be small, and the correction is complicated. The data example, however, reveals an interesting pattern of bias specific to sensitivity analysis that is worth noting. It is seen with the different outcomes and different estimators. An instance of this pattern is shown in Figure \ref{fig:finitesamplebias}; all instances are shown in Appendix~\ref{appendix:finite-sample-bias}.

\begin{figure}[t!]
    \centering
    \caption{Point estimate and iterated bootstrap mean estimates. Plots are shown for the outcome \textit{work for pay}.}
    \includegraphics[width=.8\textwidth]{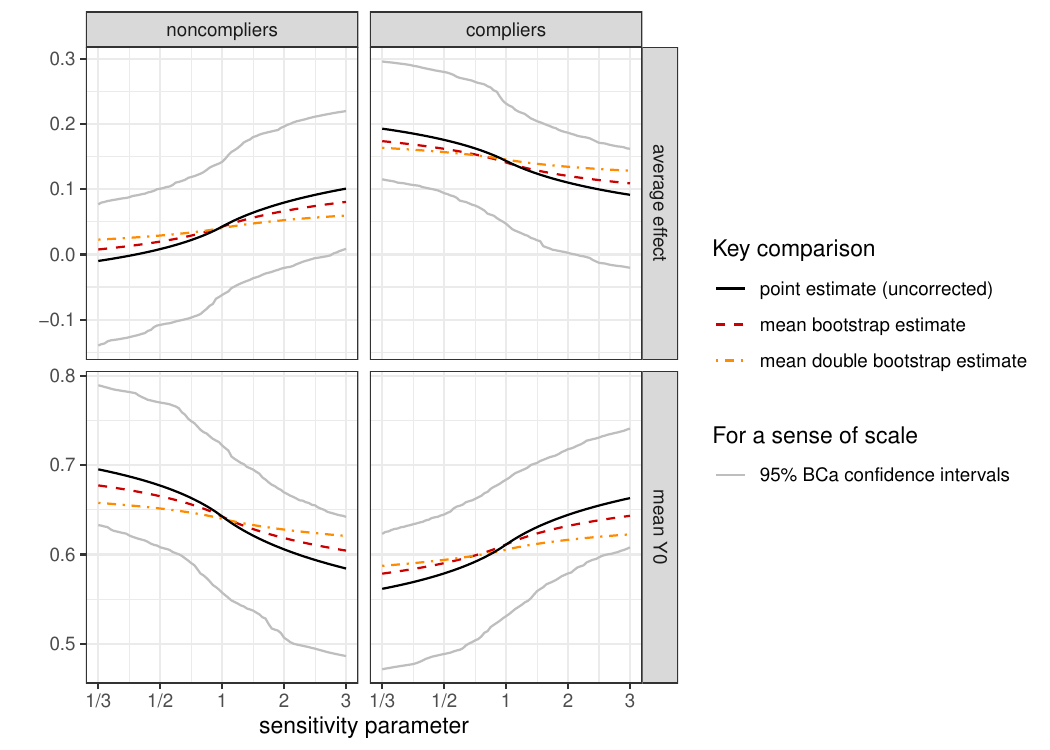}
    \label{fig:finitesamplebias}
\end{figure}

In Figure~\ref{fig:finitesamplebias}, the solid black curve is the point estimate (which we refer to generically as $\hat\theta$), the dashed red curve is the mean of bootstrap estimates ($\bar{\hat\theta}^*$), and the dashed orange curve is the mean of estimates from the double bootstrap (bootstrap of bootstrap samples)~($\bar{\hat\theta}^{**}$).
The shared pattern in all sensitivity analyses is that the slope of the $\bar{\hat\theta}^*$ curve is less steep than that of the $\hat\theta$ curve, and the slope of the $\bar{\hat\theta}^{**}$ is even less steep. 
(Note that the steepness of the curve indicates the degree to which sensitivity analysis estimates depart from the main analysis estimate.)
For the two outcomes \textit{work} and \textit{depressive symptoms}, where the differences between $\hat\theta$, $\bar{\hat\theta}^*$ and $\bar{\hat\theta}^{**}$ are minimal in the main analysis, this means that in the sensitivity analysis $\bar{\hat\theta}^*$ tends to be less extreme than $\hat\theta$, and $\bar{\hat\theta}^{**}$ tends to be even less extreme; and this gets more pronounced the farther the sensitivity parameter is from its null value.

Finite-sample bias deserves dedicated investigation, which is outside the scope of this paper.
This specific pattern, however, begs the question why. Our intuition is that it may be due to the fact that $\tau_{0c}$ is a weighted average of $\mu_{0c}(X)$ where the weights are $\pi_c(X)$, and under sensitivity assumptions the quantity being averaged $\mu_{0c}(X)$ depends on the weight $\pi_c(X)$. Specifically, 
with a fixed $\mu_0(X)$, $\mu_{0c}(X)$ is (i)~monotone decreasing in $\pi_c(X)$ for $\rho_c>1$ or $\eta_c>0$, and (ii)~monotone increasing in $\pi_c(X)$ for $\rho_c<1$ or $\eta_c<0$ (see Proposition~\ref{thm:mu.0c-monotone-in-pi.c} in Appendix~\ref{appendix:finite-sample-bias}). This results in a coupling of (a) any deviation (of 
the finite sample from the population) in the weight with (b) a deviation in the quantity being averaged -- in the opposite direction for case (i) and the same direction for case (ii). 
The resulting finite-sample bias is
an attenuation of the difference between the sensitivity analysis and main analysis estimates.

For the data example we use a bootstrap-based bias correction after conducting a focused simulation study (see Appendix~\ref{appendix:finite-sample-bias}). This bias correction is also implemented in our R-package.

\section{JOBS II illustration}\label{sec:illustration}

De-identified JOBS II data were accessed from the Inter-University Consortium for Political and Social Research data archive (\url{www.icpsr.umich.edu}). Our analysis focuses on the set of participants who were identified at initial screening as being at high risk for developing depression \cite{Vinokur1995}. For illustrative purposes, we further subset to participants with complete data (n=465) and treat the resulting dataset as if it were an observational study. (Due to this restriction of the sample, analysis results should be seen as merely illustrative and not taken as substantive findings.) We consider three outcomes: working for pay (binary), monthly earnings (non-negative), and depressive symptoms (a score ranging from 1 to 5) at six months post-treatment.
The study has a rich set of baseline covariates including demographics, household characteristics, employment history, motivation, and depressive symptoms. Given these covariates, we assume treatment assignment ignorability. We also assume PI in the main analysis.

Table~\ref{tab:Xdistributions} summarizes the covariate distribution (i)~in the full analysis sample; (ii)~stratified by compliance type in the treatment group (to give a sense of $X$-$C$ associations); and (iii)~stratified by the binary work-for-pay outcome in the control group (to give a sense of $X$-$Y_0$ associations).
Compared to noncompliers, compliers were more likely to be male, White, older and have a college degree. They were more likely to have ever married and have fewer cohabiting children, and less likely to have low household income. They were more likely to have had a professional job as their last steady job, to have been unemployed for a shorter time, and to report slightly higher job-seeking and program-participation motivation. In the control condition, participants who were younger, White, higher educated, unemployed for a shorter period, a manager at their last steady job, or reported higher motivation were more likely to be employed at six months.

We aim to illustrate the use of the sensitivity assumptions introduced above with the different outcomes, and show how sensitivity analysis effect estimates depart from PI-based estimates. For this purpose, any type A or type B estimator suffices. We suppose that a researcher has chosen to use the H\'ajek-type IF-based estimator (\ref{est:pi-ifh}) for the PI-based analysis. We will briefly describe an implementation of this estimator, and then will focus on the sensitivity analyses.

We report bias-corrected point estimates (see Section~\ref{sec:finite-sample-bias}) and BCa confidence intervals.

\begin{table}
    \centering
    \centering
    \caption{Baseline covariates in (1) full analysis sample; (2) propensity-score-weighted treatment group, stratified by compliance type; and (3) propensity-score-weighted control group, stratified by outcome \textit{work for pay}}\label{tab:Xdistributions}
    \vspace{-1em}
    \resizebox{\textwidth}{!}{%
    \begin{tabular}{l r@{\hspace{1\tabcolsep}}l c r@{\hspace{1\tabcolsep}}l r@{\hspace{1\tabcolsep}}l c r@{\hspace{1\tabcolsep}}l r@{\hspace{1\tabcolsep}}l}
        & \multicolumn{2}{c}{Full} && \multicolumn{4}{c}{Treatment group} && \multicolumn{4}{c}{Control group}
        \\
        & \multicolumn{2}{c}{analysis} && \multicolumn{4}{c}{propensity-score-weighted} && \multicolumn{4}{c}{propensity-score-weighted}
        % \\
        % & \multicolumn{2}{c}{sample} && \multicolumn{4}{c}{(n=311, n.wt=464.7)} && \multicolumn{4}{c}{(n=154, n.wt=455.4)}
        \\\cline{5-8}\cline{10-13}
        & \multicolumn{2}{c}{sample} && \multicolumn{2}{c}{compliers} & \multicolumn{2}{c}{noncompliers} && \multicolumn{2}{c}{work} & \multicolumn{2}{c}{not work}
        \\
        & \multicolumn{2}{c}{(n=465)} && \multicolumn{2}{c}{(n=172)} & \multicolumn{2}{c}{(n=139)} && \multicolumn{2}{c}{(n=96)} & \multicolumn{2}{c}{(n=58)}
        \\
        & & && \multicolumn{2}{c}{(n.wt=256.6)} & \multicolumn{2}{c}{(n.wt=208.0)} && \multicolumn{2}{c}{(n.wt=303.3)} & \multicolumn{2}{c}{(n.wt=152.1)}
        \\\hline
        & \textit{mean} & \textit{(SD)} && \textit{mean} & \textit{(SD)} & \textit{mean} & \textit{(SD)} && \textit{mean} & \textit{(SD)} & \textit{mean} & \textit{(SD)}
        \\
        & \textit{or \%} & \textit{(count)} && \textit{or \%} & \textit{(count)} & \textit{or \%} & \textit{(count)} && \textit{or \%} & \textit{(count)} & \textit{or \%} & \textit{(count)}
        \\[.5em]
        Age & 36.5 & (9.9) && 39.0 & (9.7) & 33.5 & (9.8) && 35.2 & (9.4) & 38.6 & (11.3)
        \\[.2em]
        Sex (female) & 57.6\% & (268) && 53.4\% & (137) & 62.9\% & (130.9) && 59.4\% & (180.3) & 56.7\% & (86.3)
        \\[.2em]
        Race (white) & 81.7\% & (380) && 85.1\% & (218.5) & 78.6\% & (163.5) && 87.1\% & (264.3) & 73.7\% & (112.1)
        \\[.6em]
        Education
        \\
        ~~~less than high school & 10.5\% & (49) && 7.3\% & (18.8) & 16.5\% & (34.3) && 8.5\% & (25.9) & 16.3\% & (24.8)
        \\
        ~~~high school & 29.7\% & (138) && 26.3\% & (67.5) & 31.7\% & (66.0) && 25.6\% & (77.5) & 35.6\% & (54.2)
        \\
        ~~~some college & 38.9\% & (181) &&  37.2\% & (95.5) & 41.7\% & (86.7) && 42.6\% & (129.0) & 34.7\% & (52.8)
        \\
        ~~~Bachelor's degree & 13.1\% & (61) && 19.4\% & (49.9) & 5.7\% & (11.8) && 14.6\% & (44.3) & 9.5\% & (11.4)
        \\
        ~~~graduate studies & 7.7\% & (36) && 9.8\% & (25.0) & 4.4\% & (9.3) && 8.7\% & (26.4) & 3.9\% & (5.9)
        \\[.6em]
        Marital status
        \\
        ~~~never married & 34.4\% & (160) && 31.8\% & (81.6) & 38.0\% & (79.0) && 35.2\% & (106.8) & 35.5\% & (54.0)
        \\
        ~~~married & 38.7\% & (180) && 37.3\%  & (95.7) & 38.5\% & (80.2) && 35.4\% & (107.3) & 39.8\% & (60.6)
        \\
        ~~~divorced/separated/widowed & 26.9\% & (125) && 30.9\% & (79.2) & 23.5\% & (48.8) && 29.4\% & (89.2) & 24.6\% & (37.5)
        \\[.2em]
        Kids in household & 0.93 & (1.13) && 0.85 & (1.12) & 0.95 & (1.17) && 0.80 & (1.05) & 0.98 & (1.03)
        \\[.2em]
        Household income
        \\
        ~~~under 15K & 22.8\% & (106) && 19.3\% & (49.4) & 26.7\% & (55.6) && 19.0\% & (57.5) & 31.3\% & (47.6)
        \\
        ~~~15K to under 25K & 24.9\% & (116) && 22.1\% & (56.6) & 29.6\% & (61.6) && 34.3\% & (104.1) & 15.1\% & (23.0)
        \\
        ~~~25K to under 40K & 25.8\% & (120) && 28.6\% & (73.3) & 23.3\% & (48.6) && 25.6\% & (77.7) & 24.0\% & (36.5)
        \\
        ~~~40K to under 50K & 10.8\% & (50) && 12.6\% & (32.4) & 7.7\% & (16.0) && 6.7\% & (20.2) & 15.3\% & (23.2)
        \\
        ~~~50K or more & 15.7\% & (73) && 17.5\% & (44.9) & 12.6\% & (26.2) && 14.4\% & (43.8) & 14.3\% & (21.8)
        \\[.2em]
        Economic hardship & 3.62 & (0.92) && 3.52 & (0.92) & 3.78 & (0.92) && 3.73 & (0.91) & 3.52 & (1.00)
        \\[.6em]
        Occupation (last steady job)
        \\
        ~~~professional & 18.5\% & (86) && 26.7\% & (68.5) & 9.1\% & (18.9) && 17.1\% & (51.8) & 18.8\% & (28.6)
        \\
        ~~~managerial & 17.2\% & (80) && 14.7\% & (37.6) & 19.5\% & (40.6) && 18.4\% & (55.9) & 10.8\% & (16.4)
        \\
        ~~~clerical & 23.4\% & (109) && 23.9\% & (61.3) & 23.2\% & (48.2) && 22.9\% & (69.6) & 26.3\% & (40.1)
        \\
        ~~~sales & 6.5\% & (30) && 5.2\% & (13.3) & 7.7\% & (16.0) && 7.9\% & (23.8) & 3.0\% & (4.6)
        \\
        ~~~crafts/foremen & 12.9\% & (60) && 13.6\% & (34.8) & 12.2\% & (25.4) && 10.9\% & (33.0) & 18.8\% & (28.6)
        \\
        ~~~operative & 9.5\% & (44) && 5.2\% & (13.3) & 14.6\% & (30.4) && 9.5\% & (28.9) & 7.3\% & (11.1)
        \\
        ~~~labor/service & 12.0\% & (56) && 10.8\% & (27.7) & 13.7\% & (28.5) && 13.3\% & (40.3) & 15.0\% & (22.7)
        \\[.2em]
        Weeks unemployed & 9.3 & (11.0) && 8.1 & (10.3) & 10.4 & (11.1) && 8.0 & (9.4) & 10.5 & (12.5)
        \\[.6em]
        Motivation to participate & 5.34 & (0.80) && 5.50 & (0.79) & 5.19 & (0.78) && 5.41 & (0.73) & 5.37 & (0.83)
        \\[.2em]
        Job-seeking motivation & 82 & (17) && 84 & (15) & 81 & (19) && 85 & (16) & 76 & (17)
        \\[.2em]
        Job-seeking self-efficacy & 3.59 & (0.83) && 3.48 & (0.84) & 3.70 & (0.82) && 3.66 & (0.76) & 3.44 & (0.84)
        \\[.2em]
        Assertiveness & 2.99 & (0.82) && 2.90 & (0.82) & 3.07 & (0.82) && 2.97 & (0.81) & 2.94 & (0.79)
        \\[.6em]
        Depressive symptoms & 2.34 & (0.68) && 2.34 & (0.69) & 2.36 & (0.68) && 2.42 & (0.70) & 2.25 & (0.60)
        \\\hline
    \end{tabular}%
    }
    \caption*{\scriptsize n.wt = weighted subsample size. Ranges of continuous/interval variables: age 17 to 77; kids in households 0 to 5 (one observation $>$5 truncated to 5), economic hardship 1 to 5; weeks unemployed 1 to 52 (12 observations $>$52 truncated to 52); motivation to participate 1 to 6.5; job-seeking motivation 0 to 100; job-seeking self-efficacy 1 to 5; assertiveness 1 to 5; depressive symptoms 1 to 5.}
    \vspace{.5em}
    \caption{PI-based analysis results: point estimates (and 95\% BCa confidence intervals)}\label{tab:main}
    \vspace{-.5em}
    \resizebox{\textwidth}{!}{%
    \begin{tabular}{lccccccccccc}
        & \multicolumn{3}{c}{compliers} && \multicolumn{3}{c}{noncompliers}
        \\\cline{2-4}\cline{6-8}
        \\[-.9em]
        outcome 
        & mean $Y_1$ ($\tau_{11}$) & mean $Y_0$ ($\tau_{01}^\text{PI}$) & CACE ($\Delta_1^\text{PI}$) 
        && mean $Y_1$ ($\tau_{10}$) & mean $Y_0$ ($\tau_{00}^\text{PI}$) & NACE ($\Delta_0^\text{PI}$)
        \\\hline
        work
        & 75.4\% & 61.1\% & 14.3 percentage points && 68.5\% & 64.2\% & 4.3 percentage points
        \\
        & (69.4, 81.4) & (53.1, 68.4) & (4.7, 23.1)
        && (60.7, 75.2) & (55.7, 72.2) & ($-6.2$, 14.2)
        \\\hline
        earnings
        & \$1,279 & \$1,014 & \$266 && \$928 & \$835 & \$92 
        \\
        & (1,107, 1,452) & (802, 1,221) & (18, 530) && (776, 1,115) & (666, 972) & (-90, 318)
        \\\hline
        depressive
        & 1.90 & 2.07 & -0.18 && 2.05 & 2.02 & 0.02
        \\
        symptoms
        & (1.80, 1.99) & (1.96, 2.20) & (-0.32, -0.04) && (1.94, 2.16) & (1.88, 2.14) & (-0.12, 0.18)
        \\\hline
    \end{tabular}%
    }
    \caption*{\scriptsize Variable \textit{work} is binary. Actual \textit{earnings} range is \$0-5,667. Scale range of \textit{depressive symptoms} is 1 to 5.\hfill~}
    \vspace{1em}
\end{table}

\subsection{PI-based main analysis}\label{sec:jobs-main}

The estimator $\hat\Delta_{c,\textsc{ifh}}^\text{PI}$ requires estimating several nuisance functions. We make relatively simple choices, keeping in mind what applied researchers may use in practice.
We use logistic regression to fit the propensity score ($e(Z,X)$) and principal score ($\pi_c(X)$) models. These models include all baseline covariates, plus squares and square roots of continuous covariates; the inclusion of these additional terms is meant to improve covariate balance to be obtained from principal score and inverse propensity score weighting. We check balance as suggested in~ \cite{Ding2017}~\!\!: Figure~\ref{fig:love.plots} (in Appendix~\ref{appendix:example}) shows that covariate balance is improved (i)~between the treated and control groups after propensity score weighting, and (ii)~between treated (non)compliers and controls after principal score weighting combined with propensity score weighting. 

Next, we estimate the conditional outcome mean functions for treated compliers ($\mu_{11}(X)$), treated noncompliers ($\mu_{10}(X)$) and controls ($\mu_0(X)$). With the binary outcome \textit{work for pay}, we use logistic regression. 
For the outcome \textit{earnings}, the means are estimated conditional on working using gamma regression with log link.
and then multiplied with the probability of working predicted by the \textit{work for pay} model.
(Small detail: since we use a noncanonical link with the gamma model, the predictions are slightly mean-biased; we calibrate them by a multiplicative constant to remove this bias.)
For the \textit{depressive symptoms} outcome, we use a simple transformation to the $[0,1]$ interval (by subtracting $l=1$ and dividing by $h-l=4$), fit a quasi-logistic model to the transformed outcome to estimate the conditional means, and then transform the means back to the original scale.
These models include all baseline covariates.

We use targeted nuisance estimation (see Remark~\ref{rm:targeted-nuisance-estimation}). The $\pi_c(X)$, $\mu_{11}(X)$ and $\mu_{10}(X)$ models are fit to data (treated group, treated compliers and treated noncompliers, respectively) weighted by $1/\hat e(Z,X)$. The $\mu_0(X)$ model is fit twice, to the control group weighted by $\hat\pi_1(X)/\hat e(Z,X)$ and weighted by $\hat\pi_0(X)/\hat e(Z,X)$, for CACE and NACE estimation, respectively.

Results (see Table \ref{tab:main}) suggest that assignment to the intervention resulted in increased employment and earnings and decreased depressive symptoms for compliers. For noncompliers, effect estimates are close to null.

\subsection{Sensitivity analysis}

We now demonstrate sensitivity analyses that are OR-based for \textit{work for pay}, MR-based for \textit{earnings}, and GOR- and SMDe-based for \textit{depressive symptoms}.

\paragraph{OR-based sensitivity analysis: work for pay}\label{sec:mean-based-work3}\hfill

\noindent
We noted above that some baseline factors such as socio-economic advantage and motivation are positively associated both with being a complier ($C$) and with the \textit{work for pay} outcome under control ($Y_0$). One might be concerned whether, within subpopulations homogeneous in the observed covariates, there are other advantage type factors that are \textit{unobserved} that relate to $C$ and $Y_0$ in a similar way; in that case the PI-based analysis might have overestimated the CACE and underestimated the NACE. On the other hand, one might be concerned that among people with the same $X$, some may not have needed to participate in the training because they had good prospects of finding a job; in that case the PI-based analysis might have been biased in the opposite direction. We thus consider a range of sensitivity OR values spanning both sides of 1. 
The results of this sensitivity analysis (Figure \ref{fig:meanbased-sens}, top left) suggest that, even if (within levels of $X$) compliers had double the odds (relative to noncompliers) of getting work \textit{without} the intervention, the intervention's effect on having work for compliers would still be positive.

\begin{figure}[t]
    \caption{Sensitivity analysis results: point estimates and 95\% point-wise CIs for CACE, NACE and stratum-specific potential outcome means, for the range of the sensitivity parameter.}
    \includegraphics[width=.93\textwidth, right]{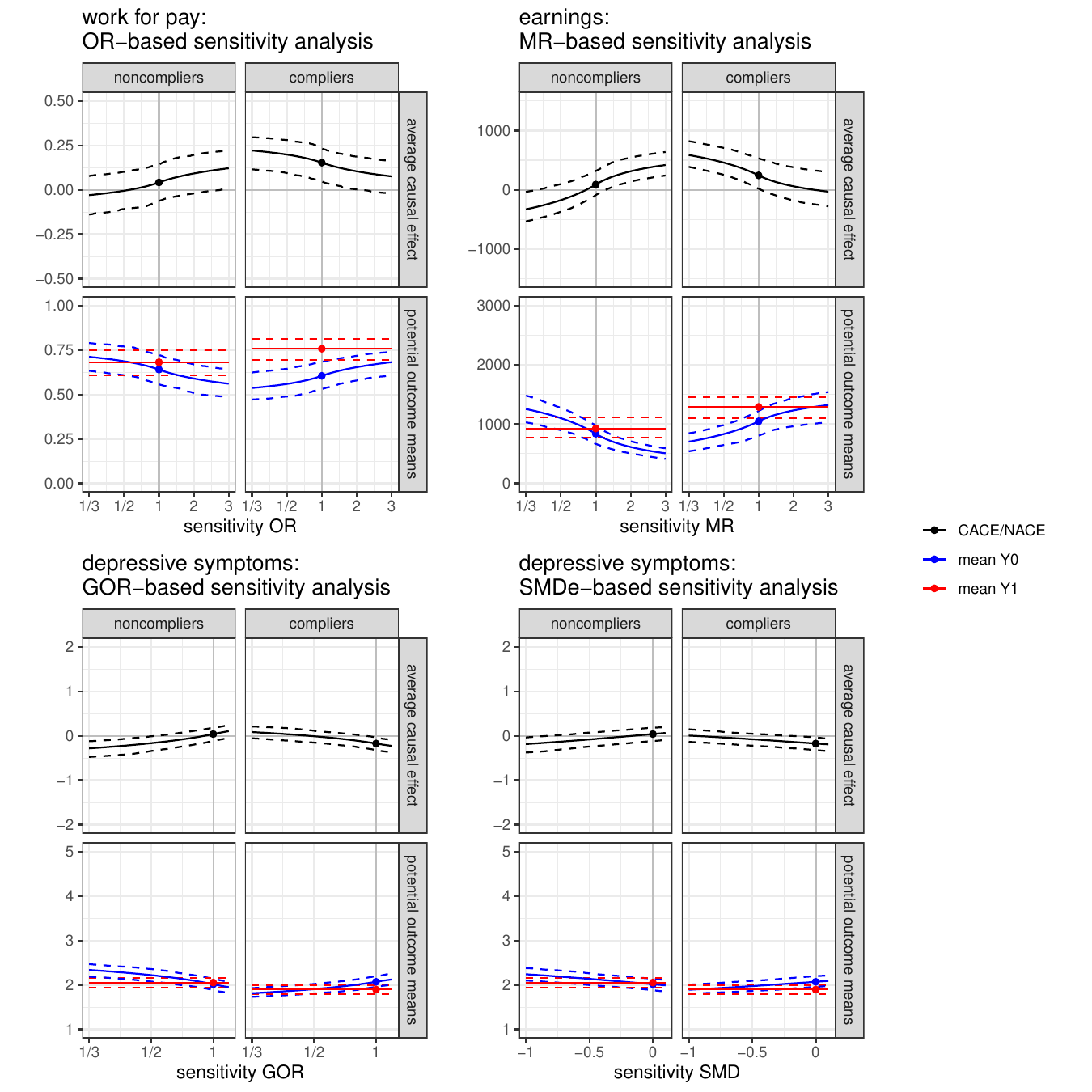}
    \vspace{-15pt}
    \label{fig:meanbased-sens}
\end{figure}

\paragraph{GOR-based sensitivity analysis: depressive symptoms}\hfill

\noindent
A concern may be that even among people with the same baseline covariate values (including baseline depressive symptoms score), compliers may be those who were more robust in some way (e.g., better at getting out of bed in the morning), and therefore may have better outcome (i.e., lower depressive symptoms at six months) under control than noncompliers. Therefore we consider sensitivity GOR values smaller than 1 (Figure \ref{fig:meanbased-sens}, bottom left).
The CACE estimate is quite sensitive to PI violation. It is negative (indicating a reduction in depressive symptoms) under PI, but as the sensitivity GOR deviates only slightly from 1, it quickly approaches zero. 
% Based on these results, it is plausible that both the CACE and NACE are null.

\paragraph{MR-based sensitivity analysis: earnings}\hfill

\noindent
With this outcome, we use the MR sensitivity parameter. To illustrate the method as it would typically be used, we treat \textit{earnings} as a stand-alone outcome, using $\hat\mu_0(X)$ as the only input, putting aside its connection with the \textit{work for pay} outcome.

We start with a tentative MR range from 1/3 to 3, which is covered in Figure~\ref{fig:meanbased-sens} (top right).
As mentioned earlier, it is challenging to choose what range to consider for the sensitivity parameter. Most important to this decision is substantive knowledge, including opinions of experts and study staff (who might know participants better than what is captured in the covariates). Such knowledge should be used, whenever it is available, to help \textit{rule in} which range of the sensitivity parameter is practical and relevant. 

As discussed in Section~\ref{sec:calibration}, the data can help \textit{rule out} some implausible ranges. Here we simply use the maximum reported earnings under control (\$5,667) as the upper bound B for $\mu_{0c}(X)$. After computing covariate-specific ``legal'' intervals for the sensitivity parameter, we use their end points to make the plot on the left in Figure~\ref{fig:calibration}, which shows the proportion of the sample with either $\hat\mu_{01}(X)$ or $\hat\mu_{00}(X)$ exceeding B under each MR value. We do not restrict the MR range based on this plot, as it suggests limited bound contradiction. (Alternatives include (i) restricting the MR range, or (ii) modifying the assumption to let the MR be $\rho$ for $X$ values where $\rho$ is in the legal interval, and otherwise be the legal value closest to $\rho$.)

\begin{figure}[h!]
    \caption{Bounds violation diagnostic: proportion contradicting bounds as a function of the sensitivity parameter}
    \label{fig:calibration}
    \centering
    \includegraphics[width=.47\textwidth]{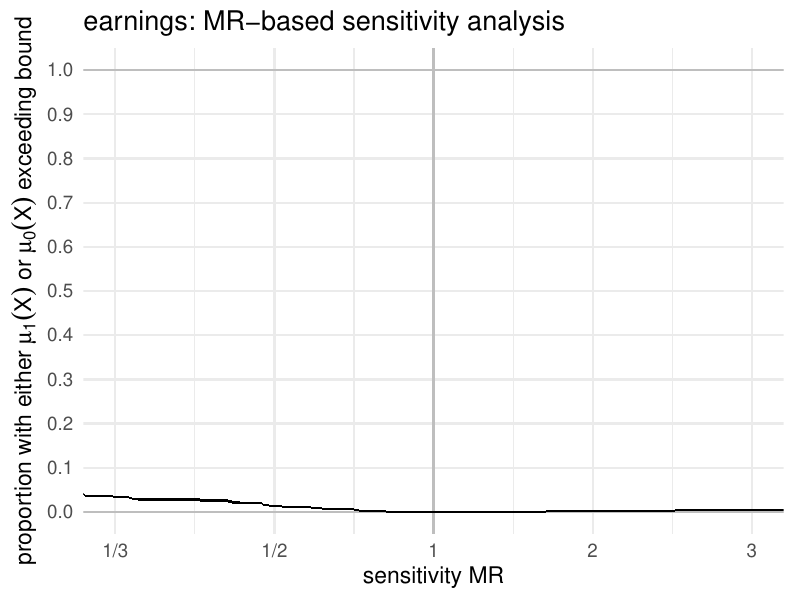}~~~
    \includegraphics[width=.47\textwidth]{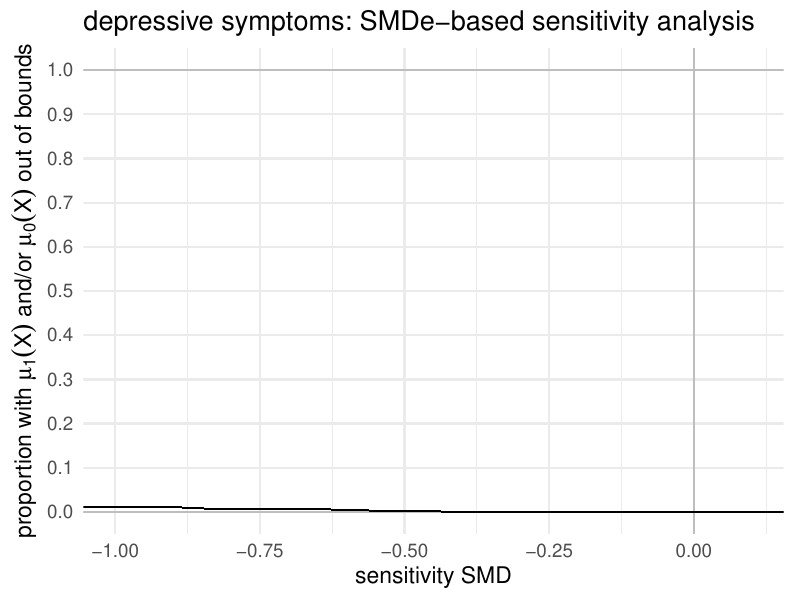}
\end{figure}

Another way to rely on the data is to examine what the MR values imply about the distributions of $\mu_{01}(X)$ values among compliers and of $\mu_{00}(X)$ values among noncompliers. Figure~\ref{fig:MRimplication} (in Appendix~\ref{appendix:example}) plots these implied distributions for several MR values on $[1/3,3]$. To judge whether such distributions are plausible, again, one should rely substantive knowledge if possible. Also, very large $\mu_{01}(X)$ or $\mu_{00}(X)$ values (especially those substantially larger than the maximum reported earnings) are suspect. Based on this, one might consider excluding MR values at the low end (1/3) and at the high end ($\geq2$).

Another possibility is to supplement the A4-MR with other assumptions based on substantive knowledge. Suppose, for example, that substantive experts think it is unlikely that being assigned to the intervention is harmful to noncompliers (a relaxation of the ER assumption). Based on the results plot in Figure \ref{fig:meanbased-sens}, this would narrow attention to the MR range above 1/2.

\paragraph{SMDe-based sensitivity analysis: depressive symptoms}
\hfill

\noindent
Suppose that for the depressive symptoms outcome, an investigator prefers a sensitivity analysis based on A4-SMDe, being more comfortable communicating about mean differences. Here also, we consider a sensitivity SMD range to the left of the null value, where within $X$ levels, complier and noncomplier outcome means under control may differ by up to one standard deviation.
Results (Figure \ref{fig:meanbased-sens}, bottom right) look similar to those from the GOR-based sensitivity analysis, although using a different sensitivity parameter.

We note two details. First, this sensitivity analysis requires estimating the conditional variance $\sigma_0^2(X)$. Using the quasi-likelihood approach, we assume that $\sigma_0^2(X)$ is proportional to $[\mu_0(X)-l][h-\mu_0(X)]$. This is equivalent to assuming that the outcome, after being shifted and rescaled to the [0,1] interval, follows a quasibinomial model conditional on covariates. Recall that in the PI-based analysis, we transformed this outcome to the [0,1] interval and fit a model with logit link. We now manually extract the dispersion parameter $\hat\phi$ from this model and use it to compute the variance estimate $\hat\sigma_0^2(X)=\hat\phi[\hat\mu_0(X)-l][h-\hat\mu_0(X)]$. 
Second, the plot on the right of Figure~\ref{fig:calibration} shows that for the SMD range considered there is minimal contradiction with the $\mu_{0c}(X)$ bounds, which here are simply set to the minimum and maximum depressive symptom scores. This is expected, as we consider a modest SMD range.

% \smallskip

We do not conduct an SMDe-based sensitivity analysis for the outcome earnings, because the equal variance part of A4-SMDe is likely grossly incorrect for that outcome.

%%%%%%%%%%%%%%%%%%%%
%%%%%%%%%%%%%%%%%%%%
%%%%%%%%%%%%%%%%%%%%

\section{Discussion}\label{sec:discussion}

This paper substantially expands options for sensitivity analysis for PI violation in the estimation of complier and noncomplier average causal effects in two ways. First, we consider several sensitivity models with different sensitivity parameters (OR, GOR, MR, SMD) suitable for different outcome types and reflecting different ways compliers and noncompliers may differ with respect to outcome under control. Second, rather than proposing one estimator under the sensitivity model, we tailor sensitivity analysis techniques to different types of estimators (outcome regression, IF-based and weighting) that may be used for the PI-based main analysis. 

There are several future directions for this line of sensitivity analysis. 
One is to incorporate data-adaptive nuisance estimation. As noted, the robustness available for PI-based analysis via IF-based estimation is partially lost for sensitivity analysis, making it more important that we estimate nuisance functions well. We provide rate conditions, but otherwise leave this to future work. 
Also important is how to handle missing data. Missing-at-random cases can be handled by standard techniques, but given the difference in compliance type observability between treatment arms, one may wish to allow certain not-at-random missingness, e.g., outcome missingness that depends on compliance type \cite{frangakis1999AddressingComplicationsIntentiontotreat}.
Another extension is to adapt the methods 
to accommodate two-sided noncompliance and non-binary $S$, 
which are also common settings.

For the two-sided noncompliance case, extension is conceptually straightforward: wherever a PI assumption is used to disentangle a mixture it can be replaced with a sensitivity assumption. With binary $Z$ and $S$ there are four mixtures, so if PI assumptions are invoked to disentangle all four, then replacing those assumptions requires four sensitivity parameters. If one assumes away one principal stratum (say, defiers) to identify stratum prevalences and covariate distributions, then two mixtures remain, which means a PI-based analysis requires two PI assumptions and the sensitivity analysis involves two sensitivity parameters -- see~ \cite{jiang2022MultiplyRobustEstimationa} for a sensitivity analysis using two MR parameters. While the idea is simple, works needs to be done to consider different (types of) PI-based estimators and pair them with sensitivity analysis techniques.

The methods in this paper belong to a mean-centric approach to sensitivity analysis. Each assumes a connection between two conditional outcome mean functions of complier and noncompliers. For a binary outcome, the sensitivity analysis based on A4-OR fully respects the observed outcome distribution. For continuous outcomes, however, the sensitivity analyses based on A4-GOR, A4-MR and A4-SMDe alone may conflict with the observed outcome distribution. The MR-based model may predict out of range because it treats the outcome as unbounded. The other two methods use some additional information: the GOR-based model takes in user-specified outcome bounds and respects those bounds; the SMDe-based model is informed about conditional outcome variability and with that information offers a scale-free sensitivity parameter. To mitigate the out-of-range prediction problem that affects the MR-based and to a lesser degree of the SMD-based method, we propose a simple technique that requires an additional assumption of bounds on stratum-specific conditional outcome means. There remains, however, the risk of more subtle conflict (e.g., predicting mean outcome in the tail of the distribution).
% When using all these methods with continuous outcomes, there remains the risk of subtle conflict with the observed data distribution, e.g., predicting mean outcome in areas of low density where the mean should not be.
A different approach is to avoid conflicting with the observed data distribution \cite{Scharfstein2021,Franks2020,Robins2000} all together by anchoring on the conditional distribution of the outcome  under control rather than just its mean plus bounds/variance. 
Such sensitivity analysis (described briefly in the preprint ~\cite{nguyen2023sensitivity}) will be presented in a separate manuscript.

One last comment: This paper provides technical solutions for doing sensitivity analysis, but does not address how to choose a relevant range for the sensitivity parameter and how to elicit and use expert opinion for this purpose. This is a topic that should receive more attention.

\section*{Acknowledgements}

This work is partially supported by grants R03MH128634, R01MH115487 and U24OD023382 from the National Institutes of Health, and N00014-21-1-2820 from the Office of Naval Research. The quality of the work and the clarity of its presentation have been improved thanks to feedback from anonymous reviewers. TQN thanks Drs.~Ilya Shpitser, Bonnie Smith and Razieh Nabi for helpful discussions about influence functions, and Drs.~Constantine Frangakis and Scott Zeger for thought-provoking comments at an early presentation of this work. The authors appreciate the participants, staff and investigators of the JOBS II study, and the ICPSR data archive.

\bibliographystyle{ama}
\bibliography{refs.bib}

%%%%%%%%%%%%%%%%%%%%%%%%%%%%%%%%%%%%%%%%%%%%%%%%%%%%%%%%%%%%%%%%%%%%%%%%%%%
%%%%%%%%%%%%%%%%%%%%%%%%%%%%%%%%%%%%%%%%%%%%%%%%%%%%%%%%%%%%%%%%%%%%%%%%%%%
%%%%%%%%%%%%%%%%%%%%%%%%%%%%%%%%%%%%%%%%%%%%%%%%%%%%%%%%%%%%%%%%%%%%%%%%%%%
%%%%%%%%%%%%%%%%%%%%%%%%%%%%%%%%%%%%%%%%%%%%%%%%%%%%%%%%%%%%%%%%%%%%%%%%%%%
%%%%%%%%%%%%%%%%%%%%%%%%%%%%%%%%%%%%%%%%%%%%%%%%%%%%%%%%%%%%%%%%%%%%%%%%%%%
%%%%%%%%%%%%%%%%%%%%%%%%%%%%%%%%%%%%%%%%%%%%%%%%%%%%%%%%%%%%%%%%%%%%%%%%%%%

\newpage
\pagenumbering{arabic}% resets `page` counter to 1
\renewcommand*{\thepage}{A\arabic{page}}

\appendix
% \addcontentsline{toc}{section}{Appendix} % Add the appendix text to the document TOC
\noindent{\Large\textbf{APPENDIX}}
\small
\part{} % Start the appendix part
\vspace{-4em}
\parttoc

% \newpage

\section{Some lemmas}

There are three lemmas that simplify many proofs in this paper. They are stated here. Other lemmas are introduced specifically where they are needed.

\addcontentsline{toc}{subsection}{Lemma \ref{lm:independence}}
\begin{lemma}\label{lm:independence}
If $A\independent(B,C)$ then $A\independent B\mid C$.
\end{lemma}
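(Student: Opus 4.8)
The statement to prove is Lemma~\ref{lm:independence}: if $A\independent(B,C)$ then $A\independent B\mid C$.

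\medskip

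\textbf{Proof plan.} The plan is to work directly from the definition of independence in terms of conditional distributions (or densities, writing the argument generically so it covers discrete and continuous cases alike). First I would unpack the hypothesis $A\independent(B,C)$ as the factorization of the joint law: the conditional distribution of $A$ given the pair $(B,C)$ does not depend on $(b,c)$, i.e. $p(a\mid b,c)=p(a)$ for (almost) all $(b,c)$ in the support. The goal is to show $p(a\mid b,c)=p(a\mid c)$, which is the statement that $A$ and $B$ are conditionally independent given $C$.

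\medskip

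The key steps, in order: (1) From $p(a\mid b,c)=p(a)$, integrate (or sum) out $b$ against the conditional law of $B$ given $C=c$ to obtain $p(a\mid c)=\int p(a\mid b,c)\,p(b\mid c)\,db = \int p(a)\,p(b\mid c)\,db = p(a)$. (2) Combining with step~(1) gives $p(a\mid b,c)=p(a)=p(a\mid c)$, so the conditional distribution of $A$ given $(B,C)$ equals the conditional distribution of $A$ given $C$ alone; this is exactly $A\independent B\mid C$. An equivalent and perhaps cleaner phrasing avoids densities entirely: for any measurable bounded $f$, the hypothesis gives $\E[f(A)\mid B,C]=\E[f(A)]$ a.s.; taking conditional expectation given $C$ of both sides (tower property) yields $\E[f(A)\mid C]=\E[f(A)]$ a.s.; hence $\E[f(A)\mid B,C]=\E[f(A)\mid C]$ a.s., which characterizes $A\independent B\mid C$.

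\medskip

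There is essentially no obstacle here — this is a standard textbook fact and the argument is a one- or two-line application of the tower property of conditional expectation (or of Bayes-style manipulation of densities). The only point requiring minor care is the ``almost sure'' / null-set bookkeeping when stating things in full measure-theoretic generality, but since the paper uses independence informally throughout, I would present the short density-level argument (or the two-line conditional-expectation argument) without belaboring measure-theoretic technicalities.
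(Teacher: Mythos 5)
Your proof is correct and follows essentially the same route as the paper: both arguments first establish $A\independent C$ by conditioning $\P(A\mid B,C)=\P(A)$ on $C$ (the tower property / law of total probability), and then conclude conditional independence — you by noting $p(a\mid b,c)=p(a)=p(a\mid c)$ directly, the paper by the symmetric Bayes-rule manipulation showing $\P(B\mid A,C)=\P(B\mid C)$. These are equivalent characterizations of $A\independent B\mid C$, so the difference is purely cosmetic.
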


\addcontentsline{toc}{subsection}{Lemma \ref{lm:parttowhole}}
\begin{lemma}\label{lm:parttowhole}
\begin{align*}
    \E[A\mid B,D=d]=\E\left[\frac{\I(D=d)}{\P(D=d\mid B)}A\mid B\right].
\end{align*}
\end{lemma}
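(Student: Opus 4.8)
\textbf{Proof proposal for Lemma~\ref{lm:parttowhole}.}
The plan is to verify the identity by computing the right-hand side directly from the definition of conditional expectation, using the tower property to condition on $D$ inside the expectation given $B$. First I would write
\begin{align*}
    \E\left[\frac{\I(D=d)}{\P(D=d\mid B)}A\;\middle|\;B\right]
    =\frac{1}{\P(D=d\mid B)}\,\E\big[\I(D=d)\,A\mid B\big],
\end{align*}
which is legitimate because $\P(D=d\mid B)$ is $\sigma(B)$-measurable and so can be pulled out of the conditional expectation given $B$. (Implicitly we work on the event $\P(D=d\mid B)>0$, which is where the left-hand side is defined.)

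Next I would handle the numerator. By the tower property, conditioning first on $(B,D)$,
\begin{align*}
    \E\big[\I(D=d)\,A\mid B\big]
    =\E\big[\,\E[\I(D=d)\,A\mid B,D]\,\big|\,B\big]
    =\E\big[\I(D=d)\,\E[A\mid B,D]\,\big|\,B\big],
\end{align*}
since $\I(D=d)$ is $\sigma(B,D)$-measurable. On the event $\{D=d\}$ we have $\E[A\mid B,D]=\E[A\mid B,D=d]$, and the latter is a function of $B$ alone, so it too can be pulled out of the outer conditional expectation, leaving $\E[A\mid B,D=d]\,\E[\I(D=d)\mid B]=\E[A\mid B,D=d]\,\P(D=d\mid B)$. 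Substituting back, the factor $\P(D=d\mid B)$ cancels and we obtain $\E[A\mid B,D=d]$, as claimed.

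I do not anticipate a real obstacle here; this is a routine manipulation of conditional expectations. The only point requiring a word of care is the measurability bookkeeping — justifying each "pull out of the conditional expectation" step by noting the relevant factor is measurable with respect to the conditioning $\sigma$-algebra — and the tacit restriction to the set where $\P(D=d\mid B)>0$, parallel to the convention already adopted in the paper for $\mu_{zc}(X)$ when $\pi_c(X)>0$. If one prefers an even shorter route, one can instead verify the identity by checking that both sides integrate to the same thing against an arbitrary bounded $\sigma(B)$-measurable test function, which reduces to the single equality $\E[\I(D=d)\,A\,g(B)]=\E[\E[A\mid B,D=d]\,\P(D=d\mid B)\,g(B)]$; but the tower-property argument above is the more transparent presentation.
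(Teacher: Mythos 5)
Your argument is correct and is essentially the paper's own proof: both apply the tower property with inner conditioning on $(B,D)$, pull out the indicator and the $\sigma(B)$-measurable factors, and use $\E[\I(D=d)\mid B]=\P(D=d\mid B)$. The only difference is cosmetic (you extract $1/\P(D=d\mid B)$ at the outset rather than carrying it through), and your added care about measurability and the positivity of $\P(D=d\mid B)$ is consistent with the paper's conventions.
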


\addcontentsline{toc}{subsection}{Lemma \ref{lm:if-ratioOfMeans}}
\begin{lemma}\label{lm:if-ratioOfMeans}
Let $\alpha$ and $\beta$ be two parameters with IFs $\varphi_\alpha(O)$ and $\varphi_\beta(O)$ under a probability model $\mathcal{P}$. Then the IF of $\gamma:=\alpha/\beta$ is
\begin{align}
    \varphi_\gamma(O)=\frac{1}{\beta}[\varphi_\alpha(O)-\gamma\,\varphi_\beta(O)].
\end{align}
Moreover, if $\varphi_\alpha(O)$ and $\varphi_\beta(O)$ are of the form
\begin{align}
    \begin{matrix}
    \varphi_\alpha(O)=\phi_\alpha(O)-\alpha,
    \\
    \varphi_\beta(O)=\phi_\beta(O)-\beta,
    \end{matrix}\label{if:simple.form}
\end{align}
where the functions $\phi_\alpha(O)$ and $\phi_\beta(O)$ do not involve $\alpha,\beta$, then
\begin{align}
    \varphi_\gamma(O)=\frac{1}{\beta}[\phi_\alpha(O)-\gamma\,\phi_\beta(O)].\label{if:simple.result}
\end{align}
\end{lemma}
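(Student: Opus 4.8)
The plan is to work from the pathwise-derivative (Gateaux) characterization of the influence function and then apply the ordinary quotient rule. Fix a regular one-dimensional parametric submodel $\{P_t : |t| < \epsilon\}$ passing through the true $P$ at $t=0$, with score $s(O) := \partial_t \log dP_t(O)\big|_{t=0}$. By definition of $\varphi_\alpha$ and $\varphi_\beta$ as influence functions, $\partial_t \alpha(P_t)\big|_{t=0} = \E[\varphi_\alpha(O)\,s(O)]$ and $\partial_t \beta(P_t)\big|_{t=0} = \E[\varphi_\beta(O)\,s(O)]$, with both $\varphi_\alpha,\varphi_\beta$ mean-zero and in the tangent space at $P$.

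First I would differentiate $\gamma(P_t) = \alpha(P_t)/\beta(P_t)$ at $t=0$. Since $\beta = \beta(P)\neq 0$, the quotient rule gives
\begin{align*}
    \partial_t \gamma(P_t)\big|_{t=0}
    = \frac{\beta\,\partial_t\alpha(P_t)|_{t=0} - \alpha\,\partial_t\beta(P_t)|_{t=0}}{\beta^2}
    = \frac{1}{\beta}\Big(\partial_t\alpha(P_t)|_{t=0} - \gamma\,\partial_t\beta(P_t)|_{t=0}\Big).
\end{align*}
Substituting the two derivative representations and using linearity of the expectation,
\begin{align*}
    \partial_t \gamma(P_t)\big|_{t=0}
    = \E\!\left[\frac{1}{\beta}\big(\varphi_\alpha(O) - \gamma\,\varphi_\beta(O)\big)\,s(O)\right].
\end{align*}
The candidate $\frac{1}{\beta}(\varphi_\alpha(O) - \gamma\,\varphi_\beta(O))$ is mean-zero (as $\varphi_\alpha,\varphi_\beta$ are) and, being a fixed linear combination of two tangent-space elements, lies in the tangent space. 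Since the influence function is the unique tangent-space element representing the pathwise derivative in this way, and the display holds for every submodel (every score $s$), we get $\varphi_\gamma(O) = \frac{1}{\beta}(\varphi_\alpha(O) - \gamma\,\varphi_\beta(O))$, the first claim.

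For the \emph{moreover} part I would simply substitute the assumed forms. Writing $\varphi_\alpha(O) = \phi_\alpha(O) - \alpha$ and $\varphi_\beta(O) = \phi_\beta(O) - \beta$,
\begin{align*}
    \varphi_\gamma(O)
    = \frac{1}{\beta}\big[\phi_\alpha(O) - \alpha - \gamma\phi_\beta(O) + \gamma\beta\big]
    = \frac{1}{\beta}\big[\phi_\alpha(O) - \gamma\,\phi_\beta(O)\big] + \frac{\gamma\beta - \alpha}{\beta},
\end{align*}
and the last term vanishes because $\gamma\beta = \alpha$ by definition of $\gamma$, yielding \eqref{if:simple.result}.

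The only real subtlety — the step I would be most careful about — is the functional-analytic bookkeeping: the manipulations require $\beta\neq 0$, differentiability of $t\mapsto \alpha(P_t),\beta(P_t)$ at $0$ (which follows from pathwise differentiability of $\alpha$ and $\beta$), and uniqueness of the influence function as the canonical gradient in the tangent space, so that matching pathwise derivatives against all scores pins down $\varphi_\gamma$. Everything else is the quotient rule plus algebra. If one prefers a more computational route, the same identity follows by verifying directly that $\frac{1}{\beta}(\varphi_\alpha - \gamma\varphi_\beta)$ satisfies the first-order von Mises expansion for $\gamma$ along paths $\bar P \to P$, using the corresponding expansions for $\alpha$ and $\beta$; there the obstacle shifts to bounding the second-order remainder, which for a ratio is elementary.
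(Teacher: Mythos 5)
Your proposal is correct and follows essentially the same route as the paper's proof: differentiate $\gamma(P_t)=\alpha(P_t)/\beta(P_t)$ along a parametric submodel via the quotient rule, rewrite the derivative as an inner product with the score, and identify the mean-zero representer; the \emph{moreover} part is the same substitution using $\gamma\beta=\alpha$. Your added remarks on tangent-space membership and uniqueness of the canonical gradient only make explicit what the paper leaves implicit.
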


\bigskip

\addcontentsline{toc}{subsection}{Proof of Lemma \ref{lm:independence}}
\noindent
\begin{proof}[Proof of Lemma \ref{lm:independence}]
First, note that if $A\independent(B,C)$ then $A\independent C$. This is because
\begin{align*}
    \P(A\mid C)
    &=\E[\P(A\mid B,C)\mid C] && \text{(law of total probability)}
    \\
    &=\E[\P(A)\mid C] && (A\independent(B,C))
    \\
    &=\P(A).
\end{align*}
That $A\independent B\mid C$ follows,
\begin{align*}
    \P(B\mid C)
    &=\frac{\P(B,C)}{\P(C)} && \text{(Bayes' rule)}
    \\
    &=\frac{\P(B,C\mid A)}{\P(C\mid A)} && (A\independent(B,C)\text{ and }A\independent C)
    \\
    &=\P(B\mid A,C). && (\text{Bayes' rule})
\end{align*}
% \begin{align*}
%     \P(B,C)&=\P(B,C\mid A) && (A\independent(B,C))
%     \\
%     \P(C)\P(B\mid C)&=\P(C\mid A)\P(B\mid A,C) && \text{(factoring both sides)}
%     \\
%     \P(B\mid C)&=\P(B\mid A,C). && (A\independent C)
% \end{align*}
% That is, $A\independent B\mid C$.
\end{proof}

\addcontentsline{toc}{subsection}{Proof of Lemma \ref{lm:parttowhole}}
\noindent
\begin{proof}[Proof of Lemma \ref{lm:parttowhole}]
\begin{align*}
    \text{RHS}
    &=\E\left[\frac{\I(D=d)}{\P(D=d\mid B)}A\mid B\right]
    \\
    &=\E\left\{\E\left[\frac{\I(D=d)}{\P(D=d\mid B)}A\mid B,D\right]\mid B\right\}
    \\
    &=\E\left\{\frac{\I(D=d)}{\P(D=d\mid B)}\E[A\mid B,D=d]\mid B\right\}
    \\
    &=\E\left\{\frac{\I(D=d)}{\P(D=d\mid B)}\mid B\right\}\E[A\mid B,D=d]
    \\
    &=\text{LHS}.
\end{align*}
\end{proof}

\addcontentsline{toc}{subsection}{Proof of Lemma \ref{lm:if-ratioOfMeans}}
\noindent
\begin{proof}[Proof of Lemma \ref{lm:if-ratioOfMeans}]
Let the density of data $O$ in a parametric submodel of $\mathcal{P}$ be indexed by parameter $\theta$. Then $\varphi_\gamma(O)$ is the function that satisfies
\begin{align*}
    \frac{\partial\gamma(\theta)}{\partial\theta}\Big|_{\theta=\theta^0}=\E[S(O;\theta_0)\varphi_\gamma(O)]~~~\text{and}~~\E[\varphi_\gamma(O)]=0,
\end{align*}
where $\theta^0$ is the true value of the parameter, $S(O;\theta_0)$ is the score function of the model evaluated at the true parameter, and the expectations are taken under the true distribution. We start with the derivative on LHS,
\begin{align*}
    \frac{\partial\gamma(\theta)}{\partial\theta}
    &=\frac{\partial\gamma}{\partial\alpha}\frac{\partial\alpha}{\partial\theta}+\frac{\partial\gamma}{\partial\beta}\frac{\partial\beta}{\partial\theta}
    \\
    &=\frac{1}{\beta}\frac{\partial\alpha}{\partial\theta}-\frac{\alpha}{\beta^2}\frac{\partial\beta}{\partial\theta}
    \\
    &=\frac{1}{\beta}\left[\frac{\partial\alpha}{\partial\theta}-\gamma\frac{\partial\beta}{\partial\theta}\right].
\end{align*}
Evaluating both sides at $\theta=\theta_0$,
\begin{align*}
    \E[S(O;\theta_0)\varphi_\gamma(O)]
    &=\frac{1}{\beta}\Big\{\E[S(O;\theta_0)\varphi_\alpha(O)]-\gamma\,\E[S(O;\theta_0)\varphi_\beta(O)]\Big\}
    \\
    &=\E\Big\{S(O;\theta_0)\frac{1}{\beta}[\varphi_\alpha(O)-\gamma\varphi_\beta(O)]\Big\}
\end{align*}
As $\E\left\{\mfrac{1}{\beta}[\varphi_\alpha(O)-\gamma\varphi_\beta(O)]\right\}=0$,
\begin{align*}
    \varphi_\gamma(O)
    &=\frac{1}{\beta}[\varphi_\alpha(O)-\gamma\varphi_\beta(O)].
\end{align*}
For the case with the specific form of $\varphi_\alpha(O)$ and $\varphi_\beta(O)$, we plug them in and obtain
\begin{align*}
    \varphi_\gamma(O)
    &=\frac{1}{\beta}[\phi_\alpha(O)-\alpha-\gamma\phi_\beta(O)+\gamma\beta]
    \\
    &=\frac{1}{\beta}[\phi_\alpha(O)-\gamma\phi_\beta(O)].
\end{align*}
\end{proof}

\section{Proofs for Section \ref{sec:setting} -- PI-based identification}\label{appendix:prelim}

\addcontentsline{toc}{subsection}{Proof of Lemma \ref{lm:starting-point}}
\noindent
\begin{proof}[Proof of Lemma \ref{lm:starting-point}]
\begin{align*}
    \overbrace{\E[Y_z\mid C=c]}^{\textstyle=:\tau_{zc}}
    &=\E\{\overbrace{\E[Y_z\mid X,C=c]}^{\textstyle=:\mu_{zc}(X)}\mid C=c\} && \text{(iterated expectation)}
    % \\
    % &=\int\mu_{zc}(x)\P(X=x\mid C=c)dx && \text{(definition of expectation)}
    % \\
    % &=\int\mu_{zc}(x)\frac{\P(X=x\mid C=c)}{\P(X=x)}\P(X=x)dx
    \\
    &=\E\left[\mu_{zc}(X)\frac{\P(X\mid C=c)}{\P(X)}\right] && \text{(re-expression after writing in integral form)}
    \\
    &=\E\Big[\mu_{zc}(X)\frac{\,\overbrace{\P(C=c\mid X)}^{\textstyle=:\pi_c(X)}\,}{\P(C=c)}\Big] && \text{(Bayes' rule)}
    \\
    &=\frac{\E[\pi_c(X)\mu_{zc}(X)]}{\E[\pi_c(X)]}.\tag{\ref{estimand:tau.zc}}
\end{align*}
\end{proof}

\bigskip

\addcontentsline{toc}{subsection}{Proof of Proposition \ref{thm:id-tau.1c}}
\noindent
\begin{proof}[Proof of Proposition \ref{thm:id-tau.1c}]

\begin{align*}
    \overbrace{\P(C=c\mid X)}^{\textstyle=:\pi_c(X)}
    &=\P(C=c\mid X,Z=1), && (\text{under A1})\tag{\ref{id:pi.c}}
    \\
    \overbrace{\E[Y_1\mid X,C=c]}^{\textstyle=:\mu_{zc}(X)}
    &=\E[Y_1\mid X,Z=1,C=c] && (Y_1\independent Z\mid X,C~\text{due to A1 by Lemma \ref{lm:independence}})
    \\
    &=\E[Y\mid X,Z=1,C=c]. && (\text{under A0 and A2})\tag{\ref{id:mu.1c}}
\end{align*}

The first formula for $\tau_{1c}$ in (\ref{id:tau.1c}) is basically (\ref{estimand:tau.zc}) from Lemma \ref{lm:starting-point}, but now interpreted in terms of (\ref{id:pi.c}) and (\ref{id:mu.1c}).
In addition,
\begin{align}
    \E\left[\frac{Z}{e(X,Z)}\I(C=c)Y\right]
    &=\E\left[\left(\left\{\E\left[\frac{Z}{e(X,Z)}\I(C=c)Y\mid X,Z,C\right]\mid X,Z\right\}\mid X\right)\right]=\E[\pi_c(X)\mu_{1c}(X)],\nonumber
    \\
    \E\left[\frac{Z}{e(X,Z)}\I(C=c)\right]
    &=\E\left(\E\left\{\E\left[\frac{Z}{e(X,Z)}\I(C=c)\mid X,Z\right]\mid X\right\}\right)
    =\E[\pi_c(X)],\label{delta.c-form2}
\end{align}
which together give the second formula for $\tau_{1c}$ in (\ref{id:tau.1c}).

Lastly,
\begin{align*}
    \E[Y_0\mid X]
    &=\E[Y_0\mid X,Z=0] && \text{(under A1)}
    \\
    &=\overbrace{\E[Y\mid X,Z=0]}^{\textstyle=:\mu_0(X)}, && \text{(under A0 and A2)}
\end{align*}
and at the same time
\begin{align*}
    \E[Y_0\mid X]
    &=\E\{\E[Y_0\mid X,C]\mid X\} && \text{(iterated expectation)}
    \\
    &=\sum_{c=0}^1\overbrace{\E[Y_0\mid X,C=c]}^{\textstyle=:\mu_{0c}(X)}\overbrace{\P(C=c\mid X)}^{\textstyle=:\pi_c(X)}.
\end{align*}
It follows that
\begin{align*}
    \mu_{00}(X)\pi_0(X)+\mu_{01}(X)\pi_1(X)=\mu_0(X).\tag{\ref{eq:mixture-mean}}
\end{align*}
\end{proof}

\bigskip\bigskip

\addcontentsline{toc}{subsection}{Proof of Proposition \ref{thm:pi-id:tau.0c}}
\noindent
\begin{proof}[Proof of Proposition \ref{thm:pi-id:tau.0c}]
That 
\begin{align*}
    \mu_{0c}(X)=\mu_0(X)\tag{\ref{pi-id:mu.0c}}
\end{align*}
follows directly from (\ref{eq:mixture-mean}) combined with A3, and justifies the first $\tau_{0c}^\text{PI}$ formula in (\ref{pi-id:tau.0c}).

In addition, the combination of
\begin{align*}
    \E\left[\frac{1-Z}{e(X,Z)}\pi_c(X)Y\right]
    &=\E\left(\E\left\{\E\left[\frac{1-Z}{e(X,Z)}\pi_c(X)Y\mid X,Z\right]\mid X\right\}\right)=\E[\pi_c(X)\mu_0(X)],
    \\
    \E\left[\frac{1-Z}{e(X,Z)}\pi_c(X)\right]
    &=\E\left\{\E\left[\frac{1-Z}{e(X,Z)}\pi_c(X)\mid X\right]\right\}
    =\E[\pi_c(X)],
\end{align*}
gives the second $\tau_{0c}^\text{PI}$ formula in (\ref{pi-id:tau.0c}).

Also, (\ref{delta.c-form2}) combined with 
\begin{align*}
    \E\left[\frac{Z}{e(X,Z)}\I(C=c)\mu_0(X)\right]
    &=\E\left(\E\left\{\E\left[\frac{Z}{e(X,Z)}\I(C=c)\mu_0(X)\mid X,Z\right]\mid X\right\}\right)=\E[\pi_c(X)\mu_0(X)]
\end{align*}
gives the third $\tau_{0c}^\text{PI}$ formula in (\ref{pi-id:tau.0c}).
\end{proof}

\section{Proof for Section \ref{sec:pi-estimators} -- IF-based PI-based estimation}\label{appendix:pi-estimators}

\addcontentsline{toc}{subsection}{Proof of Proposition \ref{thm:ifs-pi}}
\noindent
\begin{proof}[Proof of Proposition \ref{thm:ifs-pi}]
Consider a paramatric submodel (of the nonparametric model) of O, 
% parameterized by $\theta=(\theta_1,\theta_2,\theta_3,\theta_4,\theta_5,\theta_6)$:
\begin{align}
    f(O,\theta)=~&f_1(X,\theta_1)\,
    f_2(Z\mid X,\theta_2)\,
    f_3(C\mid X,Z=1,\theta_3)\times\nonumber
    \\
    &[f_4(Y\mid X,Z=1,C=1,\theta_4)]^{ZC}\,
    [f_5(Y\mid X,Z=1,C=0,\theta_5)]^{Z(1-C)}\,
    [f_6(Y\mid X,Z=0,\theta_6)]^{1-Z}
\end{align}
Based on this factorization, the observed data Hilbert space (i.e., the space of mean-zero finite-variance 1-dimensional functions of observed data equipped with the covariance inner product) is the direct sum of six subspaces:
\begin{align*}
    \H=\T_1\oplus\T_2\oplus\T_3\oplus\T_4\oplus\T_5\oplus\T_6,
\end{align*}
where
\begin{align*}
    \T_1&=\{g_1(X):\E[g_1(X)]=0\},
    \\
    \T_2&=\{g_2(X,Z):\E[g_2(X,Z)\mid X]=0\},
    \\
    \T_3&=\{Zg_3(X,C):\E[g_3(X,C)\mid X,Z=1]=0\},
    \\
    \T_4&=\{ZCg_4(X,Y):\E[g_4(X,Y)\mid X,Z=1,C=1]=0\},
    \\
    \T_5&=\{Z(1-C)g_5(X,Y):\E[g_5(X,Y)\mid X,Z=1,C=0]=0\},
    \\
    \T_6&=\{(1-Z)g_6(X,Y):\E[g_6(X,Y)\mid X,Z=0]=0\}.
\end{align*}

\noindent Assume regularity conditions hold that allow interchanging integration and derivation.

\bigskip

\noindent
\textit{Part 0: Some useful derivatives}

\begin{align}
    f_1'(X):=\frac{\partial f_1(X,\theta_1)}{\partial\theta_1}\Big|_{\theta_1=\theta_1^0}
    &=S_1(X)f_1(X),\label{f_1'(X)}
    \\
    \pi_1'(X):=\frac{\partial\pi_1(X,\theta_3)}{\partial\theta_3}\Big|_{\theta_3=\theta_3^0}
    &=\frac{\partial}{\partial\theta_3}\int c\,f_3(c\mid X,Z=1,\theta_3)dc\Big|_{\theta_3=\theta_3^0}\nonumber
    \\
    &=\int c\frac{\partial f_3(c\mid X,Z=1,\theta_3)}{\partial\theta_3}\Big|_{\theta_3=\theta_3^0}dc\nonumber
    \\
    &=\int c\,S_3(X,Z=1,C=c)f_3(c\mid X,Z=1)dc\nonumber
    \\
    &=\E[S_3(X,Z,C)C\mid X,Z=1]\nonumber
    \\
    &=\E\{S_3(X,Z,C)[C-\pi_1(X)]\mid X,Z=1\}+\underbrace{\E\{S_3(X,Z,C)\pi_1(X)\mid X,Z=1\}}_{\textstyle=0\text{ because }\E[S_3\mid X,Z=1]=0}\nonumber
    \\
    &=\E\left[S_3(X,Z,C)\frac{Z}{e(X,Z)}[C-\pi_1(X)]\mid X\right], && (\text{Lemma \ref{lm:parttowhole}})\label{pi_1'(X)}
    \\
    \mu_{11}'(X):=\frac{\partial\mu_{11}(X)}{\partial\theta_4}\Big|_{\theta_4=\theta_4^0}
    &=\frac{\partial}{\partial\theta_4}\int y\,f_4(y\mid X,Z=1,C=1,\theta_4)dy\Big|_{\theta_4=\theta_4^0}\nonumber
    \\
    &=\int y\frac{\partial}{\partial\theta_4}f_4(y\mid X,Z=1,C=1,\theta_4)\Big|_{\theta_4=\theta_4^0}dy\nonumber
    \\
    &=\int y\,S_4(X,Z=1,C=1,y)f_4(y\mid X,Z=1,C=1)dy\nonumber
    \\
    &=\E[S_4(X,Z,C,Y)Y\mid X,Z=1,C=1]\nonumber
    \\
    &=\E\{S_4(X,Z,C,Y)[Y-\mu_{11}(X)]\mid X,Z=1,C=1\}+\nonumber
    \\
    &~~~~~~~~~~~~~~~~~~~~~~~~~~~~~~\underbrace{\E\{S_4(X,Z,C,Y)\mu_{11}(X)\mid X,Z=1,C=1\}}_{\textstyle=0\text{ because }\E[S_4\mid X,Z=1,C=1]=0}\nonumber
    \\
    &=\E\Big[S_4(X,Z,C,Y)\frac{ZC}{\underbrace{\P(Z=1,C=1\mid X)}_{\textstyle e(X,Z)\pi_1(X)}}[Y-\mu_{11}(X)]\mid X\Big], && (\text{Lemma \ref{lm:parttowhole}})\label{mu_11'(X)}
    \\
    \mu_0'(X):=\frac{\partial\mu_0(X,\theta_6)}{\partial\theta_6}\Big|_{\theta_6=\theta_6^0}
    &=\frac{\partial}{\partial\theta_6}\int y\,f_6(y\mid X,Z=0,\theta_6)dy\Big|_{\theta_6=\theta_6^0}\nonumber
    \\
    &=\int y\frac{\partial}{\partial\theta_6}f_6(y\mid X,Z=0,\theta_6)\Big|_{\theta_6=\theta_6^0}dy\nonumber
    \\
    &=\int y\,S_6(X,Z=0,y)f_6(y\mid X,Z=0)dy\nonumber
    \\
    &=\E[S_6(X,Z,Y)Y\mid X,Z=0]\label{mu_0'(X)-intermediate-result}
    \\
    &=\E\{S_6(X,Z,Y)[Y-\mu_0(X)]\mid X,Z=0\}+\underbrace{\E\{S_6(X,Z,Y)\mu_0(X)\mid X,Z=0\}}_{\textstyle=0\text{ because }\E[S_6\mid X,Z=0]=0}\nonumber
    \\
    &=\E\left[S_6(X,Z,Y)\frac{1-Z}{e(X,Z)}[Y-\mu_0(X)]\mid X\right]. && (\text{Lemma \ref{lm:parttowhole}})\label{mu_0'(X)}
\end{align}

\bigskip

\noindent
\textit{Part 1: The IF for $\pi_c$}
\smallskip

\noindent
We just need to consider $\pi_1$ and then infer for $\pi_c$ for $c=0,1$.
We write $\pi_1$ as a function of parameter $\theta$ of the parametric submodel,
\begin{align*}
    \pi_1(\theta)=\E_{\theta_1}[\pi_1(X,\theta_3)]=\int\pi_1(x,\theta_3)f_1(x,\theta_1)dx.
\end{align*}
This function involves only $\theta_1,\theta_3$, so the IF for $\pi_c$, denoted $\varphi_{\pi_1}(O)$, is the sum of two terms $\varphi_{\pi_1,1}\in\T_1$ and $\varphi_{\pi_1,3}\in\T_3$, such that
\begin{align*}
    \frac{\partial\pi_1(\theta)}{\partial\theta_1}\Big|_{\theta=\theta^0}
    &=\E[S_1(X,\theta_1^0)\varphi_{\pi_1,1}(X)],
    \\
    \frac{\partial\pi_1(\theta)}{\partial\theta_3}\Big|_{\theta=\theta^0}
    &=\E[S_3(X,Z,C,\theta_3^0)\varphi_{\pi_1,3}(X,Z,C)],
\end{align*}
where $\theta^0$ is the true value of $\theta$; $S_1,S_3$ are score functions for $\theta_1,\theta_3$; and the expectations are taken w.r.t. the truth. 
To simplify notation, we suppress the parameter when a function is evaluated at the true value of the parameter, e.g., $f_1(X)=f_1(X,\theta_1^0)$, $S_1(X)=S_1(X,\theta_1^0)$, etc.

\begin{align*}
    \frac{\partial\pi_1(\theta)}{\partial\theta_1}\Big|_{\theta=\theta^0}
    &=\int\pi_1(x)f_1'(x)dx
    \\
    &=\int\pi_1(x)S_1(x)f_1(x)=\E[S_1(X)\pi_1(X)] && \text{(by (\ref{f_1'(X)}))}
    \\
    &=\E\{S_1(X)\underbrace{[\pi_1(X)-\pi_1]}_{\textstyle=:\varphi_{\delta,1}(X)\in\T_1}\},
    \\
    \frac{\partial\pi_1(\theta)}{\partial\theta_3}\Big|_{\theta=\theta^0}
    &=\E[\pi_1'(X)]
    \\
    &=\E\Big\{S_3(X,Z,C)\underbrace{\frac{Z}{e(X,Z)}[C-\pi_1(X)]}_{\textstyle=:\varphi_{\pi_1,3}(X,Z,C)\in\T_3}\Big\}. && \text{(by (\ref{pi_1'(X)}))}
\end{align*}
Hence
\begin{align*}
    \varphi_{\pi_1}(O)=\varphi_{\pi_1,3}(X,Z,C)+\varphi_{\delta,1}(X)=\frac{Z}{e(X,Z)}[C-\pi_1(X)]+\pi_1(X)-\pi_1,
\end{align*}
and more generally, the IF of $\pi_c$ is:
\begin{align*}
    \varphi_{\pi_c}(O)=\frac{Z}{e(X,Z)}[\I(C=c)-\pi_c(X)]+\pi_c(X)-\pi_c.\tag{\ref{if:pi.c}}
\end{align*}

\medskip

\noindent
\textit{Part 2: The IF for $\nu_{1c}$}
\smallskip

\noindent
Consider $\nu_{11}$.
\begin{align*}
    \nu_{11}(\theta)=\E_{\theta_1}[\pi_1(X,\theta_3)\mu_{11}(X,\theta_4)]=\int\pi_1(x,\theta_3)\mu_{11}(x,\theta_4)f_1(x,\theta_1)dx.
\end{align*}
This function involves $\theta_1,\theta_3,\theta_4$ so the IF of $\nu_{11}$, denoted $\varphi_{\nu_{11}}(O)$, is the sum of three terms $\varphi_{\nu_{11},1}\in\T_1$, $\varphi_{\nu_{11},3}\in\T_3$ and $\varphi_{\nu_{11},4}\in\T_4$, such that
\begin{align*}
    \frac{\partial\nu_{11}(\theta)}{\partial\theta_1}\Big|_{\theta=\theta^0}
    &=\E[S_1(X)\varphi_{\nu_{11},1}(X)],
    \\
    \frac{\partial\nu_{11}(\theta)}{\partial\theta_3}\Big|_{\theta=\theta^0}
    &=\E[S_3(X,Z,C)\varphi_{\nu_{11},3}(X,Z,C)],
    \\
    \frac{\partial\nu_{11}(\theta)}{\partial\theta_4}\Big|_{\theta=\theta^0}
    &=\E[S_4(X,Z,C,Y)\varphi_{\nu_{11},4}(X,Z,C,Y)].
\end{align*}
\begin{align*}
    \frac{\partial\nu_{11}(\theta)}{\partial\theta_1}\Big|_{\theta=\theta^0}
    &=\int\pi_1(x)\mu_{11}(x)f_1'(x)dx
    \\
    &=\E[S_1(X)\pi_1(X)\mu_{11}(X)] && \text{(by (\ref{f_1'(X)}))}
    \\
    &=\E\Big\{S_1(X)\underbrace{[\pi_1(X)\mu_{11}(X)-\nu_{11}]}_{\textstyle=:\varphi_{\nu_{11},1}(X)\in\T_1}\Big\},
    \\
    \frac{\partial\nu_{11}(\theta)}{\partial\theta_3}\Big|_{\theta=\theta^0}
    &=\E[\pi'_1(X)\mu_{11}(X)]
    \\
    &=\E\Big\{S_3(X,Z,C)\underbrace{\frac{Z}{e(X,Z)}\mu_{11}(X)[C-\pi_1(X)]}_{\textstyle=:\varphi_{\nu_{11},3}(X,Z,C)\in\T_3}\Big\}, && \text{(by (\ref{pi_1'(X)}))}
    \\
    \frac{\partial\nu_{11}(\theta)}{\partial\theta_4}\Big|_{\theta=\theta^0}
    &=\E[\pi_1(X)\mu_{11}'(X)]
    \\
    &=\E\Big\{S_4(X,Z,C,Y)\underbrace{\frac{Z}{e(X,Z)}C[Y-\mu_{11}(X)]}_{\textstyle=:\varphi_{\nu_{11},4}(X,Z,C,Y)\in\T_4}\Big\}. && \text{(by (\ref{mu_11'(X)}))}
\end{align*}
Hence
\begin{align*}
    \varphi_{\tau_{11}}(O)=\frac{Z}{e(X,Z)}C[Y-\mu_{11}(X)]+\frac{Z}{e(X,Z)}\mu_{11}[C-\pi_1(X)]+\pi_1(X)\mu_{11}(X)-\nu_{11},
\end{align*}
and more generally, the IF of $\nu_{1c}$ is:
\begin{align*}
    \varphi_{\tau_{1c}}(O)
    &=\frac{Z}{e(X,Z)}\I(C=c)[Y-\mu_{1c}(X)]+\frac{Z}{e(X,Z)}\mu_{1c}[\I(C=c)-\pi_c(X)]+\pi_c(X)\mu_{1c}(X)-\nu_{1c}.\tag{\ref{if:nu.1c}}
\end{align*}

\medskip

\noindent
\textit{Part 3: The IF for $\nu_{0c}^\textup{PI}$.}
\smallskip

\noindent
Consider $\nu_{01}^\text{PI}$.
\begin{align*}
    \nu_{01}^\text{PI}(\theta)=\int\pi_1(x,\theta_3)\mu_0(x,\theta_6)f_1(x,\theta_1)dx.
\end{align*}
This function involves $\theta_1$, $\theta_3$ and $\theta_6$ so the IF of $\nu_{01}^\text{PI}$, denoted $\varphi_{\nu_{01}^\text{PI}}(O)$, is the sum of three terms $\varphi_{\nu_{01}^\text{PI},1}\in\T_1$, $\varphi_{\nu_{01}^\text{PI},3}\in\T_3$ and $\varphi_{\nu_{01}^\text{PI},6}\in\T_6$, such that
\begin{align*}
    \frac{\partial\nu_{01}^\text{PI}(\theta)}{\partial\theta_1}\Big|_{\theta=\theta^0}
    &=\E[S_1(X)\varphi_{\nu_{01}^\text{PI},1}(X)],
    \\
    \frac{\partial\nu_{01}^\text{PI}(\theta)}{\partial\theta_3}\Big|_{\theta=\theta^0}
    &=\E[S_3(X,Z,C)\varphi_{\nu_{01}^\text{PI},3}(X,Z,C)],
    \\
    \frac{\partial\nu_{01}^\text{PI}(\theta)}{\partial\theta_6}\Big|_{\theta=\theta^0}
    &=\E[S_6(X,Z,Y)\varphi_{\nu_{01}^\text{PI},6}(X,Z,Y)].
\end{align*}
\begin{align*}
    \frac{\partial\nu_{01}^\text{PI}(\theta)}{\partial\theta_1}\Big|_{\theta=\theta^0}
    &=\int\pi_1(x)\mu_0(x)f_1'(x)dx
    \\
    &=\E[S_1(X)\pi_1(X)\mu_0(X)] && \text{(by (\ref{f_1'(X)}))}
    \\
    &=\E\Big\{S_1(X)\underbrace{[\pi_1(X)\mu_0(X)-\nu_{01}^\text{PI}]}_{\textstyle=:\varphi_{\nu_{01}^\text{PI},1}(X)\in\T_1}\Big\},
    \\
    \frac{\partial\nu_{01}^\text{PI}(\theta)}{\partial\theta_3}\Big|_{\theta=\theta^0}
    &=\E[\pi_1'(X)\mu_0(X)]
    \\
    &=\E\Big\{S_3(X,Z,C)\underbrace{\frac{Z}{e(X,Z)}\mu_0(X)[C-\pi_1(X)]}_{\textstyle=:\varphi_{\nu_{01}^\text{PI},3}(X,Z,C)\in\T_3}\Big\}, && \text{(by (\ref{pi_1'(X)}))}
    \\
    \frac{\partial\nu_{01}^\text{PI}(\theta)}{\partial\theta_6}\Big|_{\theta=\theta^0}
    &=\E[\pi_1(X)\mu_0'(X)]
    \\
    &=\E\Big\{S_6(X,Z,Y)\underbrace{\frac{1-Z}{e(X,Z)}\pi_1(X)[Y-\mu_0(X)]}_{\textstyle=:\varphi_{\nu_{01}^\text{PI},6}(X,Z,Y)\in\T_6}\Big\}. && \text{(by (\ref{mu_0'(X)}))}
\end{align*}
Hence
\begin{align*}
    \varphi_{\nu_{01}^\text{PI}}(O)=\frac{1-Z}{e(X,Z)}\pi_1(X)[Y-\mu_0(X)]+\frac{Z}{e(X,Z)}\mu_0(X)[C-\pi_1(X)]+\pi_1(X)\mu_0(X)-\nu_{01}^\text{PI},
\end{align*}
and more generally, the IF for $\nu_{01}^\text{PI}$ is:
\begin{align*}
    \varphi_{\nu_{0c}^\text{PI}}(O)=\frac{1-Z}{e(X,Z)}\pi_c(X)[Y-\mu_0(X)]+\frac{Z}{e(X,Z)}\mu_0(X)[\I(C=c)-\pi_c(X)]+\pi_c(X)\mu_0(X)-\nu_{0c}^\text{PI}.\tag{\ref{pi-if:nu.0c}}
\end{align*}

\noindent
\textit{Part 4: The IF for $\Delta_c^\textup{PI}$}

\smallskip

\noindent
This IF is obtained by applying Lemma \ref{lm:if-ratioOfMeans} to obtain the IFs for $\tau_{1c}=\nu_{1c}/\pi_c$ and $\tau_{0c}^\text{PI}/\pi_c$ and then combining the two IFs. Specifically, by Lemma \ref{lm:if-ratioOfMeans}, the IFs of $\tau_{1c}$ and $\tau_{0c}^\text{PI}$ are
\begin{align*}
    \varphi_{\tau_{1c}}(O)
    &=\frac{1}{\pi_c}\Big\{[\varphi_{\nu_{1c}}(O)+\nu_{1c}]-\tau_{1c}[\varphi_{\pi_c}(O)+\pi_c]\Big\},
    \\
    \varphi_{\tau_{0c}^\text{PI}}(O)
    &=\frac{1}{\pi_c}\Big\{[\varphi_{\nu_{0c}^\text{PI}}(O)+\nu_{0c}^\text{PI}]-\tau_{0c}^\text{PI}[\varphi_{\pi_c}(O)+\pi_c]\Big\}.
\end{align*}
The difference of these two IFs is
\begin{align*}
    \varphi_{\Delta_c^\text{PI}}(O)=\frac{1}{\pi_c}\Big\{[\varphi_{\nu_{1c}}(O)+\nu_{1c}]-[\varphi_{\nu_{0c}^\text{PI}}(O)+\nu_{0c}^\text{PI}]-\Delta_c^\text{PI}[\varphi_{\pi_c}(O)+\pi_c]\Big\}.\tag{\ref{pi-if:Delta.c}}
\end{align*}
\end{proof}

\bigskip

\addcontentsline{toc}{subsection}{Proof of (\ref{adhoc})}
\noindent
\begin{proof}[Proof of (\ref{adhoc})]
To obtain the expression of the IF for $\Delta_c^\text{PI}$ in (\ref{adhoc}), we rewrite terms in (\ref{pi-if:Delta.c}).
    \begin{align*}
        \varphi_{\pi_c}(O)+\pi_c
        &=\frac{Z}{e(X,Z)}[\I(C=c)-\pi_c(X)]+\pi_c(X),
        \\
        \varphi_{\nu_{1c}}(O)+\nu_{1c}
        &=\frac{Z}{e(X,Z)}\I(C=c)[Y-\mu_{1c}(X)]+\frac{Z}{e(X,Z)}\mu_{1c}(X)[\I(C=c)-\pi_c(X)]+\pi_c(X)\mu_{1c}(X)
        \\
        &=\frac{Z}{e(X,Z)}\I(C=c)[Y-\mu_{1c}(X)]+\left[\frac{Z}{e(X,Z)}[\I(C=c)-\pi_c(X)]+\pi_c(X)\right]\mu_{1c}(X),
        \\
        \varphi_{\nu_{0c}^\text{PI}}(O)+\nu_{0c}^\text{PI}
        &=\frac{1-Z}{e(X,Z)}\pi_c(X)[Y-\mu_0(X)]+\frac{Z}{e(X,Z)}\mu_0(X)[\I(C=c)-\pi_c(X)]+\pi_c(X)\mu_0(X)
        \\
        &=\frac{1-Z}{e(X,Z)}\pi_c(X)[Y-\mu_0(X)]+\left[\frac{Z}{e(X,Z)}[\I(C=c)-\pi_c(X)]+\pi_c(X)\right]\mu_0(X).
    \end{align*}
    Plugging these terms back in (\ref{pi-if:Delta.c}) and combining terms that share $\left[\frac{Z}{e(X,Z)}[\I(C=c)-\pi_c(X)]+\pi_c(X)\right]$, we obtain
    \begin{align*}
    \varphi_{\Delta_c^\text{PI}}(O)
    &=\frac{1}{\pi_c}\left\{\frac{Z}{e(X,Z)}\I(C=c)[Y-\mu_{1c}(X)]-\frac{1-Z}{e(X,Z)}\pi_c(X)[Y-\mu_0(X)]\right.+
    \\
    &~~~~~~~~~~~~~~~~~~~~\left.\left[\frac{Z}{e(X,Z)}[\I(C=c)-\pi_c(X)]+\pi_c(X)\right][\mu_{1c}(X)-\mu_0(X)-\Delta_c^\text{PI}]\right\}.\tag{\ref{adhoc}}
    \end{align*}
\end{proof}

\bigskip

\addcontentsline{toc}{subsection}{Proof of Proposition \ref{thm:pi-multiplyrobust}}
\noindent
\begin{proof}[Proof of Proposition \ref{thm:pi-multiplyrobust}]
We will start with $\hat\Delta_{c,\textsc{if}}^\mathrm{PI}$, then move to $\hat\Delta_{c,\textsc{ifh}}^\mathrm{PI}$, and consider $\hat\Delta_{c,\textsc{ms}}^\mathrm{PI}$ last.

\medskip

\noindent
\textit{Part 1: Multiple robustness of }$\hat\Delta_{c,\textsc{if}}^\mathrm{PI}$
\smallskip

\noindent
Assume regularity conditions hold that ensure convergence of the nuisance functions to certain limit functions, $\hat e(X,Z)\pto e^\dagger(X,Z)$, $\hat\pi_c(X)\pto\pi_c^\dagger(X)$,  $\hat\mu_{1c}(X)\pto\mu_{1c}^\dagger(X)$, $\hat\mu_0(X)\pto\mu_0^\dagger(X)$. Then by the continuous mapping theorem,
\begin{align*}
    \hat\pi_{c,\textsc{if}}&\pto\E\left\{\frac{Z}{e^\dagger(X,Z)}[\I(C=c)-\pi_c^\dagger(X)]+\pi_c^\dagger(X)\right\}=:\pi_c^\dagger,
    \\
    \hat\nu_{1c,\textsc{if}}&\pto\E\left\{\frac{Z}{e^\dagger(X,Z)}\I(C=c)[Y-\mu_{1c}^\dagger(X)]+\frac{Z}{e^\dagger(X,Z)}\mu_{1c}^\dagger(X)[\I(C=c)-\pi_c^\dagger(X)]+\pi_c^\dagger(X)\mu_{1c}^\dagger(X)\right\}=:\nu_{1c}^\dagger,
    \\
    \hat\nu_{0c,\textsc{if}}^\text{PI}&\pto\E\left\{\frac{1-Z}{e^\dagger(X,Z)}\pi_c^\dagger(X)[Y-\mu_0^\dagger(X)]+\frac{Z}{e^\dagger(X,Z)}\mu_0^\dagger(X)[\I(C=c)-\pi_c^\dagger(X)]+\pi_c^\dagger(X)\mu_0^\dagger(X)\right\}=:{\nu_{0c}^\text{PI}}^\dagger.
\end{align*}

First, consider $\pi_c^\dagger$. If the propensity score model is correctly specified, $e^\dagger(X,Z)=e(X,Z)$, so
\begin{align*}
    \pi_c^\dagger
    &=\E\left\{\frac{Z}{e(X,Z)}\I(C=c)+\left[1-\frac{Z}{e(X,Z)}\right]\pi_c^\dagger(X)\right\} && \text{(rearranging terms)}
    \\
    &=\E\left[\frac{Z}{e(X,Z)}\I(C=c)\right]+\E\left\{\E\left[1-\frac{Z}{e(X,Z)}\mid X\right]\pi_c^\dagger(X)\right\}
    \\
    &=\pi_c+0=\pi_c.
\end{align*}
If the principal score model is correctly specified, $\pi_c^\dagger(X)=\pi_c(X)$, so
\begin{align*}
    \pi_c^\dagger
    &=\E\left\{\frac{Z}{e^\dagger(X,Z)}[\I(C=c)-\pi_c(X)]+\pi_c(X)\right\}
    \\
    &=\E\left\{\frac{Z}{e^\dagger(X,Z)}\E[\I(C=c)-\pi_c(X)\mid X,Z]\right\}+\E[\pi_c(X)]
    \\
    &=0+\pi_c=\pi_c.
\end{align*}

Next, consider $\nu_{1c}^\dagger$. If the propensity score model is correctly specified, $e^\dagger(X,Z)=e(X,Z)$, so
\begin{align*}
    \nu_{1c}^\dagger
    &=\E\left\{\frac{Z}{e(X,Z)}\I(C=c)Y+\left[1-\frac{Z}{e(X,Z)}\right]\pi_c^\dagger(X)\mu_{1c}^\dagger(X)\right\} && \text{(rearranging terms)}
    \\
    &=\E\left[\frac{Z}{e(X,1)}\I(C=c)Y\right]+\E\left\{\E\left[1-\frac{Z}{e(X,1)}\mid X\right]\pi_c^\dagger(X)\mu_{1c}^\dagger(X)\right\}
    \\
    &=\nu_{1c}+0=\nu_{1c}.
\end{align*}
In the principal score model and outcome model are correct, $\pi_c^\dagger(X)=\pi_c(X)$ and $\mu_{1c}^\dagger(X)=\mu_{1c}(X)$, so
\begin{align*}
    \nu_{1c}^\dagger
    &=\E\left\{\frac{Z}{e^\dagger(X,Z)}\I(C=c)[Y-\mu_{1c}(X)]+\frac{Z}{e^\dagger(X,Z)}\mu_{1c}(X)[\I(C=c)-\pi_c(X)]+\pi_c(X)\mu_{1c}(X)\right\}
    \\
    &=\E\left\{\frac{Z}{e^\dagger(X,Z)}\I(C=c)\E[Y-\mu_{1c}(X)\mid X,Z,C]\right\}+
    \E\left\{\frac{Z}{e^\dagger(X,Z)}\mu_{1c}(X)\big\{\E[\I(C=c)-\pi_c(X)\mid X,Z]\big\}\right\}+
    \\
    &~~~~~\E[\pi_c(X)\mu_{1c}(X)]
    \\
    &=0+0+\nu_{1c}=\nu_{1c}.
\end{align*}

Lastly, consider ${\nu_{0c}^\text{PI}}^\dagger$. If both the propensity score and principal score models are correct, $e^\dagger(X,Z)=e(X,Z)$, $\pi_c^\dagger(X)=\pi_c(X)$, so
\begin{align*}
    {\nu_{0c}^\text{PI}}^\dagger
    &=\E\left\{\frac{1-Z}{e(X,Z)}\pi_c(X)Y+\pi_c(X)\mu_0^\dagger(X)\left[1-\frac{1-Z}{e(X,Z)}\right]+\frac{Z}{e(X,Z)}\mu_0^\dagger(X)[\I(C=c)-\pi_c(X)]\right\}
    \\
    &\pushright{\text{(rearranging terms)}}
    \\
    &=\E\left[\frac{1-Z}{e(X,Z)}\pi_c(X)Y\right]+\E\left\{\pi_c(X)\mu_0^\dagger(X)\left[1-\frac{1-Z}{e(X,Z)}\mid X\right]\right\}+
    \\
    &~~~~~\E\left\{\frac{Z}{e(X,Z)}\mu_0^\dagger(X)\E[\I(C=c)-\pi_c(X)\mid X,Z]\right\}
    \\
    &=\nu_{0c}^\text{PI}+0+0=\nu_{0c}^\text{PI}.
\end{align*}
If both the propensity score model and the outcome model are correct, then $e^\dagger(X,Z)=e(X,Z)$ and $\mu_0^\dagger(X)=\mu_0(X)$, so
\begin{align*}
    {\nu_{0c}^\text{PI}}^\dagger
    &=\E\left\{\frac{1-Z}{e(X,Z)}\pi_c^\dagger(X)[Y-\mu_0(X)]+\frac{Z}{e(X,Z)}\I(C=c)\mu_0(X)+\pi_c^\dagger(X)\mu_0(X)\left[1-\frac{Z}{e(X,Z)}\right]\right\}
    \\
    &\pushright{\text{(rearranging terms)}}
    \\
    &=\E\left\{\frac{1-Z}{e(X,Z)}\pi_c^\dagger(X)\E[Y-\mu_0(X)\mid X,Z]\right\}+\E\left[\frac{Z}{e(X,Z)}\I(C=c)\mu_0(X)\right]+\E\left\{\pi_c^\dagger(X)\mu_0(X)\E\left[1-\frac{Z}{e(X,Z)}\mid X\right]\right\}
    \\
    &=0+\nu_{0c}^\text{PI}+0=\nu_{0c}^\text{PI}.
\end{align*}
If the principal score model and the outcome model is correct, $\pi_c^\dagger(X)=\pi_c(X)$, $\mu_0^\dagger(X)=\mu_0(X)$, so
\begin{align*}
    {\nu_{0c}^\text{PI}}^\dagger
    &=\E\left\{\frac{1-Z}{e^\dagger(X,Z)}\pi_c(X)[Y-\mu_0(X)]+\frac{Z}{e^\dagger(X,Z)}\mu_0(X)[\I(C=c)-\pi_c(X)]+\pi_c(X)\mu_0(X)\right\}
    \\
    &=\E\left\{\frac{1-Z}{e^\dagger(X,Z)}\pi_c(X)\E[Y-\mu_0(X)\mid X,Z]\right\}+\E\left\{\frac{Z}{e^\dagger(X,Z)}\mu_0(X)\E[\I(C=c)-\pi_c(X)\mid X,Z]\right\}+\E[\pi_c(X)\mu_0(X)]
    \\
    &=0+0+\nu_{0c}^\text{PI}.
\end{align*}
Collecting the above results, we have
\begin{itemize}
    \item $\pi_c^\dagger=\pi_c$ if either $\hat e(X,Z)$ or $\hat\pi_c(X)$ is correctly specified;
    \item $\nu_{1c}^\dagger=\nu_{1c}$ if either $\hat e(X,Z)$ is correctly specified or both $\hat\pi_c(X)$ and $\hat\mu_{1c}(X)$ are correctly specified;
    \item ${\nu_{0c}^\text{PI}}^\dagger=\nu_{0c}^\text{PI}$ if $\hat e(X,Z)$ and $\hat\pi_c(X)$ are correctly specified, or if $\hat e(X,Z)$ and $\hat\mu_0(X)$ are correctly specified, or if $\hat\pi_c(X)$ and $\hat\mu_0(X)$ are correctly specified.
\end{itemize}

It follows that $\hat\Delta_{c,\textsc{if}}^\text{PI}$ is consistent if one of the following is true
\begin{enumerate}[(a)]
    \item $\hat e(X,Z)$ and $\hat\pi_c(X)$ are correctly specified;
    \item $\hat e(X,Z)$ and $\hat\mu_0(X)$ are correctly specified;
    \item $\hat\pi_c(X)$ and both $\mu_{1c}(X)$, $\mu_0(X)$ are correctly specified.
\end{enumerate}

\bigskip

\noindent
\textit{Part 2: Multiple robustness of }$\hat\Delta_{c,\textsc{ifh}}^\mathrm{PI}$
\smallskip

\noindent
Here we assume the same regularity conditions hold as in the proof for the multiple robustness of $\hat\Delta_{c,\textsc{if}}^\text{PI}$, and use the same notation used in that proof for the probability limits of the nuisance functions.

$\hat\Delta_{c,\textsc{ifh}}^\text{PI}$ is a modification of $\hat\Delta_{c,\textsc{if}}^\text{PI}$ where 
$\displaystyle\frac{Z}{\hat e(X,Z)}$ is replaced with $\displaystyle\frac{Z}{\hat e(X,Z)}/\P_n\left[\frac{Z}{\hat e(X,Z)}\right]$ and 
$\displaystyle\frac{1-Z}{\hat e(X,Z)}$ is replaced with $\displaystyle\frac{1-Z}{\hat e(X,Z)}/\P_n\left[\frac{1-Z}{\hat e(X,Z)}\right]$. We have
\begin{align*}
    \frac{Z}{\hat e(X,Z)}/\P_n\left[\frac{Z}{\hat e(X,Z)}\right]
    &\pto\frac{Z}{e^\dagger(X,1)}/\E\left[\frac{e(X,1)}{e^\dagger(X,1)}\right],
    \\
    \frac{1-Z}{\hat e(X,Z)}/\P_n\left[\frac{1-Z}{\hat e(X,Z)}\right]
    &\pto\frac{1-Z}{e^\dagger(X,0)}/\E\left[\frac{e(X,0)}{e^\dagger(X,0)}\right].
\end{align*}
Let
\begin{align*}
    e^{\dagger\dagger}(Z,X):=Z\,e^\dagger(X,1)\E\left[\frac{e(X,1)}{e^\dagger(X,1)}\right]+(1-Z)e^\dagger(X,0)\E\left[\frac{e(X,0)}{e^\dagger(X,0)}\right].
\end{align*}
Then we can write
\begin{align*}
    \hat\pi_{c,\textsc{if}}&\pto\E\left\{\frac{Z}{e^{\dagger\dagger}(X,Z)}[\I(C=c)-\pi_c^\dagger(X)]+\pi_c^\dagger(X)\right\}=:\pi_c^{\dagger\dagger},
    \\
    \hat\nu_{1c,\textsc{if}}&\pto\E\left\{\frac{Z}{e^{\dagger\dagger}(X,Z)}\I(C=c)[Y-\mu_{1c}^\dagger(X)]+\frac{Z}{e^{\dagger\dagger}(X,Z)}\mu_{1c}^\dagger(X)[\I(C=c)-\pi_c^\dagger(X)]+\pi_c^\dagger(X)\mu_{1c}^\dagger(X)\right\}=:\nu_{1c}^{\dagger\dagger},
    \\
    \hat\nu_{0c,\textsc{if}}^\text{PI}&\pto\E\left\{\frac{1-Z}{e^{\dagger\dagger}(X,Z)}\pi_c^\dagger(X)[Y-\mu_0^\dagger(X)]+\frac{Z}{e^{\dagger\dagger}(X,Z)}\mu_0^\dagger(X)[\I(C=c)-\pi_c^\dagger(X)]+\pi_c^\dagger(X)\mu_0^\dagger(X)\right\}=:{\nu_{0c}^\text{PI}}^{\dagger\dagger},
\end{align*}
which reminds of the probability limits at the start of the proof for $\hat\Delta_{c,\textsc{if}}^\text{PI}$. Note also that when $\hat e(X,Z)$ is correctly specified, $e^{\dagger\dagger}(X,Z)=e(X,Z)$. From this point the arguments are identical to the arguments in the proof for $\hat\Delta_{c,\textsc{if}}^\text{PI}$ above. Hence, $\hat\Delta_{c,\textsc{ifh}}^\text{PI}$ shares the same robustness property with $\hat\Delta_{c,\textsc{if}}^\text{PI}$.

\bigskip

\noindent
\textit{Part 3: Multiple robustness of }$\hat\Delta_{c,\textsc{ms}}^\mathrm{PI}$
\smallskip

\noindent
The estimators $\tilde\mu_{1c}(X)$ and $\tilde\mu_0(X)$ (of the nuisance functions $\mu_{1c}(X)$ and $\mu_0(X)$) and the estimator $\hat\Delta_{c,\mathrm{MS}}^\mathrm{PI}$ (which is based on these nuisance estimators) solve the set of equations
\begin{align*}
    &\P_n\left\{\frac{Z\,\I(C=c)}{\hat e(X,Z)}[Y-\mu_{1c}(X)]\right\}=0,
    \\
    &\P_n\left\{\frac{1-Z}{\hat e(X,Z)}\hat\pi_c(X)[Y-\mu_0(X)]\right\}=0,
    \\
    &\P_n\left\{\left[\frac{Z}{\hat e(X,Z)}[\I(C=c)-\hat\pi_c(X)]+\hat\pi_c(X)\right][\mu_{1c}(X)-\mu_0(X)-\Delta_c^\text{PI}]\right\}=0,
\end{align*}
where $\hat e(X,Z)$ and $\hat\pi_c(X)$ are estimators of $e(X,Z)$ and $\pi_c(X)$ that are plugged in.

Assume regularity conditions hold such that $\hat e(X,Z)\pto e^\dagger(X,Z)$, $\hat\pi_c(X)\pto\pi_c^\dagger(X)$, $\tilde\mu_{1c}(X)\pto\mu_{1c}^*(X)$, $\tilde\mu_0(X)\pto\mu_0^*(X)$. Then we have
\begin{align}
    &\E\left\{\frac{Z}{e^\dagger(X,Z)}\I(C=c)[Y-\mu_{1c}^*(X)]\right\}=0,\label{eq:robust2}
    \\
    &\E\left\{\frac{1-Z}{e^\dagger(X,Z)}\pi_c^\dagger(X)[Y-\mu_0^*(X)]\right\}=0.\label{eq:robust3}
\end{align}
Applying the continuous mapping theorem and factoring the numerator of the result, we have 
\begin{align*}
    \hat\Delta_{c,\mathrm{MS}}^\mathrm{PI}
    &\pto
    \frac{\overbrace{\E\left\{\left[\frac{Z}{e^\dagger(X,Z)}[\I(C=c)-\pi_c^\dagger(X)]+\pi_c^\dagger(X)\right]\mu_{1c}^*(X)\right\}}^{\textstyle=:\nu_{1c}^*}-\overbrace{\E\left\{\left[\frac{Z}{e^\dagger(X,Z)}[\I(C=c)-\pi_c^\dagger(X)]+\pi_c^\dagger(X)\right]\mu_0^*(X)\right\}}^{\textstyle=:{\nu_{0c}^\text{PI}}^*}}
    {\underbrace{\E\left[\frac{Z}{e^\dagger(X,Z)}[\I(C=c)-\pi_c^\dagger(X)]+\pi_c^\dagger(X)\right]}_{\textstyle=:\pi_c^\dagger}}
    \\
    &=:{\Delta_c^\text{PI}}^*.
\end{align*}
In the proof of multiple robustness of $\hat\Delta_{c,\textsc{if}}^\text{PI}$, we have shown that $\pi_c^\dagger=\pi_c$ if either $\hat e(X,Z)$ or $\hat\pi_c(X)$ is correctly specified.

Consider $\nu_{1c}^*$. If the propensity score model is correctly specified, $e^\dagger(X,Y)=e(X,Y)$, so
\begin{align*}
    \nu_{1c}^*
    &=\E\left\{\left[\dfrac{Z}{e(X,Z)}[\I(C=c)-\pi_c^\dagger(X)]+\pi_c^\dagger(X)\right]\mu_{1c}^*(X)\right\}
    \\
    &=\E\left[\frac{Z}{e(X,Z)}\I(C=c)\mu_{1c}^*(X)\right]+\E\left\{\pi_c^\dagger(X)\mu_{1c}^*(X)\E\left[1-\frac{Z}{e(X,Z)}\mid X\right]\right\}
    \\
    &\overset{(\ref{eq:robust2})}{=}\E\left[\frac{Z}{e(X,Z)}\I(C=c)Y\right]+0=\nu_{1c}.
\end{align*}
If the principal score model and the stratum-specific outcome under treatment model are correctly specified, $\pi_c^\dagger(X)=\pi_c(X)$ and $\mu_{1c}^*(X)=\mu_{1c}(X)$, so
\begin{align*}
    \nu_{1c}^*
    &=\E\left\{\left[\frac{Z}{e^\dagger(X,Z)}[\I(C=c)-\pi_c(X)]+\pi_c(X)\right]\mu_{1c}(X)\right\}
    \\
    &=\E\left\{\frac{Z}{e^\dagger(X,Z)}\mu_{1c}(X)\E[\I(C=c)-\pi_c(X)\mid X,Z]\right\}+\E[\pi_c(X)\mu_{1c}(X)]
    \\
    &=0+\nu_{1c}=\nu_{1c}.
\end{align*}

Lastly, consider ${\nu_{0c}^\text{PI}}^*$. If the propensity score model and the outcome under control model are correctly specified, $e^\dagger(X,Z)=e(X,Z)$ and $\mu_0^*(X)=\mu_0(X)$, so
\begin{align*}
    {\nu_{0c}^\text{PI}}^*
    &=\E\left\{\left[\dfrac{Z}{e(X,Z)}[\I(C=c)-\pi_c^\dagger(X)]+\pi_c^\dagger(X)\right]\mu_0(X)\right\}
    \\
    &=\E\left[\frac{Z}{e(X,Z)}\I(C=c)\mu_0(X)\right]+\E\left\{\pi_c^\dagger(X)\mu_0(X)\E\left[1-\frac{Z}{e(X,Z)}\mid X\right]\right\}
    \\
    &=\nu_{0c}^\text{PI}+0=\nu_{0c}^\text{PI}.
\end{align*}
If the propensity score model and principal score model are correctly specified,
\begin{align*}
    {\nu_{0c}^\text{PI}}^*
    &=\E\left\{\left[\frac{Z}{e(X,Z)}[\I(C=c)-\pi_c(X)]+\pi_c(X)\right]\mu_0^*(X)\right\}
    \\
    &=\E\left\{\frac{Z}{e(X,Z)}\E[\I(C=c)-\pi_c(X)\mid X,Z]\right\}+\E[\pi_c(X)\mu_0^*(X)]
    \\
    &=0+\E\left[\frac{1-Z}{e(X,Z)}\pi_c(X)\mu_0^*(X)\right]
    \\
    &\overset{(\ref{eq:robust3})}{=}\E\left[\frac{1-Z}{e(X,Z)}\pi_c(X)Y\right]
    =\nu_{0c}^\text{PI}.
\end{align*}
If the principal score model and the outcome under control model are correctly specified,
\begin{align*}
    {\nu_{0c}^\text{PI}}^*
    &=\E\left\{\left[\frac{Z}{e^\dagger(X,Z)}[\I(C=c)-\pi_c(X)]+\pi_c(X)\right]\mu_0(X)\right\}
    \\
    &=\E\left\{\frac{Z}{e^\dagger(X,Z)}\E[\I(C=c)-\pi_c(X)\mid X,Z]\right\}+\E[\pi_c(X)\mu_0(X)]
    \\
    &=0+\nu_{0c}^\text{PI}.
\end{align*}
It follows from the above results that $\hat\Delta_{c,\textsc{ms}}^\text{PI}$ shares the same robustness property with $\hat\Delta_{c,\textsc{if}}^\text{PI}$ and $\hat\Delta_{c,\textsc{ifh}}^\text{PI}$
\end{proof}

\section{Proofs for Section \ref{sec:ratio-params} -- Sens analysis with ratio-type sens params}\label{appendix:ratio-params}

\addcontentsline{toc}{subsection}{Proof of Proposition \ref{thm:ratio.params-id}}
\noindent
\begin{proof}[Proof of Proposition \ref{thm:ratio.params-id}]
\smallskip

\noindent
\textit{Part 1: assuming A0, A1, A2, A4-GOR}
\smallskip

\noindent
Under these assumptions, we have two equations with two unknowns
\begin{align*}
    \begin{cases}
        \pi_1(X)\mu_{01}(X)+\pi_0(X)\mu_{00}(X)=\mu_0(X)
        \\
        \displaystyle\frac{[\mu_{01}(X)-l]/[h-\mu_{01}(X)]}{[\mu_{00}(X)-l]/[h-\mu_{00}(X)]}=\rho
    \end{cases}.
\end{align*}
Let
\begin{align*}
    a&:=\pi_1(X),
    \\
    u&:=[\mu_{01}(X)-l]/(h-l),
    \\
    v&:=[\mu_{00}(X)-l]/(h-l),
    \\
    m&:=[\mu_0(X)-l]/(h-l).
\end{align*}
Then our two equations become
\begin{align*}
    \begin{cases}
    au+(1-a)v=m
    \\
    \displaystyle\frac{u/(1-u)}{v/(1-v)}=\rho
    \end{cases},
\end{align*}
where $a\in(0,1)$, $m\in[0,1]$ and $\rho>0$,
subject to the condition $u,v\in[0,1]$.

The first equation gives $\displaystyle v=\frac{m-au}{1-a}$. Plugging this into the second equation, we obtain (after some algebra)
\begin{align*}
    a(\rho-1)u^2-[(a+m)(\rho-1)+1]u+m\rho=0.
\end{align*}
For $\rho=1$, this reduces to $u=m$, which corresponds to the PI case. For $\rho\neq 1$, this is a quadratic equation, with two roots
\begin{align*}
    u_1=\frac{[(a+m)(\rho-1)+1]+\sqrt{d}}{2a(\rho-1)},~~~u_2=\frac{[(a+m)(\rho-1)+1]-\sqrt{d}}{2a(\rho-1)},
\end{align*}
where
\vspace{-1em}
\begin{align*}
    d:=[(a+m)(\rho-1)+1]^2-4am\rho(\rho-1).
\end{align*}
These two roots for $u$ respectively imply two values for $v$:
\begin{align*}
    v_1=\frac{[(m-a)(\rho-1)-1]-\sqrt{d}}{2(1-a)(\rho-1)},~~~v_2=\frac{[(m-a)(\rho-1)-1]+\sqrt{d}}{2(1-a)(\rho-1)},
\end{align*}
and we note (after some algebra) another helpful expression of $d$
\begin{align*}
    d=[(m-a)(\rho-1)-1]^2+4(1-a)m(\rho-1).
\end{align*}

Now we check these candidate values for $u$ and $v$ against the condition $u,v\in(0,1)$.
If $\rho>1$, it can be shown that
$d\geq[(m-a)(\rho-1)-1]^2$, which implies $v_2\geq0$ but $v_1<0$, ruling out the candidate $v_1$.
If $\rho<1$ then $d\geq[(a+m)(\rho-1)+1]^2$, which implies $u_2\geq0$ but $u_1<0$, ruling out the candidate $u_1$. In both cases, the choice left is
\begin{align*}
    u=u_2=\frac{(a+m)(\rho-1)+1-\sqrt{d}}{2a(\rho-1)},~~~v=v_2=\frac{(m-a)(\rho-1)-1+\sqrt{d}}{2(1-a)(\rho-1)}.
\end{align*}
That $u_2$ and $v_2$ are also $\leq1$ is clear from the first equation, which says that their weighted average is $\leq1$. The solution to the set of two equations is thus $(u_2,v_2)$.

The $u_2$ formula is spelled out as
\begin{align*}
    \frac{\mu_{01}(X)-l}{h-l}=\frac{\overbrace{[\pi_1(X)+{\textstyle\frac{\mu_0(X)-l}{h-l}}](\rho_1-1)+1}^{\textstyle=:\alpha_1(X)}
    -\overbrace{\sqrt{[\alpha_1(X)]^2-4\pi_1(X)\textstyle{\frac{\mu_0(X)-l}{h-l}}\rho_1(\rho_1-1)}}^{\textstyle=:\beta_1(X)}}{2\pi_1(X)(\rho_1-1)},
\end{align*}
which implies
\begin{align*}
    \mu_{01}(X)=\frac{\alpha_1(X)-\beta_1(X)}{2\pi_1(X)(\rho-1)}(h-l)+l.
\end{align*}
We could use the $v_2$ formula to express $\mu_{00}(X)$, replacing $\rho$ with $1/\rho_0$ and $a$ with $1-\pi_0(X)$, and simplify expressions. Or we can simply use symmetry to conclude that, under A0-A2 combined with A4-GOR, for $c=0,1$,
\begin{align*}
    \mu_{0c}(X)
    &=\frac{\alpha_c(X)
    -\beta_c(X)}{2\pi_c(X)(\rho_c-1)}(h-l)+l
    =:\mu_{0c}^\text{GOR}(X).\tag{\ref{gor-id:mu.0c}}
\end{align*}

\bigskip

\noindent
\textit{Part 2: assuming A0, A1, A2, A4-MR}
\smallskip

\noindent
Under these assumptions, we have two equations with two unknowns
\begin{align*}
    \begin{cases}
        \pi_1(X)\mu_{01}(X)+\pi_0(X)\mu_{00}(X)=\mu_0(X)
        \\
        \mu_{01}(X)/\mu_{00}(X)=\rho
    \end{cases}.
\end{align*}
Let
\begin{align*}
    a&:=\pi_1(X),
    \\
    m&:=\mu_0(X),
    \\
    u&:=\mu_{01}(X),
    \\
    v&:=\mu_{00}(X).
\end{align*}
Then our two equations become
\begin{align*}
    \begin{cases}
    au+(1-a)v=m
    \\
    u/v=\rho
    \end{cases}.
\end{align*}
The second equation implies that $u=\rho v$. Plugging this in the first equation and solving for $v$, we have
\begin{align*}
    v=\frac{m}{\rho a+ (1-a)}.
\end{align*}
Plugging this back in the second equation, we obtain
\begin{align*}
    u=\frac{\rho m}{\rho a+ (1-a)}.
\end{align*}
Therefore,
\begin{align*}
    \mu_{01}(X)=u
    &=\frac{\rho_1m}{\rho_1a+(1-a)}=\frac{\rho_1 m}{(\rho_1-1)a+1}=\frac{\rho_1\mu_0(X)}{(\rho_1-1)\pi_1(X)+1},
    \\
    \mu_{00}(X)=v
    &=\frac{m}{(1/\rho_0)a+(1-a)}=\frac{\rho_0 m}{a+\rho_0(1-a)}=\frac{\rho_0m}{(\rho_0-1)(1-a)+1}=\frac{\rho_0\mu_0(X)}{(\rho_0-1)\pi_0(X)+1}.
\end{align*}
Hence, for $c=0,1$,
\begin{align*}
    \mu_{0c}(X)=\frac{\rho_c\mu_0(X)}{(\rho_c-1)\pi_c(X)+1}=:\mu_{0c}^\text{MR}(X).\tag{\ref{mr-id:mu.0c}}
\end{align*}

\end{proof}

\bigskip\bigskip

\addcontentsline{toc}{subsection}{Proof of Proposition \ref{thm:ratio.params-ifs}}
\noindent
\begin{proof}[Proof of Proposition \ref{thm:ratio.params-ifs}]
The case with $\rho=1$ (i.e., PI) is already covered in Proposition \ref{thm:ifs-pi}, so we only need to consider the case with $\rho\neq1$.

\medskip

\noindent
\textit{Part 1: The IF of $\nu_{0c}^\textup{GOR}$}
\smallskip

\noindent
Consider $\nu_{01}^\text{GOR}$.
\vspace{-1em}
\begin{align*}
    \nu_{01}^\text{GOR}(\theta)=\E_{\theta_1}[\pi_1(X,\theta_3)\mu_{01}^\text{GOR}(X,\theta_3,\theta_6)]=\int\pi_1(x,\theta_3)\mu_{01}^\text{GOR}(x,\theta_3,\theta_6)f_1(x,\theta_1)dx.
\end{align*}
This function involves $\theta_1,\theta_3,\theta_4$ so the IF of $\nu_{01}^\text{GOR}$, denoted $\varphi_{\nu_{01}^\text{GOR}}(O)$, is the sum of three terms $\varphi_{\nu_{01}^\text{GOR},1}\in\T_1$, $\varphi_{\nu_{01}^\text{GOR},3}\in\T_3$ and $\varphi_{\nu_{01}^\text{GOR},6}\in\T_6$, such that
\begin{align*}
    \frac{\partial\nu_{01}^\text{GOR}(\theta)}{\partial\theta_1}\Big|_{\theta=\theta^0}
    &=\E[S_1(X)\varphi_{\nu_{01}^\text{GOR},1}(X)],
    \\
    \frac{\partial\nu_{01}^\text{GOR}(\theta)}{\partial\theta_3}\Big|_{\theta=\theta^0}
    &=\E[S_3(X,Z,C)\varphi_{\nu_{01}^\text{GOR},3}(X,Z,C)],
    \\
    \frac{\partial\nu_{01}^\text{GOR}(\theta)}{\partial\theta_6}\Big|_{\theta=\theta^0}
    &=\E[S_6(X,Z,C,Y)\varphi_{\nu_{01}^\text{GOR},6}(X,Z,C,Y)].
\end{align*}
\begin{align}
    \frac{\partial\nu_{01}^\text{GOR}(\theta)}{\partial\theta_1}\Big|_{\theta=\theta^0}
    &=\int\pi_1(x)\mu_{01}^\text{GOR}(x)f_1'(x)dx\nonumber
    \\
    &=\int\pi_1(x)\mu_{01}^\text{GOR}(x)S_1(x)f_1(x)dx=\E[S_1(X)\pi_1(X)\mu_{01}^\text{GOR}(X)] && \text{(by (\ref{f_1'(X)}))}\nonumber
    \\
    &=\E\big\{S_1(X)\underbrace{[\pi_1(X)\mu_{01}^\text{GOR}(X)-\nu_{01}^\text{GOR}]}_{\textstyle\varphi_{\nu_{01}^\text{GOR},1}(X)\in\T_1}\big\}.\nonumber
    \\
    \frac{\partial\nu_{01}^\text{GOR}(\theta)}{\partial\theta_3}\Big|_{\theta=\theta^0}
    &=\E\left[\pi_1'(X)\mu_{01}^\text{GOR}(X)+\pi_1(X)\frac{\partial\mu_{01}^\text{GOR}(X)}{\partial\pi_1(X)}\pi_1'(X)\right]\nonumber
    \\
    &=\E\Big\{\underbrace{\left[\mu_{01}^\text{GOR}(X)+\pi_1(X)\frac{\partial\mu_{01}^\text{GOR}(X)}{\partial\pi_1(X)}\right]}_{(*)}\pi_1'(X)\Big\},\label{nu01^GOR.deriv3}
    \\
    \frac{\partial\nu_{01}^\text{GOR}(\theta)}{\partial\theta_6}\Big|_{\theta=\theta^0}
    &=\E\Big[\underbrace{\pi_1(X)\frac{\partial\mu_{01}^\text{GOR}(X)}{\partial\mu_0(X)}}_{(**)}\mu_0'(X)\Big].\label{nu01^GOR.deriv6}
\end{align}
We now derive $(*)$ and $(**)$. These are functions of $X$, so to reduce notational burden, we suppress the $(X)$ notation until the last line.
\begin{align}
    (*)
    &=\mu_{01}^\text{GOR}+\pi_1\frac{h-l}{2(\rho_1-1)}\left\{\frac{\partial\alpha_1/\partial\pi_1-\partial\beta_1/\partial\pi_1}{\pi_1}-\frac{\alpha_1-\beta_1}{\pi_1^2}\right\}\nonumber
    \\
    &=\mu_{01}^\text{GOR}+\frac{h-l}{2(\rho_1-1)}\left\{\partial\alpha_1/\partial\pi_1-\partial\beta_1/\partial\pi_1-\frac{\alpha_1-\beta_1}{\pi_1}\right\}\nonumber
    \\
    &=\mu_{01}^\text{GOR}+\frac{h-l}{2(\rho_1-1)}\left\{(\rho_c-1)-\frac{1}{2\beta_1}\left[2\alpha_1(\rho_1-1)-4\frac{\mu_0-l}{h-l}\rho_1(\rho_1-1)\right]-\frac{\alpha_1-\beta_1}{\pi_1}\right\}\nonumber
    \\
    &=\mu_{01}^\text{GOR}+\left[\frac{1}{2}-\frac{\alpha_1}{2\beta_1}+\frac{\rho_1\frac{\mu_0-l}{h-l}}{\beta_1}\right](h-l)
    -\frac{\alpha_1-\beta_1}{2(\rho_1-1)\pi_1}(h-l)\nonumber
    \\
    &=\left[\frac{1}{2}-\frac{\alpha_1(X)}{2\beta_1(X)}+\frac{\rho_1\frac{\mu_0(X)-l}{h-l}}{\beta_1(X)}\right](h-l)+l\nonumber
    \\
    &=:\epsilon_{\pi,1}^\text{GOR}(X),\label{epsilon_mu,1}
    \\
    (**)
    &=\frac{h-l}{2(\rho_1-1)}\left\{\partial\alpha_1/\partial\mu_0-\partial\beta_1/\partial\mu_0\right\}\nonumber
    \\
    &=\frac{h-l}{2(\rho_1-1)}\left\{\frac{\rho_1-1}{h-l}-\frac{1}{2\beta_1}\left[2\alpha_1\frac{\rho_1-1}{h-l}-4\pi_1\frac{\rho_1(\rho_1-1)}{h-l}\right]\right\}\nonumber
    \\
    &=\frac{1}{2}-\frac{\alpha_1(X)}{2\beta_1(X)}+\frac{\rho_1\pi_1(X)}{\beta_1(X)}\nonumber
    \\
    &=\epsilon_{\mu,1}^\text{GOR}(X).\label{epsilon_pi,1}
\end{align}
Plugging (\ref{epsilon_mu,1}) into (\ref{nu01^GOR.deriv3}), and plugging (\ref{epsilon_pi,1}) into (\ref{nu01^GOR.deriv6}), we obtain
\begin{align*}
    \frac{\partial\nu_{01}^\text{GOR}(\theta)}{\partial\theta_3}\Big|_{\theta=\theta^0}
    &=\E[\epsilon_{\pi,1}^\text{GOR}(X)\pi_1'(X)]
    \\
    &=\E\left\{\epsilon_{\pi,1}^\text{GOR}(X)\E\left[S_3(X,Z,C)\frac{Z}{e(X,Z)}[C-\pi_1(X)]\mid X\right]\right\} && \text{(by (\ref{pi_1'(X)}))}
    \\
    &=\E\Big\{S_3(X,Z,C)\underbrace{\frac{Z}{e(X,Z)}\epsilon_{\pi,1}^\text{GOR}(X)[C-\pi_1(X)]}_{\textstyle=:\varphi_{\nu_{01}^\text{GOR},3}(X,Z,C)\in\T_3}\Big\},
    \\
    \frac{\partial\nu_{01}^\text{GOR}(\theta)}{\partial\theta_6}\Big|_{\theta=\theta^0}
    &=\E[\epsilon_{\mu,1}^\text{GOR}(X)\mu_0'(X)]
    \\
    &=\E\left\{\epsilon_{\mu,1}^\text{GOR}(X)\E\left[S_6(X,Z,Y)\frac{1-Z}{e(X,Z)}[Y-\mu_0(X)]\mid X\right]\right\} && \text{(by (\ref{mu_0'(X)}))}
    \\
    &=\E\Big\{S_6(X,Z,Y)\underbrace{\frac{1-Z}{e(X,Z)}\epsilon_{\mu,1}^\text{GOR}(X)[Y-\mu_0(X)]}_{\textstyle=:\varphi_{\nu_{01}^\text{GOR},6}(X,Z,Y)\in\T_6}\Big\}.
\end{align*}
Therefore,
\begin{align*}
    \varphi_{\nu_{01}^\text{GOR}}(O)=\frac{1-Z}{e(X,Z)}\epsilon_{\mu,1}^\text{GOR}(X)[Y-\mu_0(X)]+\frac{Z}{e(X,Z)}\epsilon_{\pi,1}^\text{GOR}(X)[C-\pi_1(X)]+\pi_1(X)\mu_{01}^\text{GOR}(X)-\nu_{01}^\text{GOR}.
\end{align*}
Hence, for $c=0,1$, the IF of $\nu_{0c}^\text{GOR}$ is
\begin{align*}
    \varphi_{\nu_{0c}^\text{GOR}}(O)=\frac{1-Z}{e(X,Z)}\epsilon_{\mu,c}^\text{GOR}(X)[Y-\mu_0(X)]+\frac{Z}{e(X,Z)}\epsilon_{\pi,c}^\text{GOR}(X)[\I(C=c)-\pi_c(X)]+\pi_c(X)\mu_{0c}^\text{GOR}(X)-\nu_{0c}^\text{GOR}.
\end{align*}

\bigskip

\noindent
\textit{Part 2: The IF of $\nu_{0c}^\textup{MR}$}
\smallskip

\noindent
Consider $\nu_{01}^\text{MR}$.
\vspace{-1em}
\begin{align*}
    \nu_{01}^\text{MR}(\theta)=\E_{\theta_1}[\pi_1(X,\theta_3)\overbrace{\gamma_1(X,\theta_3)}^{\textstyle\frac{\rho}{(\rho-1)\pi_1(x,\theta_3)+1}}\mu_0(X,\theta_3,\theta_6)]
    =\int\pi_1(x,\theta_3)\gamma_1(x,\theta_3)\mu_0(x,\theta_3,\theta_6)f_1(x,\theta_1)dx.
\end{align*}
This function involves $\theta_1,\theta_3,\theta_4$ so the IF of $\nu_{01}^\text{MR}$, denoted $\varphi_{\nu_{01}^\text{MR}}(O)$, is the sum of three terms $\varphi_{\nu_{01}^\text{MR},1}\in\T_1$, $\varphi_{\nu_{01}^\text{MR},3}\in\T_3$ and $\varphi_{\nu_{01}^\text{MR},6}\in\T_6$, such that
\begin{align*}
    \frac{\partial\nu_{01}^\text{MR}(\theta)}{\partial\theta_1}\Big|_{\theta=\theta^0}
    &=\E[S_1(X)\varphi_{\nu_{01}^\text{MR},1}(X)],
    \\
    \frac{\partial\nu_{01}^\text{MR}(\theta)}{\partial\theta_3}\Big|_{\theta=\theta^0}
    &=\E[S_3(X,Z,C)\varphi_{\nu_{01}^\text{MR},3}(X,Z,C)],
    \\
    \frac{\partial\nu_{01}^\text{MR}(\theta)}{\partial\theta_6}\Big|_{\theta=\theta^0}
    &=\E[S_6(X,Z,C,Y)\varphi_{\nu_{01}^\text{MR},6}(X,Z,C,Y)],
\end{align*}
\begin{align*}
    \frac{\partial\nu_{01}^\text{MR}(\theta)}{\partial\theta_1}\Big|_{\theta=\theta^0}
    &=\int\pi_1(x)\gamma_1(x)\mu_0(x)f_1'(x)dx\nonumber
    \\
    &=\int\pi_1(x)\gamma_1(x)\mu_0(x)S_1(x)f_1(x)dx=\E[S_1(X)\pi_1(X)\gamma_1(X)\mu_0(X)] && \text{(by (\ref{f_1'(X)}))}\nonumber
    \\
    &=\E\big\{S_1(X)\underbrace{[\pi_1(X)\gamma_1(X)\mu_0(X)-\nu_{01}^\text{MR}]}_{\textstyle\varphi_{\nu_{01}^\text{MR},1}(X)\in\T_1}\big\}.\nonumber
    \\
    \frac{\partial\nu_{01}^\text{MR}(\theta)}{\partial\theta_3}\Big|_{\theta=\theta^0}
    &=\E\left[\pi_1'(X)\gamma_1(X)\mu_0(X)-\pi_1(X)\frac{\rho(\rho-1)}{[(\rho-1)\pi_1(X)+1]^2}\mu_0(X)\pi_1'(X)\right]\nonumber
    \\
    &=\E\left\{\gamma_1(X)\left[1-\frac{(\rho-1)\pi_1(X)}{(\rho-1)\pi_1(X)+1}\right]\mu_0(X)\pi_1'(X)\right\}
    \\
    &=\E[\underbrace{\gamma_1(X)\gamma_0(X)\mu_0(X)}_{\textstyle=:\epsilon_{\pi,1}^\text{MR}(X)}\pi_1'(X)]
    \\
    &=\E\left\{\epsilon_{\pi,1}^\text{MR}(X)\E\left[S_3(X,Z,C)\frac{Z}{e(X,Z)}[C-\pi_1(X)]\mid X\right]\right\} && \text{(by (\ref{pi_1'(X)}))}
    \\
    &=\E\Big\{S_3(X,Z,C)\underbrace{\frac{Z}{e(X,Z)}\epsilon_{\pi,1}^\text{MR}(X)[C-\pi_1(X)]}_{\textstyle=:\varphi_{\nu_{01},3}^\text{MR}(X,Z,C)\in\T_3}\Big\}.
    \\
    \frac{\partial\nu_{01}^\text{MR}(\theta)}{\partial\theta_6}\Big|_{\theta=\theta^0}
    &=\E[\underbrace{\gamma_1(X)\pi_1(X)}_{\textstyle=:\epsilon_{\mu,1}^\text{MR}(X)}\mu_0'(X)]
    \\
    &=\E\left\{\epsilon_{\mu,1}^\text{MR}(X)\E\left[S_6(X,Z,Y)\frac{1-Z}{e(X,Z)}[Y-\mu_0(X)]\right]\right\} && \text{(by (\ref{mu_0'(X)}))}
    \\
    &=\E\Big\{S_6(X,Z,Y)\underbrace{\frac{1-Z}{e(X,Z)}\epsilon_{\mu,1}^\text{MR}(X)[Y-\mu_0(X)]}_{\textstyle=:\varphi_{\nu_{01}^\text{MR},6}(X,Z,Y)\in\T_6}\Big\}.
\end{align*}
Therefore,
\begin{align*}
    \varphi_{\nu_{01}^\text{MR}}(O)=\frac{1-Z}{e(X,Z)}\epsilon_{\mu,1}^\text{MR}(X)[Y-\mu_0(X)]+\frac{Z}{e(X,Z)}\epsilon_{\pi,1}^\text{MR}(X)[C-\pi_1(X)]+\pi_1(X)\mu_{01}^\text{MR}(X)-\nu_{01}^\text{MR}.
\end{align*}
Hence, for $c=0,1$, the IF of $\nu_{0c}^\text{MR}$ is
\begin{align*}
    \varphi_{\nu_{0c}^\text{MR}}(O)=\frac{1-Z}{e(X,Z)}\epsilon_{\mu,c}^\text{MR}(X)[Y-\mu_0(X)]+\frac{Z}{e(X,Z)}\epsilon_{\pi,c}^\text{MR}(X)[\I(C=c)-\pi_c(X)]+\pi_c(X)\mu_{0c}^\text{MR}(X)-\nu_{0c}^\text{MR}.
\end{align*}
\end{proof}

\bigskip

\addcontentsline{toc}{subsection}{Proof of Proposition \ref{thm:robustness-loss}}
\noindent
\begin{proof}[Proof of Proposition \ref{thm:robustness-loss}]
\smallskip

\noindent
\textit{Part 1: The shared partial robustness of} $\hat\Delta_{c,\textsc{ms}}^\mathrm{GOR}$, $\hat\Delta_{c,\textsc{if}}^\mathrm{GOR}$ \textit{and} $\hat\Delta_{c,\textsc{ifh}}^\mathrm{GOR}$
\smallskip

\noindent
Recall from the proof of multiply robustness of $\hat\Delta_{c,\mathrm{MS}}^\mathrm{PI}$, $\hat\Delta_{c,\mathrm{IF}}^\mathrm{PI}$ and $\hat\Delta_{c,\mathrm{IFH}}^\mathrm{PI}$ that it is based on this result:
\begin{itemize}
    \item the component that estimates $\pi_c$ is consistent if either $\hat e(X,Z)$ or $\hat\pi_c(X)$ is correctly specified;
    \item the component that estimates $\nu_{1c}$ is consistent if either $\hat e(X,Z)$ is correctly specified or both $\hat\pi_c(X)$ and $\hat\mu_{1c}(X)$ are correctly specified;
    \item the component that estimates $\nu_{0c}^\text{PI}$ is consistent if $\hat e(X,Z)$ and $\hat\pi_c(X)$ are correctly specified, or if $\hat e(X,Z)$ and $\hat\mu_0(X)$ are correctly specified, or if $\hat\pi_c(X)$ and $\hat\mu_0(X)$ are correctly specified.
\end{itemize}
For $\hat\Delta_{c,\textsc{ms}}^\text{GOR}$, $\hat\Delta_{c,\textsc{if}}^\text{GOR}$ and $\hat\Delta_{c,\textsc{ifh}}^\text{GOR}$, the third bullet above is replaced with
\begin{itemize}
    \item the component that estimates $\nu_{0c}^\text{GOR}$ is consistent if $\hat\pi_c(X)$ and $\hat\mu_0(X)$ are correctly specified.
\end{itemize}
(The proof of this statement uses the same kind of reasoning used to prove $\hat\Delta_{c,\mathrm{MS}}^\mathrm{PI}$, $\hat\Delta_{c,\mathrm{IF}}^\mathrm{PI}$ and $\hat\Delta_{c,\mathrm{IFH}}^\mathrm{PI}$ are multiply robust, so is left out here.)
It follows that $\hat\Delta_{c,\textsc{ms}}^\text{GOR}$, $\hat\Delta_{c,\textsc{if}}^\text{GOR}$ and $\hat\Delta_{c,\textsc{ifh}}^\text{GOR}$ are consistent if the following conditions hold:
\begin{enumerate}[(1)]
    \item $\hat\pi_c(X)$ and $\hat\mu_0(X)$ are both correctly specified; AND
    \item either $\hat e(X,Z)$ or $\hat\mu_{1c}(X)$ is correctly specified.
\end{enumerate}

\bigskip
\noindent
\textit{Part 2: The shared partial robustness of} $\hat\Delta_{c,\textsc{ms}}^\mathrm{MR}$, $\hat\Delta_{c,\textsc{if}}^\mathrm{MR}$ \textit{and} $\hat\Delta_{c,\textsc{ifh}}^\mathrm{MR}$
\smallskip

\noindent
We follow the reasoning above. For $\hat\Delta_{c,\textsc{ms}}^\text{MR}$, $\hat\Delta_{c,\textsc{if}}^\text{MR}$ and $\hat\Delta_{c,\textsc{ifh}}^\text{MR}$, bullet 3 above is replaced with
\begin{itemize}
    \item the component that estimates $\nu_{0c}^\text{MR}$ is consistent if $\hat\pi_c(X)$ is correctly specified, and either $\hat e(X,Z)$ or $\hat\mu_0(X)$ is correctly specified.
\end{itemize}
(The proof of this statement uses the same kind of reasoning used to prove $\hat\Delta_{c,\mathrm{MS}}^\mathrm{PI}$, $\hat\Delta_{c,\mathrm{IF}}^\mathrm{PI}$ and $\hat\Delta_{c,\mathrm{IFH}}^\mathrm{PI}$ are multiply robust, so is left out here.)
As a result, $\hat\Delta_{c,\textsc{ms}}^\text{MR}$, $\hat\Delta_{c,\textsc{if}}^\text{MR}$ and $\hat\Delta_{c,\textsc{ifh}}^\text{MR}$ are consistent if the following conditions hold:
\begin{enumerate}[(1)]
    \item $\hat\pi_c(X)$ is correctly specified; AND
    \item either $\hat e(X,Z)$ is correctly specified or both $\hat\mu_{1c}(X),\hat\mu_0(X)$ are correctly specified.
\end{enumerate}
\end{proof}

\bigskip

\addcontentsline{toc}{subsection}{Elaboration of Remark \ref{rm:approximate-robustness}}
\noindent
\begin{proof}[Elaboration of Remark \ref{rm:approximate-robustness}]
\smallskip

\noindent
\textit{Part 1: The approximate robustness of} $\hat\Delta_{c,\textsc{if}}^\mathrm{GOR}$ \textit{and} $\hat\Delta_{c,\textsc{ifh}}^\mathrm{GOR}$
\smallskip

\noindent
This is about the component that estimates $\nu_{0c}^\text{GOR}$. 
In the proof of Proposition \ref{thm:robustness-loss} above, we mentioned that for $\hat\Delta_{c,\textsc{ms}}^\text{GOR}$, $\hat\Delta_{c,\textsc{if}}^\text{GOR}$ and $\hat\Delta_{c,\textsc{ifh}}^\text{GOR}$, this component of  is consistent if $\hat\pi_c(X)$ and $\hat\mu_0(X)$ are both correctly specified. However, for $\hat\Delta_{c,\textsc{if}}^\text{GOR}$ and $\hat\Delta_{c,\textsc{ifh}}^\text{GOR}$ (but not $\hat\Delta_{c,\textsc{ms}}^\text{GOR}$), this component has an approximate robustness w.r.t. to these two functions whose correct specification they require for consistency, which we now explain. 

It can be shown that the probability limits of $\hat\nu_{0c,\textsc{if}}^\text{GOR}$ and $\hat\nu_{0c,\textsc{ifh}}^\text{GOR}$ are both
\begin{align}
    \E\left\{\epsilon_{\pi,c}^\text{GOR}[\mu_0(X),\pi_c^\dagger(X)][\pi_c(X)-\pi_c^\dagger(X)]\right\}+\E\left\{\pi_c^\dagger(X)\mu_{0c}^\text{GOR}[\mu_0(X),\pi_c^\dagger(X)]\right\}\label{OR:if-appx.robust1}
\end{align}
when $\hat e(X,Z)$ and $\hat\mu_0(X)$ are correctly specified but $\hat\pi_c(X)$ is not, and are both
\begin{align}
    \E\left\{\epsilon_{\mu,c}^\text{GOR}[\mu_0^\dagger(X),\pi_c(X)][\mu_0(X)-\mu_0^\dagger(X)]\right\}+\E\left\{\pi_c(X)\mu_{0c}^\text{GOR}[\mu_0^\dagger(X),\pi_c(X)]\right\}\label{OR:if-appx.robust2}
\end{align}
when $\hat e(X,Z)$ and $\hat\pi_c(X)$ are correctly specified but $\hat\mu_0(X)$ is not. In both of these, the second term is the probability limit of the biased plug-in estimator. The first term in (\ref{OR:if-appx.robust1}) is the first term in the Taylor expansion of the true parameter $\nu_{0c}^\text{GOR}$ (treated as a function of $\pi_c()$) at the point $\pi_c^\dagger()$. The first term in (\ref{OR:if-appx.robust2}) is the first term in the Taylor expansion of $\nu_{0c}^\text{GOR}$ (treated as a function of $\mu_0()$) at the point $\mu_0^\dagger()$. 
Note though that for this approximate robustness property to be beneficial (reducing bias), $\pi_c^\dagger(X)$ and $\mu_0^\dagger(X)$ need to be close to $\pi_c(X)$ and $\mu_0(X)$, respectively.

\bigskip

\noindent
\textit{Part 2: The approximate robustness of} $\hat\Delta_{c,\textsc{if}}^\mathrm{MR}$ \textit{and} $\hat\Delta_{c,\textsc{ifh}}^\mathrm{MR}$
\smallskip

\noindent
This is about the component that estimates $\nu_{0c}^\text{MR}$. 
In the proof of Proposition \ref{thm:robustness-loss} above, we mentioned that for $\hat\Delta_{c,\textsc{ms}}^\text{MR}$, $\hat\Delta_{c,\textsc{if}}^\text{MR}$ and $\hat\Delta_{c,\textsc{ifh}}^\text{MR}$, this component of  is not consistent if $\hat\pi_c(X)$ is mis-specified. However, for $\hat\Delta_{c,\textsc{if}}^\text{MR}$ and $\hat\Delta_{c,\textsc{ifh}}^\text{MR}$ (but not $\hat\Delta_{c,\textsc{ms}}^\text{MR}$), this component has an approximate robustness w.r.t. to this function whose correct specification they require for consistency. 

It can be shown that the probability limits of $\hat\nu_{0c,\textsc{if}}^\text{MR}$ and $\hat\nu_{0c,\textsc{ifh}}^\text{MR}$ are both
\begin{align}
    \E\left\{\epsilon_{\pi,c}^\text{MR}[\mu_0(X),\pi_c^\dagger(X)][\pi_c(X)-\pi_c^\dagger(X)]\right\}+\E\left\{\pi_c^\dagger(X)\mu_{0c}^\text{MR}[\mu_0(X),\pi_c^\dagger(X)]\right\}
\end{align}
when $\hat e(X,Z)$ and $\hat\mu_0(X)$ are correctly specified but $\hat\pi_c(X)$ is not. Here the second term is the probability limit of the biased plug-in estimator. The first term is the first term in the Taylor expansion of the true parameter $\nu_{0c}^\text{MR}$ (treated as a function of $\pi_c()$) at the point $\pi_c^\dagger()$.  
For this approximate robustness property to be beneficial (reducing bias), $\pi_c^\dagger(X)$ need to be close to $\pi_c(X)$.
\end{proof}

\section{Proofs and additional results for Section \ref{sec:diff-param} -- Sens analysis with SMD}

\addcontentsline{toc}{subsection}{Proposition \ref{thm:smd-bounds}}
\begin{theoremb}{\ref*{thm:smd-id}b}\label{thm:smd-bounds}
    \textit{Under A0-A2 combined with A4-SMD, $\Delta_c$ lies between the two bounds}
    \begin{align}
        \Delta_c^\text{PI}-\eta_c\,\E\left[\frac{\pi_1(X)\pi_0(X)\sigma_0(X)}{\sqrt{1\pm|\pi_1(X)-\pi_0(X)|+\eta^2\pi_1(X)\pi_0(X)}}\right]/\pi_c.
    \end{align}
    \textit{Under A0-A2 combined with A4-SMD and the assumption that $\frac{1}{k}\leq\frac{\sigma_{01}^2(X)}{\sigma_{00}^2(X)}\leq k$ for a specified $k>1$, $\Delta_c$ lies between the two bounds}
    \begin{align}
        \Delta_c^\text{PI}-\eta_c\,\E\left[\frac{\pi_1(X)\pi_0(X)\sigma_0(X)}{\sqrt{1\pm\frac{k-1}{k+1}|\pi_1(X)-\pi_0(X)|+\eta^2\pi_1(X)\pi_0(X)}}\right]/\pi_c.
    \end{align}
\end{theoremb}

\bigskip

\noindent
To prove Propositions \ref{thm:smd-id} and \ref{thm:smd-bounds}, we will use Lemma \ref{lm:mixture-variance}.

\addcontentsline{toc}{subsection}{Lemma \ref{lm:mixture-variance}}

\begin{lemma}[Variance of mixture of two distributions]\label{lm:mixture-variance}
    Consider two distributions with means $\mu_1,\mu_2$ and variances $\sigma_1^2,\sigma_2^2$. The mixture of these two distributions by mixing ratio is $p:(1-p)$ has variance
    \begin{align}
        \sigma^2=[p\sigma_1^2+(1-p)\sigma_2^2]+p(1-p)(\mu_1-\mu_2)^2.\label{eq:mixture-variance}
    \end{align}
\end{lemma}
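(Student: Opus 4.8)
The plan is to realize the mixture distribution as the marginal law of a single random variable carrying a latent component label, and then apply the law of total variance. First I would introduce an auxiliary indicator $G$ with $\P(G=1)=p$ and $\P(G=2)=1-p$, together with a random variable $W$ whose conditional distribution given $G=j$ is the $j$th component distribution; by construction $\E[W\mid G=1]=\mu_1$, $\E[W\mid G=2]=\mu_2$, $\var(W\mid G=1)=\sigma_1^2$, $\var(W\mid G=2)=\sigma_2^2$, and the marginal law of $W$ is precisely the stated mixture. Hence it suffices to show that $\var(W)$ equals the right-hand side of $(\ref{eq:mixture-variance})$.

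Next I would invoke the law of total variance, $\var(W)=\E[\var(W\mid G)]+\var(\E[W\mid G])$. The first term is immediate: $\E[\var(W\mid G)]=p\sigma_1^2+(1-p)\sigma_2^2$, which is the bracketed part of $(\ref{eq:mixture-variance})$. For the second term, $\E[W\mid G]$ is a two-valued random variable taking value $\mu_1$ with probability $p$ and $\mu_2$ with probability $1-p$, so $\var(\E[W\mid G])=[p\mu_1^2+(1-p)\mu_2^2]-[p\mu_1+(1-p)\mu_2]^2$. Expanding the square and using $p-p^2=p(1-p)=(1-p)-(1-p)^2$ collapses this to $p(1-p)(\mu_1-\mu_2)^2$, the remaining term. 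Adding the two contributions gives $(\ref{eq:mixture-variance})$.

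There is no genuine obstacle here; the only bookkeeping is the cancellation in the displayed simplification of $\var(\E[W\mid G])$, which one could equivalently obtain via the direct route $\sigma^2=\E[W^2]-(\E[W])^2$ with $\E[W^2]=p(\sigma_1^2+\mu_1^2)+(1-p)(\sigma_2^2+\mu_2^2)$. I would present the law-of-total-variance version, since it makes the decomposition into a ``within-component'' piece and a ``between-component'' piece transparent. This lemma is then applied in the proofs of Propositions~\ref{thm:smd-id} and~\ref{thm:smd-bounds} with the two component distributions taken to be those of $Y_0$ given $(X,C=1)$ and given $(X,C=0)$ and mixing ratio $\pi_1(X):\pi_0(X)$, so that the identified mixture variance $\sigma_0^2(X)$ is expressed in terms of $\sigma_{01}^2(X)$, $\sigma_{00}^2(X)$ and $[\mu_{01}(X)-\mu_{00}(X)]^2$; under the equal-variance restriction this relation is what pins down the SMDe identification formulas.
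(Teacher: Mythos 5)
Your proposal is correct and follows essentially the same route as the paper's proof: both apply the law of total variance, identify $p\sigma_1^2+(1-p)\sigma_2^2$ as the expectation of the conditional variance, and reduce the variance of the conditional expectation to $p(1-p)(\mu_1-\mu_2)^2$ by the same algebraic cancellation. Your explicit introduction of the latent label $G$ merely formalizes what the paper leaves implicit.
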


\addcontentsline{toc}{subsection}{Proof of Lemma \ref{lm:mixture-variance}}
\noindent
\begin{proof}[Proof of Lemma \ref{lm:mixture-variance}]
    Applying the law of total variance, we derive the variance of the mixture (total variance) based on the means and variances of the components (conditional means and variances).
    \begin{align*}
        \sigma^2
        &=
    \overbrace{[p\sigma_1^2+(1-p)\sigma_2^2]}^{\textstyle\text{expectation of conditional variance}}+~~~
    \overbrace{\big\{[p\mu_1^2+(1-p)\mu_2^2]-[p\mu_1+(1-p)\mu_2]^2\big\}}^{\textstyle\text{variance of conditional expectation}}
    \\
    &=[p\sigma_1^2+(1-p)\sigma_2^2]+p(1-p)\mu_1^2+(1-p)[1-(1-p)]\mu_2^2-2p(1-p)\mu_1\mu_2
    \\
    &=[p\sigma_1^2+(1-p)\sigma_2^2]+p(1-p)(\mu_1-\mu_2)^2.\tag{\ref{eq:mixture-variance}}
    \end{align*}
\end{proof}

\addcontentsline{toc}{subsection}{Proof of Propositions \ref{thm:smd-id} and \ref{thm:smd-bounds}}
\noindent
\begin{proof}[Proof of Propositions \ref{thm:smd-id} and \ref{thm:smd-bounds}]
The proof here covers both propositions. We will refer to the combination of A4-SMD and $\frac{1}{k}\leq\frac{\sigma_{01}^2(X)}{\sigma_{00}^2(X)}\leq k$ as A4-SMDr. Proposition \ref{thm:smd-bounds} is covered in what are referred to below as case 1 (assuming A4-SMD) and case 2 (assuming A4-SMDr), and Proposition \ref{thm:smd-id} is covered in case 3 (assuming A4-SMDe).

We will rely on the mixture mean and mixture variance equations
\begin{align}
    \mu_0(X)
    &=\pi_1(X)\mu_{01}(X)+\pi_0(X)\mu_{00}(X),\tag{\ref{eq:mixture-mean}},
    \\
    \sigma_0^2(X)
    &=[\pi_1(X)\sigma_{01}^2(X)+\pi_0(X)\sigma_{00}^2(X)]+\pi_1(X)\pi_0(X)[\mu_{01}(X)-\mu_{00}(X)]^2. && \text{(by Lemma \ref{lm:mixture-variance})}\label{mixture-variance}
\end{align}
As the terms in these equations are all functions of $X$, we suppress the $(X)$ notation to simplify presentation,
\begin{align}
    \mu_0
    &=\pi_1\mu_{01}+\pi_0\mu_{00}\tag{\ref{eq:mixture-mean}}
    \\
    \sigma_0^2
    &=(\pi_1\sigma_{01}^2+\pi_0\sigma_{00}^2)+\pi_1\pi_0(\mu_{01}-\mu_{00})^2.\tag{\ref{mixture-variance}}
\end{align}
We will first see what can be deduced from the combination of (\ref{mixture-variance}) with each of the three A4-SMD assumptions, before combining that result with (\ref{eq:mixture-mean}) to obtain the final result.

To simplify notation, let $\pi_\text{diff}:=|\pi_1(X)-\pi_0(X)|$.

\begin{enumerate}[leftmargin=1.3cm]
    \item[\textit{Case 1.}] 
    Assume A4-SMD. 
    
    Note that
    \begin{align*}
        (1-\pi_\text{diff})(\sigma_{01}^2/2+\sigma_{00}^2/2)
        \leq (\pi_1\sigma_{01}^2+\pi_0\sigma_{00}^2)
        \leq (1+\pi_\text{diff})(\sigma_{01}^2/2+\sigma_{00}^2/2).
    \end{align*}
    (To see this, suppose $\pi_1\geq\pi_0$. Then $1-\pi_\text{diff}=2\pi_0$ and $1+\pi_\text{diff}=2\pi_1$, so the LHS above is equal to $\pi_0(\sigma_{01}^2+\sigma_{00}^2)$, while the RHS is equal to $\pi_1(\sigma_{01}^2+\sigma_{00}^2)$.)

    Under A4-SMD, this becomes
    \begin{align*}
        (1-\pi_\text{diff})(\mu_{01}-\mu_{00})^2/\eta^2
        \leq (\pi_1\sigma_{01}^2+\pi_0\sigma_{00}^2)
        \leq (1+\pi_\text{diff})(\mu_{01}-\mu_{00})^2/\eta^2,
    \end{align*}
    which, combined with (\ref{mixture-variance}), implies (after some simple algebra)
    \begin{align*}
        \frac{\sigma_0^2\eta^2}{1+\pi_\text{diff}+\eta^2\pi_1\pi_0}\leq(\mu_{01}-\mu_{00})^2\leq\frac{\sigma_0^2\eta^2}{1-\pi_\text{diff}+\eta^2\pi_1\pi_0},
    \end{align*}
    so $\mu_{01}-\mu_{00}$ is bounded between
    \begin{align}
        \frac{\sigma_0\eta}{\sqrt{1\pm\pi_\text{diff}+\eta^2\pi_1\pi_0}}.\label{SMD:mu.diff}
    \end{align}
    
    \item[\textit{Case 2.}]
    Assume A4-SMDr.
    
    Note that, if $\pi_1\geq\pi_0$,
    \begin{align*}
        (1-\pi_\text{diff})(\sigma_{01}^2/2+\sigma_{00}^2/2)+\pi_\text{diff}\sigma_{01}^2=(\pi_1\sigma_{01}^2+\pi_0\sigma_{00}^2)=(1+\pi_\text{diff})(\sigma_{01}^2/2+\sigma_{00}^2/2)-\pi_\text{diff}\sigma_{00}^2,
    \end{align*}
    and if $\pi_1<\pi_0$,
    \begin{align*}
        (1-\pi_\text{diff})(\sigma_{01}^2/2+\sigma_{00}^2/2)+\pi_\text{diff}\sigma_{00}^2=(\pi_1\sigma_{01}^2+\pi_0\sigma_{00}^2)=(1+\pi_\text{diff})(\sigma_{01}^2/2+\sigma_{00}^2/2)-\pi_\text{diff}\sigma_{01}^2.
    \end{align*}
    The variance restriction assumption $\frac{1}{k}\leq\frac{\sigma_{01}^2}{\sigma_{00}^2}\leq k$ implies that both $\sigma_{00}^2$ and $\sigma_{01}^2$ are greater than or equal to $\frac{2}{1+k}(\sigma_{01}^2/2+\sigma_{00}^2/2)$. It follows that
    \begin{align*}
        \underbrace{\left(1-\pi_\text{diff}+\frac{2\pi_\text{diff}}{1+k}\right)}_{1-\frac{k-1}{k+1}\pi_\text{diff}}(\sigma_{01}^2/2+\sigma_{00}^2/2)
        \leq
        (\pi_1\sigma_{01}^2+\pi_0\sigma_{00}^2)
        \leq
        \underbrace{\left(1+\pi_\text{diff}-\frac{2\pi_\text{diff}}{1+k}\right)}_{1+\frac{k-1}{k+1}\pi_\text{diff}}(\sigma_{01}^2/2+\sigma_{00}^2/2).
    \end{align*}
    Under the SMD part of A4-SMDr, this becomes
    \begin{align*}
        \left(1-\frac{k-1}{k+1}\pi_\text{diff}\right)(\mu_{01}-\mu_{00})^2/\eta^2
        \leq
        (\pi_1\sigma_{01}^2+\pi_0\sigma_{00}^2)
        \leq
        \left(1+\frac{k-1}{k+1}\pi_\text{diff}\right)(\mu_{01}-\mu_{00})^2/\eta^2.
    \end{align*}
    which, combined with (\ref{mixture-variance}), implies (after some simple algebra)
    \begin{align*}
        \frac{\sigma_0^2\eta^2}{1+\frac{k-1}{k+1}\pi_\text{diff}+\eta^2\pi_1\pi_0}\leq(\mu_{01}-\mu_{00})^2\leq\frac{\sigma_0^2\eta^2}{1-\frac{k-1}{k+1}\pi_\text{diff}+\eta^2\pi_1\pi_0},
    \end{align*}
    so $\mu_{01}-\mu_{00}$ is bounded between
    \begin{align}
        \frac{\sigma_0\eta}{\sqrt{1\pm\frac{k-1}{k+1}\pi_\text{diff}+\eta^2\pi_1\pi_0}}.\label{SMDr:mu.diff}
    \end{align}
    
    \item[\textit{Case 3.}]
    Assume A4-SMDe.
    
    In this case $\pi_1\sigma_{01}^2+\pi_0\sigma_{00}^2=\sigma_{01}^2/2+\sigma_{00}^2/2=(\mu_{01}-\mu_{00})^2/\eta^2$. Combining this with (\ref{mixture-variance}) obtains
    \begin{align}
        \mu_{01}-\mu_{00}=\frac{\sigma_0\eta}{\sqrt{1+\eta^2\pi_1\pi_0}}.\label{SMDe:mu.diff}
    \end{align}
\end{enumerate}

Now we combine the above intermediate results with the mixture mean equation (\ref{eq:mixture-mean}).
For case 3, combining (\ref{SMDe:mu.diff}) with (\ref{eq:mixture-mean}), we have a set of two linear equations, with the solution
\begin{align*}
    \mu_{01}
    &=\mu_0+\frac{\eta\pi_0\sigma_0}{\sqrt{1+\eta^2\pi_1\pi_0}},
    \\
    \mu_{00}
    &=\mu_0-\frac{\eta\pi_1\sigma_0}{\sqrt{1+\eta^2\pi_1\pi_0}},
\end{align*}
which can be collectively expressed as
\begin{align*}
    \mu_{0c}=\mu_0+\eta_c\frac{\pi_{1-c}\sigma_0}{\sqrt{1+\eta^2\pi_1\pi_0}}.
\end{align*}
It follows that
\begin{align}
    \Delta_c=\Delta_c^\text{PI}-\eta_c\E\left[\frac{\pi_1(X)\pi_0(X)\sigma_0(X)}{\sqrt{1+\eta^2\pi_1(X)\pi_0(X)}}\right]/\pi_c=:\Delta_c^\text{SMDe}.\tag{\ref{SMDe:Delta.c}}
\end{align}
For cases 1 and 2 we combine each of the bounds of $\mu_{01}-\mu_{00}$ (in (\ref{SMD:mu.diff}) and (\ref{SMDr:mu.diff}), respectively) with (\ref{eq:mixture-mean}) and solve for the corresponding bounds of $\mu_{0c}(X)$. The bounds for $\Delta_c$ then follow.
\end{proof}

\bigskip\bigskip

\addcontentsline{toc}{subsection}{Proof of Proposition \ref{thm:smde-if}}
\noindent
\begin{proof}[Proof of Proposition \ref{thm:smde-if}]
We derive the IF of
\begin{align*}
    \vartheta:=\E\Big[\overbrace{\frac{\pi_1(X)\pi_0(X)\sigma_0(X)}{\sqrt{1+\eta^2\pi_1(X)\pi_0(X)}}}^{\textstyle=:\vartheta(X)}\Big].
\end{align*}
The function
\begin{align*}
    \vartheta(\theta)=\E_{\theta_1}\left[\vartheta(X,\theta_3,\theta_6)\right]
\end{align*}
involves $\theta_1,\theta_3,\theta_6$ so the IF of $\vartheta$, $\varphi_\vartheta(O)$, is the sum of three terms $\varphi_{\vartheta,1}\in\mathcal{T}_1$, $\varphi_{\vartheta,3}\in\mathcal{T}_3$, $\varphi_{\vartheta,6}\in\mathcal{T}_6$, such that
\begin{align*}
    \frac{\partial\vartheta(\theta)}{\partial\theta_1}\Big|_{\theta=\theta^0}
    &=\E[S_1(X)\varphi_{\vartheta,1}(X)],
    \\
    \frac{\partial\vartheta(\theta)}{\partial\theta_3}\Big|_{\theta=\theta^0}
    &=\E[S_3(X,Z,C)\varphi_{\vartheta,3}(X,Z,C)],
    \\
    \frac{\partial\vartheta(\theta)}{\partial\theta_6}\Big|_{\theta=\theta^0}
    &=\E[S_6(X,Z,Y)\varphi_{\vartheta,6}(X,Z,Y)].
\end{align*}
\begin{align}
    \frac{\partial\vartheta(\theta)}{\partial\theta_1}\Big|_{\theta=\theta^0}
    &=\int\vartheta(x)f_1'(x)dx\nonumber
    \\
    &=\int\vartheta(x)S_1(x)f_1(x)dx=\E[S_1(X)\vartheta(X)] && \text{(by (\ref{f_1'(X)}))}\nonumber
    \\
    &=\E\{S_1(X)\underbrace{[\vartheta(X)-\vartheta]}_{\textstyle=:\varphi_{\vartheta,1}(X)\in\T_1}\}.\label{if-vartheta-1}
    \\
    \frac{\partial\vartheta(\theta)}{\partial\theta_3}\Big|_{\theta=\theta^0}
    &=\E\left[\frac{\partial\vartheta(X,\theta)}{\partial\theta_3}\Big|_{\theta=\theta^0}\right]\nonumber
    \\
    &=\E\Big[\underbrace{\frac{\partial\vartheta(X)}{\partial\pi_1(X)}}_{\textstyle\underbrace{\left[\frac{(\pi_0-\pi_1)}{\sqrt{1+\eta^2\pi_1\pi_0}}
    -\frac{\eta^2(\pi_0-\pi_1)\pi_1\pi_0}{2(1+\eta^2\pi_1\pi_0)^{3/2}}\right]\sigma_0}_{\displaystyle \frac{(2+\eta^2\pi_1\pi_0)(\pi_0-\pi_1)\sigma_0}{2(1+\eta^2\pi_1\pi_0)^{3/2}}}}\pi_1'(X)\Big]\nonumber
    \\
    &=\E\Big\{\underbrace{\frac{[2+\eta^2\pi_1(X)\pi_0(X)][\pi_0(X)-\pi_1(X)]\sigma_0(X)}{2[1+\eta^2\pi_1(X)\pi_0(X)]^{3/2}}}_{\textstyle=:\boldsymbol\dot\vartheta_\pi(X)}
    \E\left[S_3(X,Z,C)\frac{Z}{e(X,Z)}[C-\pi_1(X)]\mid X\right]\Big\}. && \text{(by (\ref{pi_1'(X)}))}\nonumber
    \\
    &=\E\Big\{S_3(X,Z,C)\underbrace{\frac{Z}{e(X,Z)}\boldsymbol\dot\vartheta_\pi(X)[C-\pi_1(X)]}_{\textstyle=:\varphi_{\vartheta,3}(X,Z,C)\in\T_3}\Big\}.\label{if-vartheta-3}
    \\
    \frac{\partial\vartheta(\theta)}{\partial\theta_6}\Big|_{\theta=\theta^0}
    &=\E\Big[\underbrace{\frac{\pi_1(X)\pi_0(X)}{\sqrt{1+\eta^2\pi_1(X)\pi_0(X)}}\frac{1}{2\sigma_0(X)}}_{\textstyle=:\boldsymbol\dot\vartheta_{\sigma^2}(X)}
    \underbrace{\frac{\partial\sigma^2_0(X,\theta_6)}{\partial\theta_6}\Big|_{\theta_6=\theta_6^0}}_{\textstyle=:{\sigma_0^2}'(X)}\Big].\label{if-vartheta-6-midstep}
\end{align}
\begin{align*}
    {\sigma^2_0}'(X)
    &=\frac{\partial}{\partial\theta_6}\big\{\E_{\theta_6}[Y^2\mid X,Z=0]-\mu_0^2(X,\theta_6)\big\}\Big|_{\theta_6=\theta_6^0}
    \\
    &=\frac{\partial}{\partial\theta_6}\int y^2f_6(y\mid X,Z=0)dy\Big|_{\theta_6=\theta_6^0}-2\mu_0(X)\mu_0'(X)
    \\
    &=\E[S_6(X,Z,Y)Y^2\mid X,Z=0]-2\mu_0(X)\mu_0'(X)
    \\
    &=\E[S_6(X,Z,Y)Y^2\mid X,Z=0]-2\mu_0(X)\E[S_6(X,Z,Y)Y\mid X,Z=0] && \text{(by (\ref{mu_0'(X)-intermediate-result}))}
    \\
    &=\E\big\{S_6(X,Z,Y)[Y^2-2\mu_0(X)Y]\mid X,Z=0\big\}
    \\
    &=\E\big\{S_6(X,Z,Y)[Y-\mu_0(X)]^2\mid X,Z=0\big\}-\underbrace{\E[S_6(X,Z,Y)\mid X,Z=0]}_{0}\mu_0^2(X)
    \\
    &=\E\Big[S_6(X,Z,Y)\{[Y-\mu_0(X)]^2-\sigma^2_0(X)\}\mid X,Z=0\Big]+\underbrace{\E[S_6(X,Z,Y)\mid X,Z=0]}_{0}\sigma_0^2(X)
    \\
    &=\E\left[S_6(X,Z,Y)\frac{1-Z}{e(X,Z)}\{[Y-\mu_0(X)]^2-\sigma^2_0(X)\}\mid X\right].
\end{align*}
Plugging this into (\ref{if-vartheta-6-midstep}) obtains
\begin{align}
    \frac{\partial\vartheta(\theta)}{\partial\theta_6}\Big|_{\theta=\theta^0}
    &=\E\left\{\boldsymbol\dot\vartheta_{\sigma^2}(X)\E\left[S_6(X,Z,Y)\frac{1-Z}{e(X,Z)}\{[Y-\mu_0(X)]^2-\sigma^2_0(X)\}\mid X\right]\right\}\nonumber
    \\
    &=\E\Big[S_6(X,Z,Y)\underbrace{\frac{1-Z}{e(X,Z)}\boldsymbol\dot\vartheta_{\sigma^2}(X)\{[Y-\mu_0(X)]^2-\sigma^2_0(X)\}}_{\textstyle=:\varphi_{\vartheta,6}(X,Z,Y)\in\T_6}\Big].\label{if-vartheta-6}
\end{align}
Combining results from (\ref{if-vartheta-6}), (\ref{if-vartheta-3}) and (\ref{if-vartheta-1}), we have the IF of $\vartheta$
\begin{align*}
    \varphi_\vartheta(O)=\frac{1-Z}{e(X,Z)}\boldsymbol\dot\vartheta_{\sigma^2}(X)\{[Y-\mu_0(X)]^2-\sigma^2_0(X)\}+\frac{Z}{e(X,Z)}\boldsymbol\dot\vartheta_\pi(X)[C-\pi_1(X)]+\vartheta(X)-\vartheta.\tag{\ref{if:smde-vartheta}}
\end{align*}
The IF of $\xi_c$ is obtained by applying Lemma \ref{lm:if-ratioOfMeans} to $\xi_c$ as the ratio of $\vartheta$ to $\pi_c$.

\end{proof}

\bigskip

\addcontentsline{toc}{subsection}{Elaboration of Remark \ref{rm:approximate-robustness2}}
\noindent
\begin{proof}[Elaboration of Remark \ref{rm:approximate-robustness2}]
It can be shown that the probability limits of $\hat\xi_{c,\textsc{if}}$ and $\hat\xi_{c,\textsc{ifh}}$ are both
\begin{align}
    \E\{\boldsymbol\dot\vartheta_\pi[\pi_1^\dagger(X),\sigma_0(X)][\pi_1(X)-\pi_1^\dagger(X)]\}+\E\{\vartheta[\pi_1^\dagger(X),\sigma_0(X)]\}\label{vit1}
\end{align}
when all modelling components but $\hat\pi_1(X)$ are consistent, and are both
\begin{align}
    \E\{\boldsymbol\dot\vartheta_{\sigma^2}[\pi_1,\sigma_0^\dagger(X)][\sigma_0^2(X)-\sigma_0^\dagger(X)]\}+\E\{\vartheta[\pi_1^\dagger(X),\sigma_0(X)]\}\label{vit2}
\end{align}
when all modelling components but $\hat\sigma^2_0(X)$ are consistent. In both of these, the second term is the probability limit of the biased plug-in estimator. The first term in (\ref{vit1}) is the first term in the Taylor expansion of the true parameter $\vartheta$ (treated as a function of $\pi_1()$) at the point $\pi_1^\dagger()$. The first term in (\ref{vit2}) is the first term in the Taylor expansion of $\vartheta$ (treated as a function of $\sigma^2_0()$) at the point ${\sigma^2_0}^\dagger()$. Note though that for this approximate robustness property to be beneficial (reducing bias), $\pi_1^\dagger(X)$ and ${\sigma^2_0}^\dagger(X)$ need to be close to $\pi_1(X)$ and $\sigma^2_0(X)$, respectively.
\end{proof}

\section{Additional results for Section \ref{sec:calibration} -- Using data to consider the range of sens param MR/SMD}\label{appendix:calibration}

\addcontentsline{toc}{subsection}{Proposition \ref{thm:ok-mr}}
\noindent
\begin{theorem}[Okay MR interval]\label{thm:ok-mr}
    Assume A0-A2 and A4-MR with sensitivity parameter $\rho\in(0,\infty)$. Additionally assume $0\leq\mu_0(X)\leq\textup{B}$ for a positive constant \textup{B}. Let $x$ be a value in the support of $X$ such that $0<\pi_1(x)<1$ and $\mu_0(x)>0$. Let
    \begin{align*}
        \rho_1(x)&:=
        \begin{cases}
            \frac{1}{\pi_1(x)}\left[\frac{\mu_0(x)}{\textup{B}}-\pi_0(x)\right] & \text{if}~\textup{B}<\frac{\mu_0(x)}{\pi_0(x)}
            \\
            0 & \text{otherwise}
        \end{cases},
        \\
        \rho_2(x)&:=
        \begin{cases}
            \left\{\frac{1}{\pi_0(x)}\left[\frac{\mu_0(x)}{\textup{B}}-\pi_1(x)\right]\right\}^{-1} & \text{if}~\textup{B}<\frac{\mu_0(x)}{\pi_1(x)}
            \\
            \infty & \text{otherwise}
        \end{cases}.
    \end{align*}
    Then $\rho$ values in the interval
    \begin{align}
    [\rho_1(x),\rho_2(x)]\cap(0,\infty)\label{eq:ok-mr}
    \end{align}
    imply $\mu_{0c}(x)\leq\textup{B}$, for $c=0,1$.
\end{theorem}

\addcontentsline{toc}{subsection}{Proof of Proposition~\ref{thm:ok-mr}}
\noindent
\begin{proof}[Proof of Propositions \ref{thm:ok-mr}]
Under A0-A2 and A4-MR, by Proposition~\ref{thm:ratio.params-id}, 
$$\mu_{01}(x)=\frac{\rho\mu_0(x)}{\rho\pi_1(x)+\pi_0(x)},~~~\mu_{00}(x)=\frac{(1/\rho)\mu_0(x)}{(1/\rho)\pi_0(x)+\pi_1(x)}.$$

First, we derive conditions for $\mu_{01}(x)\leq\text{B}$. If $\text{B}\geq\mu_0(x)/\pi_1(x)$,
\begin{align*}
    \mu_{01}(x)
    =\frac{\rho\mu_0(x)/[\rho\pi_1(x)]}{[\rho\pi_1(x)+\pi_0(x)]/[\rho\pi_1(x)]}
    =\frac{\mu_0(x)/\pi_1(x)}{1+\frac{1}{\rho}\frac{\pi_0(x)}{\pi_1(x)}}
    \leq\frac{\text{B}}{1+\frac{1}{\rho}\frac{\pi_0(x)}{\pi_1(x)}}
    <\text{B}.
\end{align*}
If $\text{B}<\mu_0(x)/\pi_1(x)$, there is no guarantee that $\mu_{01}(x)$ does not exceed B, and we need some condition on $\rho$. Set $\mu_{01}(x)\leq\text{B}$ and solve for $\rho$:
\begin{align*}
    \mu_{01}(x)\leq\text{B}
    \Longleftrightarrow&~\frac{\mu_0(x)}{\pi_1(x)+(1/\rho)\pi_0(x)}\leq\text{B}
    \\
    \Longleftrightarrow&~\frac{\mu_0(x)}{\text{B}}\leq\pi_1(x)+(1/\rho)\pi_0(x) 
    && (\text{because}~\text{B}>0~\text{and}~\pi_1(x)+(1/\rho)\pi_0(x)>0)
    \\
    \Longleftrightarrow&~\frac{1}{\pi_0(x)}\left[\frac{\mu_0(x)}{\text{B}}-\pi_1(x)\right]\leq 1/\rho 
    && (\text{because}~\pi_0(x)>0)
    \\
    \Longleftrightarrow&~\rho\leq\left\{\frac{1}{\pi_0(x)}\left[\frac{\mu_0(x)}{\text{B}}-\pi_1(x)\right]\right\}^{-1}.
    && (\text{because}~\frac{1}{\pi_0(x)}\left[\frac{\mu_0(x)}{\text{B}}-\pi_1(x)\right]>0~\text{when}~\text{B}<\frac{\mu_0(x)}{\pi_1(x)})
\end{align*}

Next, consider $\mu_{00}(x)$. Similar reasoning obtains that (i) if $\text{B}\geq\mu_0(x)/\pi_0(x)$ then $\mu_{00}(x)<\text{B}$ regardless of $\rho$ value, but (ii) if $\text{B}<\mu_0(x)/\pi_0(x)$ then $\mu_{00}(x)\leq\text{B}$ if and only if $\displaystyle\rho\geq\frac{1}{\pi_1(x)}\left[\frac{\mu_0(x)}{\text{B}}-\pi_0(x)\right]$.

Combining these results for both $\mu_{01}(x)$ and $\mu_{00}(x)$, we get the interval in Proposition~\ref{thm:ok-mr}.
\end{proof}

\addcontentsline{toc}{subsection}{Proposition \ref{thm:ok-smd}}
\noindent
\begin{theorem}[Okay SMD interval]\label{thm:ok-smd}
    Assume A0-A2 and A4-SMDe with sensitivity parameter $\eta$. Additionally assume $\textup{B}_l\leq\mu_0(X)\leq\textup{B}_h$ for constants $\textup{B}_l<\textup{B}_h$. Let $x$ be a value in the support of $X$ such that $0<\pi_1(x)<1$ and $\sigma_0^2(x)>0$. Let
    \begin{align*}
    r_h(x)&:=[\textup{B}_h-\mu_0(x)]^2/\sigma_0^2(x),
    \\
    r_l(x)&:=[\textup{B}_l-\mu_0(x)]^2/\sigma_0^2(x),
    \\
    \eta_1(x)
    &:=\begin{cases}
        -\sqrt{\frac{r_l(x)}{\pi_0^2(x)-\pi_1(x)\pi_0(x)r_l(x)}} & \text{if}~~r_l(x)<\frac{\pi_0(x)}{\pi_1(x)}
        \\
        -\infty & \text{otherwise}
    \end{cases},
    \\
    \eta_2(x)
    &:=\begin{cases}
        -\sqrt{\frac{r_h(x)}{\pi_1^2(x)-\pi_1(x)\pi_0(x)r_h(x)}} & \text{if}~~r_h(x)<\frac{\pi_1(x)}{\pi_0(x)}
        \\
        -\infty & \text{otherwise}
    \end{cases},
    \\
    \eta_3(x)
    &:=\begin{cases}
        \sqrt{\frac{r_h(x)}{\pi_0^2(x)-\pi_1(x)\pi_0(x)r_h(x)}} & \text{if}~~r_h(x)<\frac{\pi_0(x)}{\pi_1(x)}
        \\
        \infty & \text{otherwise}
    \end{cases},
    \\
    \eta_4(x)
    &:=\begin{cases}
        \sqrt{\frac{r_l(x)}{\pi_1^2(x)-\pi_1(x)\pi_0(x)r_l(x)}} & \text{if}~~r_l(x)<\frac{\pi_1(x)}{\pi_0(x)}
        \\
        \infty & \text{otherwise}
    \end{cases}.
\end{align*}
    Then $\eta$ values in the interval
    \begin{align}
    \left[\max\{\eta_1(x),\eta_2(x)\},\min\{\eta_3(x),\eta_4(x)\}\right]\cap(-\infty,\infty)\label{eq:ok-smd}
\end{align}
imply $\textup{B}_l\leq\mu_{0c}(x)\leq\textup{B}_h$, for $c=0,1$.
\end{theorem}

\addcontentsline{toc}{subsection}{Proof of Proposition \ref{thm:ok-smd}}
\noindent
\begin{proof}[Proof of Propositions \ref{thm:ok-smd}]
Under A0-A2 and A4-SMDe, by Proposition~\ref{thm:smd-id},
$$\mu_{01}(x)=\mu_0(x)+\eta\frac{\pi_0(x)\sigma_0(x)}{\sqrt{1+\eta^2\pi_0(x)\pi_1(x)}},~~~\mu_{00}(x)=\mu_0(x)-\eta\frac{\pi_1(x)\sigma_0(x)}{\sqrt{1+\eta^2\pi_0(x)\pi_1(x)}}.$$

First, we seek conditions for $\mu_{01}(x)\geq\text{B}_l$, or equivalently,
\begin{align*}
    \eta\pi_0(x)\sigma_0(x)\geq[\text{B}_l-\mu_0(x)]\sqrt{1+\eta^2\pi_0(x)\pi_1(x)}.
\end{align*}
Because the RHS is non-positive, this inequality holds for $\eta\geq0$. For the $\eta<0$ case, we square both sides and flip the inequality, then collect terms with $\eta$ to obtain
\begin{align*}
    &\eta^2\left\{\pi_0^2(x)\sigma_0^2(x)-\pi_0(x)\pi_1(x)[\text{B}_l-\mu_0(x)]^2\right\}\leq[\text{B}_l-\mu_0(x)]^2
    \\
    \Longleftrightarrow
    &~\eta^2\Big\{\pi_0^2(x)-\pi_0(x)\pi_1(x)\underbrace{
    \frac{[\text{B}_l-\mu_0(x)]^2}{\sigma_0^2(x)}}_{\textstyle=:r_l(x)}\Big\}\leq\underbrace{
    \frac{[\text{B}_l-\mu_0(x)]^2}{\sigma_0^2(x)}}_{\textstyle=:r_l(x)}.
\end{align*}
This inequality holds if $r_l(x)\geq\frac{\pi_0(x)}{\pi_1(x)}$. Otherwise it holds if $-\sqrt{\frac{r_l(x)}{\pi_0^2(x)-\pi_0(x)\pi_1(x)r_l(x)}}\leq\eta<0$. Combining results, we have:
\begin{itemize}
    \item[1)] $\mu_{01}(x)\geq\text{B}_l$ in two cases: (i) $r_l(x)\geq\frac{\pi_0(x)}{\pi_1(x)}$; or (ii) $r_h(x)<\frac{\pi_0(x)}{\pi_1(x)}$ and $\eta\geq-\sqrt{\frac{r_l(x)}{\pi_0^2(x)-\pi_0(x)\pi_1(x)r_l(x)}}$. (This is the basis of the definition of $\eta_1(x)$.)
\end{itemize}

Next, we use similar reasoning to obtain the following:
\begin{itemize}
    \item[2)] $\mu_{00}(x)\leq\text{B}_h$ in two cases: (i) $r_h(x)\geq\frac{\pi_1(x)}{\pi_0(x)}$; or (ii) $r_h(x)<\frac{\pi_1(x)}{\pi_0(x)}$ and $\eta\geq-\sqrt{\frac{r_h(x)}{\pi_1^2(x)-\pi_0(x)\pi_1(x)r_h(x)}}$. (This is the basis of the definition of $\eta_2(x)$.)
    \item[3)] $\mu_{01}\leq\text{B}_h$ in two cases: (i) $r_h(x)\geq\frac{\pi_0(x)}{\pi_1(x)}$; or (ii) $r_h(x)<\frac{\pi_0(x)}{\pi_1(x)}$ and $\eta\leq\sqrt{\frac{r_h(x)}{\pi_0^2(x)-\pi_0(x)\pi_1(x)r_h(x)}}$. (This is the basis of the definition of $\eta_3(x)$.)
    \item[4)] $\mu_{00}(x)\geq\text{B}_l$ in two cases: (i) $r_l(x)\geq\frac{\pi_1(x)}{\pi_0(x)}$; or (ii) $r_l(x)<\frac{\pi_1(x)}{\pi_0(x)}$ and $\eta\leq\sqrt{\frac{r_l(x)}{\pi_1^2(x)-\pi_0(x)\pi_1(x)r_l(x)}}$. (This is the basis of the definition of $\eta_4(x)$.)
\end{itemize}
Combining the above four results, we obtain that the $\eta$ interval such that $\mu_{01}(x),\mu_{00}(x)\in[\text{B}_l,\text{B}_h]$ is the interval defined in Proposition~\ref{thm:ok-smd}.
\end{proof}

\section{Additional results for Section \ref{sec:rate-conditions} -- Nonparametric rate conditions}\label{appendix:rate-conditions}

\addcontentsline{toc}{subsection}{Proposition \ref{thm:rates-pi}}
\noindent
\begin{theorem}[PI-based rate conditions]\label{thm:rates-pi}
    Assume
    \begin{itemize}
        \item positivity: for some $\epsilon>0$ and all $x$ values in the support of $X$, $\P(\epsilon\leq e_1(x)<1-\epsilon)=1$,
        \item consistency: all the nuisance functions are mean squared error convergent, i.e., $||\tilde e_1(X)-e_1(X)||_2=o_p(1)$, $\tilde\pi_c(X)-\pi_c(X)||_2=o_p(1)$, $||\tilde\mu_{1c}(X)-\mu_{1c}(X)||_2=o_p(1)$, $||\tilde\mu_0(X)-\mu_0(X)||_2=o_p(1)$,
        \item bounded propensity score estimation: $||1/\tilde e_1(X)||_2=O_p(1)$, $||1/\tilde e_0(X)||_2=O_p(1)$.
    \end{itemize}
    The PI-based IF-based estimator $\hat\Delta_{c,\textup{IF}}^\textup{PI}$ is root-n consistent and asymptotically normal if the nuisance estimators satisfy the error rate conditions:
    \begin{itemize}
        \item $e\pi$-rate: $||\tilde e_1(X)-e_1(X)||_2||\tilde\pi_c(X)-\pi_c(X)||_2=o_p(n^{-1/2})$,
        \item $e\mu$-rates: $\begin{cases}
            ||\tilde e_1(X)-e_1(X)||_2||\tilde\mu_{1c}(X)-\mu_{1c}(X)||_2=o_p(n^{-1/2})
            \\
            ||\tilde e_1(X)-e_1(X)||_2||\tilde\mu_0(X)-\mu_0(X)||_2=o_p(n^{-1/2})
        \end{cases},$
        \item $\pi\mu_0$-rate: $||\tilde\pi_c(X)-\pi_c(X)||_2||\tilde\mu_0(X)-\mu_0(X)||_2=o_p(n^{-1/2})$.
    \end{itemize}
    Under these conditions, the asymptotic variance of $\hat\Delta_{c,\textup{IF}}^\textup{PI}$ is the variance of the IF of the principal causal effect $\Delta_c^\text{PI}$.
\end{theorem}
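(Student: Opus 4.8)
The plan is to follow the standard one-step / IF-based estimator analysis, specialized to the ratio form $\hat\Delta_{c,\textsc{if}}^\text{PI}=(\hat\nu_{1c,\textsc{if}}-\hat\nu_{0c,\textsc{if}}^\text{PI})/\hat\pi_{c,\textsc{if}}$, using cross-fitting so that the nuisance estimators $\tilde e_1,\tilde\pi_c,\tilde\mu_{1c},\tilde\mu_0$ are independent of the fold on which the empirical average is taken. First I would decompose, for each of the three scalar targets $\psi\in\{\pi_c,\nu_{1c},\nu_{0c}^\text{PI}\}$, the estimator as $\hat\psi_\textsc{if}-\psi=(\P_n-\P)[\varphi_\psi(O)+\psi] + (\P_n-\P)[\hat\varphi_\psi(O)-\varphi_\psi(O)] + R_n(\psi)$, where $\hat\varphi_\psi$ is the IF formula from Proposition \ref{thm:ifs-pi} with estimated nuisances plugged in, and $R_n(\psi)=\P[\hat\varphi_\psi(O)+\hat\psi]-\psi$ is the second-order remainder (here $\hat\psi$ denotes the plug-in value appearing inside $\hat\varphi_\psi$, which cancels out of the mean-zero-at-truth IF). The first term is the leading $\sqrt n$-CLT term; the second is an empirical-process ``drift'' term handled by cross-fitting plus $L_2$-consistency of the nuisances (it is $o_p(n^{-1/2})$ by a conditional-variance bound: its conditional variance given the training fold is $O(n^{-1}||\hat\varphi_\psi-\varphi_\psi||_2^2)=o_p(n^{-1})$); the third is the remainder whose rate I must control.

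The core computation is showing each remainder $R_n(\psi)$ is $o_p(n^{-1/2})$ under the stated product-rate conditions. For $\pi_c$: using the mixed-bias structure of $\varphi_{\pi_c}$ one gets $R_n(\pi_c)=\E\big[(1-\tfrac{e_1(X)}{\tilde e_1(X)})(\pi_c(X)-\tilde\pi_c(X))\big]$ (for $c=1$; symmetric for $c=0$), which is bounded by $||\tilde e_1-e_1||_2\,||\tilde\pi_c-\pi_c||_2$ via Cauchy–Schwarz and the bounded-propensity assumption, hence $o_p(n^{-1/2})$ by the $e\pi$-rate. Similarly $R_n(\nu_{1c})$ expands into a sum of two double-robust cross terms, one of the form $\E[(1-\tfrac{e_1}{\tilde e_1})\mu_{1c}(\pi_c-\tilde\pi_c)]$ and one of the form $\E[(1-\tfrac{e_1}{\tilde e_1})(\mu_{1c}-\tilde\mu_{1c})\pi_c]$ (plus a product of $\pi_c$- and $\mu_{1c}$-errors), controlled by the $e\pi$- and first $e\mu$-rates. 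And $R_n(\nu_{0c}^\text{PI})$ expands into terms $\E[(1-\tfrac{1-Z\text{-factor}}{\cdot})\cdots]$ giving products $||\tilde e_1-e_1||_2||\tilde\pi_c-\pi_c||_2$, $||\tilde e_1-e_1||_2||\tilde\mu_0-\mu_0||_2$ and $||\tilde\pi_c-\pi_c||_2||\tilde\mu_0-\mu_0||_2$, which is exactly why all three listed rate conditions appear. Each is $o_p(n^{-1/2})$, with boundedness of $1/\tilde e_z$ and of the nuisances (the latter following from consistency plus the bounded ranges, or can be assumed) absorbing the $L^\infty$ factors in the Cauchy–Schwarz steps.

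Having shown $\hat\nu_{1c,\textsc{if}}-\nu_{1c}$, $\hat\nu_{0c,\textsc{if}}^\text{PI}-\nu_{0c}^\text{PI}$, $\hat\pi_{c,\textsc{if}}-\pi_c$ are each $\P_n[\varphi_\bullet]+o_p(n^{-1/2})$ and asymptotically linear, I would apply the delta method to $g(a,b,d)=(a-b)/d$ at $(\nu_{1c},\nu_{0c}^\text{PI},\pi_c)$: $\hat\Delta_{c,\textsc{if}}^\text{PI}-\Delta_c^\text{PI}=\tfrac1{\pi_c}\{(\hat\nu_{1c,\textsc{if}}-\nu_{1c})-(\hat\nu_{0c,\textsc{if}}^\text{PI}-\nu_{0c}^\text{PI})-\Delta_c^\text{PI}(\hat\pi_{c,\textsc{if}}-\pi_c)\}+o_p(n^{-1/2})$, whose linearization is exactly $\P_n[\varphi_{\Delta_c^\text{PI}}]$ by formula (\ref{pi-if:Delta.c}). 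Then the central limit theorem gives $\sqrt n(\hat\Delta_{c,\textsc{if}}^\text{PI}-\Delta_c^\text{PI})\dto N(0,\var[\varphi_{\Delta_c^\text{PI}}(O)])$, which is the claimed conclusion. The main obstacle is the bookkeeping in the remainder expansions for $\nu_{1c}$ and especially $\nu_{0c}^\text{PI}$: one must carefully group the plug-in IF terms so that the cross-products genuinely factor into the advertised pairs of $L_2$ errors (this relies on the ``mixed bias'' / Neyman-orthogonality structure of these particular IFs), and keep track of which $L^\infty$ boundedness facts are needed to pull constants out of the Cauchy–Schwarz bounds.
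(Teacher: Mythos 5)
Your proposal is correct and follows essentially the same route as the paper: the same $T_0+T_1+T_2$ decomposition for each of $\hat\pi_{c,\textsc{if}}$, $\hat\nu_{1c,\textsc{if}}$, $\hat\nu_{0c,\textsc{if}}^\text{PI}$, with sample splitting/cross-fitting plus $L_2$-consistency killing the empirical-process term (the paper invokes Kennedy's sample-splitting lemma, which is your conditional-variance bound), Cauchy--Schwarz reducing each remainder to exactly the advertised error products, and the ratio linearization $\frac{1}{\pi_c}\{(\hat\nu_{1c}-\nu_{1c})-(\hat\nu_{0c}-\nu_{0c})-\Delta_c^\text{PI}(\hat\pi_c-\pi_c)\}$ recovering $\P_n[\varphi_{\Delta_c^\text{PI}}]$. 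Your remainder formulas match the paper's $T_2$ expressions up to sign, and your only imprecision (describing the surviving cross term in $R_n(\nu_{1c})$ as a $\pi_c$-by-$\mu_{1c}$ error product rather than the triple product that actually survives, the pure double product having mean zero) is harmless since that term is dominated by the listed products under the boundedness assumptions.
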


\addcontentsline{toc}{subsection}{Proposition \ref{thm:rates-sens}}
\noindent
\begin{theorem}[Rate conditions for sensitivity analyses]\label{thm:rates-sens}
    Assume
    \begin{itemize}
        \item positivity: for some $\epsilon>0$ and all $x$ values in the support of $X$, $\P(\epsilon\leq e_1(x)<1-\epsilon)=1$,
        \item consistency: all the nuisance functions are mean squared error convergent,
        \item bounded propensity score estimation: $||1/\tilde e_1(X)||_2=O_p(1)$, $||1/\tilde e_0(X)||_2=O_p(1)$.
    \end{itemize}
    The OR-based and GOR-based estimators $\hat\Delta_{c,\textup{IF}}^\textup{OR}$ and  $\hat\Delta_{c,\textup{IF}}^\textup{GOR}$ are root-n consistent and asymptotically normal if the nuisance estimators satisfy the error rate conditions:
    \begin{itemize}
        \item $e\pi$-rate: $||\tilde e_1(X)-e_1(X)||_2||\tilde\pi_c(X)-\pi_c(X)||_2=o_p(n^{-1/2})$,
        \item $e\mu$-rates: $\begin{cases}
            ||\tilde e_1(X)-e_1(X)||_2||\tilde\mu_{1c}(X)-\mu_{1c}(X)||_2=o_p(n^{-1/2})
            \\
            ||\tilde e_1(X)-e_1(X)||_2||\tilde\mu_0(X)-\mu_0(X)||_2=o_p(n^{-1/2})
        \end{cases},$
        \item $\pi$-rate: $||\tilde\pi_c(X)-\pi_c(X)||_2=o_p(n^{-1/4})$,
        \item $\mu_0$ rate: $||\tilde\mu_0(X)-\mu_0(X)||_2=o_p(n^{-1/4})$.
    \end{itemize}
    The MR-based IF-based estimator $\hat\Delta_{c,\textup{IF}}^\textup{MR}$ is root-n consistent and asymptotically normal if the nuisance estimators satisfy the error rate conditions:
    \begin{itemize}
        \item $e\pi$-rate: $||\tilde e_1(X)-e_1(X)||_2||\tilde\pi_c(X)-\pi_c(X)||_2=o_p(n^{-1/2})$,
        \item $e\mu$-rates: $\begin{cases}
            ||\tilde e_1(X)-e_1(X)||_2||\tilde\mu_{1c}(X)-\mu_{1c}(X)||_2=o_p(n^{-1/2})
            \\
            ||\tilde e_1(X)-e_1(X)||_2||\tilde\mu_0(X)-\mu_0(X)||_2=o_p(n^{-1/2})
        \end{cases},$
        \item $\pi\mu_0$-rate: $||\tilde\pi_c(X)-\pi_c(X)||_2||\tilde\mu_0(X)-\mu_0(X)||_2=o_p(n^{-1/2})$,
        \item $\pi$-rate: $||\tilde\pi_c(X)-\pi_c(X)||_2=o_p(n^{-1/4})$.
    \end{itemize}
    The SMDe-based IF-based estimator $\hat\Delta_{c,\textup{IF}}^\textup{SMDe}$ is root-n consistent and asymptotically normal if the nuisance estimators satisfy the error rate conditions:
    \begin{itemize}
        \item $e\pi$-rate: $||\tilde e_1(X)-e_1(X)||_2||\tilde\pi_c(X)-\pi_c(X)||_2=o_p(n^{-1/2})$,
        \item $e\mu$-rates: $\begin{cases}
            ||\tilde e_1(X)-e_1(X)||_2||\tilde\mu_{1c}(X)-\mu_{1c}(X)||_2=o_p(n^{-1/2})
            \\
            ||\tilde e_1(X)-e_1(X)||_2||\tilde\mu_0(X)-\mu_0(X)||_2=o_p(n^{-1/2})
        \end{cases}$,
        \item $e\sigma^2$-rate: $||\tilde e_1(X)-e_1(X)||_2||\tilde\sigma_0^2(X)-\sigma_0^2(X)||_2=o_p(n^{-1/2})$,
        \item $\pi$-rate: $||\tilde\pi_c(X)-\pi_c(X)||_2=o_p(n^{-1/4})$,
        \item $\mu_0$-rate: $||\tilde\mu_0(X)-\mu_0(X)||_2=o_p(n^{-1/4})$,
        \item $\sigma^2$-rate: $||\tilde\sigma_0^2(X)-\sigma_0^2(X)||_2=o_p(n^{-1/4})$.
    \end{itemize}
    Under these (respective) conditions, the asymptotic variance of the IF-based estimator is the variance of the IF of the principal causal effect.
\end{theorem}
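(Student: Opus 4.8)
The plan is to derive both propositions from the standard cross-fitted von Mises analysis of an estimator that is, in every case, a ratio of sample averages of uncentered influence functions evaluated at out-of-fold nuisance estimates: $\hat\Delta^\bullet_{c,\textsc{if}}=(\hat\nu_{1c,\textsc{if}}-\hat\nu^\bullet_{0c,\textsc{if}})/\hat\pi_{c,\textsc{if}}$, with the SMDe estimator additionally involving $\hat\xi_{c,\textsc{if}}=\hat\vartheta_\textsc{if}/\hat\pi_{c,\textsc{if}}$. Since a ratio $\hat a/\hat b$ of $\sqrt n$-consistent, asymptotically linear estimators with $\hat b\pto b\neq0$ is itself asymptotically linear with influence function $b^{-1}(\varphi_a-(a/b)\varphi_b)$ (Lemma \ref{lm:if-ratioOfMeans}), it suffices to establish, for each scalar functional $\psi\in\{\pi_c,\nu_{1c},\nu^\bullet_{0c},\vartheta\}$ with uncentered EIF $\phi_\psi(O;\eta)$, the asymptotic linearity $\hat\psi:=\P_n[\phi_\psi(O;\hat\eta)]=\psi+(\P_n-\P)[\phi_\psi(O;\eta)]+o_p(n^{-1/2})$; combining pieces then gives $\hat\Delta^\bullet_{c,\textsc{if}}-\Delta^\bullet_c=(\P_n-\P)[\varphi_{\Delta^\bullet_c}(O)]+o_p(n^{-1/2})$, and the CLT applied to the leading mean-zero, finite-variance term delivers $\sqrt n$-consistency, asymptotic normality, and asymptotic variance $\var(\varphi_{\Delta^\bullet_c}(O))$. (OR is the special case $l=0,h=1$ of GOR and needs no separate argument; the H\'ajek normalizations differ from unity by $o_p(n^{-1/2})$ and do not change the conclusion.) For each $\hat\psi$ I would use the decomposition $\hat\psi-\psi=(\P_n-\P)[\phi_\psi(\cdot;\eta)]+(\P_n-\P)[\phi_\psi(\cdot;\hat\eta)-\phi_\psi(\cdot;\eta)]+\big(\P[\phi_\psi(\cdot;\hat\eta)]-\psi\big)$.

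The middle term is handled by the usual cross-fitting device: conditional on the folds used to fit $\hat\eta$, the held-out observations are i.i.d., so this term is conditionally mean-zero with conditional variance at most $n^{-1}\|\phi_\psi(\cdot;\hat\eta)-\phi_\psi(\cdot;\eta)\|_2^2$, which is $o_p(n^{-1})$ because $\hat\eta$ is mean-square consistent (the consistency assumption) and $\phi_\psi$ is continuous in $\eta$ on the region where the propensity is bounded away from $0$ and $1$ (positivity together with $\|1/\hat e_z\|_2=O_p(1)$ keep the inverse-weight factors under control). Hence the middle term is $o_p(n^{-1/2})$ irrespective of convergence rates; all the rate conditions enter only through the drift term $R_n(\psi):=\P[\phi_\psi(\cdot;\hat\eta)]-\psi$. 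For the PI functionals I would evaluate $R_n(\psi)$ by iterated expectations, using A1 (so $C\independent Z\mid X$ and $\E[Y\mid X,Z=0]=\mu_0(X)$). This yields, in each case, a sum of bilinear forms in the nuisance errors: $R_n(\pi_c)$ reduces to a term controlled by $\|\hat e_1-e_1\|_2\|\hat\pi_c-\pi_c\|_2$ (the $e\pi$-rate); $R_n(\nu_{1c})$ to terms controlled by $\|\hat e_1-e_1\|_2\|\hat\pi_c-\pi_c\|_2$ and $\|\hat e_1-e_1\|_2\|\hat\mu_{1c}-\mu_{1c}\|_2$; and $R_n(\nu^{\textup{PI}}_{0c})$ additionally to a term $\|\hat\pi_c-\pi_c\|_2\|\hat\mu_0-\mu_0\|_2$ — together the $e\pi$-, $e\mu$- and $\pi\mu_0$-rates of Proposition \ref{thm:rates-pi}. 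Each bilinear form is bounded by Cauchy--Schwarz (positivity absorbing the inverse-propensity factors), so the stated products being $o_p(n^{-1/2})$ makes $R_n=o_p(n^{-1/2})$; these are exactly the nuisance-error products whose vanishing underlies the multiple robustness of Proposition \ref{thm:pi-multiplyrobust}.

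The new and most delicate step is $R_n$ for the sensitivity functionals, whose integrands are smooth but genuinely \emph{nonlinear} in the nuisances: $\pi_c\mu^{\textup{GOR}}_{0c}$ depends on $(\mu_0,\pi_c)$, $\pi_c\mu^{\textup{MR}}_{0c}=\pi_c\gamma_c(\pi_c)\mu_0$ on $(\mu_0,\pi_c)$, and $\vartheta(X)$ on $(\pi_1,\sigma_0^2)$. The key is that the $\epsilon$-functions in the EIFs of Proposition \ref{thm:ratio.params-ifs} (respectively $\boldsymbol\dot\vartheta_{\sigma^2},\boldsymbol\dot\vartheta_\pi$ in Proposition \ref{thm:smde-if}) are precisely the pathwise derivatives of the integrand, so that in $R_n$ the terms \emph{linear} in $(\hat\mu_0-\mu_0)$, $(\hat\pi_c-\pi_c)$ (resp. $(\hat\sigma_0^2-\sigma_0^2)$) cancel against the correction terms — this is the approximate-robustness phenomenon of Remarks \ref{rm:approximate-robustness} and \ref{rm:approximate-robustness2} — leaving a second-order Taylor remainder of order $\|\hat\pi_c-\pi_c\|_2^2+\|\hat\mu_0-\mu_0\|_2^2$ for GOR, with an extra $\|\hat\mu_0-\mu_0\|_2^2$ for SMDe arising from the squared-residual term (its cross term vanishing in expectation because $\E[Y-\mu_0\mid X,Z=0]=0$) together with $\|\hat\sigma_0^2-\sigma_0^2\|_2^2$, while for MR, $\mu_0$ enters linearly so only $\|\hat\pi_c-\pi_c\|_2^2$ and the product $\|\hat\pi_c-\pi_c\|_2\|\hat\mu_0-\mu_0\|_2$ survive. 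Whenever the propensity model is wrong there are in addition the usual $\|\hat e_z-e_z\|_2\times(\text{nuisance error})$ cross-products, giving the $e\pi$-, $e\mu$- (and $e\sigma^2$-) rates. Forcing these remainders to be $o_p(n^{-1/2})$ yields exactly the single-function $o_p(n^{-1/4})$ conditions on $\hat\pi_c$, $\hat\mu_0$, $\hat\sigma_0^2$ (which imply, hence absorb, the $\pi\mu_0$-rate where it is not listed separately), and for SMDe the conclusion then follows by combining the $\Delta^{\textup{PI}}_c$ and $\xi_c$ expansions via $\Delta^{\textup{SMDe}}_c=\Delta^{\textup{PI}}_c-\eta_c\xi_c$. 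I expect the main obstacle to be making the Taylor-remainder bounds rigorous: one must bound the second derivatives of these maps uniformly on a neighbourhood of the truth, which needs the \emph{estimated} nuisances to remain (with probability tending to one) bounded away from the relevant boundaries — $\hat\pi_c$ from $0$ and $1$, $\hat\mu_0$ from the GOR limits $l$ and $h$, and $\hat\sigma_0^2$ from $0$ — assumptions I would make explicit; the rest is the tedious but routine bookkeeping of the cross-terms generated by a misspecified propensity model.
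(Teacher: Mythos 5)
Your proposal is correct and follows essentially the same route as the paper's proof: the same three-term decomposition (CLT term, cross-fitting-controlled empirical-process term, drift term), Cauchy--Schwarz on the bilinear drift terms for the PI-type pieces, and for the sensitivity functionals the observation that the $\epsilon$-function (resp.\ $\boldsymbol\dot\vartheta$) correction terms cancel the first-order nuisance errors in the drift, leaving second-order remainders that force the $o_p(n^{-1/4})$ single-function rates, with exactly your case distinctions for GOR vs.\ MR vs.\ SMDe. The only difference is presentational: where you invoke the pathwise-derivative/von Mises cancellation principle abstractly, the paper verifies the cancellation by explicit algebra (computing $-B+c_1$, $-A+c_2$, etc.\ and showing each is itself proportional to a nuisance error), and your caveat about keeping $\tilde\pi_c$, $\tilde\mu_0$, $\tilde\sigma_0^2$ away from the relevant boundaries is a fair point that the paper handles only implicitly through its $O_p(1)$ bounds on the coefficient functions.
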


% There are some useful results.
% \begin{align}
%     \frac{\hat A}{\hat B}-\frac{A}{B}
%     &=\frac{1}{\hat B}(\hat A-A)-\frac{A}{B}\frac{1}{\hat B}(\hat B-B),
%     \\
%     \frac{\hat A}{{\hat C}^{1/2}}-\frac{A}{C^{1/2}}
%     &=\frac{1}{{\hat C}^{1/2}}(\hat A-A)-\frac{A}{C^{1/2}}\frac{1}{{\hat C}^{1/2}({\hat C}^{1/2}+C^{1/2})}(\hat C-C).
% \end{align}

\addcontentsline{toc}{subsection}{Proof of Propositions \ref{thm:rates-pi}-\ref{thm:rates-sens}: the shared parts 1-3 and 5}
\noindent
\begin{proof}[Proof of Propositions \ref{thm:rates-pi}-\ref{thm:rates-sens}: the shared parts 1-3 and 5]
\hfill

\smallskip
\noindent We want to establish conditions for
\begin{align*}
    \hat\Delta_c=\frac{\hat\nu_{1c}-\hat\nu_{0c}}{\hat\pi_c}=\frac{\P_n[\tilde\phi_{1c}(O)]-\P_n[\tilde\phi_{0c}(O)]}{\P_n[\tilde\phi_{\pi_c}(O)]}
\end{align*}
to be $\sqrt{n}$-consistent, where $\phi_{\nu_{1c}}(O)=\varphi_{\nu_{1c}}(O)+\nu_{1c}$, $\phi_{\nu_{0c}}(O)=\varphi_{\nu_{0c}}(O)+\nu_{0c}$ and $\phi_{\pi_c}(O)=\varphi_{\pi_c}(O)+\pi_c$ are the uncentered IFs of $\nu_{1c}$, $\nu_{0c}$ and $\pi_c$, respectively, and the $\tilde{}$ notation indicates that the functions are evaluated at estimated nuisance parameters. Throughout
\begin{align*}
    \phi_{\pi_c}(O)
    &:=\frac{Z}{e_1(X)}[\I(C=c)-\pi_c(X)]+\pi_c(X),
    \\
    \phi_{\nu_{1c}}(O)
    &:=\frac{Z}{e_1(X)}\I(C=c)[Y-\mu_{1c}(X)]+\frac{Z}{e_1(X)}\mu_{1c}(X)[\I(C=c)-\pi_c(X)]+\pi_c(X)\mu_{1c}(X),
\end{align*}
while
$\phi_{\nu_{0c}}(O)$ takes on different forms specific to the PI-based main analysis and the sensitivity analyses.

\medskip

\noindent
\underline{Part 1 (setting the stage)}: To start, we assume positivity, consistency and bounded propensity score estimation as stated in the two propositions.
% \begin{itemize}
%     \item positivity: the true propensity score is bounded away from 0 and 1 with probability 1, i.e., for some $\epsilon>0$, $\P(\epsilon<e_1(X)<1-\epsilon)=1$
%     \item consistency: all the nuisance functions are consistently estimated in the sense that they converge in $L_2(\P)$ norm, e.g., $||\tilde e_1(X)-e_1(X)||_2=o_p(1)$
%     \item bounded propensity score estimation: $||1/\tilde e_1(X)||_2=O_p(1)$, $||1/\tilde e_0(X)||_2=O_p(1)$
% \end{itemize} 

Write
\begin{align*}
    \hat\Delta_c-\Delta_c
    &=\frac{\hat\nu_{1c}-\hat\nu_{0c}}{\hat\pi_c}-\Delta_c
    \\
    &=\frac{1}{\hat\pi_c}
    [(\hat\nu_{1c}-\hat\nu_{0c})-(\nu_{1c}-\nu_{0c})]+\Delta_c\frac{\pi_c}{\hat\pi_c}-\Delta_c
    \\
    &=\frac{1}{\hat\pi_c}[(\hat\nu_{1c}-\nu_{1c})-(\hat\nu_{0c}-\nu_{0c})-\Delta_c(\hat\pi_c-\pi_c)]
    \\
    &=\left[\frac{1}{\hat\pi_c}-\frac{1}{\pi_c}+\frac{1}{\pi_c}\right][(\hat\nu_{1c}-\nu_{1c})-(\hat\nu_{0c}-\nu_{0c})-\Delta_c(\hat\pi_c-\pi_c)]
    \\
    &=\frac{1}{\pi_c}[(\hat\nu_{1c}-\nu_{1c})-(\hat\nu_{0c}-\nu_{0c})-\Delta_c(\hat\pi_c-\pi_c)]
    -\frac{(\hat\pi_c-\pi_c)}{\hat\pi_c\pi_c}[(\hat\nu_{1c}-\nu_{1c})-(\hat\nu_{0c}-\nu_{0c})-\Delta_c(\hat\pi_c-\pi_c)]
    \\
    &=\frac{1}{\pi_c}[(\hat\nu_{1c}-\nu_{1c})-(\hat\nu_{0c}-\nu_{0c})-\Delta_c(\hat\pi_c-\pi_c)]+
    \\
    &~~~~
    -\frac{1}{\hat\pi_c\pi_c}(\hat\pi_c-\pi_c)(\hat\nu_{1c}-\nu_{1c})+\frac{1}{\hat\pi_c\pi_c}(\hat\pi_c-\pi_c)(\hat\nu_{0c}-\nu_{0c})
    +\frac{\Delta_c}{\hat\pi_c\pi_c}(\hat\pi_c-\pi_c)^2.
\end{align*}
This means that if $\hat\nu_{1c}$, $\hat\nu_{0c}$ and $\hat\pi_c$ are $\sqrt{n}$-consistent, then $\hat\Delta_c$ is $\sqrt{n}$-consistent, because then the first term above is $O_p(n^{-1/2})$ while the other terms are $O_p(n)$ therefore $o_p(n^{-1/2})$.

\medskip

\noindent
\underline{Part 2 ($\sqrt{n}$-consistency of $\hat\pi_c$)}:
We now derive conditions for $\hat\pi_c$ to be $\sqrt{n}$-consistent. As the results are well-known, this is more for completeness and as a review of the relevant theory.

We apply the theory in \cite{kennedy2023SemiparametricDoublyRobust}, decomposing the error in estimating $\pi_c$ into the sum of three terms:
\begin{align*}
    \hat\pi_c-\pi_c
    &=\P_n[\tilde\phi_{\pi_c}(O)]-\P[\phi_{\pi_c}(O)]
    \\
    &=
    \underbrace{(\P_n-\P)[\phi_{\pi_c}(O)]}_{\textstyle T_0}+
    \underbrace{(\P_n-\P)[\tilde\phi_{\pi_c}(O)-\phi_{\pi_c}(O)]}_{\textstyle T_1}+
    \underbrace{\P[\tilde\phi_{\pi_c}(O)-\phi_{\pi_c}(O)]}_{\textstyle T_2}.
\end{align*}

$T_0$ is the difference between the sample average and the population mean of a fixed function of data (where this function has finite variance under the assumption of bounded propensity score), so we can invoke the Central Limit Theorem and found this term to be $O_p(n^{-1/2})$.

% By the Central Limit Theorem, if $\var(\phi_{\pi_c}(O))<\infty$,
% \begin{align*}
%     \sqrt{n}T_0\dto N(0,\var(\phi_{\pi_c}(O)).
% \end{align*}
% The condition $\var(\phi_{\pi_c}(O))<\infty$ is satisfied because $\phi_{\pi_c}(O)$ is bounded under the treatment assignment positivity assumption A2.

% To consider the remaining two terms, note that
Writing
\begin{align*}
    \tilde\phi_{\pi_c}(O)
    &=\overbrace{\frac{Z}{e_1(X)}\left\{1-\frac{1}{\tilde e_1(X)}[\tilde e_1(X)-e_1(X)]\right\}}^{Z/e_1(X)}\overbrace{\{[\I(C=c)-\pi_c(X)]-[\tilde\pi_c(X)-\pi_c(X)]\}}^{\I(C=c)-\tilde\pi_c(X)}+\overbrace{\{\pi_c(X)+[\tilde\pi_c(X)-\pi_c(X)]\}}^{\tilde\pi_c(X)},
\end{align*}
we obtain
\begin{align}
    \tilde\phi_{\pi_c}(O)-\phi_{\pi_c}(O)
    &=[\tilde e_1(X)-e_1(X)]\times\left\{-\frac{Z}{e_1(X)}\frac{1}{\tilde e_1(X)}[\I(C=c)-\pi_c(X)]\right\}+\nonumber
    \\
    &~~~~[\tilde\pi_c(X)-\pi_c(X)]\times\left\{1-\frac{Z}{e_1(X)}\right\}+\nonumber
    \\
    &~~~~[\tilde e_1(X)-e_1(X)][\tilde\pi_c(X)-\pi_c(X)]\times\left\{\frac{Z}{e_1(X)}\frac{1}{\tilde e_1(X)}\right\}.\label{eq:error-phi_pi.c}
\end{align}

By triangle inequality and the fact that $\frac{1}{\tilde e_1(X)}=O_p(1)$, (\ref{eq:error-phi_pi.c}) implies
\begin{align*}
    ||\tilde\phi_{\pi_c}(O)-\phi_{\pi_c}(O)||_2\leq \tilde D||\tilde e_1(X)-\hat e(X)||_2+\tilde D||\tilde\pi_c(X)-\pi_c(X)||_2,
\end{align*}
where $\tilde D=O_p(1)$. We will $\tilde D$ as generic notation for a quantity that is $O_p(1)$ that may take different values in different places.
The sample splitting lemma in \cite{kennedy2023SemiparametricDoublyRobust} provides that: If $\P_n$ is the empirical distribution from $(O_1,\dots,O_n)$ and $\hat f$ is estimated from $(O_{n+1},\dots,O_N)$ where these two samples are independent, then $(\P_n-\P)(\hat f-f)=O_p(||\hat f-f||_2/\sqrt{n})$.
This means if we use sample splitting (or cross fitting), then the empirical process term $T_1$ vanishes fast enough,
\begin{align*}
    T_1=O_p\left(||\tilde\phi_{\pi_c}(O)-\phi_{\pi_c}(O)||_2/\sqrt{n}\right)=O_p(o_p(1)/\sqrt{n})=o_p(n^{-1/2}).
\end{align*}
% To make use of this lemma to ensure all the $T_1$ terms (here and later in the proof) are $o_p(n^{-1/2})$, we will use sampling splitting across the board.

In addition, (\ref{eq:error-phi_pi.c}) implies
\begin{align*}
    T_2=\P\left\{\frac{1}{\tilde e_1(X)}[\tilde e_1(X)-e_1(X)][\tilde\pi_c(X)-\pi_c(X)]\right\},
\end{align*}
so
\begin{align*}
    |T_2|
    &\leq \tilde D\big|\P\big([\tilde e_1(X)-e_1(X)][\tilde\pi_c(X)-\pi_c(X)]\big)\big| 
    \\
    &\leq \tilde D||[\tilde e_1(X)-e_1(X)][\tilde\pi_c(X)-\pi_c(X)]||_2 & (\text{Jensen's inequality})
    \\
    &\leq \tilde D||\tilde e_1(X)-e_1(X)||_2||\tilde\pi_c(X)-\pi_c(X)||_2. & (\text{Cauchy-Schwarz inequality})
\end{align*}
Then $T_2=o_p(n^{-1/2})$ if the product of the two estimation errors on the RHS is $o_p(n^{-1/2})$. This is satisfied, for example, if $||\tilde e_1(X)-e_1(X)||_2=O_p(n^{-1/(2p)})$, $||\tilde\pi_c(X)-\pi_c(X)||_2=O_p(n^{1/(2q)})$ and $\frac{1}{p}+\frac{1}{q}>1$.

To sum up, we have added to the initial conditions the following
\begin{itemize}
    \item sample splitting (or cross fitting)
    \item $e\pi$-rate: $||\tilde e_1(X)-e_1(X)||_2||\tilde\pi_c(X)-\pi_c(X)||_2=o_p(n^{-1/2})$.
\end{itemize}
% Under these conditions, 
% \begin{align*}
%     \hat\pi_c-\pi_c=(\P_n-\P)[\phi_{\pi_c}(O)]+o_p(n^{-1/2}).
% \end{align*}

\medskip

As sampling splitting is the general technique to take care of the empirical process term, we will use sampling splitting across the board, and for the rest of the proof (dealing with $\hat\nu_{1c}$ and $\hat\nu_{0c}$) will only focus on the remainder term.

\medskip

\noindent
\underline{Part 3 ($\sqrt{n}$-consistency of $\hat\nu_{1c}$)}: We now turn to $\hat\nu_{1c}$.
\begin{align*}
    \hat\nu_{1c}-\nu_{1c}
    &=\P_n[\tilde\phi_{\nu_{1c}}(O)]-\P[\phi_{\nu_{1c}}(O)]
    \\
    &=\underbrace{(\P_n-P)[\phi_{\nu_{1c}}(O)]}_{T_0}+\underbrace{(\P_n-\P)[\tilde\phi_{\nu_{1c}}(O)-\phi_{\nu_{1c}}(O)]}_{T_1}+\underbrace{\P[\tilde\phi_{\nu_{1c}}(O)-\phi_{\nu_{1c}}(O)]}_{T_2}.
\end{align*}
From this point, to ease notation, we mostly drop the $(X)$ notation from $e_z(X)$, $\pi_c(X)$, $\mu_{1c}(X)$, $\mu_0(X)$ and $\tilde e_z(X)$, $\tilde\pi_c(X)$, $\tilde\mu_{1c}(X)$, $\tilde\mu_0(X)$. It's important to note that in the abbreviated notation here, $\pi_c$ stands for the conditional probability $\pi_c(X):=\P(C=c\mid X)$, not the marginal $\P(C=c)$.
\begin{align*}
    \tilde\phi_{\nu_{1c}}(O)-\phi_{\nu_{1c}}(O)
    &=\underbrace{\frac{Z}{\tilde e_1}\I(C=c)(Y-\tilde\mu_{1c})}_{(*)}+\underbrace{\frac{Z}{\tilde e_1}\tilde\mu_{1c}[\I(C=c)-\tilde\pi_c]}_{(**)}+\underbrace{\tilde\pi_c\tilde\mu_{1c}}_{(***)}+
    \\
    &~~~~~~-\left[\frac{Z}{e_1}\I(C=c)(Y-\mu_{1c})+\frac{Z}{e_1}\mu_{1c}[\I(C=c)-\pi_c]+\pi_c\mu_{1c}\right].
\end{align*}
With
\begin{align*}
    (*)
    &=\frac{Z}{e_1}\left[1-\frac{1}{\tilde e_1}(\tilde e_1-e_1)\right]\I(C=c)[(Y-\mu_{1c})-(\tilde\mu_{1c}-\mu_{1c})],
    \\
    (**)
    &=\frac{Z}{e_1}\left[1-\frac{1}{\tilde e_1}(\tilde e_1-e_1)\right][\mu_{1c}+(\tilde\mu_{1c}-\mu_{1c})]\{[\I(C=c)-\pi_c]-(\tilde\pi_c-\pi_c)\},
    \\
    (***)
    &=[\pi_c+(\tilde\pi_c-\pi_c)][\mu_{1c}+(\tilde\mu_{1c}-\mu_{1c})],
\end{align*}
we obtain
\begin{align}
    \tilde\phi_{\nu_{1c}}(O)-\phi_{\nu_{1c}}(O)
    &=(\tilde e_1-e_1)\times\left\{-\frac{Z}{e_1}\frac{1}{\tilde e_1}\I(C=c)(Y-\mu_{1c})-\frac{Z}{e_1}\frac{1}{\tilde e_1}\mu_{1c}[\I(C=c)-\pi_c]\right\}+\nonumber
    \\
    &~~~~(\tilde\pi_c-\pi_c)\times\left\{\left(1-\frac{Z}{e_1}\right)\mu_{1c}\right\}+\nonumber
    \\
    &~~~~(\tilde\mu_{1c}-\mu_{1c})\times\left\{\left(1-\frac{Z}{e_1}\right)\pi_c\right\}+\nonumber
    \\
    &~~~~(\tilde e_1-e_1)(\tilde\pi_c-\pi_c)\times
    \left\{\frac{Z}{e_1}\frac{1}{\tilde e_1}\mu_{1c}\right\}+\nonumber
    \\
    &~~~~(\tilde e_1-e_1)(\tilde\mu_{1c}-\mu_{1c})\times
    \left\{\frac{Z}{e_1}\frac{1}{\tilde e_1}\pi_c\right\}+\nonumber
    \\
    &~~~~(\tilde\pi_c-\pi_c)(\tilde\mu_{1c}-\mu_{1c})\times
    \left\{1-\frac{Z}{e_1}\right\}+\nonumber
    \\
    &~~~~(\tilde e_1-e_1)(\tilde\pi_c-\pi_c)(\tilde\mu_{1c}-\mu_{1c})\times
    \left\{\frac{Z}{e_1}\frac{1}{\tilde e_1}\right\},\label{eq:error-phi_1c}
\end{align}
and
\begin{align*}
    T_2=\P\left\{(\tilde e_1-e_1)(\tilde\pi_c-\pi_c)\frac{\mu_{1c}}{\tilde e_1}+(\tilde e_1-e_1)(\tilde\mu_{1c}-\mu_{1c})\frac{\pi_c}{\tilde e_1}+(\tilde e_1-e_1)(\tilde\pi_c-\pi_c)(\tilde\mu_{1c}-\mu_{1c})\frac{1}{\tilde e_1}\right\}.
\end{align*}
Using similar reasoning as above, we have
\begin{align*}
    |T_2|\leq\tilde D||\tilde e_1-e_1||_2||\tilde\pi_c-\pi_c||_2+\tilde D||\tilde e_1-e_1||_2||\tilde\mu_{1c}-\mu_{1c}||_2,
\end{align*}
so we need both of the terms on the RHS to be $o_p(n^{-1/2})$. 
Thus we have added a rate condition:
\begin{itemize}
    \item $e\mu_{1c}$-rate: $||\tilde e_1-e_1||_2||\tilde\mu_{1c}-\mu_{1c}||_2=o_p(n^{-1/2})$
\end{itemize}

\medskip

\noindent
\underline{Part 4 ($\sqrt{n}$-consistency of $\hat\nu_{0c}$)}: This part of the proof is unique to the form of $\nu_{0c}$, and thus will be presented separately for each of the four propositions shortly.

\medskip

\noindent
\underline{Part 5 (asymptotic distribution of $\hat\Delta_c$)}: Under the combination of all the conditions,
\begin{align*}
    \hat\Delta_c-\Delta_c
    &=(\P_n-\P)\left\{\frac{1}{\pi_c}\left[\phi_{\nu_{1c}}(O)-\phi_{\nu_{0c}}(O)-\Delta_c\phi_{\pi_c}(O)\right]\right\}+o_p(n^{-1/2})
    \\
    &=(\P_n-\P)[\varphi_{\Delta_c}(O)]+o_p(n^{-1/2}),
\end{align*}
and by the Central Limit Theorem and Slusky's lemma,
\begin{align*}
    \sqrt{n}(\hat\Delta_c-\Delta_c)\dto\mathrm{N}(0,\var(\varphi_{\Delta_c}(O)).
\end{align*}
The application of the CLT is based on the condition that $\var(\varphi_{\Delta_c}(O)<\infty$, which is satisfied under bounded propensity score estimation.

% \medskip
% The collection of the conditions from this partial proof:
% \begin{itemize}
%     \item positivity: for some $\epsilon>0$, $\P(\epsilon<e_1(X)<1-\epsilon)=1$
%     \item sample splitting (or cross fitting)
%     \item bounded estimation: $1/\tilde e_1=O_p(1)$ (IS THIS IMPLIED BY POSITIVITY AND CONSISTENCY?)
%     \item consistency: $||\tilde e_1(X)-e_1(X)||_2$, $||\tilde\pi_c(X)-\pi_c(X)||_2$ and $||\tilde\mu_{1c}(X)-\mu_{1c}(X)||_2$ are all $o_p(1)$
%     \item rates for error products: $||\tilde e_1(X)-e_1(X)||_2||\tilde\pi_c(X)-\pi_c(X)||_2$ and $||\tilde e_1(X)-e_1(X)||_2||\tilde\mu_{1c}(X)-\mu_{1c}(X)||_2$ are both $o_p(n^{-1/2})$
% \end{itemize}
\end{proof}

\bigskip

\addcontentsline{toc}{subsection}{Proof of Proposition \ref{thm:rates-pi}: the unique part 4}
\noindent
\begin{proof}[Proof of Proposition \ref{thm:rates-pi}: the unique part 4]

\begin{align*}
    \hat\nu_{0c}^\text{PI}-\nu_{0c}^\text{PI}
    &=\P_n[\tilde\phi_{\nu_{0c}}^\text{PI}(O)]-\P[\phi_{\nu_{0c}}^\text{PI}(O)]
    \\
    &=
    \underbrace{(\P_n-\P)[\phi_{\nu_{0c}}^\text{PI}(O)]}_{\textstyle T_0}+
    \underbrace{(\P_n-\P)[\tilde\phi_{\nu_{0c}}^\text{PI}(O)-\phi_{\nu_{0c}}^\text{PI}(O)]}_{\textstyle T_1}+
    \underbrace{\P[\tilde\phi_{\nu_{0c}}^\text{PI}(O)-\phi_{\nu_{0c}}^\text{PI}(O)]}_{\textstyle T_2}
\end{align*}
\begin{align*}
    \tilde \phi_{\nu_{0c}^\text{PI}}(O)-\phi_{\nu_{0c}^\text{PI}}(O)
    &=
    \underbrace{\frac{1-Z}{\tilde e_0}\tilde\pi_c[Y-\tilde\mu_0]}
    _{(*)}+
    \underbrace{\frac{Z}{\tilde e_1}\tilde\mu_0[\I(C=c)-\tilde\pi_c]}
    _{(**)}+
    \underbrace{\tilde\pi_c\tilde\mu_0}
    _{(***)}
    +
    \\
    &~~~~~~-\left[\frac{1-Z}{e_0}\pi_c[Y-\mu_0]+\frac{Z}{e_1}\mu_0[\I(C=c)-\pi_c]+\pi_c\mu_0\right].
\end{align*}
With
\begin{align*}
    (*)
    &=\frac{1-Z}{e_0}\left[1+\frac{1}{\tilde e_0}(\tilde e_1-e_1)\right][\pi_c+(\tilde\pi_c-\pi_c)][(Y-\mu_0)-(\tilde\mu_0-\mu_0)],
    \\
    % &=\frac{1-Z}{e_0}\pi_c(Y-\mu_0)+
    % \\
    % &~~~~\frac{1-Z}{e_0}
    % \left\{
    % \frac{1}{\tilde e_0}\pi_c(Y-\mu_0)(\tilde e_1-e_1)+
    % (Y-\mu_0)(\tilde\pi_c-\pi_c)-\pi_c(\tilde\mu_0-\mu_0)+
    % \right.
    % \\
    % &~~~~~~~~~~~~~~\left.
    % \frac{1}{\tilde e_0}(Y-\mu_0)(\tilde e_1-e_1)(\tilde\pi_c-\pi_c)
    % -\frac{1}{\tilde e_0}\pi_c(\tilde e_1-e_1)(\tilde\mu_0-\mu_0)
    % -(\tilde\pi_c-\pi_c)(\tilde\mu_0-\mu_0)+
    % \right.
    % \\
    % &~~~~~~~~~~~~~~\left.
    % -\frac{1}{\tilde e_0}(\tilde e_1-e_1)(\tilde\pi_c-\pi_c)(\tilde\mu_0-\mu_0)
    % \right\},
    % \\
    %
    (**)
    &=\frac{Z}{e_1}\left[1-\frac{1}{\tilde e_1}(\tilde e_1-e_1)\right][\mu_0+(\tilde\mu_0-\mu_0)]\{[\I(C=c)-\pi_c]-(\tilde\pi_c-\pi_c)\},
    \\
    % &=\frac{Z}{e_1}\mu_0[\I(C=c)-\pi_c]+
    % \\
    % &~~~~\frac{Z}{e_1}
    % \left\{
    % -\frac{1}{\tilde e_1}\mu_0[\I(C=c)-\pi_c](\tilde e_1-e_1)-\mu_0(\tilde\pi_c-\pi_c)+[\I(C=c)-\pi_c](\tilde\mu_0-\mu_0)+
    % \right.
    % \\
    % &~~~~~~~~~~\left.
    % \frac{1}{\tilde e_1}\mu_0(\tilde e_1-e_1)(\tilde\pi_c-\pi_c)
    % -\frac{1}{\tilde e_1}[\I(C=c)-\pi_c](\tilde e_1-e_1)(\tilde\mu_0-\mu_0)-(\tilde\pi_c-\pi_c)(\tilde\mu_0-\mu_0)+
    % \right.
    % \\
    % &~~~~~~~~~~\left.
    % \frac{1}{\tilde e_1}(\tilde e_1-e_1)(\tilde\pi_c-\pi_c)(\tilde\mu_0-\mu_0)
    % \right\},
    % \\
    %
    (***)
    &=[\pi_c+(\tilde\pi_c-\pi_c)][\mu_0+(\tilde\mu_0-\mu_0)]
    % \\
    % &=\pi_c\mu_0+
    % \\
    % &~~~~\mu_0(\tilde\pi_c-\pi_c)+\pi_c(\tilde\mu_0-\mu_0)+(\tilde\pi_c-\pi_c)(\tilde\mu_0-\mu_0)
    ,
\end{align*}
we obtain
\begin{align}
    \tilde \phi_{\nu_{0c}^\text{PI}}(O)-\phi_{\nu_{0c}^\text{PI}}(O)
    &=
    (\tilde e_1-e_1)\times
    \left\{\frac{1-Z}{e_0}\frac{1}{\tilde e_0}\pi_c(Y-\mu_0)-\frac{Z}{e_1}\frac{1}{\tilde e_1}\mu_0[\I(C=c)-\pi_c]\right\}
    +\nonumber
    \\
    &~~~~
    (\tilde\pi_c-\pi_c)\times
    \left\{\frac{1-Z}{e_0}(Y-\mu_0)+\left(1-\frac{Z}{e_1}\right)\mu_0\right\}
    +\nonumber
    \\
    &~~~~
    (\tilde\mu_0-\mu_0)\times
    \left\{\frac{Z}{e_1}[\I(C=c)-\pi_c]+\left(1-\frac{1-Z}{e_0}\right)\pi_c\right\}
    +\nonumber
    \\
    &~~~~
    (\tilde e_1-e_1)(\tilde\pi_c-\pi_c)\times
    \left\{\frac{1-Z}{e_0}\frac{1}{\tilde e_0}(Y-\mu_0)+\frac{Z}{e_1}\frac{1}{\tilde e_1}\mu_0\right\}
    +\nonumber
    \\
    &~~~~
    (\tilde e_1-e_1)(\tilde\mu_0-\mu_0)\times
    \left\{-\frac{1-Z}{e_0}\frac{1}{\tilde e_0}\pi_c-\frac{Z}{e_1}\frac{1}{\tilde e_1}[\I(C=c)-\pi_c]\right\}
    +\nonumber
    \\
    &~~~~
    (\tilde\pi_c-\pi_c)(\tilde\mu_0-\mu_0)\times
    \left\{1-\frac{1-Z}{e_0}-\frac{Z}{e_1}\right\}+\nonumber
    \\
    &~~~~
    (\tilde e_1-e_1))(\tilde\pi_c-\pi_c)(\tilde\mu_0-\mu_0)\times
    \left\{\frac{Z}{e_1}\frac{1}{\tilde e_1}-\frac{1-Z}{e_0}\frac{1}{\tilde e_0}\right\}
    .\label{eq:error-phi_PI}
\end{align}
% Assume $\frac{1}{\tilde e_1(X)},\frac{1}{\tilde e_0(X)}=O_p(1)$. Then by triangle inequality (\ref{eq:error-phi_PI}) implies
% \begin{align*}
%     ||\tilde \phi_{\nu_{0c}^\text{PI}}(O)-\phi_{\nu_{0c}^\text{PI}}(O)||_2
%     &\leq \tilde D||\tilde e_1(X)-e_1(X)||_2+\tilde D||\tilde\pi_c(X)-\pi_c(X)||_2+\tilde D||\tilde\mu_0(X)-\mu_0(X)||_2
% \end{align*}
% where $\tilde D=O_p(1)$. 
% Then if sample splitting is used, then $T_1=o_p(n^{-1/2})$.
% 
% In addition, (\ref{eq:error-phi_PI}) implies
\begin{align*}
    T_2
    &=\P\left[(\tilde e_1-e_1)(\tilde\pi_c-\pi_c)\frac{\mu_0}{\tilde e_1}-(\tilde e_1-e_1)(\tilde\mu_0-\mu_0)\frac{\pi_c}{\tilde e_0}-(\tilde\pi_c-\pi_c)(\tilde\mu_0-\mu_0)+\right.
    \\
    &~~~~~~~~~
    \left.+(\tilde e_1-e_1)(\tilde\pi_c-\pi_c)(\tilde\mu_0-\mu_0)\left(\frac{1}{\tilde e_1}-\frac{1}{\tilde e_0}\right)\right],
\end{align*}
so
\begin{align*}
    |T_2|
    &\leq \tilde D||\tilde e_1(X)-e_1(X)||_2||\tilde\pi_c(X)-\pi_c(X)||_2+
    \\
    &~~~~\tilde D||\tilde e_1(X)-e_1(X)||_2||\tilde\mu_0(X)-\mu_0(X)||_2+
    \\
    &~~~~\tilde D||\tilde\pi_c(X)-\pi_c(X)||_2||\tilde\mu_0(X)-\mu_0(X)||_2
\end{align*}
where $\tilde D=O_p(1)$. We need the terms on the RHS to be $o_p(n^{-1/2})$.

% Again, sample splitting lets $T_1=o_p(n^{-1/2})$. For $T_2$ to be $o_p(n^{-1/2})$, we impose the additional rate conditions:
% \begin{itemize}
%     \item $e\mu_0$-rate: $||\tilde e_1(X)-e_1(X)||_2||\tilde\mu_0(X)-\mu_0(X)||_2=o_p(n^{-1/2})$
%     \item $\pi\mu_0$-rate: $||\tilde\pi_c(X)\pi_c(X)||_2||\tilde\mu_0(X)-\mu_0(X)||_2=o_p(n^{-1/2})$
% \end{itemize}

% If $e_1(X),\pi_c(X),\mu_0(X)$ are estimated such that $||\tilde e_1(X)-e_1(X)||_2=O_\P(n^{-1/(2p)})$, $||\tilde\pi_c(X)-\pi_c(X)||_2=O_\P(n^{-1/(2q)})$, $||\tilde\mu_0(X)-\mu_0(X)||_2=O_\P(n^{-1/(2r)})$, and $\frac{1}{p}+\frac{1}{q}>1$, $\frac{1}{p}+\frac{1}{r}>1$, $\frac{1}{q}+\frac{1}{r}>1$, then $T_2=o_p(n^{-1/2})$.

\medskip

In conclusion, we have established the following conditions for $\hat\Delta_c^\text{PI}$ to be $\sqrt{n}$-consistent for $\Delta_c^\text{PI}$: positivity, sample splitting, consistency, bounded propensity score estimation, plus the rate conditions
\begin{itemize}
    \item $e\pi$-rate: $||\tilde e_1(X)-e_1(X)||_2||\tilde\pi_c(X)-\pi_c(X)||_2=o_1(n^{-1/2})$
    \item $e\mu$-rates: $\begin{cases}
        ||\tilde e_1(X)-e_1(X)||_2||\tilde\mu_{1c}(X)-\mu_{1c}(X)||_2=o_1(n^{-1/2})
        \\
        ||\tilde e_1(X)-e_1(X)||_2||\tilde\mu_0(X)-\mu_0(X)||_2=o_1(n^{-1/2})
    \end{cases}$
    \item $\pi\mu_0$-rate: $||\tilde\pi_c(X)-\pi_c(X)||_2||\tilde\mu_0(X)-\mu_0(X)||_2=o_1(n^{-1/2})$
\end{itemize}
\end{proof}

\addcontentsline{toc}{subsection}{Proof of Proposition \ref{thm:rates-sens}: the unique parts 4}
\noindent
\begin{proof}[Proof of Proposition \ref{thm:rates-sens}: the unique parts 4]

\medskip
\noindent
\underline{Part 4 for $\hat\Delta_{c,\text{IF}}^\text{OR}$ and $\hat\Delta_{c,\text{IF}}^\text{GOR}$}

\smallskip
\noindent
The proof is basically the same for $\hat\Delta_{c,\text{IF}}^\text{OR}$ and $\hat\Delta_{c,\text{IF}}^\text{GOR}$. We present it for $\hat\Delta_{c,\text{IF}}^\text{OR}$, which is simpler.

\begin{align*}
    \tilde\phi_{\nu_{0c}}^\text{OR}-\phi_{\nu_{0c}}^\text{OR}
    &=
    \underbrace{\frac{1-Z}{\tilde e_0}\overbrace{\left(\frac{1}{2}-\frac{\tilde\alpha_c}{2\tilde\beta_c}+\frac{\rho_c\tilde\pi_c}{\tilde\beta_c}\right)}^{\tilde A}(Y-\tilde \mu_0)}_{(*)}
    +\underbrace{\frac{Z}{\tilde e_1}\overbrace{\left(\frac{1}{2}-\frac{\tilde\alpha_c}{2\tilde\beta_c}+\frac{\rho_c\tilde\mu_0}{\tilde\beta_c}\right)}^{\tilde B}[\I(C=c)-\tilde\pi_c]}_{(**)}
    +\underbrace{\overbrace{\frac{\tilde\alpha_c-\tilde\beta_c}{2(\rho_c-1)}}^{\tilde C}}_{(***)}+
    \\
    &~~~~\left[\frac{1-Z}{ e_0}\left(\frac{1}{2}-\frac{\alpha_c}{2\beta_c}+\frac{\rho_c\pi_c}{\beta_c}\right)(Y-\mu_0)
    +\frac{Z}{ e_1}\left(\frac{1}{2}-\frac{\alpha_c}{2\beta_c}+\frac{\rho_c\mu_0}{\beta_c}\right)[\I(C=c)-\pi_c]+\frac{\alpha_c-\beta_c}{2(\rho_c-1)}\right],
\end{align*}
where
\begin{align*}
    \tilde\phi_{\nu_{0c}}^\text{OR}-\phi_{\nu_{0c}}^\text{OR}
    &=
    \underbrace{\frac{1-Z}{\tilde e_0}\overbrace{\left(\frac{1}{2}-\frac{\tilde\alpha_c}{2\tilde\beta_c}+\frac{\rho_c\tilde\pi_c}{\tilde\beta_c}\right)}^{\tilde A}(Y-\tilde \mu_0)}_{(*)}
    +\underbrace{\frac{Z}{\tilde e_1}\overbrace{\left(\frac{1}{2}-\frac{\tilde\alpha_c}{2\tilde\beta_c}+\frac{\rho_c\tilde\mu_0}{\tilde\beta_c}\right)}^{\tilde B}[\I(C=c)-\tilde\pi_c]}_{(**)}
    +\underbrace{\overbrace{\frac{\tilde\alpha_c-\tilde\beta_c}{2(\rho_c-1)}}^{\tilde C}}_{(***)}+
    \\
    &~~~~\left[\frac{1-Z}{ e_0}\left(\frac{1}{2}-\frac{\alpha_c}{2\beta_c}+\frac{\rho_c\pi_c}{\beta_c}\right)(Y-\mu_0)
    +\frac{Z}{ e_1}\left(\frac{1}{2}-\frac{\alpha_c}{2\beta_c}+\frac{\rho_c\mu_0}{\beta_c}\right)[\I(C=c)-\pi_c]+\frac{\alpha_c-\beta_c}{2(\rho_c-1)}\right],
\end{align*}
and
\begin{align*}
    \alpha_c=(\rho_c-1)(\pi_c+\mu_0)+1,~~~\beta_c=\sqrt{\alpha_c^2-4\pi_c\mu_0\rho_c(\rho_c-1)}.
\end{align*}
To avoid getting lost in the details, we will use a trick. Note that we can write 
\begin{align*}
    \tilde A-A&=a_1(\tilde\pi_c-\pi_c)+a_2(\tilde\mu_0-\mu_0)+a_3(\tilde\pi_c-\pi_c)(\tilde\mu_0-\mu_0),
    \\
    \tilde B-B&=b_1(\tilde\pi_c-\pi_c)+b_2(\tilde\mu_0-\mu_0)+b_3(\tilde\pi_c-\pi_c)(\tilde\mu_0-\mu_0),
    \\
    \tilde C-C&=c_1(\tilde\pi_c-\pi_c)+c_2(\tilde\mu_0-\mu_0)+c_3(\tilde\pi_c-\pi_c)(\tilde\mu_0-\mu_0),
\end{align*}
where we put off deriving the functions $a_1,a_2,b_1,b_2,c_1,c_2$ until we need them. Then we have
\begin{align*}
    (*)
    &=
    \frac{1-Z}{e_0}\left[1+\frac{1}{\tilde e_0}(\tilde e_1-e_1)\right]
    [A+a_1(\tilde\pi_c-\pi_c)+a_2(\tilde\mu_0-\mu_0)+a_3(\tilde\pi_c-\pi_c)(\tilde\mu_0-\mu_0)]
    [(Y-\mu_0)-(\tilde\mu_0-\mu_0)],
    \\
    (**)
    &=
    \frac{Z}{e_1}\left[1-\frac{1}{\tilde e_1}(\tilde e_1-e_1)\right]
    [B+b_1(\tilde\pi_c-\pi_c)+b_2(\tilde\mu_0-\mu_0)+b_3(\tilde\pi_c-\pi_c)(\tilde\mu_0-\mu_0)]
    \{[\I(C=c)-\pi_c]-(\tilde\pi_c-\pi_c)\},
    \\
    (***)
    &=
    C+c_1(\tilde\pi_c-\pi_c)+c_2(\tilde\mu_0-\mu_0)+c_3(\tilde\pi_c-\pi_c)(\tilde\mu_0-\mu_0).
\end{align*}
Based on these, we can obtain $\tilde\phi_{\nu_{0c}}^\text{OR}-\phi_{\nu_{0c}}^\text{OR}$ to be the sum of a number of terms. As we are focusing on $T_2$ (which is the expectation of $\tilde\phi_{\nu_{0c}}^\text{OR}-\phi_{\nu_{0c}}^\text{OR}$), we can ignore the terms that obviously have expectation zero, i.e., terms that involve $(Y-\mu_0)$ or $[\I(C=c)-\pi_c]$. The other terms are
\begin{align*}
    % &(\tilde e_1-e_1)
    % \left\{\frac{1-Z}{e_0}\frac{1}{\tilde e_0}A(Y-\mu_0)-\frac{Z}{e_1}\frac{1}{\tilde e_1}B[\I(C=c)-\pi_c]\right\},
    % \\
    &(\tilde\pi_c-\pi_c)
    \left\{
    % \frac{1-Z}{e_0}a_1(Y-\mu_0)+\frac{Z}{e_1}b_1[\I(C=c)-\pi_c]
    -\frac{Z}{e_1}B+c_1\right\}
    % \\
    % &~~~~
    =(\tilde\pi_c-\pi_c)
    \left\{
    % \frac{1-Z}{e_0}a_1(Y-\mu_0)+\frac{Z}{e_1}b_1[\I(C=c)-\pi_c]+
    \left(1-\frac{Z}{e_1}\right)B-B+c_1\right\},
    \\
    &(\tilde\mu_0-\mu_0)
    \left\{
    % \frac{1-Z}{e_0}a_2(Y-\mu_0)
    -\frac{1-Z}{e_0}A
    % +\frac{Z}{e_1}b_2[\I(C=c)-\pi_c]
    +c_2\right\}
    % \\
    % &~~~~
    =(\tilde\mu_0-\mu_0)
    \left\{
    % \frac{1-Z}{e_0}a_2(Y-\mu_0)+
    \left(1-\frac{1-Z}{e_0}\right)A
    % +\frac{Z}{e_1}b_2[\I(C=c)-\pi_c]
    -A+c_2\right\},
    \\
    &(\tilde e_1-e_1)(\tilde\pi_c-\pi_c)
    \left\{
    % \frac{1-Z}{e_0}\frac{1}{\tilde e_0}a_1(Y-\mu_0)-\frac{Z}{e_1}\frac{1}{\tilde e_1}b_1[\I(C=c)-\pi_c]+
    \frac{Z}{e_1}\frac{1}{\tilde e_1}B\right\},
    \\
    &(\tilde e_1-e_1)(\tilde\mu_0-\mu_0)
    \left\{
    % \frac{1-Z}{e_0}\frac{1}{\tilde e_0}a_2(Y-\mu_0)
    -\frac{1-Z}{e_0}\frac{1}{\tilde e_0}A
    % -\frac{Z}{e_1}\frac{1}{\tilde e_1}b_2[\I(C=c)-\pi_c]
    \right\},
    \\
    &(\tilde\pi_c-\pi_c)(\tilde\mu_0-\mu_0)
    \left\{-\frac{1-Z}{e_0}a_1
    % +\frac{1-Z}{e_0}a_3(Y-\mu_0)
    -\frac{Z}{e_1}b_2
    % +\frac{Z}{e_1}b_3[\I(C=c)-\pi_c]
    +c_3\right\}
    % \\
    % &~~~~
    =(\tilde\pi_c-\pi_c)(\tilde\mu_0-\mu_0)
    \left\{\left(1-\frac{1-Z}{e_0}\right)a_1
    % +\frac{1-Z}{e_0}a_3(Y-\mu_0)
    +\left(1-\frac{Z}{e_1}\right)b_2
    % +\frac{Z}{e_1}b_3[\I(C=c)-\pi_c]
    +c_3-a_1-b_2\right\},
    \\
    &(\tilde e_1-e_1)(\tilde\pi_c-\pi_c)(\tilde\mu_0-\mu_0)
    \left\{-\frac{1-Z}{e_0}\frac{1}{\tilde e_0}a_1
    % +\frac{1-Z}{e_0}\frac{1}{\tilde e_0}a_3(Y-\mu_0)
    \right\},
    \\
    &(\tilde\pi_c-\pi_c)^2
    \left\{-\frac{Z}{e_1}b_1\right\},
    \\
    &(\tilde e_1-e_1)(\tilde\pi_c-\pi_c)^2
    \left\{\frac{Z}{e_1}\frac{1}{\tilde e_1}b_1\right\},
    \\
    &(\tilde e_1-e_1)(\tilde\mu_0-\mu_0)(\tilde\pi_c-\pi_c)^2
    \left\{\frac{Z}{e_1}\frac{1}{\tilde e_1}b_3\right\},
    \\
    &(\tilde\mu_0-\mu_0)^2
    \left\{-\frac{1-Z}{e_0}a_2\right\},
    \\
    &(\tilde e_1-e_1)(\tilde\mu_0-\mu_0)^2
    \left\{-\frac{1-Z}{e_0}\frac{1}{\tilde e_0}a_2\right\},
    \\
    &(\tilde e_1-e_1)(\tilde\pi_c-\pi_c)(\tilde\mu_0-\mu_0)^2
    \left\{-\frac{1-Z}{e_0}\frac{1}{\tilde e_0}a_3\right\}.
\end{align*}
% Again, with sample splitting, $T_1=o_p(n^{-1/2})$.

% Now consider the remainder term
\begin{align*}
    T_2
    &=
    \P\Big\{
    (\tilde\pi_c-\pi_c)\left(-B+c_1\right)+(\tilde\mu_0-\mu_0)(-A+c_2)
    \\
    &~~~~~~~~
    (\tilde e_1-e_1)(\tilde\pi_c-\pi_c)\frac{1}{\tilde e_1}B
    -(\tilde e_1-e_1)(\tilde\mu_0-\mu_0)\frac{1}{\tilde e_0}A
    +(\tilde\pi_c-\pi_c)(\tilde\mu_0-\mu_0)(c_3-a_1-b_2)+
    \\
    &~~~~~~~~
    -(\tilde\pi_c-\pi_c)^2 b_1-(\tilde\mu_0-\mu_0)^2 a_2+
    \\
    &~~~~~~~~
    -(\tilde e_1-e_1)(\tilde\pi_c-\pi_c)(\tilde\mu_0-\mu_0)\frac{1}{\tilde e_0}a_1
    +(\tilde e_1-e_1)(\tilde\pi_c-\pi_c)^2\frac{1}{\tilde e_1}b_1
    -(\tilde e_1-e_1)(\tilde\mu_0-\mu_0)^2\frac{1}{\tilde e_0}a_2+
    \\
    &~~~~~~~~
    (\tilde e_1-e_1)(\tilde\mu_0-\mu_0)(\tilde\pi_c-\pi_c)^2\frac{1}{\tilde e_1}b_3
    -(\tilde e_1-e_1)(\tilde\pi_c-\pi_c)(\tilde\mu_0-\mu_0)^2\frac{1}{\tilde e_0}a_3
    \Big\}
\end{align*}
To examine the first order error terms in $T_2$, we need to derive $c_1$ and $c_2$.
\begin{align*}
    \tilde C-C=\frac{1}{2(\rho_c-1)}[(\tilde\alpha_c-\alpha_c)-(\tilde\beta_c-\beta_c)].
\end{align*}
It can be shown that
\begin{align*}
    \tilde\alpha_c-\alpha_c
    &=(\rho_c-1)(\tilde\pi_c-\pi_c)+(\rho_c-1)(\tilde\mu_0-\mu_0),
    \\
    \tilde\beta_c-\beta_c
    &=\frac{\rho-1}{\beta_c+\tilde\beta_c}[(\tilde\alpha_c+\alpha_c-4\rho_c\mu_0)(\tilde\pi_c-\pi_c)+(\tilde\alpha_c+\alpha_c-4\rho\pi_c)(\tilde\mu_0-\mu_0)-4\rho_c(\tilde\mu_0-\mu_0)(\tilde\pi_c-\pi_c)],
\end{align*}
so
\begin{align*}
    \tilde C-C
    &=(\tilde\pi_c-\pi_c)\underbrace{\frac{1}{2}\left[1-\frac{\tilde\alpha_c+\alpha_c-4\rho_c\mu_0}{\beta_c+\tilde\beta_c}\right]}_{c_1}
    +(\tilde\mu_0-\mu_0)\underbrace{\frac{1}{2}\left[1-\frac{\tilde\alpha_c+\alpha_c-4\rho_c\pi_c}{\beta_c+\tilde\beta_c}\right]}_{c_2}
    +(\tilde\pi_c-\pi_c)(\tilde\mu_0-\mu_0)\underbrace{\frac{-2\rho_c}{\beta_c+\tilde\beta_c}}_{c_3}.
\end{align*}
Now consider
\begin{align*}
    -B+c_1
    &=-\left(\frac{1}{2}-\frac{\alpha_c}{2\beta_c}+\frac{\rho_c\mu_0}{\beta_c}\right)+\frac{1}{2}\left[1-\frac{\tilde\alpha_c+\alpha_c-4\rho_c\mu_0}{\beta_c+\tilde\beta_c}\right]
    \\
    &=\frac{1}{2}\left[\frac{\alpha_c-2\rho_c\mu_0}{\beta_c}-\frac{\tilde\alpha_c+\alpha_c-4\rho_c\mu_0}{\beta_c+\tilde\beta_c}\right]
    \\
    &=\frac{1}{2\beta_c(\beta_c+\tilde\beta_c)}[(\alpha_c\beta_c+\alpha_c\tilde\beta_c-2\rho_c\mu_0\beta_c-2\rho_c\mu_0\tilde\beta_c)-(\tilde\alpha_c\beta_c+\alpha_c\beta_c-4\rho_c\mu_0\beta_c)]
    \\
    &=\frac{1}{2\beta_c(\beta_c+\tilde\beta_c)}[(\alpha_c\tilde\beta_c-\tilde\alpha_c\beta_c)-2\rho_c\mu_0(\tilde\beta_c-\beta_c)]
    \\
    &=\frac{1}{2\beta_c(\beta_c+\tilde\beta_c)}[(\alpha_c-2\rho_c\mu_0)(\tilde\beta_c-\beta_c)-\beta_c(\tilde\alpha_c-\alpha_c)]
    \\
    &=\frac{\rho_c-1}{2\beta_c(\beta_c+\tilde\beta_c)}
    \Big\{\frac{(\alpha_c-2\rho_c\mu_0)}{\beta_c+\tilde\beta_c}[(\tilde\alpha_c+\alpha_c-4\rho_c\mu_0)(\tilde\pi_c-\pi_c)+(\tilde\alpha_c+\alpha_c-4\rho\pi_c)(\tilde\mu_0-\mu_0)-4\rho_c(\tilde\mu_0-\mu_0)(\tilde\pi_c-\pi_c)]
    \\
    &~~~~~~~~~~~~~~~~~~~~~~~~-\beta_c[(\tilde\pi_c-\pi_c)+(\tilde\mu_0-\mu_0)]\Big\}
    \\
    &=(\tilde\pi_c-\pi_c)\frac{\rho_c-1}{2\beta_c(\beta_c+\tilde\beta_c)}\left[\frac{(\alpha_c-2\rho_c\mu_0)(\tilde\alpha_c+\alpha_c-4\rho_c\mu_0)}{\beta_c+\tilde\beta_c}-\beta_c\right]+
    \\
    &~~~~(\tilde\mu_0-\mu_0)\frac{\rho_c-1}{2\beta_c(\beta_c+\tilde\beta_c)}\left[\frac{(\alpha_c-2\rho_c\mu_0)(\tilde\alpha_c+\alpha_c-4\rho_c\pi_c)}{\beta_c+\tilde\beta_c}-\beta_c\right]+
    \\
    &~~~~(\tilde\pi_c-\pi_c)(\tilde\mu_0-\mu_0)\left[-\frac{2\rho_c(\rho_c-1)}{\beta_c(\beta_c+\tilde\beta_c)}\right].
\end{align*}
This means the $(\tilde\pi_c-\pi_c)(-B+c_1)$ term is actually the sum of several higher-order terms. So is the $(\tilde\mu_0-\mu_0)(-A+c_2)$ term. The first-order error terms thus drop out. For $T_2$ to be $o_p(n^{-1/2})$, the second-order error terms need to be $o_p(n^{-1/2})$. Therefore, in addition to the rate conditions required under PI, we also require the $(\tilde\pi_c-\pi_c)^2$ and $(\tilde\mu_0-\mu_0)^2$ terms to vanish at $\sqrt{n}$ rate.

\medskip

In conclusion, under the following conditions $\hat\Delta_c^\text{OR}$ is $\sqrt{n}$-consistent for $\Delta_c^\text{OR}$: positivity, sample splitting, consistency, bounded propensity score estimation, plus the rate conditions
\begin{itemize}
    \item $e\pi$-rate: $||\tilde e_1(X)-e_1(X)||_2||\tilde\pi_c(X)-\pi_c(X)||_2=o_p(n^{-1/2})$
    \item $e\mu$-rates: $\begin{cases}
        ||\tilde e_1(X)-e_1(X)||_2||\tilde\mu_0(X)-\mu_0(X)||_2=o_p(n^{-1/2})
        \\
        ||\tilde e_1(X)-e_1(X)||_2||\tilde\mu_{1c}(X)-\mu_{1c}(X)||_2=o_p(n^{-1/2})
    \end{cases}$
    \item $\pi$-rate: $||\tilde\pi_c(X)-\pi_c(X)||_2=o_p(n^{-1/4})$
    \item $\mu_0$-rate: $||\tilde\mu_0(X)-\mu_0(X)||_2=o_p(n^{-1/4})$
\end{itemize}

\medskip
\noindent
\underline{Part 4 for $\hat\Delta_{c,\text{IF}}^\text{MR}$}

\begin{align*}
    \hat\nu_{0c}^\text{MR}-\nu_{0c}^\text{MR}
    &=\P_n[\tilde\phi_{\nu_{0c}}^\text{MR}(O)]-\P[\phi_{\nu_{0c}}^\text{MR}(O)]
    \\
    &=
    \underbrace{(\P_n-\P)[\phi_{\nu_{0c}}^\text{MR}(O)]}_{\textstyle T_0}+
    \underbrace{(\P_n-\P)[\tilde\phi_{\nu_{0c}}^\text{MR}(O)-\phi_{\nu_{0c}}^\text{MR}(O)]}_{\textstyle T_1}+
    \underbrace{\P[\tilde\phi_{\nu_{0c}}^\text{MR}(O)-\phi_{\nu_{0c}}^\text{MR}(O)]}_{\textstyle T_2},
\end{align*}
where
\begin{align*}
    \tilde\phi_{\nu_{0c}}^\text{MR}-\phi_{\nu_{0c}}^\text{MR}
    &=
    \underbrace{\frac{1-Z}{\tilde e_0}\overbrace{\tilde\gamma_c\tilde\pi_c}^{\tilde A}(Y-\tilde\mu_0)}
    _{(*)}
    +\underbrace{\frac{Z}{\tilde e_1}\overbrace{\tilde\gamma_1\tilde\gamma_0\tilde\mu_0}^{\tilde B}[\I(C=c)-\tilde\pi_c]}
    _{(**)}
    +\underbrace{\overbrace{\tilde\gamma_c\tilde\pi_c\tilde\mu_0}^{\tilde C}}_{(***)}+
    \\
    &~~~~~-\left[\frac{1-Z}{e_0}\gamma_c\pi_c(Y-\mu_0)+\frac{Z}{e_1}\gamma_1\gamma_0\mu_0[\I(C=c)-\pi_c]+\gamma_c\pi_c\mu_0\right],
\end{align*}
and
\begin{align*}
    \gamma_c
    =\frac{\rho_c}{(\rho_c-1)\pi_c+1},
    ~~~
    \gamma_{1-c}
    =\frac{1}{(\rho_c-1)\pi_c+1}.
\end{align*}
We use the same tricks as above.
\begin{align*}
    \tilde A-A&=a_1(\tilde\pi_c-\pi_c),
    \\
    \tilde B-B&=b_1(\tilde\pi_c-\pi_c)+b_2(\tilde\mu_0-\mu_0)+b_3(\tilde\pi_c-\pi_c)(\tilde\mu_0-\mu_0),
    \\
    \tilde C-C&=c_1(\tilde\pi_c-\pi_c)+c_2(\tilde\mu_0-\mu_0)+c_3(\tilde\pi_c-\pi_c)(\tilde\mu_0-\mu_0),
\end{align*}
and
\begin{align*}
    (*)
    &=
    \frac{1-Z}{e_0}\left[1+\frac{1}{\tilde e_0}(\tilde e_1-e_1)\right]
    [A+a_1(\tilde\pi_c-\pi_c)]
    [(Y-\mu_0)-(\tilde\mu_0-\mu_0)],
    \\
    (**)
    &=
    \frac{Z}{e_1}\left[1-\frac{1}{\tilde e_1}(\tilde e_1-e_1)\right]
    [B+b_1(\tilde\pi_c-\pi_c)+b_2(\tilde\mu_0-\mu_0)+b_3(\tilde\pi_c-\pi_c)(\tilde\mu_0-\mu_0)]
    \{[\I(C=c)-\pi_c]-(\tilde\pi_c-\pi_c)\},
    \\
    (***)
    &=
    C+c_1(\tilde\pi_c-\pi_c)+c_2(\tilde\mu_0-\mu_0)+c_3(\tilde\pi_c-\pi_c)(\tilde\mu_0-\mu_0).
\end{align*}
Consider $\tilde\phi_{\nu_{0c}}^\text{MR}-\phi_{\nu_{0c}}^\text{MR}$ as a sum of terms. Leaving out the obvious terms with expectation zero, the remaining terms are 
\begin{align*}
    &(\tilde\pi_c-\pi_c)
    \left\{-\frac{Z}{e_1}B+c_1\right\}
    =(\tilde\pi_c-\pi_c)
    \left\{\left(1-\frac{Z}{e_1}\right)B-B+c_1\right\},
    \\
    &(\tilde\mu_0-\mu_0)
    \left\{-\frac{1-Z}{e_0}A+c_2\right\}
    =(\tilde\mu_0-\mu_0)
    \left\{\left(1-\frac{1-Z}{e_0}\right)A-A+c_2\right\},
    \\
    &(\tilde e_1-e_1)(\tilde\pi_c-\pi_c)
    \left\{\frac{Z}{e_1}\frac{1}{\tilde e_1}B\right\},
    \\
    &(\tilde e_1-e_1)(\tilde\mu_0-\mu_0)
    \left\{-\frac{1-Z}{e_0}\frac{1}{\tilde e_0}A\right\},
    \\
    &(\tilde\pi_c-\pi_c)(\tilde\mu_0-\mu_0)
    \left\{-\frac{1-Z}{e_0}a_1-\frac{Z}{e_1}b_2+c_3\right\}
    \\
    &~~~~=(\tilde\pi_c-\pi_c)(\tilde\mu_0-\mu_0)
    \left\{\left(1-\frac{1-Z}{e_0}\right)a_1+\left(1-\frac{Z}{e_1}\right)b_2+c_3-a_1-b_2\right\},
    \\
    &(\tilde e_1-e_1)(\tilde\pi_c-\pi_c)(\tilde\mu_0-\mu_0)
    \left\{-\frac{1-Z}{e_0}\frac{1}{\tilde e_0}a_1\right\},
    \\
    &(\tilde\pi_c-\pi_c)^2
    \left\{-\frac{Z}{e_1}b_1\right\},
    \\
    &(\tilde e_1-e_1)(\tilde\pi_c-\pi_c)^2
    \left\{\frac{Z}{e_1}\frac{1}{\tilde e_1}b_1\right\},
    \\
    &(\tilde e_1-e_1)(\tilde\mu_0-\mu_0)(\tilde\pi_c-\pi_c)^2
    \left\{\frac{Z}{e_1}\frac{1}{\tilde e_1}b_3\right\},
\end{align*}
\begin{align*}
    T_2
    &=
    \P\Big\{
    (\tilde\pi_c-\pi_c)\left(-B+c_1\right)+(\tilde\mu_0-\mu_0)(-A+c_2)
    \\
    &~~~~~~~~
    (\tilde e_1-e_1)(\tilde\pi_c-\pi_c)\frac{1}{\tilde e_1}B
    -(\tilde e_1-e_1)(\tilde\mu_0-\mu_0)\frac{1}{\tilde e_0}A
    +(\tilde\pi_c-\pi_c)(\tilde\mu_0-\mu_0)(c_3-a_1-b_2)
    -(\tilde\pi_c-\pi_c)^2b_1+
    \\
    &~~~~~~~~
    -(\tilde e_1-e_1)(\tilde\pi_c-\pi_c)(\tilde\mu_0-\mu_0)\frac{1}{\tilde e_0}a_1
    +(\tilde e_1-e_1)(\tilde\pi_c-\pi_c)^2\frac{1}{\tilde e_1}b_1+
    \\
    &~~~~~~~~
    (\tilde e_1-e_1)(\tilde\mu_0-\mu_0)(\tilde\pi_c-\pi_c)^2\frac{1}{\tilde e_1}b_3
    \Big\}
\end{align*}
To examine the first order error terms in $T_2$, we derive $c_1$ and $c_2$.
\begin{align*}
    \tilde C-C
    &=\tilde\gamma_c\tilde\pi_c\tilde\mu_0-\gamma_c\pi_c\mu_0
    \\
    &=\mu_0(\tilde\gamma_c\tilde\pi_c-\gamma_c\pi_c)+\gamma_c\pi_c(\tilde\mu_0-\mu_0)+(\tilde\gamma_c\tilde\pi_c)(\tilde\mu_0-\mu_0)
    \\
    &=\underbrace{\mu_0\gamma_c\tilde\gamma_{1-c}}_{c_1}(\tilde\pi_c-\pi_c)+\underbrace{\gamma_c\pi_c}_{c_2}(\tilde\mu_0-\mu_0)+\underbrace{\gamma_c\tilde\gamma_{1-c}}_{c_3}(\tilde\pi_c-\pi_c)(\tilde\mu_0-\mu_0).
\end{align*}
This implies
\begin{align*}
    -A+c_2
    &=-\gamma_c\pi_c+\gamma_c\pi_c=0,
    \\
    -B+c_1
    &=-\gamma_1\gamma_0\mu_0+\gamma_c\tilde\gamma_{1-c}\mu_0
    \\
    &=\gamma_c\mu_0(\tilde\gamma_{1-c}-\gamma_{1-c})
    \\
    &=\gamma_c\mu_0\frac{(\rho_c-1)(\pi_c-\tilde\pi_c)}{[(\rho_c-1)\tilde\pi_c+1][\rho_c-1)\pi_c+1]}
    \\
    &=-\gamma_1\gamma_0\mu_0\tilde\gamma_c(\rho_c-1)(\tilde\pi_c-\pi_c).
\end{align*}
This means $(-A+c_2)(\tilde\mu_0-\mu_0)=0$ and $(-B+c_1)(\tilde\pi_c-\pi_c)=-\gamma_1\gamma_0\tilde\gamma_c(\rho-1)\mu_0(\tilde\pi_c-\pi_c)^2$ is a second-order term. Some tedious algebra shows that this term does not cancel out with the other $(\tilde\pi_c-\pi_c)^2$ term. Also, some tedious algebra shows that $c_3-a_1-b_2$ is non-zero. Therefore
\begin{align*}
    |T_2|\leq
    &\tilde D||\tilde e_1-e_1||_2||\tilde\pi_c-\pi_c||_2+
    \tilde D||\tilde e_1-e_1||_2||\tilde\mu_0-\mu_0||_2+
    \tilde D||\tilde\pi_c-\pi_c||_2||\tilde\mu_0-\mu_0||_2+
    \tilde D||\tilde\pi_c-\pi_c||_2^2,
\end{align*}
where $\tilde D=O_p(1)$. We thus require the terms on the RHS to be $o_p(n^{-1/2})$.

In conclusion, under the following conditions $\hat\Delta_c^\text{MR}$ is $\sqrt{n}$-consistent for $\Delta_c^\text{MR}$: positivity, sample splitting, consistency, bounded propensity score estimation, plus the rate conditions
\begin{itemize}
    \item $e\pi$-rate: $||\tilde e_1(X)-e_1(X)||_2||\tilde\pi_c(X)-\pi_c(X)||_2=o_p(n^{-1/2})$
    \item $e\mu$-rates: $\begin{cases}
        ||\tilde e_1(X)-e_1(X)||_2||\tilde\mu_0(X)-\mu_0(X)||_2=o_p(n^{-1/2})
        \\
        ||\tilde e_1(X)-e_1(X)||_2||\tilde\mu_{1c}(X)-\mu_{1c}(X)||_2=o_p(n^{-1/2})
    \end{cases}$
    \item $\pi\mu_0$-rate: $||\tilde\pi_c(X)-\pi_c(X)||_2||\tilde\mu_0(X)-\mu_0(X)||_2=o_p(n^{-1/2})$
    \item $\pi$-rate: $||\tilde\pi_c(X)-\pi_c(X)||_2=o_p(n^{-1/4})$
\end{itemize}

\medskip
\noindent
\underline{Part 4 for $\hat\Delta_{c,\text{IF}}^\text{SMDe}$}

\smallskip
\noindent
Recall that 
\begin{align*}
    \nu_{0c}^\text{SMDe}=\nu_{0c}^\text{PI}-\eta_c\underbrace{\E\Big[\overbrace{\frac{\pi_1(X)\pi_0(X)\sigma_0(X)}{\sqrt{1+\eta^2\pi_1(X)\pi_0(X)}}}^{\vartheta(X)}\Big]}_{\vartheta},
\end{align*}
so
\begin{align*}
    \hat\nu_{0c}^\text{SMDe}=\hat\nu_{0c}^\text{PI}-\eta_c\hat\vartheta,~~~\text{where}~\hat\vartheta=\P_n[\tilde\phi_\vartheta(O)],
\end{align*}
and $\phi_\vartheta(O)=\varphi_\vartheta(O)+\vartheta$ is the uncentered IF of $\vartheta$. We thus adopt all the conditions for $\hat\nu_{0c}^\text{PI}$ to be $\sqrt{n}$-consistent for $\nu_{0c}^\text{PI}$, and seek additional conditions (if any) for $\hat\vartheta$ to be $\sqrt{n}$-consistent for $\vartheta$.

\begin{align*}
    \hat\vartheta-\vartheta
    &=\P_n[\tilde\phi_\vartheta(O)]-\P[\phi_\vartheta(O)]
    \\
    &=(\P_n-P)[\phi_\vartheta(O)]+(\P_n-P)[\tilde\phi_\vartheta(O)-\phi_\vartheta(O)]+\P[\tilde\phi_\vartheta(O)-\phi_\vartheta(O)].
\end{align*}
Since $\vartheta$ is symmetric w.r.t. $c$ (being 1 or 0), to ease notation, let $\pi(X)=\pi_1(X)$, and again drop all $(X)$ notation from this point. Then
\begin{align*}
    &\tilde\phi_\vartheta(O)-\phi_\vartheta(O)=
    \\
    &\underbrace{\frac{1-Z}{\tilde e_0}\overbrace{\frac{\tilde\pi(1-\tilde\pi)}{2\sqrt{\tilde\sigma_0^2[1+\eta^2\tilde\pi(1-\tilde\pi)]}}}^{\tilde A}[(Y-\tilde\mu_0)^2-\tilde\sigma_0^2]}_{(*)}
    +\underbrace{\frac{Z}{\tilde e_1}\overbrace{\frac{(1-2\tilde\pi)[2+\eta^2\tilde\pi(1-\tilde\pi)]\sqrt{\tilde\sigma_0^2}}{2[1+\eta^2\tilde\pi(1-\tilde\pi)]^{3/2}}}^{\tilde B}(C-\tilde\pi)}_{(**)}
    +\underbrace{\overbrace{\frac{\tilde\pi(1-\tilde\pi)\sqrt{\tilde\sigma_0^2}}{\sqrt{1+\eta^2\tilde\pi(1-\tilde\pi)}}}^{\tilde G}}_{(***)}+
    \\
    &
    -\left\{\frac{1-Z}{ e_0}\frac{\pi(1-\pi)}{2\sqrt{\sigma_0^2[1+\eta^2\pi(1-\pi)]}}[(Y-\mu_0)^2-\sigma_0^2]+\frac{Z}{ e_1}\frac{(1-2\pi)[2+\eta^2\pi(1-\pi)]\sqrt{\sigma_0^2}}{2[1+\eta^2\pi(1-\pi)]^{3/2}}(C-\pi)+\frac{\pi(1-\pi)\sqrt{\sigma_0^2}}{\sqrt{1+\eta^2\pi(1-\pi)}}\right\}.
\end{align*}
Using the same tricks as before, we write
\begin{align*}
    \tilde A-A
    &=a_1(\tilde\pi-\pi)+a_2(\tilde\sigma_0^2-\sigma_0^2)+a_3(\tilde\pi-\pi)(\tilde\sigma_0^2-\sigma_0^2),
    \\
    \tilde B-B
    &=b_1(\tilde\pi-\pi)+b_2(\tilde\sigma_0^2-\sigma_0^2)+b_3(\tilde\pi-\pi)(\tilde\sigma_0^2-\sigma_0^2),
    \\
    \tilde G-G
    &=g_1(\tilde\pi-\pi)+g_2(\tilde\sigma_0^2-\sigma_0^2)+g_3(\tilde\pi-\pi)(\tilde\sigma_0^2-\sigma_0^2),
    % \\
    % \tilde H-H
    % &=-(\tilde\sigma_0^2-\sigma_0^2)+h_4(\tilde\mu_0-\mu_0),
\end{align*}
so
\begin{align*}
    (*)
    &=
    \frac{1-Z}{e_0}\left[1+\frac{1}{\tilde e_0}(\tilde e_1-e_1)\right]
    [A+a_1(\tilde\pi-\pi)+a_2(\tilde\sigma_0^2-\sigma_0^2)+a_3(\tilde\pi-\pi)(\tilde\sigma_0^2-\sigma_0^2)]\times
    \\
    &~~~~
    \{[(Y-\mu_0)^2-\sigma_0^2]-2(\tilde\mu_0-\mu_0)(Y-\mu_0)+(\tilde\mu_0-\mu_0)^2-(\tilde\sigma_0^2-\sigma_0^2)]\},
    \\
    (**)
    &=
    \frac{Z}{e_1}\left[1-\frac{1}{\tilde e_1}(\tilde e_1-e_1)\right][B+b_1(\tilde\pi-\pi)+b_2(\tilde\sigma_0^2-\sigma_0^2)+b_3(\tilde\pi-\pi)(\tilde\sigma_0^2-\sigma_0^2)][(C-\pi)-(\tilde\pi-\pi)],
    \\
    (***)
    &=
    G+g_1(\tilde\pi-\pi)+g_2(\tilde\sigma_0^2-\sigma_0^2)+g_3(\tilde\pi-\pi)(\tilde\sigma_0^2-\sigma_0^2).
\end{align*}
Consider $\tilde\phi_\vartheta(O)-\phi_\vartheta(O)$ as a sum of terms. Again, we ignore the terms whose expectations are obviously zero, i.e., those that involve $[(Y-\mu_0)^2-\sigma_0^2]$, or $(Y-\mu_0)$, or $(C-\pi)$. We also put aside third-order terms that are dominated by second-order terms. The remaining terms are:
\begin{align*}
    &
    (\tilde\pi-\pi)\times
    \left\{-\frac{Z}{e_1}B+g_1\right\}
    =(\tilde\pi-\pi)\times
    \left\{\left(1-\frac{Z}{e_1}\right)B-B+g_1\right\},
    \\
    &
    (\tilde\sigma_0^2-\sigma_0^2)\times
    \left\{-\frac{1-Z}{e_0}A+g_2\right\}
    =(\tilde\sigma_0^2-\sigma_0^2)\times
    \left\{\left(1-\frac{1-Z}{e_0}\right)A-A+g_2\right\},
    \\
    &
    (\tilde\mu_0-\mu_0)^2\times
    \left\{\frac{1-Z}{e_0}A\right\},
    \\
    &
    (\tilde\pi-\pi)^2\times
    \left\{-\frac{Z}{e_1}b_1\right\},
    \\
    &
    (\tilde\sigma_0^2-\sigma_0^2)^2\times
    \left\{-\frac{1-Z}{e_0}a_2\right\},
    \\
    &
    (\tilde e_1-e_1)(\tilde\pi-\pi)\times
    \left\{\frac{Z}{e_1}\frac{1}{\tilde e_1}B\right\},
    \\
    &
    (\tilde e_1-e_1)(\tilde\sigma_0^2-\sigma_0^2)\times
    \left\{-\frac{1-Z}{e_0}\frac{1}{\tilde e_0}A\right\},
    % \\
    % &
    % (\tilde e_1-e_1)(\tilde\mu_0-\mu_0)\times
    % \left\{\frac{1-Z}{e_0}\frac{1}{\tilde e_0}Ah_4\right\},
    \\
    &
    (\tilde\pi-\pi)(\tilde\sigma_0^2-\sigma_0^2)\times
    \left\{-\frac{1-Z}{e_0}a_1-\frac{Z}{e_1}b_2+g_3\right\}
    =(\tilde\pi-\pi)(\tilde\sigma_0^2-\sigma_0^2)\times
    \left\{\left(1-\frac{1-Z}{e_0}\right)a_1+\left(1-\frac{Z}{e_1}\right)b_2+g_3-a_1-b_2\right\}
    % \\
    % &
    % (\tilde\pi-\pi)(\tilde\mu_0-\mu_0)\times
    % \left\{\frac{1-Z}{e_0}a_1h_4\right\},
    % \\
    % &
    % (\tilde\mu_0-\mu_0)(\tilde\sigma_0^2-\sigma_0^2)\times
    % \left\{\frac{1-Z}{e_0}a_2h_4\right\}
    .
\end{align*}
Hence
\begin{align*}
    T_2
    &=\P\Big\{
    (\tilde\pi-\pi)(-B+g_1)+(\tilde\sigma_0^2-\sigma_0^2)(-A+g_2)+
    \\
    &~~~~~~~~~-(\tilde\pi-\pi)^2b_1-(\tilde\sigma_0^2-\sigma_0^2)^2a_2+(\tilde\mu_0-\mu_0)^2A+
    \\
    &~~~~~~~~~(\tilde e_1-e_1)(\tilde\pi-\pi)\frac{1}{\tilde e_1}B
    -(\tilde e_1-e_1)(\tilde\sigma_0^2-\sigma_0^2)\frac{1}{\tilde e_0}A
    % +(\tilde e_1-e_1)(\tilde\mu_0-\mu_0)\frac{1}{\tilde e_0}Ah_4+
    +(\tilde\pi-\pi)(\tilde\sigma_0^2-\sigma_0^2)(g_3-a_1-b_2)
    % +(\tilde\pi-\pi)(\tilde\mu_0-\mu_0)a_1h_4+
    % (\tilde\mu_0-\mu_0)(\tilde\sigma_0^2-\sigma_0^2)a_2h_4+
    % \\
    % &~~~~~~~~~
    +\text{higher-order terms}
    \Big\}.
\end{align*}
We want to examine the coefficients $(-B+g_1)$ and $(-A+g_2)$ of the two first-order terms in the expression of $T_2$ above. Let $u=\frac{\pi(1-\pi)}{\sqrt{1+\eta^2\pi(1-\pi)}}$, $v=\sigma_0=\sqrt{\sigma_0^2}$. Then
\begin{align*}
    \tilde G-G
    &=\tilde u\tilde v-uv=(\tilde u-u)v+(\tilde v-v)u+(\tilde u-u)(\tilde v-v).
\end{align*}
As $u$ involves $\pi$ but not $\sigma_0^2$ and $v$ involves $\sigma_0^2$ but not $\pi$, we can obtain $g_1$ and $g_2$ respectively from the first and second terms in the RHS above. 

First use the second term to derive $g_2$.
\begin{align*}
    (\tilde v-v)u
    &=(\sqrt{\tilde\sigma_0^2}-\sqrt{\sigma_0^2})u
    \\
    &=\frac{\tilde\sigma_0^2-\sigma_0^2}{\sqrt{\tilde\sigma_0^2}+\sqrt{\sigma_0^2}}u
    \\
    &=(\tilde\sigma_0^2-\sigma_0^2)\underbrace{\frac{\pi(1-\pi)}{(\tilde\sigma_0+\sigma_0)\sqrt{1+\eta^2\pi(1-\pi)}}}_{g_2}.
\end{align*}
\begin{align*}
    -A+g_2
    &=\frac{\pi(1-\pi)}{\sqrt{1+\eta^2\pi(1-\pi)}}\left(\frac{1}{\tilde\sigma_0+\sigma_0}-\frac{1}{2\sigma_0}\right)
    \\
    &=-\frac{\pi(1-\pi)}{\sqrt{1+\eta^2\pi(1-\pi)}}\frac{\tilde\sigma_0-\sigma_0}{2\sigma_0(\tilde\sigma_0+\sigma_0)}
    \\
    &=-\frac{\pi(1-\pi)}{\sqrt{1+\eta^2\pi(1-\pi)}}\frac{\tilde\sigma_0^2-\sigma_0^2}{2\sigma_0(\tilde\sigma_0+\sigma_0)^2},
\end{align*}
which means $(\tilde\sigma_0^2-\sigma_0^2)(-A+g_2)\propto(\tilde\sigma_0^2-\sigma_0^2)^2$ is a second-order term.

Now we turn to learn about $g_1$, using the first term in the expression of $\tilde G-G$ above.
Let $w=\pi(1-\pi)$. Then
\begin{align*}
    \tilde w-w
    &=(\tilde\pi-\pi)(1-2\pi)-(\tilde\pi-\pi)^2,
    \\
    \frac{1}{\sqrt{1+\eta^2\tilde w}}-\frac{1}{\sqrt{1+\eta^2w}}
    &=-\frac{\eta^2(\tilde w-w)}{\sqrt{1+\eta^2\tilde w}\sqrt{1+\eta^2w}(\sqrt{1+\eta^2\tilde w}+\sqrt{1+\eta^2w})},
\end{align*}
so
\begin{align*}
    \tilde u-u
    &=\frac{\tilde w}{\sqrt{1+\eta^2\tilde w}}-\frac{w}{\sqrt{1+\eta^2 w}}
    \\
    &=\frac{1}{\sqrt{1+\eta^2w}}(\tilde w-w)
    +w\left(\frac{1}{\sqrt{1+\eta^2\tilde w}}-\frac{1}{\sqrt{1+\eta^2 w}}\right)
    +(\tilde w-w)\left(\frac{1}{\sqrt{1+\eta^2\tilde w}}-\frac{1}{\sqrt{1+\eta^2 w}}\right)
    \\
    &=(\tilde w-w)\frac{1}{\sqrt{1+\eta^2w}}\left[1-\frac{\eta^2w}{\sqrt{1+\eta^2\tilde w}[\sqrt{1+\eta^2\tilde w}+\sqrt{1+\eta^2w}]}\right]-\frac{\eta^2(\tilde w-w)^2}{\sqrt{1+\eta^2\tilde w}\sqrt{1+\eta^2w}(\sqrt{1+\eta^2\tilde w}+\sqrt{1+\eta^2w})}
    \\
    &=(\tilde w-w)\frac{1}{\sqrt{1+\eta^2w}}\frac{1+\eta^2(\tilde w-w)+\sqrt{1+\eta^2\tilde w}\sqrt{1+\eta^2w}}{\sqrt{1+\eta^2\tilde w}[\sqrt{1+\eta^2\tilde w}+\sqrt{1+\eta^2w}]}-\frac{\eta^2(\tilde w-w)^2}{\sqrt{1+\eta^2\tilde w}\sqrt{1+\eta^2w}(\sqrt{1+\eta^2\tilde w}+\sqrt{1+\eta^2w})}
    \\
    &=(\tilde w-w)\frac{1+\sqrt{1+\eta^2\tilde w}\sqrt{1+\eta^2 w}}{\sqrt{1+\eta^2\tilde w}\sqrt{1+\eta^2w}(\sqrt{1+\eta^2\tilde w}+\sqrt{1+\eta^2w})}
    \\
    &=(\tilde\pi-\pi)\times
    \\
    &~~~~\left\{\frac{(1-2\pi)[1+\sqrt{1+\eta^2\tilde w}\sqrt{1+\eta^2 w}]}{\sqrt{1+\eta^2\tilde w}\sqrt{1+\eta^2w}(\sqrt{1+\eta^2\tilde w}+\sqrt{1+\eta^2w})}-(\tilde\pi-\pi)\frac{1+\sqrt{1+\eta^2\tilde w}\sqrt{1+\eta^2 w}}{\sqrt{1+\eta^2\tilde w}\sqrt{1+\eta^2w}(\sqrt{1+\eta^2\tilde w}+\sqrt{1+\eta^2w})}\right\}.
\end{align*}
This means $g_1$ is equal to the expression in the brackets above times $v=\sigma_0$. We can drop the second term which involves an additional $(\tilde\pi-\pi)$ factor and just consider the leading term in $g_1$, which we denote by $g_1^*$. That is,
\begin{align*}
    g_1^*=\frac{(1-2\pi)[1+\sqrt{1+\eta^2\tilde w}\sqrt{1+\eta^2 w}]}{\sqrt{1+\eta^2\tilde w}\sqrt{1+\eta^2w}(\sqrt{1+\eta^2\tilde w}+\sqrt{1+\eta^2w})}\sigma_0
    =\frac{(1-2\pi)\sigma_0}{\sqrt{1+\eta^2w}}\frac{1+\sqrt{1+\eta^2\tilde w}\sqrt{1+\eta^2 w}}{\sqrt{1+\eta^2\tilde w}(\sqrt{1+\eta^2\tilde w}+\sqrt{1+\eta^2w})}.
\end{align*}
We can now examine $(-B+g_1^*)$. Note that 
\begin{align*}
    B
    &=\frac{(1-2\pi)(2+\eta^2w)\sigma_0}{2(1+\eta^2w)^{3/2}}
    =\frac{(1-2\pi)\sigma_0}{\sqrt{1+\eta^2w}}
    \frac{1+\sqrt{1+\eta^2w}\sqrt{1+\eta^2 w}}{\sqrt{1+\eta^2w}(\sqrt{1+\eta^2w}+\sqrt{1+\eta^2w})},
\end{align*}
which looks very similar to $g_1^*$. Let $t=\sqrt{1+\eta^2w}$. Then
\begin{align*}
    -B+g_1^*
    &=\frac{(1-2\pi)\sigma_0}{\sqrt{1+\eta^2w}}\left[\frac{1+t\tilde t}{\tilde t(t+\tilde t)}-\frac{1+t^2}{2t^2}\right]
    \\
    &=\frac{(1-2\pi)\sigma_0}{\sqrt{1+\eta^2w}}\frac{2t^2+2t^3\tilde t-t\tilde t-\tilde t^2-t^3\tilde t-t^2\tilde t^2}{2t^2\tilde t(t+\tilde t)}
    \\
    &=\frac{(1-2\pi)\sigma_0}{\sqrt{1+\eta^2w}}\frac{-t(\tilde t-t)-(\tilde t^2-t^2)-t^2\tilde t(\tilde t-t)}{2t^2\tilde t(t+\tilde t)}
    \\
    &=-\frac{(1-2\pi)\sigma_0}{\sqrt{1+\eta^2w}}\frac{(\tilde t-t)(2t+\tilde t+t^2\tilde t)}{2t^2\tilde t(t+\tilde t)}
    \\
    &=-\frac{(1-2\pi)\sigma_0}{\sqrt{1+\eta^2w}}\frac{(\tilde t^2-t^2)(2t+\tilde t+t^2\tilde t)}{2t^2\tilde t(t+\tilde t)^2}
    \\
    &=-\frac{(1-2\pi)\sigma_0}{\sqrt{1+\eta^2w}}\frac{\eta^2(\tilde w-w)(2t+\tilde t+t^2\tilde t)}{2t^2\tilde t(t+\tilde t)^2}
    \\
    &\propto(\tilde\pi-\pi),
\end{align*}
which means $(\tilde\pi-\pi)(-B+g_1)\propto(\tilde\pi-\pi)^2$ is a second-order term.

% Now we derive $h_4$.
% \begin{align*}
%     \tilde H
%     &=[(Y-\mu_0)-(\tilde\mu_0-\mu_0)]^2-\sigma_0^2-(\tilde\sigma_0^2-\sigma_0^2)
%     \\
%     &=[(Y-\mu_0)^2-\sigma_0^2]+(\tilde\mu_0-\mu_0)^2-2(\tilde\mu_0-\mu_0)(Y-\mu_0)-(\tilde\sigma_0^2-\sigma_0^2)
%     \\
%     &=\underbrace{[(Y-\mu_0)^2-\sigma_0^2]}_{H}+(\tilde\mu_0-\mu_0)\underbrace{[(\tilde\mu_0-\mu_0)-2(Y-\mu_0)]}_{h_4}-(\tilde\sigma_0^2-\sigma_0^2),
% \end{align*}
% so the third term in the $T_2$ expression above
% \begin{align*}
%     (\tilde\mu_0-\mu)Ah_4=(\tilde\mu_0-\mu)[-2A(Y-\mu_0)]+(\tilde\mu_0-\mu_0)^2A
% \end{align*}
% is the combination of a term that has expectation zero (which drops out of $T_2$) and a second-order $(\tilde\mu_0-\mu_0)^2$ term. This is also the case for the eighth, tenth and eleventh terms which all involve $(\tilde\mu_0-\mu_0)h_4$.

As a result
\begin{align*}
    |T_2|\leq~
    &\tilde D||\tilde\pi-\pi||_2^2+\tilde D||\tilde\sigma_0^2-\sigma_0^2||_2^2+\tilde D||\tilde\mu_0-\mu_0||_2^2+
    \\
    &\tilde D||\tilde e_1-e_1||_2||\tilde\pi-\pi||_2+D||\tilde e_1-e_1||_2||\tilde\sigma_0^2-\sigma_0^2||_2+D||\tilde\pi-\pi||_2||\tilde\sigma_0^2-\sigma_0^2||_2,
\end{align*}
where $\tilde D=O_p(1)$,
which give us the rate conditions.

In conclusion, under the following conditions $\hat\Delta_c^\text{SMDe}$ is $\sqrt{n}$-consistent for $\Delta_c^\text{SMDe}$: positivity, sample splitting, consistency, bounded propensity score estimation, plus the rate conditions
\begin{itemize}
    \item $e\pi$-rate: $||\tilde e_1(X)-e_1(X)||_2||\tilde\pi_c(X)-\pi_c(X)||_2=o_p(n^{-1/2})$
    \item $e\mu$-rates: $\begin{cases}
        ||\tilde e_1(X)-e_1(X)||_2||\tilde\mu_0(X)-\mu_0(X)||_2=o_p(n^{-1/2})
        \\
        ||\tilde e_1(X)-e_1(X)||_2||\tilde\mu_{1c}(X)-\mu_{1c}(X)||_2=o_p(n^{-1/2})
    \end{cases}$
    \item $e\sigma^2$-rate: $||\tilde e_1(X)-e_1(X)||_2||\tilde\sigma_0^2(X)-\sigma_0^2(X)||_2=o_p(n^{-1/2})$
    \item $\pi$-rate: $||\tilde\pi_c(X)-\pi_c(X)||_2=o_p(n^{-1/4})$
    \item $\mu_0$-rate: $||\tilde\mu_0(X)-\mu_0(X)||_2=o_p(n^{-1/4})$
    \item $\sigma^2$-rate: $||\tilde\sigma_0^2(X)-\sigma_0^2(X)||_2=o_p(n^{-1/4})$
\end{itemize}
    
\end{proof}

\section{Expanded content of Section~\ref{sec:finite-sample-bias} -- Finite-sample bias}\label{appendix:finite-sample-bias}

\subsection{The pattern in all four sensitivity analyses}

Figure \ref{fig:finitesamplebias} in the paper only covers the OR-based sensitivity analysis on \textit{work for pay}. Figure \ref{fig:finitesamplebias-all} presents the pattern from all four sensitivity analyses. Figure \ref{fig:finitesamplebias-all-noCI} removes the confidence intervals to allow zooming into the patterns more closely. To show the pattern more clearly, we expand the ranges of the sensitivity GOR and SMD for depressive symptoms to make them symmetric.

Note one difference compared to Figure~\ref{fig:finitesamplebias}: to save length the current plots additionally include bias-corrected point estimates. These result from the bias correction methods that will be explained and investigated in Section~\ref{assec:bias-corrections}.

\begin{figure}[h]
    \caption{Expansion of Figure \ref{fig:finitesamplebias} for all four sensitivity analyses}
    \label{fig:finitesamplebias-all}
    \centering
    \includegraphics[width=.9\textwidth]{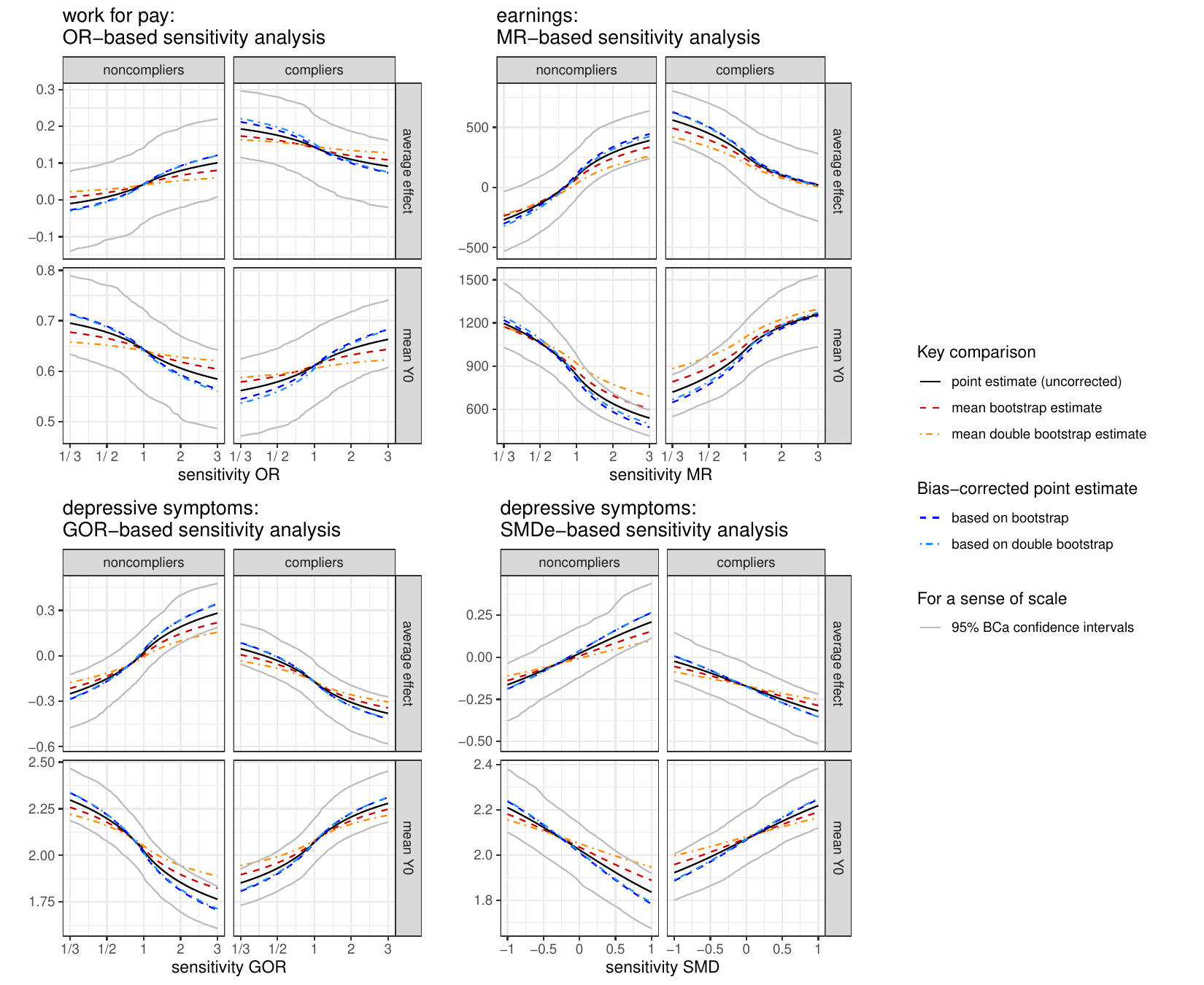}
\end{figure}

\begin{figure}[h]
    \caption{Expansion of Figure \ref{fig:finitesamplebias} for all four sensitivity analyses, CIs removed}
    \label{fig:finitesamplebias-all-noCI}
    \includegraphics[width=\textwidth]{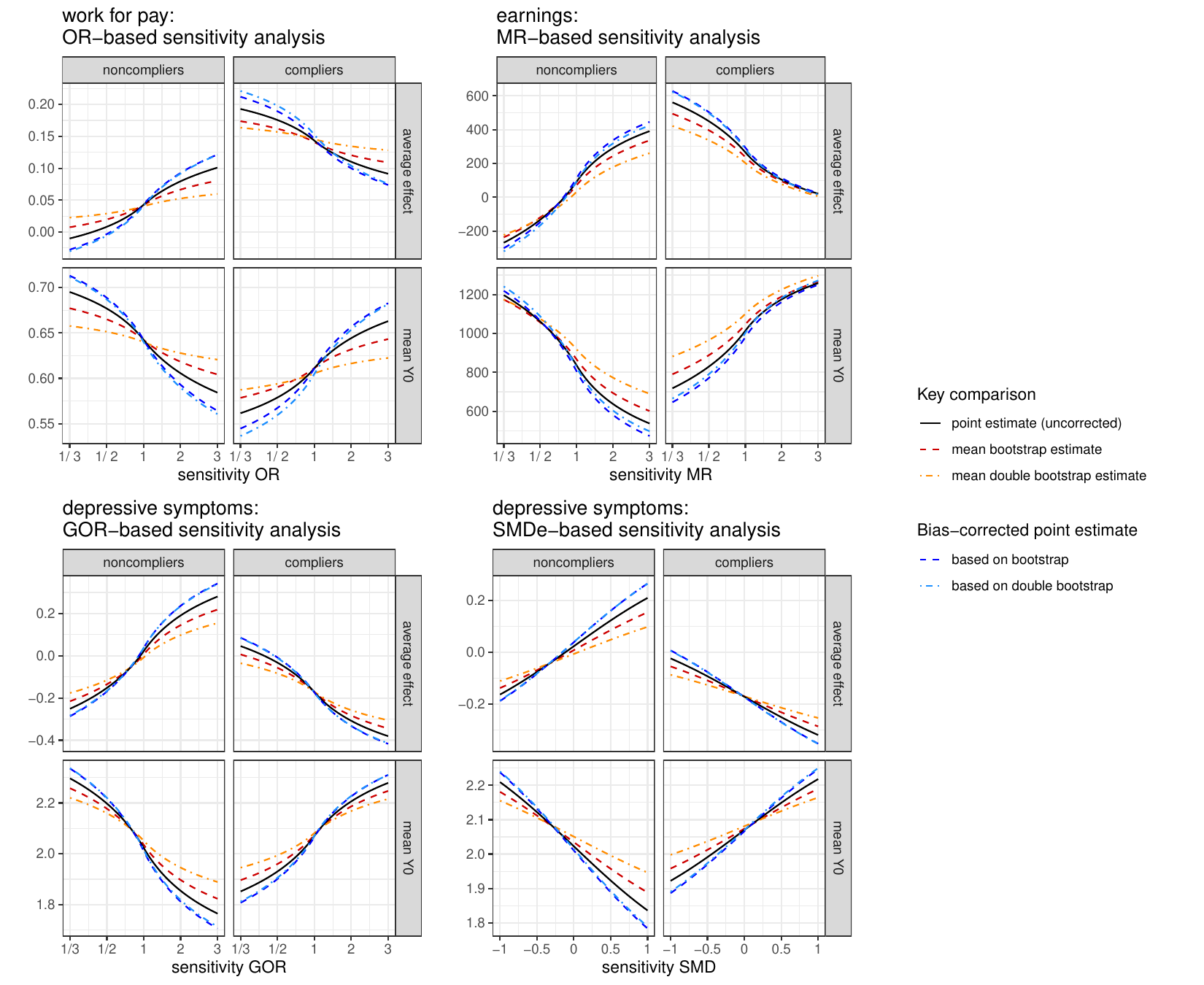}
\end{figure}

\newpage

\subsection{The pattern across different estimators}

JOBS II analysis in the paper uses one estimator. To show that the finite-sample bias pattern appears for all estimators, we show additional plots in Figures \ref{fig:13} to \ref{fig:16} below.

For the most part, the plots look very similar across estimators. For the unbounded estimator IF0, the mean double bootstrap estimate (BM2) for $\tau_{00}$ (and thus that for $\Delta_0$) is extreme. Interestingly, the mean bootstrap estimate (BM1) is not extreme. However, we can only see this problem after it happens. Therefore we recommend not using this kind of bias correction with an unbounded estimator.

\begin{landscape}
    \begin{figure}
        \caption{JOBS II result  -- \textit{work}, OR-based: Finite sample bias pattern with estimators $\hat\Delta_{c,\pi\mu}$ (pimu), $\hat\Delta_{c,e\mu}$ (emu), $\hat\Delta_{c,\textsc{ms}}$ (MS), $\hat\Delta_{c,\textsc{if}}$ (IF), $\hat\Delta_{c,\textsc{ifh}}$ (IFH). 0 and 1 indicated non-targeted and targeted nuisance estimation, respectively.}
        \label{fig:13}
        \centering
        \includegraphics[width=1.5\textwidth]{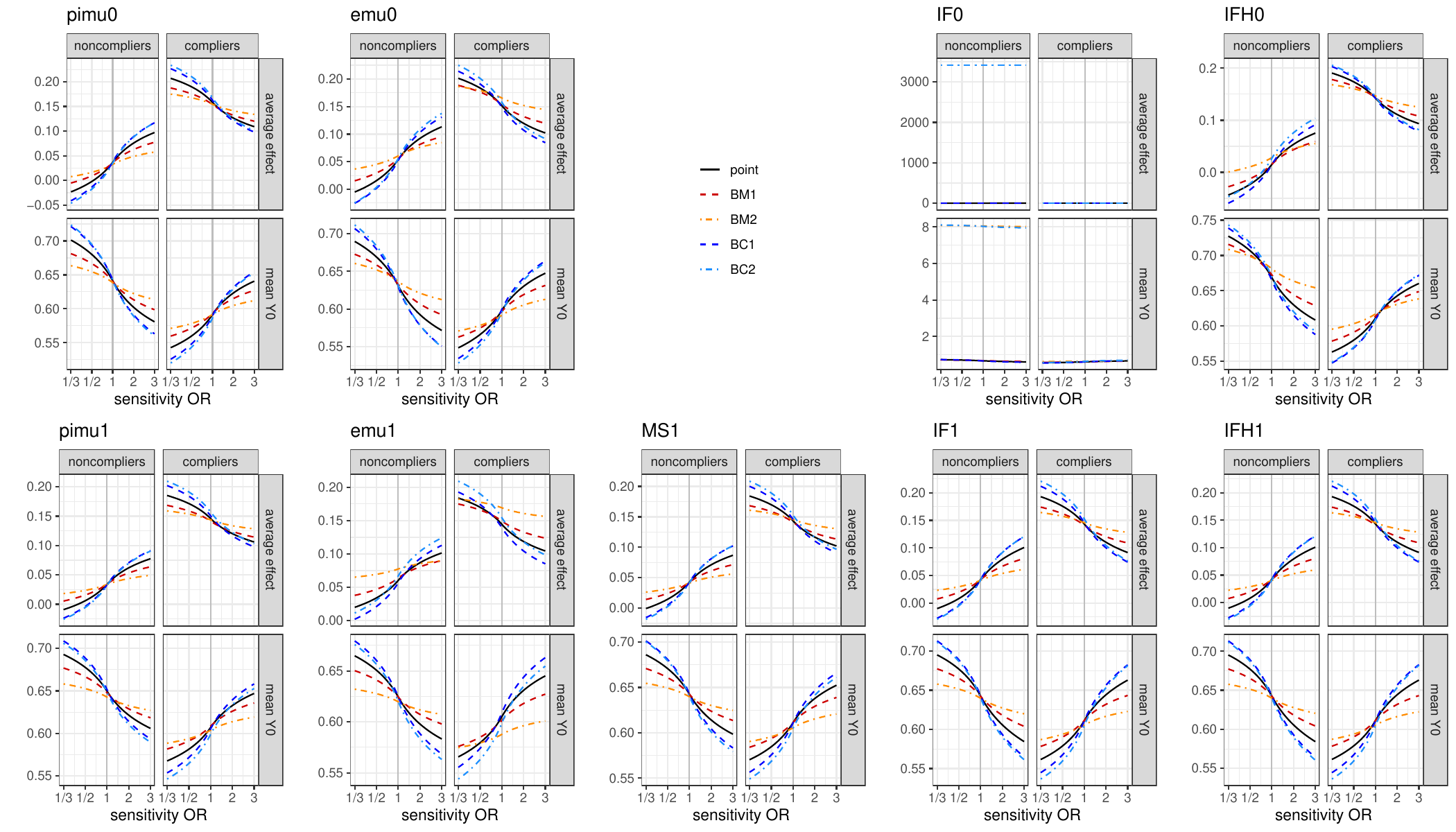}
    \end{figure}
\end{landscape}

\begin{landscape}
    \begin{figure}
        \caption{JOBS II result  -- \textit{depressive symptoms}, GOR-based: Finite sample bias pattern with estimators $\hat\Delta_{c,\pi\mu}$ (pimu), $\hat\Delta_{c,e\mu}$ (emu), $\hat\Delta_{c,\textsc{ms}}$ (MS), $\hat\Delta_{c,\textsc{if}}$ (IF), $\hat\Delta_{c,\textsc{ifh}}$ (IFH). 0 and 1 indicated non-targeted and targeted nuisance estimation, respectively.}
        \label{fig:14}
        \centering
        \includegraphics[width=1.5\textwidth]{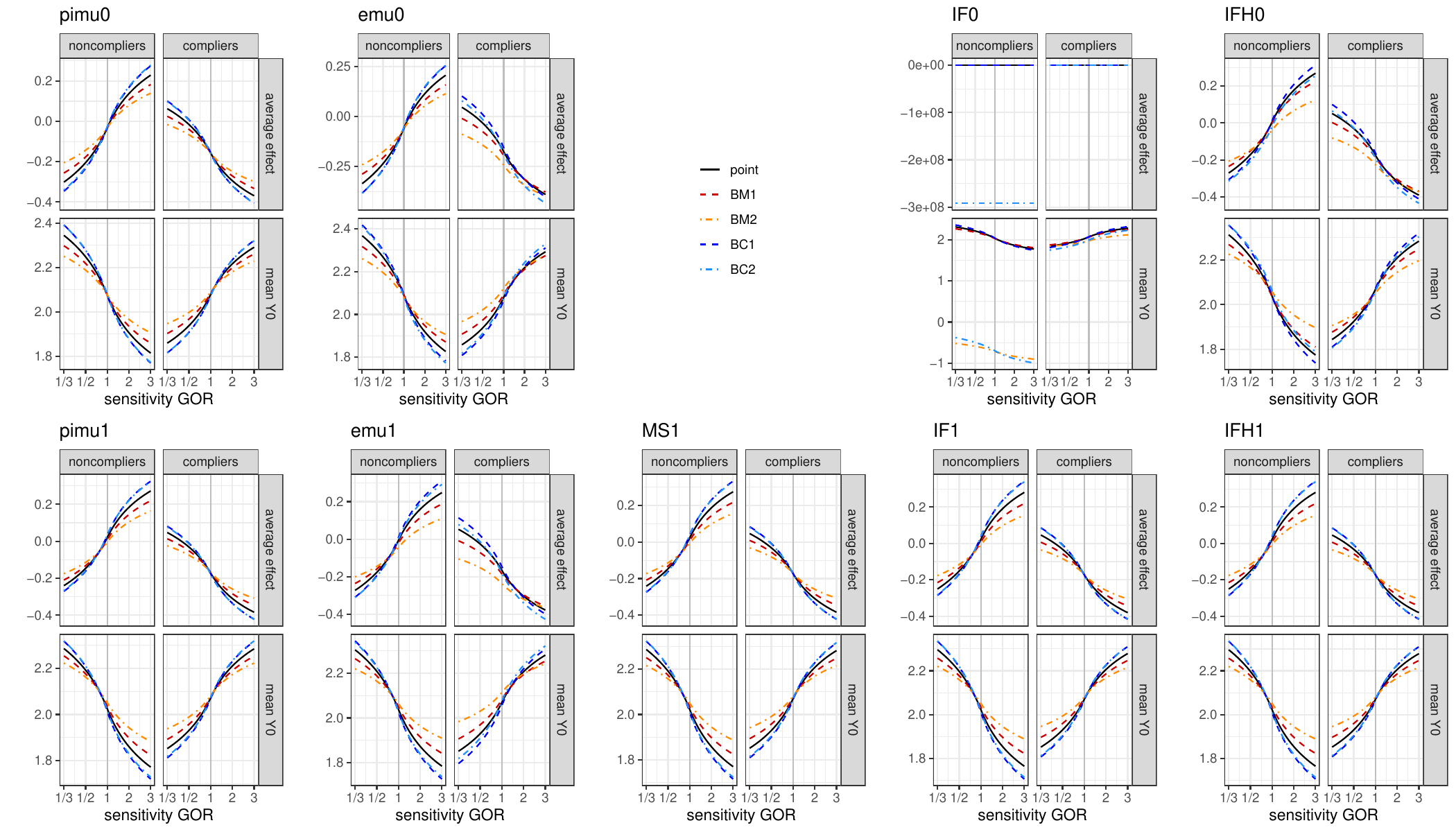}
    \end{figure}
\end{landscape}

\begin{landscape}
    \begin{figure}
        \caption{JOBS II result  -- \textit{depressive symptoms}, SMDe-based: Finite sample bias pattern with estimators $\hat\Delta_{c,\pi\mu}$ (pimu), $\hat\Delta_{c,e\mu}$ (emu), $\hat\Delta_{c,\textsc{ms}}$ (MS), $\hat\Delta_{c,\textsc{if}}$ (IF), $\hat\Delta_{c,\textsc{ifh}}$ (IFH). 0 and 1 indicated non-targeted and targeted nuisance estimation, respectively.}
        \label{fig:15}
        \centering
        \includegraphics[width=1.5\textwidth]{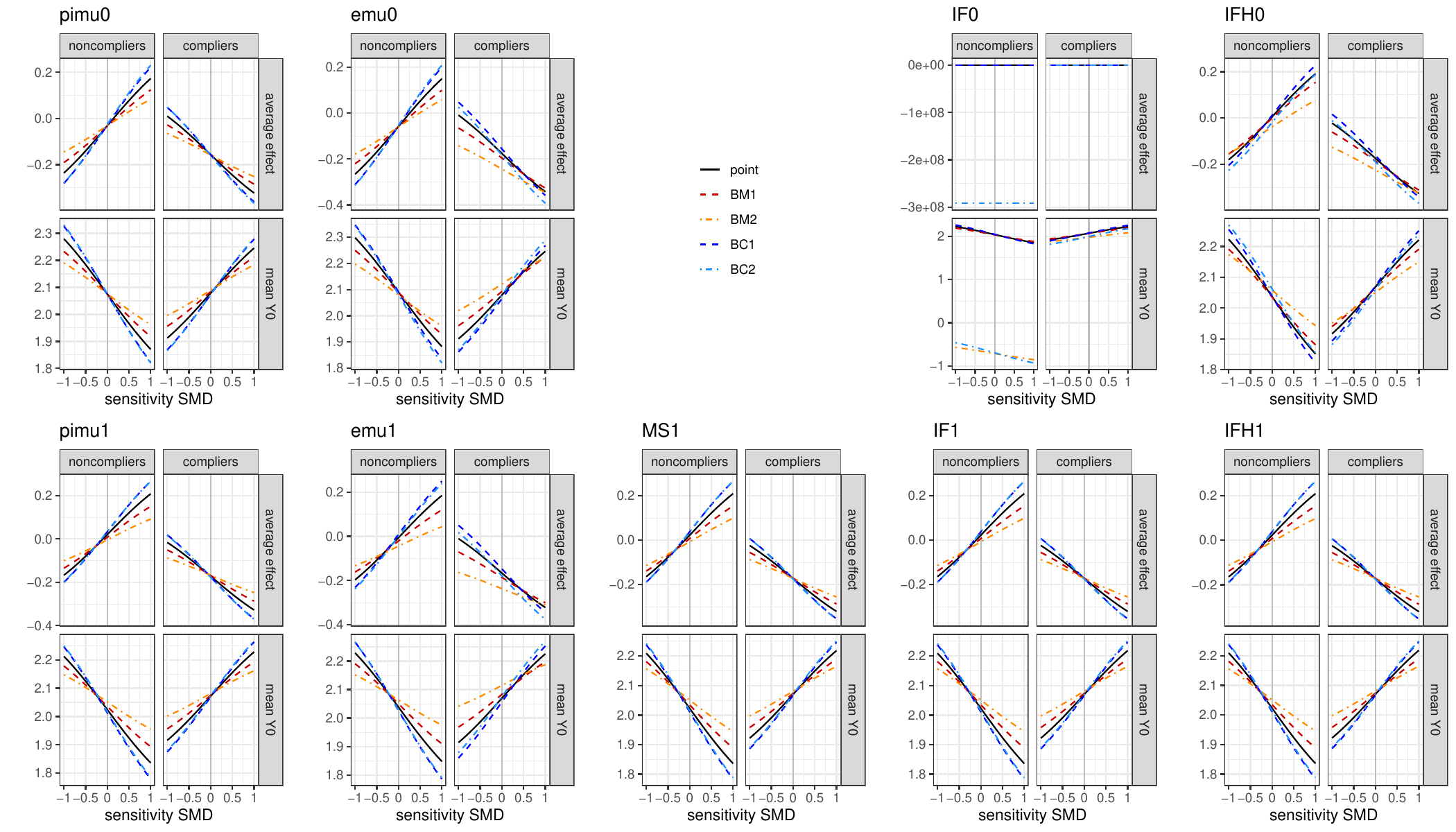}
    \end{figure}
\end{landscape}

\begin{landscape}
    \begin{figure}
        \caption{JOBS II result  -- \textit{earnings}, MR-based: Finite sample bias pattern with estimators $\hat\Delta_{c,\pi\mu}$ (pimu), $\hat\Delta_{c,e\mu}$ (emu), $\hat\Delta_{c,\textsc{ms}}$ (MS), $\hat\Delta_{c,\textsc{if}}$ (IF), $\hat\Delta_{c,\textsc{ifh}}$ (IFH). 0 and 1 indicated non-targeted and targeted nuisance estimation, respectively.}
        \label{fig:16}
        \centering
        \includegraphics[width=1.5\textwidth]{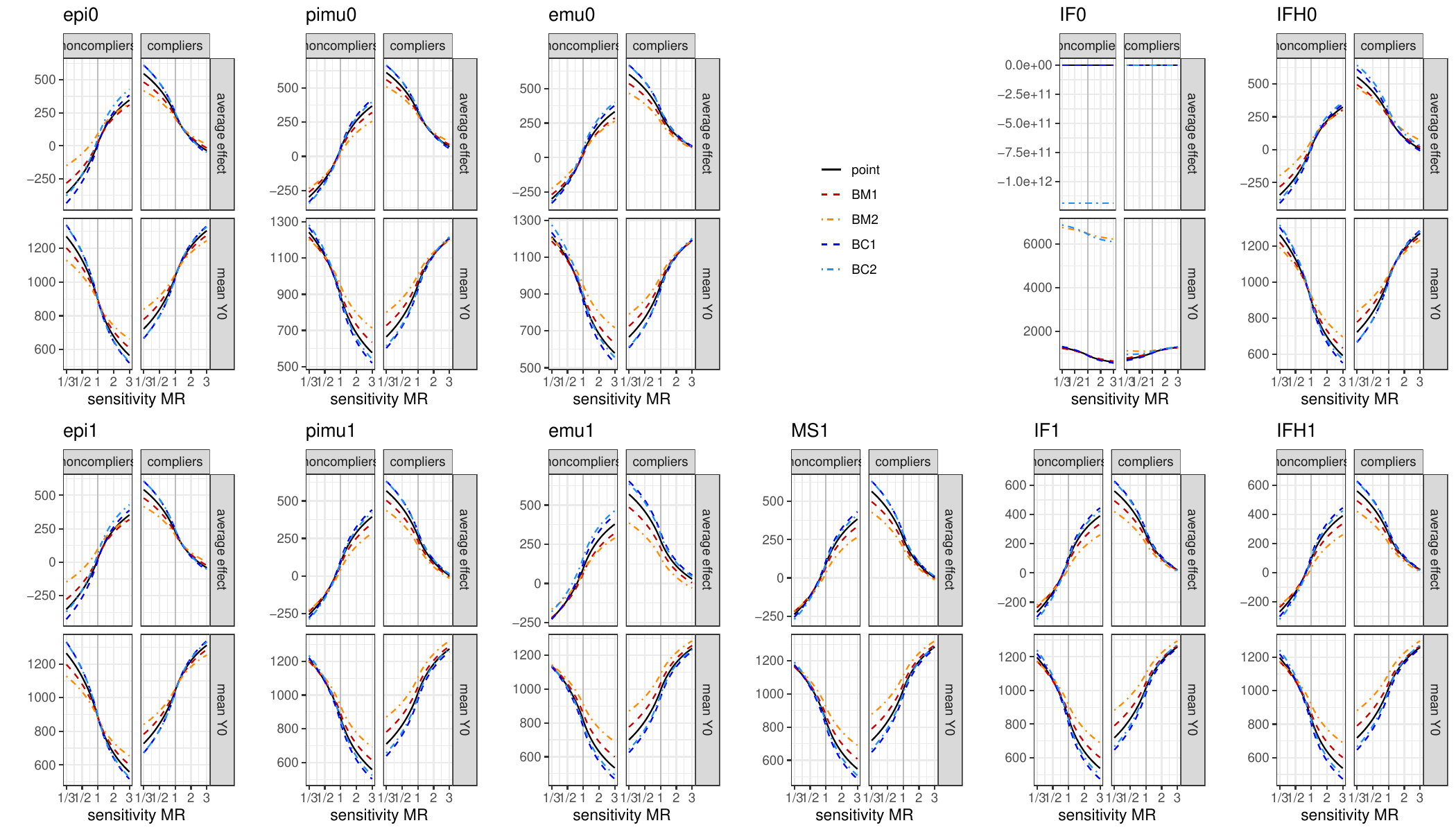}
    \end{figure}
\end{landscape}

\subsection{Proposition~\ref{thm:mu.0c-monotone-in-pi.c} -- $\mu_{0c}(X)$ monotone in $\pi_c(X)$ in sensitivity analysis}\label{appendix-sec:mu.0c-monotone-in-pi.c}

\begin{theorem}[$\mu_{0c}(X)$ monotone in $\pi_c(X)$ in sensitivity analysis]\label{thm:mu.0c-monotone-in-pi.c}
    Consider $\mu_{0c}(X)$ as given by the identification result under A0-A2 combined with any one of the sensitivity assumptions A4-OR, A4-GOR, A4-MR (see Proposition~\ref{thm:ratio.params-id}) or A4-SMDe (see Proposition~\ref{thm:smd-id}). It is a function of $\mu_0(X)$, $\pi_c(X)$ and the sensitivity parameter ($\rho$ or $\eta$). Define the symmetric sensitivity parameter, $\rho_c=c\rho+(1-c)/\rho$ or $\eta_c=c\eta-(1-c)\eta$. Then we have: if the symmetric sensitivity parameter is larger than its null value, i.e., $\rho_c>1$ or $\eta_c>0$, then $\mu_{0c}(X)$ is monotone decreasing in $\pi_c(X)$; and if the symmetric sensitivity parameter is smaller than its null value, i.e., $\rho_c<1$ or $\eta_c<0$, then $\mu_{0c}(X)$ is monotone increasing in $\pi_c(X)$.
\end{theorem}

\noindent
\begin{proof}[Proof of Proposition~\ref{thm:mu.0c-monotone-in-pi.c}]
We consider any arbitrary value of $X$ and drop the ``$(X)$'' to ease notation. 
Also to ease notation (in this proof only), we will just write $\mu$ for $\mu_0(X)$, $\pi$ for $\pi_c(X)$, $\rho$ for $\rho_c$ and $\eta$ for $\eta_c$.

As $\mu_{0c}^\text{GOR}(X)$ is a scaled and shifted version of $\mu_{0c}^\text{OR}(X)$, we will just consider the latter, which is simpler. Thus we have three cases, with $\pi\in(0,1)$, $\rho>0,\rho\neq1$ and $\eta\neq0$:

\medskip

\noindent
\begin{tabular}{ll}
    MR case: & $\displaystyle\mu_{0c}^\text{MR}(\pi,\mu,\rho)=\frac{\rho\mu}{(\rho-1)\pi+1}$,~~~$\mu>0$.
    \\[.5em]
    OR case: & $\displaystyle\mu_{0c}^\text{OR}(\pi,\mu,\rho)=\frac{\alpha-\beta}{2(\rho-1)\pi}$,~~~$\alpha:=(\pi+\mu)(\rho-1)+1,\beta:=\sqrt{\alpha^2-4\pi\mu\rho(\rho-1)}$,~~~$\mu\in[0,1]$.
    \\[.5em]
    SMDe case: & $\displaystyle\mu_{0c}^\text{SMDe}(\pi,\mu,\eta)=\mu+\eta\frac{(1-\pi)\sigma_0}{\sqrt{1+\eta^2\pi(1-\pi)}}$.
\end{tabular}

For all three cases, all we need to show is that $\partial\mu_{0c}/\partial\pi$ is of the opposite sign of $(\rho-1)$ or of $\eta$.

\smallskip
    
This is easy to show for the MR and SMDe cases:
\begin{align*}
    \frac{\partial\mu_{0c}^\text{MR}}{\partial\pi}
    &=-\frac{\rho\mu(\rho-1)}{[(\rho-1)\pi+1]^2}
    \\
    &=-(\rho-1)\underbrace{\frac{\rho\mu}{[(\rho-1)\pi+1]^2}}_{\textstyle >0}.
    \\
    \frac{\partial\mu_{0c}^\text{SMDe}}{\partial\pi}
    &=\eta\sigma_0\left[\frac{-1}{[1+\eta^2\pi(1-\pi)]^{1/2}}-\frac{(1-\pi)\eta^2(1-2\pi)}{2[1+\eta^2\pi(1-\pi)]^{3/2}}\right]
    \\
    &=-\eta\frac{\sigma_0}{2[1+\eta^2\pi(1-\pi)]^{3/2}}\left\{2[1+\eta^2\pi(1-\pi)]+(1-\pi)\eta^2(1-2\pi)\right\}
    \\
    &=-\eta\underbrace{\frac{\sigma_0[1+\eta^2(1-\pi)]}{[1+\eta^2\pi(1-\pi)]^{3/2}}}_{\textstyle >0}.
\end{align*}

It is a lot more involved for the OR case.
\begin{align*}
    \partial\alpha/\partial\pi&=\rho-1,
    \\
    \partial\beta/\partial\pi&=\frac{2\alpha\frac{\partial\alpha}{\partial\pi}-4\mu\rho(\rho-1)}{2\beta}
    =\frac{\rho-1}{\beta}(\alpha-2\mu\rho),
    \\
    \frac{\partial}{\partial\pi}(\alpha-\beta)&=-\frac{\rho-1}{\beta}(\alpha-\beta-2\mu\rho)
\end{align*}
so
\begin{align*}
    \frac{\partial\mu_{0c}^\text{OR}}{\partial\pi}
    &=\frac{\frac{\partial}{\partial\pi}(\alpha-\beta)}{2(\rho-1)\pi}-\frac{\alpha-\beta}{2(\rho-1)\pi^2}
    \\
    &=\frac{-\frac{\rho-1}{\beta}(\alpha-\beta-2\mu\rho)}{2(\rho-1)\pi}-\frac{\alpha-\beta}{2(\rho-1)\pi^2}
    \\
    &=-\frac{\pi(\rho-1)(\alpha-\beta-2\mu\rho)+(\alpha-\beta)\beta}{2\pi^2\beta(\rho-1)}.
\end{align*}
The expression above has a negative sign in front, and the denominator is of the same sign as $\rho-1$, so we want to show that the numerator is always positive. We work with the numerator,
\begin{align*}
    \mathcal{N}
    &=\pi(\rho-1)(\alpha-\beta-2\mu\rho)+(\alpha-\beta)\beta
    \\
    &=\pi(\rho-1)(\alpha-\beta)-2\mu\rho\pi(\rho-1)+(\alpha-\beta)\beta
    \\
    &=[\alpha-\mu(\rho-1)-1](\alpha-\beta)-2\mu\rho\pi(\rho-1)+(\alpha-\beta)\beta
    \\
    &=\alpha(\alpha-\beta)-[\mu(\rho-1)+1](\alpha-\beta)-2\mu\rho\pi(\rho-1)+(\alpha-\beta)\beta
    \\
    &=(\alpha+\beta)(\alpha-\beta)-[\mu(\rho-1)+1](\alpha-\beta)-2\mu\rho\pi(\rho-1)
    \\
    &=4\mu\rho\pi(\rho-1)-[\mu(\rho-1)+1](\alpha-\beta)-2\mu\rho\pi(\rho-1)
    \\
    &=2\mu\rho\pi(\rho-1)-(\alpha-\beta)[\mu(\rho-1)+1].
\end{align*}
To show $\mathcal{N}>0$, we examine the behavior of $\mathcal{N}$ as a function of $\pi$. To see the function better, we put aside the constraint of the range of $\pi$, and just treat it as a generic variable on the real line. To ease notation, let $a=\mu\rho$, $b=1-\mu$; these do not involve $\pi$. Then
\begin{align*}
    \mathcal{N}
    &=2a\pi(\rho-1)-(\alpha-\beta)(a+b),
    \\
    \alpha
    &=\pi(\rho-1)+a+b,
    \\
    \beta
    &=\sqrt{\alpha^2-4a\pi(\rho-1)}
\end{align*}
\begin{align*}
    \mathcal{N}':=\frac{\partial}{\partial\pi}\mathcal{N}
    &=2a(\rho-1)-(a+b)\frac{\partial}{\partial\pi}(\alpha-\beta)
    \\
    &=2a(\rho-1)+(a+b)\frac{\rho-1}{\beta}(\alpha-\beta-2a)
    \\
    &=\frac{\rho-1}{\beta}\left[2\beta a+(a+b)(\alpha-\beta-2a)\right]
    \\
    &=\frac{\rho-1}{\beta}[\beta(a-b)+(a+b)(\alpha-2a)]
    \\
    &=\frac{\rho-1}{\beta}\{\beta(a-b)+(a+b)[\pi(\rho-1)-(a-b)]\}
    \\
    &=\frac{\rho-1}{\beta}\{\beta(a-b)+(a+b)\pi(\rho-1)-(a+b)(a-b)\}.
\end{align*}
We set $\mathcal{N}'$ to 0 and solve for $\pi$ to find the critical points of this function. At this point, it helps to use shorthand notation $u=\pi(\rho-1)$, so we need to solve for $u$. Dropping $(\rho-1)/\beta$ (which is non-zero), isolating the term with $\beta$, and plugging in
\begin{align*}
    \beta
    &=\sqrt{(u+a+b)^2-4au}
    \\
    &=\sqrt{u^2+2(a+b)u+(a+b)^2-4au}
    \\
    &=\sqrt{u^2-2(a-b)u+(a+b)^2}
\end{align*}
obtains
\begin{align}
    \sqrt{u^2-2(a-b)u+(a+b)^2}(a-b)=-u(a+b)+(a+b)(a-b).\label{eq:tocheck}
\end{align}
To make progress, we square both sides of the equation to undo the square root. Because squaring is a two-to-one operation, after solving for $u$ we will need to check the solutions against (\ref{eq:tocheck}).
\begin{align*}
    [u^2-2(a-b)u+(a+b)^2](a-b)^2=u^2(a+b)^2-2u(a+b)^2(a-b)+(a+b)^2(a-b)^2.
\end{align*}
Moving everything to the RHS,
we have
\begin{align*}
    u^2[(a+b)^2-(a-b)^2]-2u(a-b)[(a+b)^2-(a-b)^2]+(a+b)^2(a-b)^2-(a+b)^2(a-b)^2=0,
\end{align*}
which simplifies to
\begin{align*}
    u^2-2(a-b)u=0,
\end{align*}
which has two solutions $u_1=0$, $u_2=2(a-b)$. $u_1$ satisfies (\ref{eq:tocheck}) but $u_2$ does not. Hence the equation $\mathcal{N}=0$ has a unique solution $\pi=0$ (implied by $u_1$).

So we know that the function $\mathcal{N}(\pi)$ has one critical point at $\pi=0$. At this point, $\mathcal{N}=0$. Now we take the second derivative.
\begin{align*}
    \mathcal{N}'':=\frac{\partial}{\partial\pi}\mathcal{N}'
    &=(\rho-1)\left\{\frac{(a-b)\frac{\partial\beta}{\partial\pi}+(a+b)(\rho-1)}{\beta}-\frac{[\beta(a-b)+(a+b)\pi(\rho-1)-(a+b)(a-b)]\frac{\partial\beta}{\partial\pi}}{\beta^2}\right\}
    \\
    &=\frac{\rho-1}{\beta^2}\left\{\beta(a-b)\frac{\partial\beta}{\partial\pi}+\beta(a+b)(\rho-1)-[\beta(a-b)+(a+b)\pi(\rho-1)-(a+b)(a-b)]\frac{\partial\beta}{\partial\pi}\right\}
    \\
    &=\frac{\rho-1}{\beta^2}\left\{\beta(a+b)(\rho-1)-[(a+b)\pi(\rho-1)-(a+b)(a-b)]\frac{\partial\beta}{\partial\pi}\right\}
    \\
    &=\frac{(\rho-1)(a+b)}{\beta^2}\left\{\beta(\rho-1)-[\pi(\rho-1)-(a-b)]\frac{\partial\beta}{\partial\pi}\right\}
    \\
    &=\frac{(\rho-1)(a+b)}{\beta^2}\left\{\beta(\rho-1)-[\pi(\rho-1)-(a-b)]\frac{(\rho-1)[\pi(\rho-1)-(a-b)]}{\beta}\right\}
    \\
    &=\frac{(\rho-1)^2}{\beta^3}(a+b)\left\{\beta^2-[\pi(\rho-1)-(a-b)]^2\right\}
    \\
    &=\frac{(\rho-1)^2}{\beta^3}(a+b)\left\{[\pi(\rho-1)+(a+b)]^2-4a\pi(\rho-1)-[\pi(\rho-1)-(a-b)]^2\right\}
    \\
    &=\frac{(\rho-1)^2}{\beta^3}(a+b)\left\{-4a\pi(\rho-1)+[\pi(\rho-1)+(a+b)]^2-[\pi(\rho-1)-(a-b)]^2\right\}
    \\
    &=\frac{(\rho-1)^2}{\beta^3}(a+b)\left\{-4a\pi(\rho-1)+[\pi(\rho-1)+(a+b)]^2-[\pi(\rho-1)-(a-b)]^2\right\}
    \\
    &=\frac{(\rho-1)^2}{\beta^3}(a+b)\left\{-4a\pi(\rho-1)+[2\pi(\rho-1)+2b]2a\right\}
    \\
    &=\frac{(\rho-1)^2}{\beta^3}(a+b)(4ab)
    \\
    &=\frac{(\rho-1)^2}{\beta^3}[\mu\rho+(1-\mu)]4\mu\rho(1-\mu)
    \\
    &>0.
\end{align*}
This means $\mathcal{N}(\pi)$ is a convex function. This implies $\pi=0$ is the minimal point. It follows that on the interval of interest, $(0,1)$, this function is greater than $\mathcal{N}(0)=0$. 

We thus have shown that $\displaystyle\frac{\partial}{\partial\pi}\mu_{0c}^\text{OR}$ is of the opposite sign of $(\rho_c-1)$, completing the proof. 
% When $\rho_c>1$, $\mu_{0c}^\text{OR}(X)$ is monotone increasing in $\pi_c(X)$.
\end{proof}

\subsection{Bias correction -- a limited simulation study}\label{appendix-sec:bias-correction}

\subsubsection{Bootstrap-based bias corrections}
\label{assec:bias-corrections}

A practical question then is whether the finite-sample bias seen in the sensitivity analyses can and should be corrected. Two bootstrap-based bias correction techniques\cite{efron1994IntroductionBootstrap,Chang2015} are $\hat\theta_\textsc{bc1}:=2(\hat\theta-\bar{\hat\theta}^*)+\bar{\hat\theta}^*$ based on the single bootstrap, and $\hat\theta_\textsc{bc2}:=3(\hat\theta-\bar{\hat\theta}^*)+\bar{\hat\theta}^{**}$ based on the iterated bootstrap. Figure \ref{fig:finitesamplebias} shows BC1 and BC2 results in two shades of blue. (For BC2, we apply the warp-speed version in \cite{Chang2015} that uses single double bootstrap draws.) But is it beneficial to use either correction? To answer this question, we conduct a limited simulation study (see below) based on JOBS II data, generating five hundred samples of the same size as our analytic sample, and implementing both corrections. Results show that both corrections reduce bias and only slightly increase variance; and that the two corrections result in similar estimates. 
% For the OR-based sensitivity analysis on \textit{work for pay} and the GOR- and SMDe-based sensitivity analyses on \textit{depressive symptoms}, BC1 and BC2 results are quite similar. For the MR-based sensitivity analysis on \textit{earnings}, BC1 is more effective than BC2. 
From a practical perspective, BC1 is simpler, as the standard bootstrap is used for bias correction in addition to confidence interval estimation at no additional cost. 
For the JOBS II analysis in Section \ref{sec:illustration} we show BC1 results.

\smallskip

A note of caution: As this kind of bias correction is mean-based, we do not recommend it for unbounded estimators, because $\bar{\hat\theta}^*$ and $\bar{\hat\theta}^{**}$ may be influenced by extreme bootstrap estimates.

\smallskip

\begin{remark}
    With BC point estimates and BCa confidence intervals, both our point and interval estimates are bias-corrected.
    % The above bias correction is separate from confidence interval estimation. We continue using BCa intervals, which are themselves bias-corrected. 
    Interestingly, while the bias correction methods are different (mean-based for point estimates and quantile-based for intervals), BC point estimates seem to be more centered (than uncorrected estimates) in the BCa intervals.
\end{remark}

\subsubsection{The simulation study}

The purpose of this limited simulation study is to examine the bias-reduction impact of BC1 and BC2 as well as how these corrections affect the variance of the estimator. Our focus is on finite-sample bias, which is the departure of the expectation of the estimator (in a finite sample) from the estimator's probability limit. We are not concerned here about the estimator's asymptotic bias, which is the departure of the estimator's probability limit from the true parameter.

\paragraph{Data generating model.}

The data generating model mimics the joint distribution of variables observed in the analytic JOBS II sample. With the factorization
\begin{align*}
    f(O)&=f_1(X,Z)\times
    \\
    &~~~~~[f_2(C\mid X,Z=1)]^Z\times
    \\
    &~~~~~[f_3(Y\mid X,Z=1,C=1)]^{ZC}[f_4(Y\mid X,Z=1,C=0)]^{Z(1-C)}[f_5(Y\mid X,Z=0)]^{1-Z},
\end{align*}
we simulate variables in the following order.

First, we use the R-packge synthpop \cite{nowok2016SynthpopBespokeCreation} to simulate $(X,Z)$ based on JOBS II $(X,Z)$ data. The package uses nonparametric (classification and regression tree) methods to generate synthetic samples that reflect the joint distribution in the provided data. 

Then we generate $(C,Y)$ conditional on $(X,Z)$ based on models fit to JOBS II data. $C$ is generated as a Bernoulli random variable based on the fitted principal score model. The outcome \textit{work for pay} is also generated Bernoulli based on the fitted outcome models. The outcome \textit{depressive symptoms} is generated as a (scaled-and-shifted) beta random variable using the fitted mean and dispersion from the fitted outcome models. The outcome \textit{earnings} is generated conditional on working, but not using the Gamma outcome models from the analysis, because these models treat the outcome as unbounded, so using them would generate earnings that look quite different from JOBS II earnings. To respect the observed data range, we generate this variable as a (scaled-and-shifted) beta random variable, similar to \textit{depressive symptoms}; this is based on models newly fit for this purpose.%
\footnote{These models use the generalized logit link (like the models for \textit{depressive symptoms}), losing the log link feature of the analysis models. It would be preferable to generate \textit{earnings} from a truncated Gamma model fit with log link. Unfortunately, fitting a truncated Gamma model, especially with covariates, is a hard problem.}

The latter means that for \textit{earnings} the analysis of simulated data (which uses the same models as the real data analysis) is based on miss-specified outcome models. This is appropriate, as the real data analysis for this outcome is also based on miss-specified models. While this results in variation where the analyses for two outcomes are correctly specified and for one outcome is not, this detail is not of central interest here, because we are now only concerned with finite-sample, not asymptotic, bias.

\begin{figure}[t!]
    \caption{Simulation results 1: Pattern of point estimate, bootstrap mean estimates and bias-corrected estimates, shown in expectation (i.e., averaged over simulated datasets), with true value benchmark.}
    \label{fig:sim-pattern}
    \includegraphics[width=1.05\textwidth]{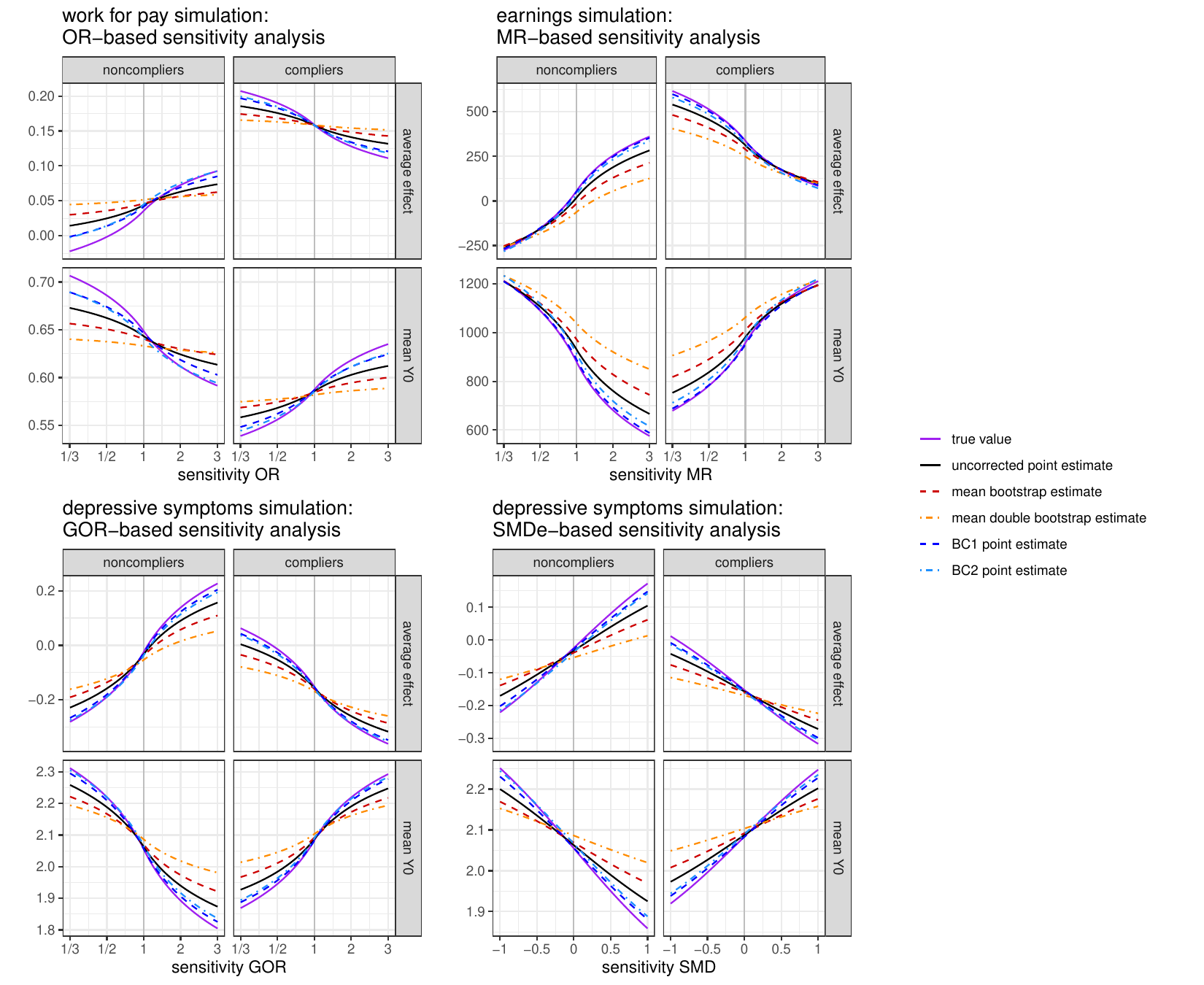}
\end{figure}

\paragraph{True value approximation.}

Since the probability limit of the estimator is only available at $n=\infty$, we approximate it by an estimate of the expectation of the estimator at $n=1,000,000$ (by averaging estimates from 20 samples). This sample size cap ensures that our computer memory can handle the model fitting. We loosely refer to this approximate value as the probability limit itself (ignoring the remaining finite-sample bias, which is likely minimal), and denote it by $\theta$.

\paragraph{Data simulation and analysis of simulated data.}

We draw $K=500$ samples of size $n=465$, the size of our JOBS II analytic sample. On each sample $k\in\{1,\dots,K\}$, we obtain the point estimate ($\hat\theta_k$) from the estimator, the bootstrap and double bootstrap mean estimates ($\bar{\hat\theta}_k^*$ and $\bar{\hat\theta}_k^{**}$), and the bootstrap-based bias-corrected estimates ($\hat\theta_{\textsc{bc1},k}$ and $\hat\theta_{\textsc{bc2},k}$). 

\paragraph{Pattern of estimates.}

Figure \ref{fig:sim-pattern} uses the same structure as Figure \ref{fig:finitesamplebias-all-noCI}, except here instead of showing results from a single dataset, the figure shows the averages over the simulated datasets of the point estimate, the bootstrap mean estimates and the bias-corrected estimates. Specifically,
$$\bar{\hat\theta}_0=\frac{1}{K}\sum_{k=1}^{K}\hat\theta_k,~~~
\overline{\bar{\hat\theta}^*}=\frac{1}{K}\sum_{k=1}^{K}\bar{\hat\theta}_k^*,~~~
\overline{\bar{\hat\theta}^{**}}=\frac{1}{K}\sum_{k=1}^{K}\bar{\hat\theta}_k^{**},~~~
\bar{\hat\theta}_\textsc{bc1}=\frac{1}{K}\sum_{k=1}^{K}\hat\theta_{\textsc{bc1},k},~~~
\bar{\hat\theta}_\textsc{bc2}=\frac{1}{K}\sum_{k=1}^{K}\hat\theta_{\textsc{bc2},k}.$$
Also shown in Figure \ref{fig:sim-pattern} is the true value $\theta$.

All the plots in Figure \ref{fig:sim-pattern} show that going from the point estimate to the mean bootstrap estimate to the mean double bootstrap estimate, in expectation, we increasingly depart from the true value. They also show that the bias-corrections pull back closer to the true value.

The plots in Figure \ref{fig:sim-pattern} look similar to those in Figure \ref{fig:finitesamplebias-all-noCI}. This is not surprising, as the simulation is based on JOBS II data.

\paragraph{Bias correction performance.}

We estimate bias, standard error (SE), standardized bias, and standardized SE increase as
\begin{alignat*}{4}
    &\widehat{\text{bias}}_0=\bar{\hat\theta}-\theta,
    ~~~&&\widehat{\text{SE}}_0=\left[\frac{\sum_k(\hat\theta_k-\bar{\hat\theta})^2}{K-1}\right]^{1/2},
    ~~~&&\widehat{\text{std.bias}}_0=\frac{\widehat{\text{bias}}_0}{\widehat{\text{SE}}_0},
    \\
    &\widehat{\text{bias}}_1=\bar{\hat\theta}_\textsc{bc1}-\theta,
    ~~~&&\widehat{\text{SE}}_1=\left[\frac{\sum_k(\hat\theta_{\textsc{bc1},k}-\bar{\hat\theta}_\textsc{bc1})^2}{K-1}\right]^{1/2},
    ~~~&&\widehat{\text{std.bias}}_1=\frac{\widehat{\text{bias}}_1}{\widehat{\text{SE}}_0},
    ~~~&&\widehat{\text{std.SE.diff}}_1=\frac{\widehat{\text{SE}}_1-\widehat{\text{SE}}_0}{\widehat{\text{SE}}_0},
    \\
    &\widehat{\text{bias}}_2=\bar{\hat\theta}_\textsc{bc2}-\theta,
    ~~~&&\widehat{\text{SE}}_2=\left[\frac{\sum_k(\hat\theta_{\textsc{bc2},k}-\bar{\hat\theta}_\textsc{bc2})^2}{K-1}\right]^{1/2},
    ~~~&&\widehat{\text{std.bias}}_2=\frac{\widehat{\text{bias}}_2}{\widehat{\text{SE}}_0},
    ~~~&&\widehat{\text{std.SE.diff}}_2=\frac{\widehat{\text{SE}}_2-\widehat{\text{SE}}_0}{\widehat{\text{SE}}_0}.
\end{alignat*}
All standardization uses the same denominator (SE of the uncorrected $\hat\theta_k$) to put the metrics on the same scale to facilitate comparison.
Figures \ref{fig:sim-work}, \ref{fig:sim-depress-GOR}, \ref{fig:sim-depress-SMDe} and \ref{fig:sim-earn} show these results for the four sensitivity analyses (OR-based for \textit{work for pay}, GOR- and SMDe-based for \textit{depressive symptoms}, and MR-based for \textit{earnings}), including bias, standardized bias and standardized SE change. The $y$-axis scale of the bias plot is ten times that of the analysis result plot in Figure \ref{fig:meanbased-sens}.

In these simulations both bias corrections reduce bias while only slightly increase variance. Also, BC1 and BC2 have similar performance.

\begin{figure}[h!]
    \caption{Simulation results 2a (\textit{work for pay}, OR-based): Bias and standardized bias before and after bias correction; standardized standard error change due to bias correction.}
    \label{fig:sim-work}
    \includegraphics[width=\textwidth]{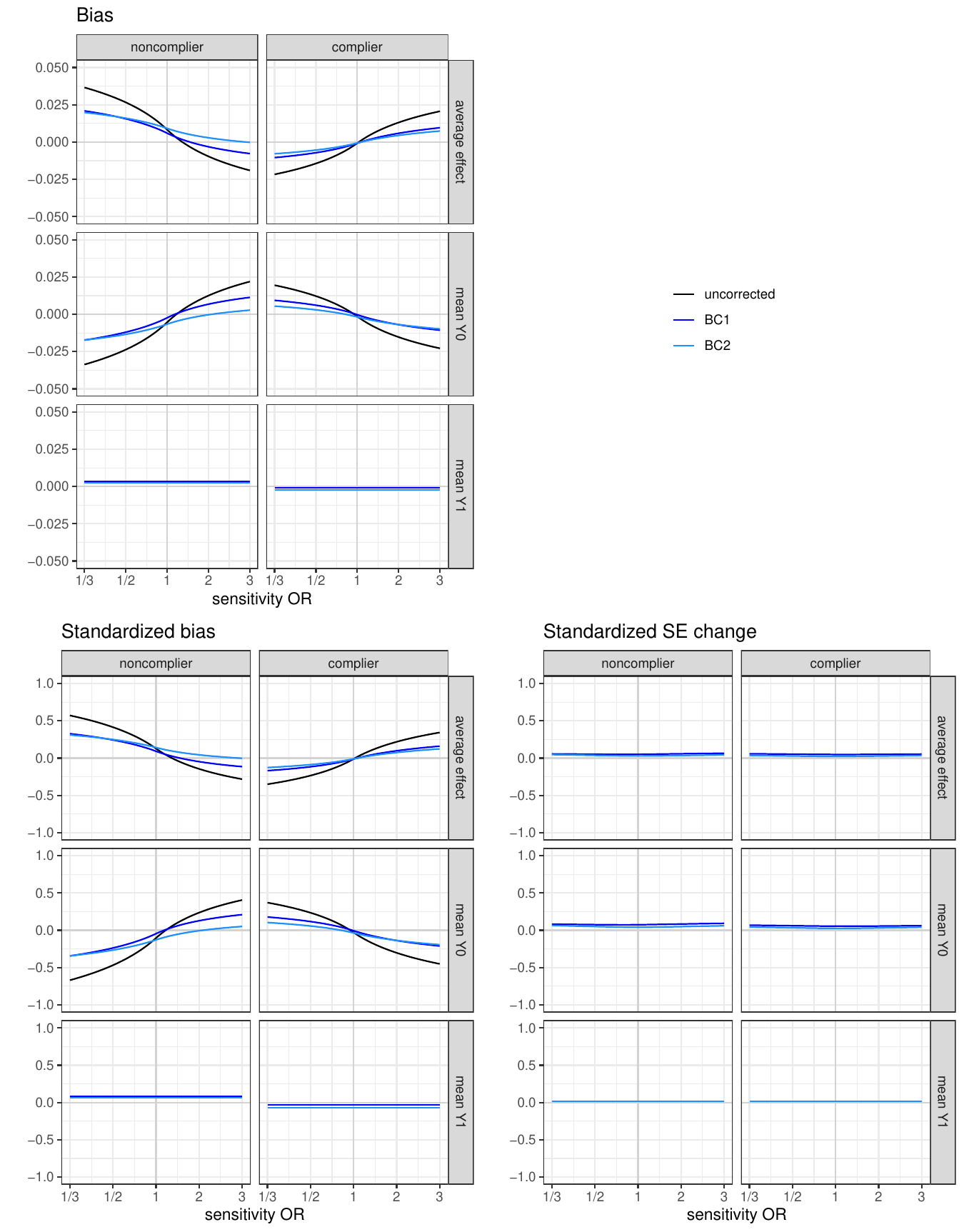}
\end{figure}

\begin{figure}[h!]
    \caption{Simulation results 2b (\textit{depressive symptoms}, GOR-based): Bias and standardized bias before and after bias correction; standardized standard error change due to bias correction.}
    \label{fig:sim-depress-GOR}
    \includegraphics[width=\textwidth]{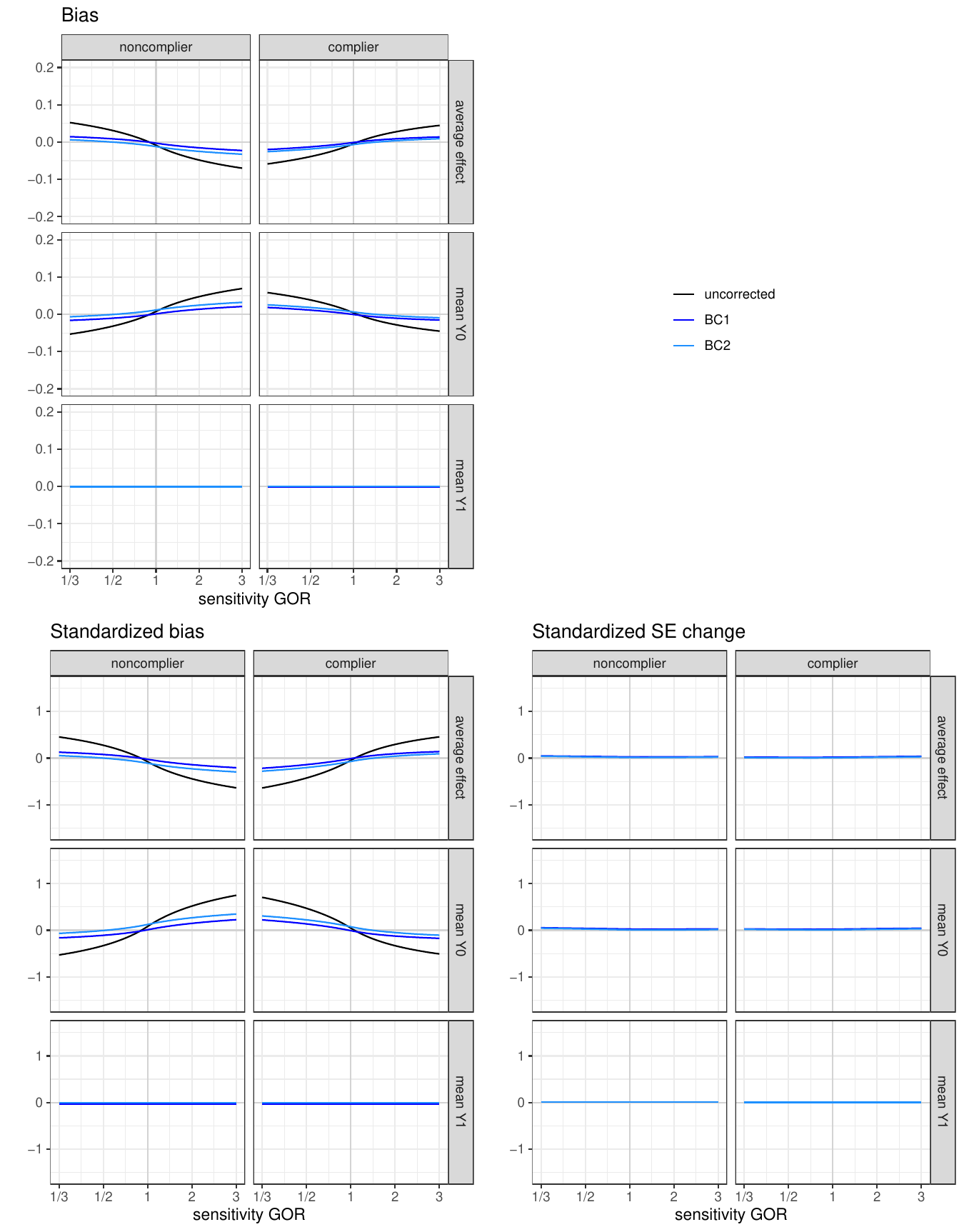}
\end{figure}

\begin{figure}[h!]
    \caption{Simulation results 2c (\textit{depressive symptoms}, SMDe-based): Bias and standardized bias before and after bias correction; standardized standard error change due to bias correction.}
    \label{fig:sim-depress-SMDe}
    \includegraphics[width=\textwidth]{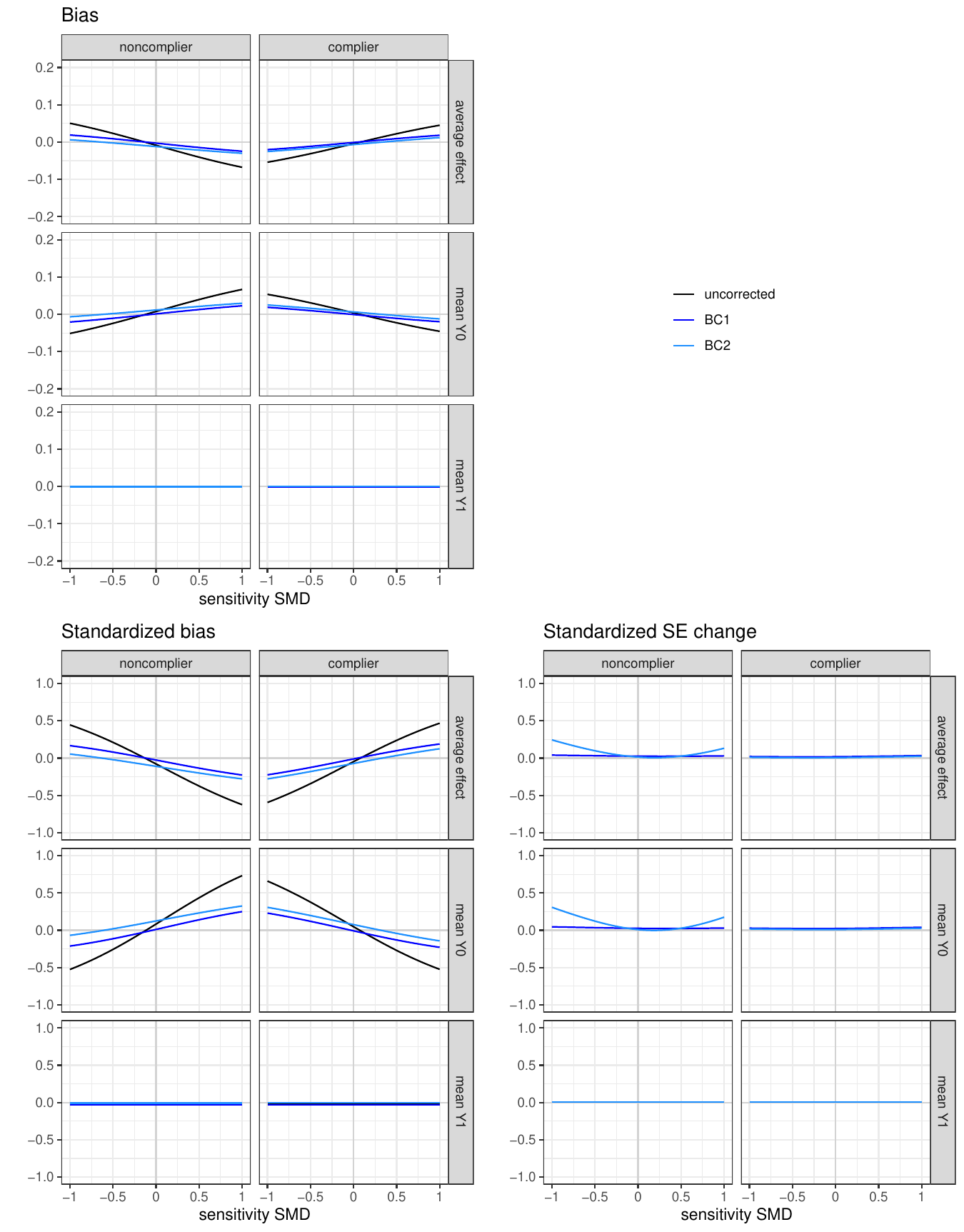}
\end{figure}

\begin{figure}[h!]
    \caption{Simulation results 2d (\textit{earnings}, MR-based): Bias and standardized bias before and after bias correction; standardized standard error change due to bias correction.}
    \label{fig:sim-earn}
    \includegraphics[width=\textwidth]{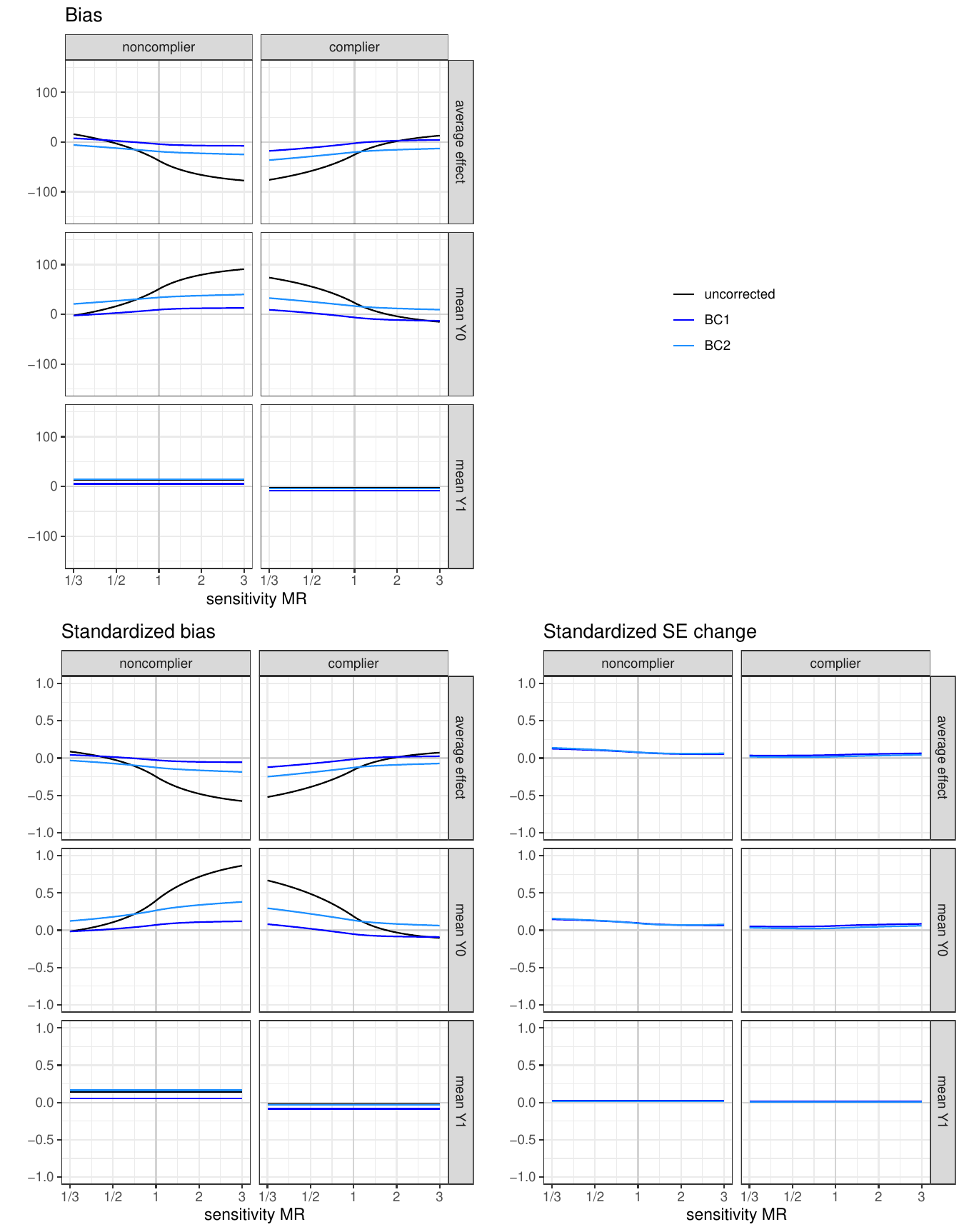}
\end{figure}

\newpage
~
\newpage
~
\newpage
~
\newpage
~
\newpage
\section{Additional figures for Section~\ref{sec:illustration} -- JOBS II illustrative analysis}\label{appendix:example}

\begin{figure}[h]
    \caption{Covariate balance before/after propensity score weighting (left), before/after combined propensity-and-principal score weighting for CACE (middle) and for NACE (right)}\label{fig:love.plots}
    \includegraphics[width=.325\textwidth]{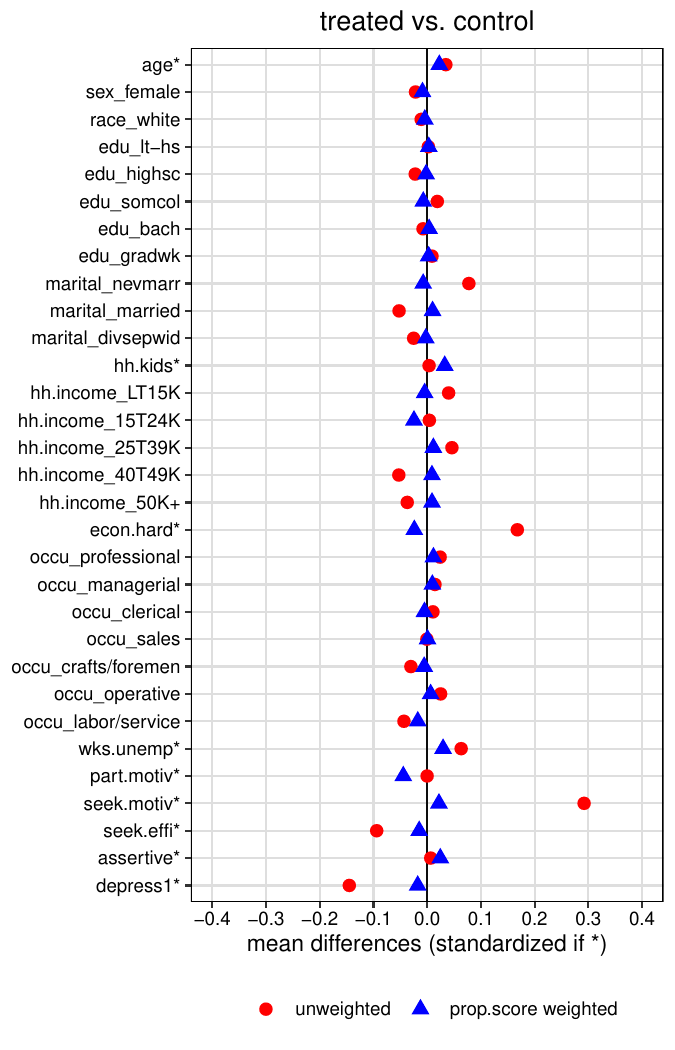}
    \includegraphics[width=.325\textwidth]{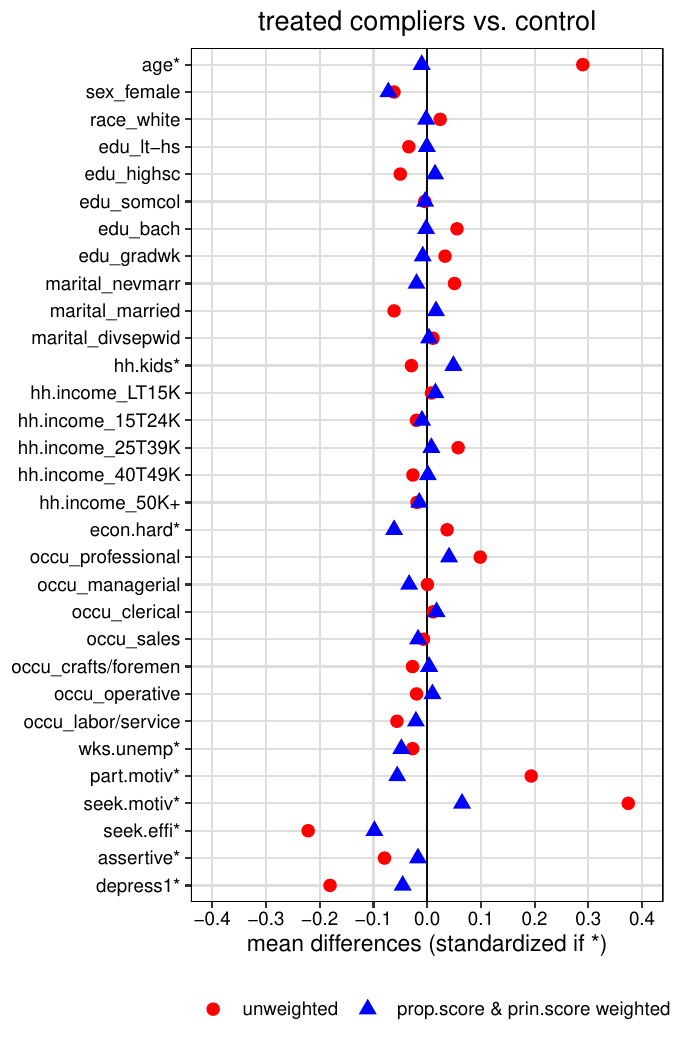}
    \includegraphics[width=.325\textwidth]{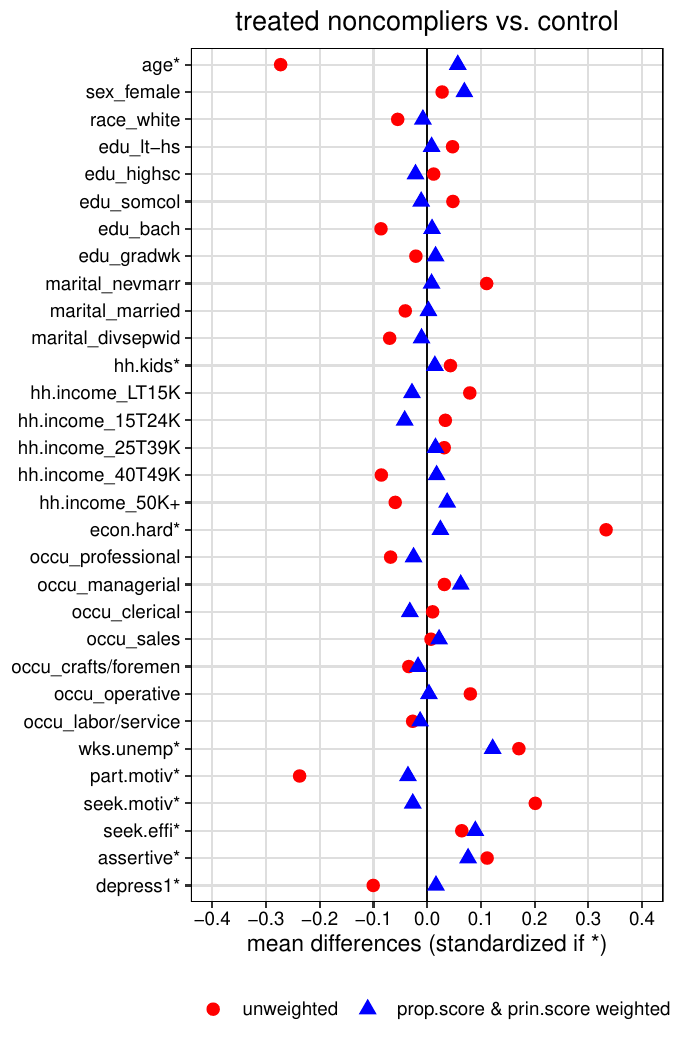}
\end{figure}

% For the binary outcome \textit{work for pay}, the conditional mean functions (for treated compliers, treated noncompliers and controls) are estimated by logistic regression. 
% For the outcome \textit{earnings}, the means are estimated conditional on working, and then multiplied with the probability of working predicted by the model for working. Earnings for those who work (which are positive and skewed) are modeled using gamma GLM, using the log link ensure prediction within range. As predictions from GLM with noncanonical links are generally (slightly) mean biased, we calibrate them by a multiplicative constant to remove this bias.
% With the \textit{depressive symptoms} score outcome, we use a transformation to the [0,1] interval, fit logit-link models and then transform outcome means and effects back to the original scale. All models include all baseline covariates as predictors.

\begin{figure}[h]
    \caption{Implications of different sensitivity MR values about stratum-specific distributions of conditional mean $Y_0$ values, i.e., $\mu_{01}(X)$ values among compliers and $\mu_{00}(X)$ values among noncompliers}\label{fig:MRimplication}
    \includegraphics[width=\textwidth]{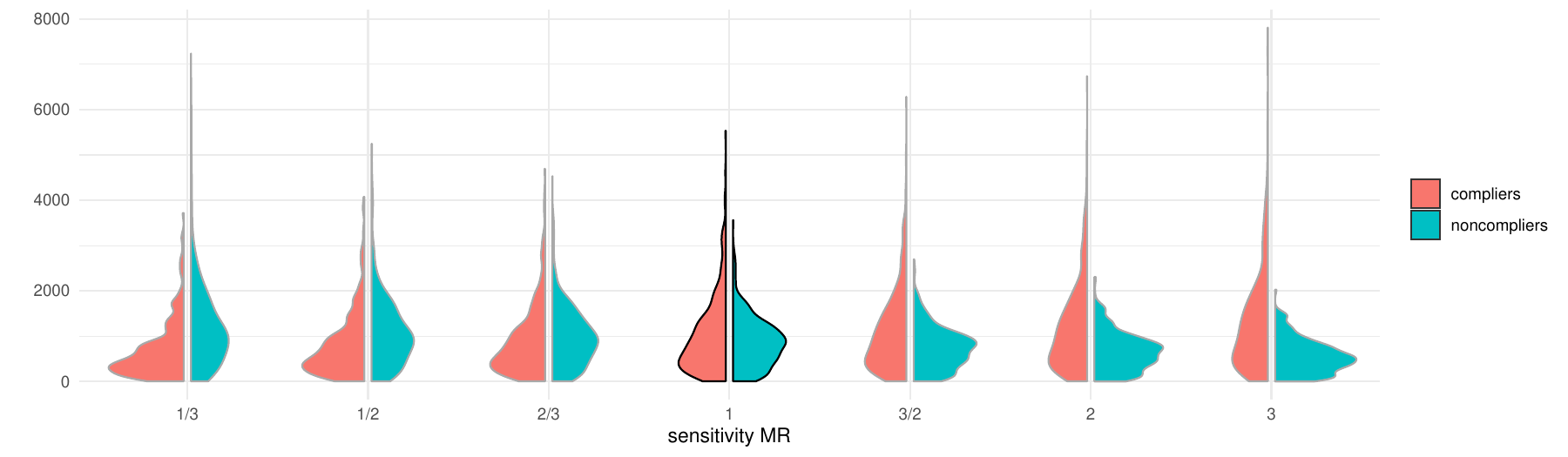}
\end{figure}

\newpage

\end{document}